\def\figurecaption#1#2{\noindent\hangindent 40pt
                       \hbox to 36pt {\small\sl #1 \hfil}
                       \ignorespaces {\small #2}}
\long\def\@makecaption#1#2{
  \vskip 10pt 
  \settowidth{\@tempdima}{#2}
  \ifdim\@tempdima>0pt
       \setbox\@tempboxa\hbox{#1: #2}
     \else
       \setbox\@tempboxa\hbox{#1 #2}
   \fi
   \ifdim \wd\@tempboxa >\hsize               
       \begin{list}{#1:}{
       \settowidth{\labelwidth}{#1:}
       \setlength{\leftmargin}{\labelwidth}
       \addtolength{\leftmargin}{\labelsep}
        }\item #2 \end{list}\par   
     \else                                    
       \hbox to\hsize{\hfil\box\@tempboxa\hfil}  
   \fi}
\definecolor{PantoneReflexBlue}{HTML}{003399}
\definecolor{PantoneYellow}{HTML}{FFCC00}
\newcommand{\eustar}{\scalebox{0.1}{\ensuremath{\bigstar}}}
\newcommand{\makestars}{%
  \color{PantoneYellow}%
  \setlength{\unitlength}{1em}
  \divide\unitlength by18
  \begin{picture}(6,6)(-2,3.5)
    \put(6,0){\eustar}
    \put(5.196,3){\eustar}
    \put(3,5.196){\eustar}
    \put(0,6){\eustar}
    \put(-3,5.196){\eustar}
    \put(-5.196,3){\eustar}
    \put(-6,0){\eustar}
    \put(-5.196,-3){\eustar}
    \put(-3,-5.196){\eustar}
    \put(0,-6){\eustar}
    \put(3,-5.196){\eustar}
    \put(5.196,-3){\eustar}
  \end{picture}%
}
\newcommand{\euflag}[2]{%
  {%
    \fboxsep0pt
    \resizebox{#1}{#2}{%
      \colorbox{PantoneReflexBlue}{%
        \vbox to1em{%
          \hsize1.5em
          \parskip0pt
          \parindent0pt
          \centering
          \makestars
        }%
      }%
    }%
  }%
}
\newcommand{\buchiset}{G}
\newcommand{\tpl}[1]{\left\langle{#1}\right\rangle}
\newcommand{\tup}[1]{\tpl{#1}}
\newcommand{\src}{\textsf{src}}
\newcommand{\dst}{\textsf{dst}}
\newcommand{\broadcasts}{\textsf{bcts}}
\newcommand{\Broadcasts}{\Sigma_\broadcasts}
\newcommand{\activeprocs}{\textsf{active}}
\newcommand{\snd}{\textsf{snd}}
\newcommand{\rcv}{\textsf{rcv}}
\newcommand{\brd}{\mathfrak{b}}
\def\Actions{\Sigma_{\textsf{actn}}}
\def\Actionprts{\Sigma_{\textsf{rdz}}}
\def\Comm{\Sigma_{\textsf{com}}}
\newcommand\restr[2]{{
  \left.\kern-\nulldelimiterspace 
  #1 
  \vphantom{\big|} 
  \right|_{#2} 
  }}
\newcommand{\proctemp}{P}
\newcommand{\uwd}{\multimap}
\newcommand{\rwd}{\circledcirc}
\newcommand{\puwd}{P^\multimap}
\newcommand{\sysinst}[1]{{{P}^{#1}}} 
\newcommand{\sysinsttimed}[1]{{{T}^{#1}}}
\newcommand{\psys}{{{P}^\infty}} 
\newcommand{\psystimed}{{{T}^\infty}} 
\newcommand{\puwdsys}{(\puwd)^\infty}
\newcommand{\vas}{{\mathcal V}}
\newcommand{\runs}{\textit{runs}}
\newcommand{\comp}{\textit{comp}}
\newcommand{\proj}[1]{\textit{proj}_{#1}}
\newcommand{\exec}[1][\psys]{\textsc{exec}({#1})}
\newcommand{\execfin}[1][\psys]{\textsc{exec-fin}(#1)}
\newcommand{\execinf}[1][\psys]{\textsc{exec-inf}(#1)}
\newcommand{\execNFW}{{\mathcal A}}
\newcommand{\execBSW}{{\mathcal B}}
\newcommand{\execBSWinit}{\execBSW^{init}}
\newcommand{\execBSWbrd}{\execBSW^{grn}}
\newcommand{\execBSWnobrd}{\execBSW^{loc}}
\newcommand{\grn}{{grn}}
\newcommand{\loc}{{loc}}
\newcommand{\init}{{init}}
\newcommand{\M}{{\mathcal M}}
\newcommand{\A}{{\mathcal A}}
\newcommand{\T}{{\mathcal T}}
\newcommand{\good}{\textsf{light green}\xspace}
\newcommand{\sgood}{\textsf{dark green}\xspace}
\newcommand{\green}{\textsf{green}\xspace}
\newcommand{\locr}{locally-reusable\xspace}
\newcommand{\Pspec}{{\mathcal F}}
\newcommand{\PTIME}{\textsc{ptime}\xspace}
\newcommand{\EXPTIME}{\textsc{exptime}\xspace}
\newcommand{\PSPACE}{\textsc{pspace}\xspace}
\newcommand{\NPSPACE}{\textsc{npspace}\xspace}
\newcommand{\EXPSPACE}{\textsc{expspace}\xspace}
\newcommand\msg[1]{\ensuremath{{#1}}}
\newcommand\head[1]{\smallskip\noindent\textbf{#1.}}
\newcommand{\trans}[3]{#1 \xrightarrow{{#3}} #2}
\newcommand\LTL{{\textsf{LTL}}\xspace}
\newcommand\LTLf{{\textsf{LTLf}}\xspace}
\newcommand\Nat{\mathbb{N}_{> 0}}
\newcommand\Rat{\mathbb{Q}}
\newcommand\RatGEZ{\mathbb{Q}_{\geq 0}}
\newcommand\RatGZ{\mathbb{Q}_{> 0}}
\def\NatZero{\mathbb{N}} 
\newcommand{\clocks}{\ensuremath{C}}
\newcommand{\inc}{\mathsf{inc}}
\newcommand{\reset}{\mathsf{reset}}
\newcommand{\skp}{\mathsf{skip}}
\newcommand{\grd}{\textit{grd}}
\newcommand{\rst}{\textit{rst}}
\newcommand{\AP}{\textit{AP}}
\newcommand{\CP}{\textit{CP}}
\newcommand{\true}{\texttt{true}}
\newcommand{\false}{\texttt{false}}
\newcommand{\simrel}{M}
\newcommand{\loopindices}{\mathcal{I}}
\newcommand{\suc}[2][]{next^{#1}(#2)}
\newcommand{\pre}[1]{prev(#1)}
\DeclareMathOperator{\limp}{\to}
\DeclareMathOperator{\nextX}{\mathsf{X}}
\DeclareMathOperator{\until}{\mathbin{\mathsf{U}}}
\DeclareMathOperator{\always}{\mathsf{G}}
\DeclareMathOperator{\eventually}{\ensuremath{\mathsf{F}}\xspace}
\theoremstyle{plain}
\newtheorem{proposition}[prop]{Proposition}
\newtheorem{theorem}[thm]{Theorem}
\newtheorem{lemma}[lem]{Lemma}
\newtheorem{corollary}[cor]{Corollary}
\theoremstyle{definition}
\newtheorem{remark}[rem]{Remark} 
\newtheorem{example}[exa]{Example} 
\newtheorem{definition}[defi]{Definition}
\begin{document}

\title[PMC of Discrete-Timed Networks and Symmetric-Broadcast Systems]{Parameterized Model-checking of Discrete-Timed Networks and Symmetric-Broadcast Systems}

\author[B.~Aminof]{Benjamin Aminof}[a,d]

\author[S.~Rubin]{Sasha Rubin\lmcsorcid{0000-0002-3948-129X}}[b]

\author[F.~Spegni]{Francesco Spegni\lmcsorcid{0000-0003-3632-3533}}[c]

\author[F.~Zuleger]{Florian Zuleger\lmcsorcid{0000-0003-1468-8398}}[a]

\address{Technical University of Vienna, Austria}
\address{University of Sydney, Australia}
\address{Universit\`a Politecnica delle Marche, Ancona, Italy}
\address{Universit\`a degli Studi di Roma La Sapienza, Italy}

\keywords{Parameterized systems, timed-systems, broadcast communication, decidability,
formal languages}

\begin{abstract}
We study the complexity of the model-checking problem for parameterized discrete-timed systems with arbitrarily many anonymous and identical contributors, with and without a distinguished ``controller'', and communicating via asynchronous rendezvous.  Our work extends the seminal work from German and Sistla on untimed systems by adding discrete-time clocks to processes.

For the case without a controller, we show that the systems can be efficiently simulated --- and vice versa --- by systems of untimed processes that communicate via rendezvous and symmetric broadcast, which we call ``RB-systems''. Symmetric broadcast is a novel communication primitive that allows all processes to synchronize at once; however, it does not distinguish between sending and receiving processes.

We show that the parameterized model-checking problem for safety specifications is \PSPACE-complete, and for liveness specifications it is decidable in \EXPTIME. The latter result is proved using automata theory, rational linear programming, and geometric reasoning for solving certain reachability questions in a new variant of vector addition systems called ``vector rendezvous systems''. We believe these proof techniques are of independent interest and will be useful in solving related problems.

For the case with a controller, we show that the parameterized model-checking problems for RB-systems and systems with asymmetric broadcast as a primitive are inter-reducible. This allows us to prove that for discrete timed-networks with a controller the parameterized model-checking problem is undecidable for liveness specifications.

Our work exploits the intimate connection between parameterized discrete-timed systems and systems of processes communicating via broadcast, providing a rare and surprising decidability result for liveness properties of parameterized timed-systems, as well as extend work from untimed systems to timed systems.
\end{abstract}

\maketitle


\section{Introduction}

We systematically study the complexity of the model-checking problem for parameterized discrete-timed systems that communicate via synchronous rendezvous. These systems consist of arbitrarily many anonymous and identical contributors, with and without a distinguished ``controller" process. The parameterized model-checking problem asks whether a given specification holds no matter the number of identical contributors. This is in contrast to traditional model-checking that considers a fixed number of contributors.  

Our model subsumes the classic case of untimed systems~\cite{gs92} --- processes are finite-state programs with (discrete-time) clocks that guard transitions. Timed processes can be used to model more realistic circuits and protocols than untimed processes~\cite{alur99,chevallier09}. 

We study the computational complexity of the parameterized model-checking problem (PMCP) for safety and liveness specifications.\footnote{We follow Esparza et. al.~\cite{efm99} and refer to sets of finite executions as \emph{safety} properties, and sets of infinite executions as \emph{liveness} properties.} Safety properties are specified by formulas of linear-temporal logic over finite traces (\LTLf) and nondeterministic finite word automata (NFW), and liveness properties are specified by formulas of linear-temporal logic (\LTL) and nondeterministic B\"uchi automata (NBW).

We show that without a controller safety is \PSPACE-complete, while liveness is in \EXPTIME;  and with a controller safety is decidable, while liveness is undecidable.  In more detail:
\begin{enumerate}

\item For systems without a controller, we prove that the PMCP for safety specifications is \PSPACE-complete --- in fact, \PSPACE-hardness even holds for a fixed specification (known as program complexity) and for a fixed program (known as specification complexity).

\item For systems without a controller, we prove that the PMCP for liveness specifications can be solved in \EXPTIME.
This is a rare decidability result for liveness properties of any non-trivial model of parameterized timed systems. The algorithms presented make use of interesting and intricate combination of automata theory, rational linear programming, and geometric reasoning for solving certain reachability questions in a new variant of vector addition systems called `vector rendezvous systems'. We believe these techniques are of independent interest and will be useful in solving related problems.

\item For systems with a controller, we prove that the PMCP for liveness specifications is undecidable. This follows from a new reduction between timed-systems with a controller and systems with asymmetric broadcast, and the known undecidability of the PMCP of latter for liveness properties. The novel reduction also allows us to recover the known result that PMCP is decidable for discrete-time systems with a controller and safety specifications.
\end{enumerate}

Although this doesn't completely close the picture (the complexity without a controller for liveness properties is \PSPACE-hard and in \EXPTIME), we remark that the parameterized verification problem for liveness properties is notoriously hard to analyze: apart from a single simple cutoff result which deals with processes communicating using conjunctive and disjunctive guards~\cite{spalazzi2020parameterized,ss14vstte}, no decidability result for liveness specifications of timed systems was known before this work.

To solve the PMCP problem for these systems we introduce \emph{rendezvous-broadcast systems (RB-systems)} --- systems of finite-state processes communicating via \emph{rendezvous} and \emph{symmetric broadcast}. Unlike asymmetric broadcast which can distinguish between the sender and the receivers, with symmetric broadcast there is no designated sender, and thus it can naturally model the passage of discrete time, i.e., every symmetric broadcast can be thought of as a tick of the discrete-time clocks. 

We show that RB-systems and timed-networks with the same number of processes can efficiently simulate each other. Thus, in particular, the PMCP of RB-systems and timed networks are polynomial-time inter-reducible. Furthermore, we show that for the case with a controller, RB-systems (and thus timed-networks) are polynomial-time inter-reducible to systems with asymmetric broadcasts. We remark that this equivalence does not hold for the case without a controller (indeed, we show that without a controller PMCP for liveness specifications is decidable, whereas it is known to be undecidable for systems with asymmetric broadcast~\cite{efm99}).
We thus consider the introduction of the notion of a symmetric broadcast to be an interesting communication primitive in itself. We then study the PMCP for RB-systems and in fact establish the itemized results above for RB-systems.

This work extends the results and provides detailed versions of proofs and statements that already appeared in the conference proceedings version \cite{conf/icalp/AminofRZS15}.

\subsection{Techniques}
The bulk of the work concerns the case without a controller, in which we first establish the decidability of the PMCP for safety properties and prove this problem to be \PSPACE-complete (already for a fixed specification).
The decidability of the PMCP for safety properties of timed networks has already been known~\cite{ADM04}; however, it was obtained using well-structured transition systems and only gives a \emph{non-elementary} upper bound, which we improve to \PSPACE.
We obtain the result for safety properties by constructing a reachability-unwinding of the states of the processes of the parameterized system, where we compute precisely those states the system can be in after exactly $n$ broadcasts;
we show that the reachability-unwinding has a lasso-shape and can be constructed in \PSPACE, which allows us to obtain the upper complexity bound. We provide a matching lower-bound by reducing the termination problem of Boolean programs to non-reachability in RB-systems.

We then prove an \EXPTIME upper-bound for the PMCP for liveness properties. This result is considerably more challenging than the upper-bound for safety properties. One source of difficulty is the need to be able to tell whether some rendezvous transition can be executed a bounded (as opposed to unbounded) number of times between two broadcasts --- a property which is not $\omega$-regular.
In order to deal with this issue we work with B-automata~\cite{b10}, which generalize B\"uchi-automata by equipping them with counters.

A key step in constructing a B-automaton that recognizes the computations of the system that satisfy a liveness specification requires establishing the existence (or lack thereof) of certain cycles in the runs of the parameterized system. Alas, the intricate interaction between broadcasts and rendezvous transitions makes this problem very complicated. In particular, known classical results concerning pairwise rendezvous without broadcast~\cite{gs92} do not extend to our case.
We solve this problem in two steps: we first obtain a precise characterization (in terms of a set of linear equations) for reachability of configurations between two broadcasts; we then use this characterization in an iterative procedure for establishing the existence of cycles with broadcasts.
To obtain the characterization for reachability mentioned above we introduce vector rendezvous systems (VRS) and their continuous relaxation, called continuous vector rendezvous systems (CVRS). These systems are counter abstractions --- which view configurations as vectors of counters that only store the number of processes at every process state, but not the identity of the processes ---
and constitute a new variant of the classical notion of vector addition systems~\cite{journals/tcs/HopcroftP79}.


\subsection{Related Work}

In this work we parameterize by the number of processes (other choices are possible, e.g., by the spatial environment in which mobile-agents move~\cite{DBLP:journals/ai/AminofMRZ22}).  Our model assumes arbitrarily many finite-state anonymous identical processes, possibly with a distinguished controller. Finite-state programs are commonly used to model processes, especially in the parameterized setting~\cite{DBLP:reference/mc/AbdullaST18,PMCPbook2015,clarke2008proving,gs92}. The parameterized model-checking problem (PMCP) has been studied for large-scale distributed systems such as swarm robotics~\cite{LomuscioP22,KouvarosL16}, hardware design~\cite{DBLP:conf/charme/McMillan01}, and multi-threaded programs~\cite{DBLP:journals/toplas/0001KW14}.

The PMCP easily becomes undecidable for many types of communication primitives because of the unbounded number of processes (even for safety properties and untimed processes) \cite{Su88}. Techniques for solving the general case include identifying decidable/tractable subcases (as we do) and designing algorithms that are not guaranteed to terminate, e.g., with acceleration and approximation techniques~\cite{DBLP:reference/mc/AbdullaST18,DBLP:journals/cl/ZuckP04}. The border between decidability and undecidability for various models is surveyed in \cite{PMCPbook2015}, including token-passing systems~\cite{DBLP:conf/cade/AminofR16,AJKR14}, rendezvous and broadcast~\cite{gs92,efm99,aminof2018parameterized}, guarded protocols~\cite{jacobs2018analyzing}, ad hoc networks~\cite{delzanno2010parameterized}. 

The seminal work~\cite{gs92} shows that the PMCP of (untimed) systems that communicate via rendezvous is in \PTIME without a controller, and is \EXPSPACE-complete with a controller. We now compare our work against others that explored the PMCP for timed processes (and timed or untimed specifications).

In \cite{bertrand2013parameterized}, Bertrand and Fournier consider the case of dynamic networks of timed Markov decision processes (TMDPs), a model for agents mixing probabilistic and timed behavior, whose transitions can be guarded by simple conditions on a global real-valued clock variable and through broadcast messages (a broadcaster is distinguished from the receivers) that can force all processes to transition in one computational step. They explore the decidability of several variants of the parameterized probabilistic reachability problem in networks of TMDPs. Interestingly, they observe that in some settings the problem is undecidable if the number of processes is fixed but unknown, while it becomes decidable (but not primitive recursive) by allowing processes to join and leave the network with given probabilities along the executions of the system. Their undecidability results are based on the possibility, using message broadcasting,  of distinguishing a process that acts as controller of the network.

In \cite{aj03} Abdulla and Jonsson prove that model checking safety properties for timed networks with a controller process is decidable, provided that each process has at most one (real-valued) clock variable.  They assume several processes can take synchronous transitions at once by using a rendezvous primitive and a distinguished process can act as controller of the network. The decidability carries over to discrete timed networks with one clock variable per process, and in case all processes are equal, i.e. the controller process is just another copy of the user processes. Our work proves decidability of safety and liveness properties independently from the number of clock variables in the network, provided that clocks range over a discrete domain and there is no controller process in the network. 

In \cite{ADM04} Abdulla et al. extends the decidability result of \cite{aj03} to timed networks with rendezvous and a controller process, assuming each process has any finite number of discrete clocks. They prove decidability by exhibiting a non-elementary complexity upper bound for the problem, and of course this also proves decidability of the same problem for timed network in the  absence of a controller process. In our setting, without a controller process, we are able to prove a much smaller upper bound for the complexity of the model checking problem restricted to safety properties, as well as the decidability of liveness properties. In case there is a controller process, our results show that the PMCP for liveness properties for timed networks is undecidable.

In \cite{abdulla2016parameterized} Abdulla et al. consider the PMCP for a model of timed processes communicating through Ad Hoc Wireless Networks. In this model, an arbitrary number of processes use either real- or discrete- valued clocks and can connect among themselves in topologies that are defined by some given family of graphs such as bounded path graphs or cliques. Given a family of graphs, each timed process can communicate with its direct neighbors using broadcast or rendezvous messages. In the context of timed networks using rendezvous on a clique graph topology, which is the setting closest to our work, they focus on rendezvous communications and rephrase the decidability results already presented in \cite{aj03,ADM04}. 

In \cite{isenberg2017incremental} Isenberg provides techniques for finding invariants for safety properties and timed-networks consisting of processes with continuous clocks, a controller, shared global variables and broadcast communication. In contrast, our decision procedures are based on automata theory.

In \cite{AbdullaACMT18}, Abdulla et al. study the {PMCP for reachability specifications} for Timed Petri Nets. There the authors prove that the problem is PSPACE-complete provided that each process carry only one clock variable. Interestingly, this is the same complexity we prove for the discrete case, thus the extension of the problem to the continuous time setting (under the limitation of one clock variable per process) falls in the same complexity class as the discrete time setting. In this work we make a step further by providing an upper bound to the complexity of the PMCP for liveness specifications in the discrete time setting. 

In \cite{spalazzi2020parameterized,ss14vstte} Spalazzi and Spegni study the parameterized model-checking problem of {Metric Interval Temporal Logic} formulae against networks of conjunctive or disjunctive timed automata of arbitrary size. They prove that in case of timed networks with either all conjunctive or all disjunctive Boolean guards and a controller process, a cutoff exist allowing to reduce controlled timed networks of arbitrary size to timed networks of some known size, provided that process locations have no time invariants forcing progress. This implies that PMCP is decidable under such conditions. In contrast, in our setting time is discrete, there is no controller process, communication is by rendezvous, and specifications are qualitative (i.e. LTL).

Andr\`e et al. \cite{andre2023parameterized} prove that a cutoff for timed networks with all disjunctive Boolean guards does exist in presence of clock invariants in the case of processes with a single clock variable and suitable conditions ensuring that locations appearing in the clock invariants themselves are visited infinitely often.

Finally, we remark that simulations between parameterized systems with different communication primitives, including asymmetric broadcast and rendezvous, is systematically studied in~\cite{DBLP:conf/lpar/AminofRZ15}. We also contribute to that line of work, involving the newly introduced symmetric-broadcast primitive.  Our work establishes for the first time an intimate two-way connection between discrete-timed systems and systems communicating via broadcast: symmetric broadcast when there is no controller, and asymmetric broadcast when there is a controller. We are certain that this intimate connection will prove useful in transferring results between these two types of systems, and discovering new results also if one considers other communication primitives than rendezvous.

\section{Definitions and Preliminaries}\label{sec: definitions}

For the sake of self-consistency, let us now recall various notions from automata theory that will be used along this work.

\paragraph{Notation.}
Let $\Nat$ denote the set of positive integers, let $\NatZero = \Nat \cup \{0\}$, let $\Rat$ denote the set of rational numbers, $\RatGZ$ the set of positive rational numbers, and $\RatGEZ$ the non-negative ones. Let $[n,m]$, for $n<m \in \Nat$, denote the set $\{n,n+1,\dots,m\}$, and if $m=\infty$ then $[n,m] = \{ n,n+1,\ldots \}$. Let $[n]$ denote the set $[1,n]$.
Finally, for $m \le n \in \Nat$, we call $\mu:[m] \to 2^{[n]}$, a {\em partition} of $[n]$ if $i \neq j$ implies $\mu(i) \cap \mu(j) = \emptyset$, and $[n] = \cup_{i \in [m]} \mu(i)$.
{For an \emph{alphabet} $\Sigma$ we denote by $\Sigma^*$ (resp. $\Sigma^\omega$) the set of all finite (resp. infinite) \emph{words} over $\Sigma$.}
The {\em concatenation} of two words $u$ and $w$ is written $uw$ or $u \cdot w$.
The \emph{length} of a word $u$ is denoted by $|u|$, and if $u$ is infinite then we write $|u|=\infty$.

\subsection{Transition Systems} \label{sec:transition systems}
A \emph{labeled transition system (LTS)} is a tuple
\[L = \tup{\AP,\Sigma,S,I,R,\lambda}\]
where
\begin{itemize}
\item $\AP$ is a finite set of \emph{atomic propositions} (also called \emph{state labels}),
\item $\Sigma$ is an alphabet of {\em edge-labels},
\item $S$ is a set of {\em states} (in the following we assume that $S \subseteq \Nat$),
\item  $I \subseteq S$ is a set of {\em initial states},
\item $R \subseteq S \times \Sigma \times S$ is an {\em edge relation},
\item and $\lambda \subseteq S \times {\AP}$ is a \emph{labeling relation} that associates with each state the atomic propositions that hold in it. We will often use functional notation and write $\lambda(s)$ for the set of atoms $p$ such that $(s,p) \in \lambda$.
\end{itemize}
In case all components of $L$ are finite, we say that $L$ is a \emph{finite LTS}; and otherwise we say that it is an \emph{infinite LTS}.
An edge $e = (s,a,s') \in R$, is also called a \emph{transition}, and may be written $\trans{s}{s'}{a}$.
The element $s$ is called the {\em source} (denoted $\src(e)$) of $e$, and $s'$ is called its {\em destination} (denoted $\dst(e)$), and $a$ is called the \emph{label} of $e$.
Given $\sigma \in \Sigma$, and a state $s \in S$, we say that $\sigma$ is {\em enabled} in $s$ if there is some $s' \in S$ such that $\trans{s}{s'}{\sigma}$.
A {\em path} $\pi$ is a (finite or infinite) sequence $e_1 e_2 \dots$ of transitions such that for every $1 \leq i < |\pi|$ we have that $\dst(e_i) = \src(e_{i+1})$, where $|\pi| \in \Nat \cup \{\infty\}$ is the {\em length} of $\pi$.
We extend the notations $\src(\pi)$ and $\dst(\pi)$ to paths (the latter only for finite paths) in the natural way.
Extend $\lambda$ to paths as follows: if $\pi = e_1 e_2 \cdots e_k$ is finite then
$\lambda(\pi) = \lambda(\src(e_1)) \lambda(\src(e_2)) \cdots \lambda(\src(e_{k-1})) \lambda(\src(e_k)) \lambda (\dst(e_k))$, and if $\pi = e_1 e_2 \cdots$ is infinite then
$\lambda(\pi) = \lambda(\src(e_1)) \lambda(\src(e_2)) \cdots$. A \emph{run} is a path whose source is
an initial state. The set of runs of an LTS $L$ is written $\runs(L)$. A state $s \in S$ is {\em reachable} if it is the destination of some run.
The \emph{size} of a finite LTS is defined to be the sum of the number of states and number of transitions.

Let $L = \tpl{\AP,\Sigma,S,I,R,\lambda}$ and
$L'= \tpl{\AP,\Sigma,S',I',R',\lambda'}$ be two LTSs over the same set of atomic propositions $\AP$ and the same set of edge-labels $\Sigma$.
We now define a few notions of equivalence relating such LTSs.
A relation $\simrel \subseteq S \times S'$ is a \emph{simulation} if (i) for every $q \in I$ there is $q' \in I'$ such that $(q,q') \in \simrel$, (ii)
$(q,q') \in \simrel$ implies $\lambda(q) = \lambda'(q')$ and for every $(q,\sigma,r) \in R$ there exists $r'$ with $(q',\sigma,r') \in R'$
such that $(r,r') \in \simrel$. In this case we say that \emph{$L'$ simulates $L$}.
Say that $\simrel$ is a \emph{bisimulation} if $\simrel$ is a simulation and $\{(q',q) : (q,q') \in \simrel\} \subseteq Q' \times Q$ is a simulation.
We say that runs $\pi, \pi'$, of $L$ and $L'$ respectively, of the same length are \emph{equi-labeled} if for every $i < |\pi|$, if $\pi_i = (s,\sigma,t)$ and $\pi'_i = (s',\sigma',t')$ we have that $\lambda(s) = \lambda'(s')$,
$\lambda(t) = \lambda'(t')$, and $\sigma = \sigma'$. It follows immediately from the definitions that if $L'$ simulates $L$ then for every run in $L$ there exists an equi-labeled run in $L'$.

We will use the following operations:
Let $\AP$ be a set of atomic propositions.
Given a proposition $a \in \AP$ and a (finite or infinite) sequence $\xi \in (2^\AP)^* \cup (2^\AP)^\omega$, we denote by $(\xi)_a$ the subsequence of $\xi$ that consists of all sets that contain $a$.
Given a subset $\AP' \subseteq \AP$ and a (finite or infinite) sequence $\xi \in (2^\AP)^* \cup (2^\AP)^\omega$, we denote by $\xi|_{\AP'}$ the sequence that we obtain from $\xi$ by intersecting every set with $\AP'$.

\subsection{Automata} We will use nondeterministic automata with three types of acceptance conditions, i.e., ordinary reachability acceptance (on finite input words), B\"uchi acceptance (on infinite words), and a boundedness condition on a single counter (on infinite words).
Since automata are like LTSs (except that they include an acceptance condition, and exclude the state labeling function), we will use LTS terminology and notation that is independent of the labeling, e.g., source, destination, path and run. We remark that inputs to the automata will be edge-labeling of paths in certain LTSs, and thus the input alphabet for automata is also denoted $\Sigma$.

A {\em nondeterministic finite word automaton (NFW)} is a tuple
\[
\A = \tup{\Sigma,S,I,R,F}
\]
where
\begin{itemize}
\item $\Sigma$ is the \emph{input alphabet},
\item $S$ is the finite set of \emph{states},
\item $I \subseteq S$ are the \emph{initial states},
\item $R \subseteq S \times \Sigma \times S$ is the \emph{transition relation}, and
\item $F \subseteq S$ are the \emph{final states}.
\end{itemize}
Given a finite word $\alpha = \alpha_1 \alpha_2 \cdots \alpha_k$ over the alphabet $\Sigma$, we say that $\rho = \rho_1 \rho_2 \dots \rho_k$ is
a \emph{run of $\A$ over $\alpha$} if, for all $i \in [k]$, the label of the transition $\rho_i$ is $\alpha_i$. The run $\rho$ is \emph{accepting} if $\dst(\rho_{k}) \in F$. A word  is \emph{accepted} by $\A$ if there is an accepting run of $\A$ over it. The \emph{language} of $\A$ is the set of words that it accepts.

A {\em nondeterministic B\"uchi word automaton (NBW)} is a tuple $\A = \tup{\Sigma,S,I,R,\buchiset}$,
which is like an NFW except that $F$ is replaced by a \emph{B\"uchi set} $\buchiset$.
Unlike NFW which run over finite words, an NBW runs over infinite words. Hence, given an infinite word $\alpha = \alpha_1 \alpha_2 \cdots$ over the alphabet $\Sigma$, we say that $\rho = \rho_1 \rho_2 \dots$ is a \emph{run of $\A$ over $\alpha$} if, for all $i \in \Nat$, the label of the transition $\rho_i$ is $\alpha_i$.
The run $\rho$ induces a set $inf(\rho)$ consisting of those states $q \in S$ such that $q = \src(\rho_i)$ for infinitely many $i$. The run $\rho$ is \emph{accepting} if $inf(\rho) \cap \buchiset \neq \emptyset$. The definition when a word is \emph{accepted}, and of the \emph{language} of $\A$, are as for NFW.

An \emph{NBW with one counter}, or \emph{B-automaton} for short, is a tuple
\[
\tup{\Sigma,S,I,R,\buchiset,cc}
\]
which is like an NBW except that it has an additional \emph{counter command function} $cc:R \to \{\inc,\reset,\skp\}$ which associates with each transition a counter-update operation.
An infinite run $\rho = \rho_1 \rho_2 \cdots$ of a B-automaton induces a set $ctr(\rho) = \{c_i : i \in \Nat\}$, where $c_1 = 0$ and
\[
 c_{i+1} = \begin{cases}
            c_i & \mbox{ if } cc(\rho_i) = \skp\\
            c_i + 1 & \mbox { if } cc(\rho_i) = \inc\\
            0 & \mbox{ if } cc(\rho_i) = \reset.
           \end{cases}
\]
The run $\rho$ is \emph{accepting} if it satisfies the B\"uchi condition and its counter values are bounded. I.e., if $inf(\rho) \cap \buchiset \neq \emptyset$ and $\exists n \in \Nat$ s.t. $c < n$ for all $c \in ctr(\rho)$.

If $\buchiset = S$ (i.e., if there is effectively no B\"uchi acceptance condition), then we say that the B\"uchi set is \emph{trivial}.

B-automata were defined in~\cite{b10}, and in the general case may have multiple counters, some of which should be bounded and some of which should be unbounded. Since one can easily simulate a B\"uchi acceptance condition with a single counter (see \cite{b10}), our definition of a B-automaton given above is a special case of the B-automata of~\cite{b10} with two counters.
The proof of Lemma~\ref{lem: B automata emptiness} below, which also applies to these general multi-counter automata, was communicated to us by Nathana\"el Fijalkow (as far as we know it is a ``folk theorem'' for which we could not find a clear statement or proof in the literature).

\begin{lemma} \label{lem: B automata emptiness}
\label{lem:single-counter-b-automaton-emptiness}
Deciding whether the language of a B-automaton is not empty can be solved in \PTIME.
\end{lemma}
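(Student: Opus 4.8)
The plan is to reduce the emptiness problem to the detection of a particular kind of cycle in the underlying transition graph $(S,R)$ of the automaton, and then to find such a cycle by a standard strongly-connected-component (SCC) computation. Concretely, I would first establish the following combinatorial characterization: the language of a B-automaton $\A = \tup{\Sigma,S,I,R,\buchiset,cc}$ is nonempty if and only if $(S,R)$ contains a cycle $C$ that (i) is reachable from some state in $I$, (ii) passes through a state of $\buchiset$, and (iii) either contains a transition $e$ with $cc(e)=\reset$, or contains no transition $e$ with $cc(e)=\inc$. The intuition is that any accepting run must repeat some behaviour infinitely often, and (iii) is exactly the local condition guaranteeing that repeating the cycle keeps the counter bounded: a cycle with an $\inc$ but no $\reset$ adds a positive amount to the counter on each traversal and hence drives it to infinity, violating acceptance, whereas a cycle that resets (or never increments) keeps the counter within a bound determined by a single traversal.

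For the ``only if'' direction I would take an accepting run $\rho$, fix a bound $n$ with $c<n$ for all $c \in ctr(\rho)$, and consider the sequence of configurations $(\src(\rho_i),c_i)$, where $c_i$ is the counter value before step $i$. Since $\buchiset$ and $\{0,\dots,n-1\}$ are finite and $\rho$ visits $\buchiset$ infinitely often, some configuration $(q,c)$ with $q \in \buchiset$ recurs infinitely often; the portion of $\rho$ between two such occurrences projects to a closed walk through $q$ whose counter returns to $c$. If this walk had an $\inc$ and no $\reset$, the counter would strictly increase along it, contradicting its return to $c$; hence the walk satisfies (iii), and, being reachable and meeting $\buchiset$, yields the desired cycle. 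For the ``if'' direction, from such a cycle $C$ I would build the ultimately periodic run that reaches $C$ and then traverses it forever: it visits $\buchiset$ infinitely often, and its counter is bounded (by a single traversal's worth of increments after each reset, or permanently constant if $C$ has no $\inc$), so it is accepting and the language is nonempty.

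Finally I would turn the characterization into a \PTIME test using the SCC decomposition of $(S,R)$ together with the set of states reachable from $I$, both computable in polynomial time. The condition splits into two cases handled separately. Case (A), a cycle through $\buchiset$ containing a reset, amounts to the existence of a reachable nontrivial SCC (one containing at least one internal edge) that contains both a $\buchiset$-state and a $\reset$-edge internal to it; strong connectivity lets one route a single cycle through both. Case (B), a cycle through $\buchiset$ with no increment, amounts to a $\buchiset$-state lying on a cycle of the subgraph obtained by deleting all $\inc$-edges, while still being reachable from $I$ in the full graph (so the prefix reaching the cycle may freely use $\inc$-edges); this is again an SCC-plus-reachability check. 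The algorithm returns ``nonempty'' precisely when (A) or (B) succeeds.

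I expect the main obstacle to be the characterization rather than the algorithmics. Boundedness of a counter along an infinite run is not an $\omega$-regular property in general, so one cannot simply intersect with a Büchi automaton; the crucial point is that for a \emph{single} bounded counter this global condition nevertheless localizes to the simple per-cycle condition (iii), and it is exactly this localization that makes the pigeonhole extraction in the ``only if'' direction go through. Once (iii) is justified, the reduction to SCC reachability is routine, and I would close by remarking that the same cycle-based reasoning lifts to the general multi-counter setting of~\cite{b10}.
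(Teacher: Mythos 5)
Your proof is correct, but it takes a genuinely different route from the paper's. The paper reduces emptiness to the emptiness problem for Streett automata: it builds a product automaton that records the last counter command in the state and imposes the Streett pairs ``infinitely often $\buchiset$'' and ``infinitely many increments implies infinitely many resets'', then invokes the known \PTIME{} Streett emptiness test; the nontrivial direction there is that an accepting Streett run can be pumped into one with bounded gaps between resets, hence a bounded counter. You instead prove a direct combinatorial characterization --- nonemptiness iff there is a reachable closed walk through $\buchiset$ that either contains a $\reset$ edge or contains no $\inc$ edge --- via a pigeonhole argument on pairs (state, counter value), and then detect such a walk with two SCC-plus-reachability checks. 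Your argument is more elementary and self-contained (no Streett emptiness as a black box) and yields an explicitly linear-time-flavoured algorithm; the pigeonhole extraction and the observation that an increment-without-reset cycle forces the counter to grow are exactly the right points, and the split into cases (A) and (B) correctly covers the disjunction. The one place where the paper's route buys something is your closing remark: the Streett reduction extends verbatim to multiple bounded counters (one pair per counter, still \PTIME{} in the number of pairs), whereas your cycle characterization, while still true for several counters, requires a single cycle satisfying a conjunction over counters of ``has a reset or has no increment'', which is no longer dispatched by two independent SCC checks; so the lift to the multi-counter setting is not as immediate as you suggest. For the single-counter lemma as stated, your proof is complete.
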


\begin{proof}
We reduce the problem to the emptiness problem for Streett automata~\footnote{We remind the reader that a Streett automaton is like an NBW except that its acceptance condition is not a single B\"uchi set, but a family of pairs of sets $\{(B_1, G_1), (B_2, G_2), \ldots, (B_k, G_k)\}$, and a run $\rho$ is accepting if for all $i \in [k]$ we have that $inf(\rho) \cap G_i \neq \emptyset$ implies $inf(\rho) \cap B_i \neq \emptyset$.}, which is in \PTIME~\cite{journals/scp/EmersonL87}.

Given a B-automaton $\A = \tup{\Sigma,S,I,R,\buchiset,cc}$, build a Streett automaton $\A'$ whose transition relation is like that of $\A$ except that it also stores the most recent counter command in the state, and whose acceptance condition encodes the following properties: `infinitely often see a state in $\buchiset$' and `infinitely many increments implies infinitely many resets'. Formally, $\A'$ has states $S \times \{\skp,\inc,\reset\}$; initial states $S \times \{\reset\}$; transitions of the form $((s,c),\alpha,(s',c'))$ where $(s,\alpha,s') \in R$ and $cc(s,\alpha,s') = c'$; and the acceptance condition containing the two pairs $(S,\buchiset)$ and $(S\times\{\inc\},S \times \{\reset\})$.

Then, the language of $\A$ is non-empty if and only if the language of $\A'$ is non-empty. To see this, note that accepting runs in $\A$ induce accepting runs in $\A'$, since a run with infinitely many "increments" that also has a bound on the counter must have infinitely many "resets". On the other hand, an accepting run $\rho_1$ of $\A'$ can be transformed, by "pumping out" loops, into another accepting run $\rho_2$ of $\A'$ in which the distance between two successive reset transitions is bounded (one only needs to ensure that in each infix starting and ending in a reset, if there is a B\"uchi state in the infix, then there is still one after pumping out).
Thus, the counter of $\rho_2$ is bounded, and so is also an accepting run of $\A$.\footnote{Observe that the Streett automaton does not, in general, accept the same language as the B-automaton. Indeed the latter's language may not even be $\omega$-regular.} 
\end{proof}

\subsection{Linear Temporal Logic}

For a set $AP$ of atomic propositions, \emph{formulas of \LTL  over $\AP$} are defined by the following BNF (where $p \in \AP$):
\[
\varphi \!::=\! p \!\mid\! \varphi \vee \varphi \!\mid\! \neg \varphi \!\mid\!  \nextX \! \varphi \!\mid \! \varphi \until \varphi
\]
We use the usual abbreviations, $\varphi \limp \varphi' = \neg \varphi \vee \varphi'$, $\true = p
\vee \neg p$, $\eventually \varphi = \true \until \varphi$ (read "eventually $\varphi$"), $\always \varphi = \lnot \eventually \lnot \varphi$ (read "always $\varphi$"). The \emph{size $|\varphi|$} of a formula $\varphi$
is the number of symbols in it.
A \emph{trace} $\tau$ is an infinite sequence over the alphabet $\Sigma = 2^{\AP}$, an infinite sequence of valuations of the
atoms. For $n \geq 0$, write $\tau_n$ for the valuation at position $n$; so, $\tau = \tau_0 \tau_1 \tau_2 \cdots$
Given a trace $\tau$, an integer $n$, and an LTL formula $\varphi$,
the satisfaction relation $(\tau,n) \models \varphi$, stating that
$\varphi$ holds at step $n$ of the sequence $\tau$, is defined as
follows:
\begin{itemize}
\item $(\tau,n) \models p$ iff $p \in \tau_n$;
\item $(\tau,n) \models \varphi_1 \vee \varphi_2$ iff $(\tau,n) \models \varphi_1$ or $(\tau,n) \models \varphi_2$;
\item 	$(\tau,n) \models \neg \varphi$ iff it is not the case that $(\tau,n) \models \varphi$;
\item 	$(\tau,n) \models \nextX \varphi$ iff $n + 1 < |\tau|$ and $(\tau,n+1) \models \varphi$;
\item 	$(\tau,n) \models \varphi_1 \until \varphi_2$ iff $(\tau,m) \models \varphi_2$ for some $n \leq m < |\tau|$, and $(\tau,j) \models \varphi_1$ for all $n \leq j < m$.
\end{itemize}
Write $\tau \models \varphi$ if $(\tau,0) \models \varphi$, read \emph{$\tau$ satisfies $\varphi$}.

We consider the variant \LTLf known as ``\LTL over finite traces"~\cite{BacchusK00,BaierM06,DegVa13}. It has the same syntax and semantics as \LTL except that $\tau$ is a finite sequence. Observe that the satisfaction of $\nextX$ and $\until$ on finite traces is defined ``pessimistically'', i.e., a trace cannot end before the promised eventuality holds.

The following states that one can convert \LTL/\LTLf formulas to NBW/NFW with at most an exponential blowup:
\begin{thmC}[\cite{Vardi:Banff95,DegVa13}] \label{thm:vardi-wolper}
Let $\varphi$ be an \LTL (resp. \LTLf) formula. One can build an NBW (resp. NFW), whose size is at most exponential in $|\varphi|$, accepting exactly the models $\varphi$.
\end{thmC}

\section{Parameterized systems}
\label{sec:parameterized-systems}
We first introduce \emph{systems with rendezvous and symmetric broadcast} (or \emph{RB-systems}, for short), a general formalism suitable for describing the parallel composition of $n \in \Nat$ copies of a process {\em template}. We identify two special cases: Rendezvous systems (or R-systems, for short) and Discrete Timed Systems.

\subsection{RB-systems} \label{sec: RB systems}
An RB-system is a certain LTS which evolves nondeterministically: either a $k$-wise rendezvous action is taken, i.e., $k$ different processes instantaneously synchronize on some rendezvous action $\msg{a}$, or the symmetric broadcast action is taken, i.e., all processes take an edge labeled by $\brd$.
Systems without the broadcast action are called R-systems. We will show that RB-systems (strictly) subsume discrete timed networks \cite{ADM04}, a formalism allowing to describe parameterized networks of timed processes with discrete value clocks. A discrete timed network also evolves nondeterministically: either a $k$-wise rendezvous action is taken by $k$ processes of the network, or all the clocks of all the processes advance their value by the same (discrete) amount.

In the rest of the paper the number of processes participating in a rendezvous will be denoted by $k$,  we let $\Actions$ denote a finite set of \emph{rendezvous actions}, and we call \emph{rendezvous alphabet} the set
$\Actionprts = \{ \msg{a}_i ~:~ \msg{a} \in \Actions, i \in [1,k] \}$.

\begin{definition}[\bf Process Template, RB-Template, R-Template] \label{dfn: rb-template}
{
A \emph{process template} is a finite LTS $\proctemp = \tup{\AP,\Actionprts \cup \{\brd\}, S,I,R,\lambda}$.
A process template $\proctemp$ is an {\em RB-template}, if for every state $s \in S$, we have that $\brd$ is enabled in $s$.We call edges labeled by $\brd$ {\em broadcast edges}, and the rest {\em rendezvous edges}.
A process template $\proctemp$ is an {\em R-template}, if $\proctemp$ does not contain any broadcast edges.}
\end{definition}

We now define the system $\sysinst{n}$ consisting of $n$ copies of a given template $\proctemp$:
\begin{definition}[\bf RB-System, R-System]
\label{def:rb-system}
Given an integer $n \in \Nat$ and an RB-Template (resp. R-Template) $\proctemp = \tup{\AP,\Actionprts \cup \{\brd\},S,I,R,\lambda}$, the {\em RB-system $\sysinst{n}$} (resp.  {\em R-system $\sysinst{n}$}) is defined as the finite LTS $\tup{\AP^n,\Comm^n, S^n,I^n,R^n,\lambda^n}$ where:

\begin{enumerate}
\item The set of atomic propositions $\AP^n$ is $\AP \times [n]$; intuitively, the atom $(p,i)$ denotes the fact that atom $p$ is currently true of process $i$.

\item The \emph{communication alphabet} $\Comm^n$ consists of $\brd$ and every tuple of the form $((i_1,\msg{a}_1), \ldots, \allowbreak (i_k,\msg{a}_k))$ where
$\msg{a} \in \Actions$ and $i_1,i_2,\cdots,i_k$ are $k$ different elements in $[n]$; intuitively, the system takes this action means that simultaneously for each $j \in [k]$, process $i_j$ transitions along an $\msg{a}_j$ edge.

\item $S^n$ is the set of functions (called {\em configurations}) of the form $f:[n] \to S$.
    We call $f(i)$ the {\em state of process $i$} in $f$. Note that we sometimes find it convenient to consider a more flexible naming of processes in which we let $S^n$ be the set of functions $f:X \to S$, where $X \subseteq \Nat$ is some set of size $n$.

\item The set of {\em initial configurations} $I^n = \{ f \in S^n \mid f(i) \in I \text{ for all } i \in [n]  \}$ consists of all configurations which map all processes to initial states of $\proctemp$.

\item The set of {\em global transitions} $R^n \subseteq S^n \times \Comm^n \times S^n$ contains transitions $\trans{f}{g}{\sigma}$ where one of the following two conditions hold:
    \begin{itemize}
    \item {\em (broadcast)} $\sigma = \brd$, and $\trans{f(i)}{g(i)}{\brd}$ in $R$, for every $i \in [n]$;

    \item {\em (rendezvous)} $\sigma = ((i_1, \msg{a}_1), \dots, (i_k, \msg{a}_k))$, and
    $\trans{f(i_j)}{g(i_j)}{\msg{a}_j}$ in $R$ for every $1 \leq j \leq k$; and $f(i) = g(i)$ for every $i \not \in \{i_1, \dots, i_k\}$. In this case we say that $\msg{a} \in \Actions$ is the {\em action taken}.

	\end{itemize}

\item The labeling relation $\lambda^n \subseteq S^n \times \AP^n$ consists of the pairs $(f,(p,i))$ such that $(f(i),p) \in \lambda$.
\end{enumerate}
\end{definition}

For every transition $t = (f,\sigma,g) \in R^n$, we define the set of \emph{active processes}, denoted by $\activeprocs(t)$, as follows: 
\begin{itemize}
\item if $\sigma = \brd$ define $\activeprocs(t) = [n]$, 
\item if $\sigma = ((i_1, \msg{a}_1), \dots, (i_k, \msg{a}_k))$ define $\activeprocs(t) = \{ i_1, \dots, i_k \}$. 
\end{itemize}

Let $t$ be a global transition $\trans{f}{g}{\sigma}$, and let $i$ be a process. We say that $i$ {\em moved} in $t$ if $i \in \activeprocs(t)$.
We write $edge_i(t)$ for the edge of $\proctemp$ taken by process $i$ in the transition $t$, i.e.,
\begin{itemize}
\item if
$\sigma = \brd$ then $edge_i(t)$ denotes $\trans{f(i)}{g(i)}{\brd}$;
\item if $\sigma = ((i_1, \msg{a}_1), \dots, (i_k, \msg{a}_k))$ then $edge_i(t)$ denotes $\trans{f(i)}{g(i)}{\msg{a}_j}$ if $\sigma(j) = (i, \msg{a}_j)$ for some $j \in [k]$;
\item otherwise $edge_i(t) := \bot$.
\end{itemize}
In case that $edge_i(t) \neq \bot$ we say that $edge_i(t)$ is \emph{taken} in $t$.
Given a run $\pi$ of $\proctemp^n$ and an edge $e$ of $\proctemp$, we say that $e$ \emph{appears} on $\pi$ if it is taken by some active process on some transition of $\pi$.

Given a process template $\proctemp$ define the {\em RB-system} $\psys$ as the following LTS:
\[
\tup{\AP^\infty,\Comm^\infty,S^\infty,I^\infty,R^\infty,\lambda^\infty}
\]
where
$\AP^\infty = \cup_{n \in \Nat} \AP^n$,
$S^\infty = \cup_{n \in \Nat} S^n$,
$I^\infty = \cup_{n \in \Nat} I^n$,
$R^\infty = \cup_{n \in \Nat} R^n$,
$\Comm^\infty = \cup_{n \in \Nat} \Comm^n$ and
$\lambda^\infty = \cup_{n \in \Nat} \lambda^n$.

\subsection{Discussion of our modeling choices} \label{sec:discussion}
Our definition of RB-systems allows one to model finitely many different process templates because a single process template $\proctemp$ can have multiple initial states  (representing the disjoint union of the different process templates).

We can easily transform a rendezvous action $\msg{a}$ involving $j < k$ processes (in particular where $j=1$, representing an internal transition taken by a single process) into a $k$-wise rendezvous action by simply adding, for every $j < i \leq k$, and every state in $\proctemp$, a self-loop labeled $\msg{a}_i$.
This transformation works when there are at least $k$ processes in the system. This is not a real restriction since all the systems with less than $k$ processes yield a single finite-state system which can be easily model-checked. In any case, all our results hold also if one specifically allows rendezvous actions involving $j < k$ processes.

The assumption that every state in an RB-template is the source of a broadcast edge means that for every configuration $f$ there is a broadcast global-transition with source $f$.

\subsection{Executions and Specifications}\label{sec:executions}

Take an RB-system $\sysinst{n} =\!\tup{\AP^n,\Comm^n,S^n,I^n,R^n, \allowbreak \lambda^n}$, a path $\pi = t_1 t_2 \dots$ in $\sysinst{n}$, and a process $i$ in $\sysinst{n}$. Define $\proj{\pi}(i) := edge_i(t_{j_1}) edge_i(t_{j_2}) \dots$, where $j_1 < j_2 < \dots$ are all the indices $j$ for which $edge_i(t_j) \neq \bot$. Thus, $\proj{\pi}(i)$ is the path in $\proctemp$ taken by process $i$ during the path $\pi$. Define the set of {\em executions} of $\psys$, denoted by $\exec$, to be the set of the runs of $\psys$ projected onto the state labels of a single process. Note that, due to symmetry, we can assume w.l.o.g. that the runs are projected onto process $1$. Formally,
\[\exec = \{ \lambda(\proj{\pi}(1)) \mid \pi \in \runs(\psys) \},\]
{where $\lambda$ is the labeling of the process template $P$.}
We denote by $\execfin$ (resp. $\execinf$) the finite (resp. infinite) executions in $\exec$.

\begin{figure}[t]
\centering
\begin{minipage}{.45\textwidth}
  \centering

\begin{tikzpicture}[->,>=latex,node distance=0.6cm,bend angle=25,auto]

\tikzset{every state/.style={circle,minimum size=.4cm,inner sep=0cm}}
\tikzset{every edge/.append style={font=\small}}

\node[initial,state] (t1) {$p$};
\node[state] (t2) [right= of t1] {$q$};

\path (t1) edge [loop above] node {{$\msg{a}_1$}} (t1);
\path (t1) edge [bend left] node {{$\msg{a}_2$}} (t2);

\end{tikzpicture}  \vspace{2cm}
	\caption{R-template}\label{fig: strict}
\end{minipage}
\begin{minipage}{.45\textwidth}
  \centering

\begin{tikzpicture}[->,>=latex,node distance=0.8cm,bend angle=25,auto]

\tikzset{every state/.style={circle,minimum size=.4cm,inner sep=0.cm}}
\tikzset{every edge/.append style={font=\small}}

\node[state] (t1) {$p$};
\node[initial,state] (t3) [below= of t1] {$r$};
\node[state] (t2) [right= of t3] {$q$};

\path (t1) edge [loop above] node {$\msg{a}_1$} (t1);
\path (t3) edge [bend left] node {$\msg{a}_1$} (t1);

\path (t3) edge [bend right, below] node {{$\msg{a}_2$}} (t2);

\path (t3) edge [loop below] node {$\brd$} (t3);
\path (t1) edge node {$\brd$} (t3);
\path (t2) edge [above] node {$\brd$} (t3);

\end{tikzpicture}

%
%
%
%
%
%
	\caption{RB-template}\label{fig: process not regular 2}
\end{minipage}
	\end{figure}

We present two examples. Note that
the letter in a state is both the name of that state as well as the (unique) atom that holds in that state.
\begin{example} \label{ex:R-template}
Consider the R-template $P$ in Figure~\ref{fig: strict}. Note that
$\execfin$ consists of all prefixes of words that match the regular expression $pp^*q$, and $\execinf = \emptyset$.
\end{example}

\begin{example} \label{ex:RB-template} 
Consider the RB-template $P$ in Figure~\ref{fig: process not regular 2}. In every run of $\sysinst{n}$, every process is involved in at most $n-1$ consecutive rendezvous transitions before a broadcast transition is taken, which resets all processes to the initial state. Thus, $\execfin$ consists of all prefixes of words that match
$(rr^*(pp^*+q))^*$ since $n$ may be arbitrarily large. Similarly, $\execinf$ is the set of words of the form
$(rr^*(pp^*+q))^*r^\omega$
and $r^{n_1}(p^{x_1} + q) \allowbreak r^{n_2}(p^{x_2} + q) \dots$, for some sequences $n_1, n_2, \cdots$ and $x_1,x_2,\cdots$ of positive integers such that $\{x_i\}_i$ is bounded. 
The first form occurs in case process $1$ is involved in only finitely many rendezvous transitions, and the second form occurs if process $1$ is involved in infinitely many rendezvous transitions. Indeed, in the latter case, the $n_i$s are unconstrained since a broadcast can occur any number of times before a rendezvous, and if $\{x_i\}_i$ is bounded by $B \in \Nat$ then the execution can be realised in the RB-System $\proctemp^{B+1}$.
\end{example}

The definitions above imply the following easy lemma.

\begin{lemma} \label{lem:bisim}
Let $P,P'$ be two RB-templates with the same atomic propositions and edge-labels alphabet.
\begin{enumerate}
 \item If $\pi \in \runs(P^\infty)$ and $\pi' \in \runs(P'^\infty)$ are equi-labeled then $\lambda(\proj{\pi}(j)) = \lambda'(\proj{\pi'}(j))$ for every process $j$.\label{lem:bisim:equi}
 \item If $P$ and $P'$ simulate each other then $\exec[P^\infty] = \exec[P'^\infty]$. \label{lem:bisim:bisim}
\end{enumerate}
\end{lemma}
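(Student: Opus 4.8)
The plan is to prove the two parts in turn, using part~\ref{lem:bisim:equi} to establish part~\ref{lem:bisim:bisim}.

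For part~\ref{lem:bisim:equi}, I would argue by a ``fiber'' argument running along the two runs in lock-step. Write $\pi = t_1 t_2 \cdots$ and $\pi' = t'_1 t'_2 \cdots$ with $t_i = (f^{(i)}, \sigma_i, g^{(i)})$ and $t'_i = (f'^{(i)}, \sigma'_i, g'^{(i)})$. Since $\pi$ and $\pi'$ are equi-labeled we have $\sigma_i = \sigma'_i$ for every $i$, and because $\activeprocs(t_i)$ is determined solely by the communication letter $\sigma_i$, a process $j$ moves in $t_i$ exactly when it moves in $t'_i$. Hence the increasing sequence of indices $m_1 < m_2 < \cdots$ at which $j$ moves is common to both runs, so $\proj{\pi}(j)$ and $\proj{\pi'}(j)$ consist of the same number of edges. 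It then remains to match the labels. Equi-labeling also gives $\lambda^n(f^{(i)}) = \lambda'^{n'}(f'^{(i)})$ and $\lambda^n(g^{(i)}) = \lambda'^{n'}(g'^{(i)})$; intersecting these equal sets of atoms with the fiber over index $j$, and recalling that $\lambda^n(f) = \{(p,i) : p \in \lambda(f(i))\}$, yields $\lambda(f^{(i)}(j)) = \lambda'(f'^{(i)}(j))$ and the analogous equality for $g$. Since $\lambda$ on a path is just the sequence of labels of its source states (with the label of the final destination appended in the finite case), and these sources and final destination are exactly the $f^{(m_l)}(j)$ and the last $g^{(m_l)}(j)$, the two label sequences $\lambda(\proj{\pi}(j))$ and $\lambda'(\proj{\pi'}(j))$ coincide.

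For part~\ref{lem:bisim:bisim}, by symmetry it suffices to prove $\exec \subseteq \exec[P'^\infty]$ under the assumption that $P'$ simulates $P$. The key step is to lift a template-level simulation $M \subseteq S \times S'$ to the relation $M^n := \{(f,f') : (f(i), f'(i)) \in M \text{ for all } i \in [n]\}$ and show it is a simulation of $P^n$ by $P'^n$; taking $M^\infty := \bigcup_n M^n$ then witnesses that $P'^\infty$ simulates $P^\infty$. Granting this, the remark following the definition of simulation supplies, for each run $\pi$ of $P^\infty$, an equi-labeled run $\pi'$ of $P'^\infty$ (with the same number of processes, by the construction of $M^n$); part~\ref{lem:bisim:equi} applied to $j = 1$ gives $\lambda(\proj{\pi}(1)) = \lambda'(\proj{\pi'}(1))$, so every execution of $P^\infty$ is an execution of $P'^\infty$. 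Exchanging the roles of $P$ and $P'$ yields the reverse inclusion, and hence equality.

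The main work, and the step I expect to be the principal obstacle, is verifying that $M^n$ is genuinely a simulation, which requires checking the transition-matching clause separately for the two kinds of global transitions. For a broadcast $(f, \brd, g)$ one applies the template simulation coordinatewise to obtain, for each process $i$, a state $g'(i)$ with $(f'(i), \brd, g'(i)) \in R'$ and $(g(i), g'(i)) \in M$, and assembles these into a legal broadcast transition $(f', \brd, g')$ of $P'^n$. For a rendezvous $\sigma = ((i_1,\msg{a}_1),\dots,(i_k,\msg{a}_k))$ one applies $M$ only at the active processes $i_1,\dots,i_k$ and freezes the inactive ones by setting $g'(i) = f'(i)$ (which remains in $M$ since $(f(i),f'(i)) \in M$ and $f(i)=g(i)$), producing a rendezvous transition of $P'^n$ that carries the \emph{same} letter $\sigma$. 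The label-preservation clause $\lambda^n(f) = \lambda'^n(f')$ is immediate from the per-process label equality guaranteed by $M$, and the initial-configuration clause follows by applying $M$'s initial-state condition at each process. Once these checks are complete, $M^\infty$ is a simulation because every transition of $P^\infty$ lies in some $P^n$, and the remainder of the argument is routine.
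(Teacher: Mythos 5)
Your proposal is correct and follows essentially the same route as the paper: part (1) by observing that equal edge labels force the same processes to be active at each step and then reading off the per-process labels, and part (2) by lifting the template simulation pointwise to configurations and invoking part (1) on the resulting equi-labeled runs. The only difference is that you spell out the verification that the lifted relation is a simulation, which the paper dismisses as routine.
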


\begin{proof}
Let $\pi = e_1 e_2 \cdots$ and $\pi' = e'_1 e'_2 \cdots$.  For the first item note that for every $i < |\pi|$
the label of the edge $e_i$ is equal to the label of the edge $e'_i$.
It follows that the same processes are active in $e_i$ and $e'_i$.
Thus, the run $\proj{\pi}(i)$ of the template $P$ is equi-labeled with the run $\proj{\pi'}(i)$ of the template $P'$, and in particular they induce the same sequence of sets of atomic propositions.

For the second item, suppose $P'$ simulates $P$ via $\simrel \subseteq S \times S'$. Derive the relation $C \subseteq S^{\infty} \times (S')^\infty$ from $\simrel$ point-wise, i.e.,
$(f,f') \in C$ iff there is $n \in \Nat$ such that i) $f \in S^n, f' \in (S')^n$ and ii) for every $i \leq n$, $(f(i),f'(i)) \in \simrel$. It is routine to check that $C$ is a simulation, and thus $P'^\infty$ simulates $P^\infty$. By a symmetric argument, $P^\infty$ simulates $P'^\infty$. Thus, for every run $\pi$ in $P^\infty$ there is an equi-labeled run $\pi'$ in $P'^\infty$, and vice versa. Now apply the first item. 
\end{proof}

\subsection{Parameterized Model-Checking Problem.} \label{subsec: PMCP}
Specifications represent sets of finite or infinite sequences over the alphabet $2^{\AP}$. In this work we will consider specifications of finite executions to be given by nondeterministic finite word automata (NFW) and specifications of infinite executions to be given by nondeterministic B\"uchi word automata (NBW). Standard translations allow us to present specifications in linear temporal logics such as \LTL\ and $\LTL_f$, see Theorem~\ref{thm:vardi-wolper}.

We now define the main decision problem of this work.
{
\begin{definition}[\bf PMCP] \label{dfn:PMCP}
Let $\Pspec$ be a specification formalism for sets of infinite (resp. finite) words over the alphabet $2^{\AP}$.
The {\em Parameterized Model Checking Problem} for $\Pspec$, denoted $PMCP(\Pspec)$, is to decide, given a process-template $\proctemp$, and a set $W$ of infinite (resp. finite) words specified in $\Pspec$, if all executions in the set $\execinf$ (resp. $\execfin$) are in $W$.
\end{definition}
}

Just as for model-checking~\cite{Vardi:Banff95}, we have three ways to measure the complexity of the PMCP problem. If we measure the complexity in the size of the given template and specification, we have the (usual) complexity, sometimes called \emph{combined complexity}. If we fix the template and measure the complexity with respect to the size of the specification we get the \emph{specification complexity}. If we fix the specification and measure the complexity with respect to the size of the template we get the \emph{program complexity} (we use "program complexity" instead of "template complexity" in order to be consistent with the model-checking terminology). Moreover, if $C$ is a complexity class, we say that the specification complexity of the PMCP-problem is \emph{$C$-hard} if there is a fixed template such that the induced PMCP problem (that only takes a specification as input) is $C$-hard in the usual sense; and it is \emph{$C$-complete} if it is in $C$ and $C$-hard. Symmetric definitions hold for program complexity.

\subsection*{Results}
Our main results solve PMCP for RB-templates for specifications of finite and infinite executions. In both cases we use the automata theoretic approach: given an RB-template $\proctemp$, we show how to build an automaton $\M$ accepting exactly the executions of the RB-system $\psys$. Model checking of a specification given by another automaton $\M'$ is thus reduced to checking if the language of $\M$ is contained in the language of $\M'$. The automaton $\M$ will be based on what we call the reachability-unwinding of the given RB-template. In the finite-execution case $\M$ will be an NFW that is almost identical to the reachability-unwinding. In the infinite execution case $\M$ will be more complicated. Indeed, classic automata over infinite words (e.g., NBW) will not be powerful enough to capture the system (see Lemma~\ref{lem: infinite exec no regular} below) and so we use B-automata; the automaton $\M$ will be based on three copies  of the reachability-unwinding (instead of one copy) where from each copy certain edges will be removed based on a classification of edges into different types.

Our first result classifies the complexity for NFW/\LTLf specifications:
\begin{theorem}  \label{thm:PSPACE-complete}
Let $\Pspec$ be specifications of sets of finite executions expressed as NFW or \LTLf formulas. Then the complexity of $PMCP(\Pspec)$ for RB-systems is \PSPACE-complete, as is the program complexity and the specification complexity.
\end{theorem}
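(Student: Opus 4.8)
I would follow the automata-theoretic route announced just before the statement: reduce $\PMCP(\Pspec)$ to the language-containment question $\execfin \subseteq L(W)$, where $L(W)$ is the language denoted by the specification; build an automaton $\M$ with $L(\M)=\execfin$; and then decide the \emph{complement} (non-containment) by a nondeterministic search that uses only polynomial space. Savitch's theorem ($\NPSPACE=\PSPACE$) then yields membership in $\PSPACE$ for combined complexity, hence also for program and specification complexity. The two matching lower bounds are obtained by separate reductions, one fixing the specification (program complexity) and one fixing the template (specification complexity), so that all three measures are $\PSPACE$-complete at once.

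\textbf{Upper bound.} The central device is the \emph{reachability-unwinding} of $\proctemp$: a deterministic sequence $R_0,R_1,R_2,\dots$ of subsets of the template state set $S$, where $R_n$ records which template states can be populated by \emph{arbitrarily many} processes right after the $n$-th symmetric broadcast. I would define a \emph{rendezvous-closure} operator $\mathsf{rc}\colon 2^S\to 2^S$, where $\mathsf{rc}(T)$ is the least superset of $T$ such that for every action $\msg{a}\in\Actions$ and every family $e_1,\dots,e_k$ with $e_j$ an $\msg{a}_j$-edge and $\src(e_j)\in\mathsf{rc}(T)$, all destinations $\dst(e_1),\dots,\dst(e_k)$ are added; and a broadcast-step operator $\mathsf{post}_\brd(T)=\{\,s' : \trans{s}{s'}{\brd}\text{ for some }s\in T\,\}$. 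Then $R_0=\mathsf{rc}(I)$ and $R_{n+1}=\mathsf{rc}(\mathsf{post}_\brd(R_n))$. Since $R_{n+1}$ is a deterministic function of $R_n$ and there are at most $2^{|S|}$ subsets, the sequence is eventually periodic, i.e. it has a \emph{lasso} shape of prefix-plus-period at most $2^{|S|}$; moreover each layer and each step between layers is computable in space polynomial in $|S|$, as both operators are poly-time fixpoints over $S$. The automaton $\M$ tracks a pair (current layer $R$, current state $q$ of process $1$): within a layer, process $1$ may take a rendezvous edge $e$ from $q$ iff the action carrying $e$ can be completed by other participants whose sources lie in $R$ (a local check against $R$), while a broadcast edge advances $R$ to $\mathsf{rc}(\mathsf{post}_\brd(R))$; the emitted letter is $\lambda(q)$, so that $L(\M)=\execfin$. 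I would never build $\M$ explicitly: instead I guess a finite execution of process $1$ letter-by-letter, storing only the current layer (an $|S|$-bit set, advanced on the fly), the current state of process $1$, and a monitor for $W$. For an NFW $W$ the monitor is the subset of $W$-states reachable on the guessed prefix (a poly-size frontier), accepting when process $1$'s execution may legally stop with this frontier disjoint from the final states (witnessing $w\notin L(W)$); for an $\LTLf$ formula $\varphi$ I instead guess a run of an on-the-fly nondeterministic automaton for $\neg\varphi$ (Theorem~\ref{thm:vardi-wolper}), storing a single poly-size state and accepting when it accepts (witnessing $w\models\neg\varphi$). A step counter bounded by the product of the three component state-spaces forces termination, so the search runs in nondeterministic polynomial space.

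\textbf{The main obstacle} is proving $L(\M)=\execfin$, specifically the completeness direction: every word accepted by $\M$ must be realized by an \emph{actual} run of some $\sysinst{n}$ with finitely many processes. This is exactly where unbounded parallelism is used: given a finite accepting trajectory through the layers, one fixes $n$ large enough and schedules the non-distinguished processes so as to simultaneously (i) populate each state of each layer with at least as many tokens as later steps consume and (ii) supply the $k-1$ partners for each rendezvous move of process $1$. Soundness of $\mathsf{rc}$ rests on a monotonicity/pumping argument---firing a rendezvous from hot sources keeps those sources hot while making the destinations hot, and only broadcasts reset the hot set---and the delicate point is verifying that a single choice of $n$ works uniformly along the whole finite execution.

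\textbf{Lower bounds.} For program complexity I would fix a trivial specification (e.g. ``process $1$ never reaches a designated error label'') and reduce the $\PSPACE$-complete termination problem of Boolean programs to $\PMCP$. The template is polynomial in the program, but its exponential configuration space is encoded in the \emph{hot set}: a valuation of the $m$ Boolean variables is represented by which of the paired states $\{t_i,f_i\}$ lie in the current layer, the program counter by further layer states, and one instruction is realized by one broadcast, so that the deterministic layer dynamics $R_n\mapsto R_{n+1}$ simulate the program's transition function; termination then corresponds to reachability of the error label along some execution, which is precisely what the negation of $\PMCP$ tests. For specification complexity I would fix a single ``permissive'' template whose process-$1$ executions are \emph{all} words over the label alphabet (free movement is enabled because every state is hot and rendezvous partners are always available), so that $\PMCP(W)$ becomes the universality problem for the NFW $W$, respectively $\LTLf$ validity---both $\PSPACE$-complete. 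Together with the $\PSPACE$ upper bound, this gives $\PSPACE$-completeness for combined, program, and specification complexity simultaneously; the encoding of an exponential state space into a polynomial template via the powerset dynamics of the unwinding is the crux of this hardness argument.
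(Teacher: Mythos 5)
Your proposal follows essentially the same route as the paper: the saturation-based reachability-unwinding with its lasso of at most $2^{|S|}$ layers, an on-the-fly \NPSPACE{} search for a counterexample that stores only the current layer together with a subset-construction (resp.\ $\neg\varphi$-automaton) monitor, completeness of the layer automaton via a loading/composition argument over unboundedly many helper processes, a reduction from Boolean-program reachability for program complexity, and a fixed permissive template reducing NFW universality/\LTLf{} validity for specification complexity. The only point you leave implicit is the gadgetry (primed copies of states and per-round ``protect'' rendezvous, as in the paper's construction) that prevents the within-layer saturation from populating both truth values of a variable, which is what makes your claim that the layer dynamics deterministically simulate the Boolean program actually hold.
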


To see that classical acceptance conditions (e.g. B\"{u}chi, Parity) are not strong enough for the case of infinite executions, consider the following lemma.

\begin{lemma}\label{lem: infinite exec no regular}
  The process template $\proctemp$ in Figure~\ref{fig: process not regular 2} has the property that the set $\execinf$ is not $\omega$-regular.
\end{lemma}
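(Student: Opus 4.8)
The plan is to argue by contradiction, exploiting that the $\omega$-regular languages are exactly those recognized by nondeterministic B\"uchi automata (NBW) and that such an automaton, having finitely many states, cannot detect whether the lengths of the $p$-blocks occurring between consecutive $r$'s stay bounded along a run. Recall from Example~\ref{ex:RB-template} that $\execinf$ contains every word of the form $r\, p^{x_1}\, r\, p^{x_2}\, r\cdots$ with $x_i \in \Nat$ precisely when $\limsup x_i < \infty$, and excludes such a word when $\limsup x_i = \infty$. The non-$\omega$-regularity is meant to come entirely from this boundedness condition on the $\limsup$.

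Concretely, suppose for contradiction that $\execinf$ is recognized by an NBW $\A$ with $m$ states. Consider the word $w = (r\, p^{m+1})^\omega$. All of its $p$-blocks have length $m+1$, so $\limsup x_i = m+1 < \infty$ and hence $w \in \execinf$, giving $\A$ an accepting run $\rho$ on $w$. First I would locate a pumpable $p$-loop inside each block: while reading the $m+1$ consecutive $p$'s of the $j$-th block, $\rho$ visits $m+2$ states, so by the pigeonhole principle two of them coincide, yielding a cycle $c_j \xrightarrow{p^{\ell_j}} c_j$ in $\A$ with $1 \le \ell_j \le m+1$ that is traversed while reading $p$'s of that block.

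Next I would pump these loops by growing amounts so as to destroy boundedness while preserving acceptance. Form $w'$ from $w$ by replacing the $p$-block of block $j$ with $p^{\,m+1+j\ell_j}$, and form a run $\rho'$ by inserting $j$ extra traversals of the cycle $c_j \xrightarrow{p^{\ell_j}} c_j$ into block $j$ of $\rho$. Then $\rho'$ is a genuine run of $\A$ on $w'$. Crucially, $\rho'$ only inserts transitions into $\rho$ and never deletes any, so every state occurring infinitely often in $\rho$ still occurs infinitely often in $\rho'$; since $\rho$ is accepting we have $inf(\rho) \cap \buchiset \neq \emptyset$, and as $inf(\rho') \supseteq inf(\rho)$ also $inf(\rho') \cap \buchiset \neq \emptyset$, so $\rho'$ is accepting and $w' \in \execinf$. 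But the $j$-th $p$-block of $w'$ has length $m+1+j\ell_j \ge m+1+j$, so these lengths tend to infinity, $\limsup x_i = \infty$, and therefore $w' \notin \execinf$ --- a contradiction.

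The main obstacle is the middle step: one must perform infinitely many independent pumpings (one per block, by increasing amounts) while simultaneously (i) staying within the transition structure of $\A$ and (ii) preserving the B\"uchi acceptance condition. Point (i) is handled by exhibiting a $p$-loop in \emph{every} block, so that each block can be lengthened on its own. Point (ii) is the delicate one, since acceptance is a tail condition and infinitely many blocks are altered: I rely on the fact that inserting finite loops cannot remove any of the infinitely many visits of $\rho$ to $\buchiset$, so the altered run stays accepting. Once this is in place the $\limsup$ jumps from finite to infinite, contradicting that $\A$ recognizes $\execinf$. If one prefers to streamline the bookkeeping, I would first intersect $\execinf$ with the $\omega$-regular language $(r\, p^{+})^\omega$ to strip away the $q$-letters and the $r^\omega$ tails, reducing to the pure ``bounded $p$-block'' language before running the pumping argument.
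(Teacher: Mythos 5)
Your proof is correct and follows essentially the same route as the paper's: take the word $(rp^{n+1})^\omega$ with $n$ the number of states, find a $p$-cycle in each block by pigeonhole, and pump the $i$-th block $i$ times to obtain an accepted word with unboundedly long $p$-blocks, contradicting the characterization of $\execinf$. Your treatment is just a more detailed write-up of the same argument, making explicit the preservation of the B\"uchi condition under the infinitely many insertions.
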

\begin{proof}

The following pumping argument shows that this language is not $\omega$-regular. Assume by way of contradiction that an NBW $\A$ accepts $\execinf$, and consider an accepting run of $\A$ on the word $(rp^{n+1})^\omega$, where $n$ is the number of states of $\A$. It follows that for each $i \in \Nat$, while reading the $i$'th block of $p$'s, $\A$ traverses some cycle $c_i$. Hence, by correctly pumping the cycle $c_i$, e.g., $i$ times for every $i \in \Nat$, we can obtain an accepting run of $\A$ on a word $w'$ which is not $\execinf$ since it contains blocks of consecutive $p$'s of ever increasing length, contradicting our assumption.
\end{proof}

On the other hand, there is a $B$-automaton (with a trivial B\"uchi set) recognizing this language (the counter is incremented whenever $p$ is seen and reset whenever $r$ is seen). This is no accident: we will prove (Theorem~\ref{thm: BSW correctness}) that for every RB-template $\proctemp$ one can build a $B$-automaton (with a trivial B\"uchi set) recognizing the infinite executions of $\psys$. Combining this with an NBW for the specification, we reduce the model-checking problem to the emptiness problem of a $B$-automaton. Hence, our second main result provides an \EXPTIME upper bound for NBW/\LTL specifications:

\begin{theorem} \label{thm: main dec}
Let $\Pspec$ be specifications of sets of infinite executions expressed as NBW or LTL formulas. Then $PMCP(\Pspec)$ of RB-systems can be solved in \EXPTIME. 
\end{theorem}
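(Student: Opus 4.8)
The plan is to follow the automata-theoretic recipe sketched in the paragraph preceding the theorem, reducing $PMCP(\Pspec)$ to an emptiness check for a $B$-automaton. The overall structure has three moving parts that I would assemble in order. First, I would invoke Theorem~\ref{thm: BSW correctness} (the forward reference promised in the text) to obtain, from the given RB-template $\proctemp$, a $B$-automaton $\M$ with a trivial B\"uchi set that recognizes exactly $\execinf$, the set of infinite executions of $\psys$. This is the hard engineering, and the excerpt explicitly permits me to assume it. Second, I would handle the specification: given $\Pspec$ presented either as an NBW or as an \LTL formula, I obtain an NBW $\A_{W}$ for the language $W$ (via Theorem~\ref{thm:vardi-wolper} in the \LTL case, at an exponential cost in $|\varphi|$). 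Since the PMCP asks whether $\execinf \subseteq W$, I instead work with the complement: I build an NBW $\A_{\neg W}$ for the complement language $\overline{W}$ over $2^{\AP}$.

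The central reduction is then the observation that $\execinf \subseteq W$ fails if and only if there is an infinite execution accepted by $\M$ that also lies in $\overline{W}$, i.e. if and only if the language of the product of $\M$ with $\A_{\neg W}$ is non-empty. So the third step is to form this product. I would take a synchronous product of the $B$-automaton $\M$ and the NBW $\A_{\neg W}$, letting the two components read the same input letter in $2^{\AP}$ at each step; the product retains the single counter and its counter-command function $cc$ from $\M$ unchanged, its B\"uchi set is inherited from $\A_{\neg W}$ (the product state is B\"uchi-accepting when its $\A_{\neg W}$-component is), and its counter-boundedness requirement is inherited from $\M$. The resulting object is again a $B$-automaton as defined in the excerpt (one counter that must stay bounded, plus a B\"uchi condition, which is exactly the shape captured there), and an infinite word is accepted by it precisely when it is an infinite execution of $\psys$ that violates the specification. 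By Lemma~\ref{lem: B automata emptiness}, non-emptiness of this $B$-automaton is decidable in \PTIME in the size of the product.

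For the complexity bookkeeping, I would track the sizes. If $\Pspec$ is given as an NBW, $\A_{\neg W}$ has size singly-exponential in $|\A_{W}|$ by B\"uchi complementation; if given as \LTL, then $\A_{W}$ is already singly-exponential in $|\varphi|$ and complementation (or, more cheaply, translating $\neg\varphi$ directly) keeps $\A_{\neg W}$ singly-exponential in $|\varphi|$. The $B$-automaton $\M$ from Theorem~\ref{thm: BSW correctness} should itself be at most exponential in the size of the RB-template $\proctemp$ (I would cite its size bound from that theorem). The product has size polynomial in $|\M|\cdot|\A_{\neg W}|$, hence singly-exponential in the combined input, and the final \PTIME emptiness check of Lemma~\ref{lem: B automata emptiness} runs in time polynomial in this exponential-sized product. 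Composing, the whole procedure runs in \EXPTIME, giving the claimed bound.

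The step I expect to be the genuine obstacle is, of course, Theorem~\ref{thm: BSW correctness} — constructing the $B$-automaton $\M$ for $\execinf$ and proving its correctness — but since that is established separately and may be assumed here, the only subtlety remaining in \emph{this} proof is getting the product construction exactly right so that it remains a legal single-counter $B$-automaton in the sense defined above. Concretely, I must ensure that combining the counter-boundedness condition of $\M$ with the B\"uchi condition of $\A_{\neg W}$ still fits the one-counter-plus-B\"uchi template (it does, since that template already pairs a boundedness requirement with a B\"uchi set), and that synchronizing on the shared alphabet $2^{\AP}$ does not disturb the counter semantics (it does not, since the counter commands depend only on $\M$'s transitions). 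The remaining verification, that acceptance in the product coincides with ``infinite execution violating $\Pspec$'', is then a routine check against the definitions of acceptance for $B$-automata and NBW given earlier.
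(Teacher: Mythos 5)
Your proposal is correct and follows essentially the same route as the paper: invoke Theorem~\ref{thm: BSW correctness} for the B-automaton recognizing $\execinf$, take its synchronous product with an NBW for the complement of the specification (obtained by translating $\lnot\varphi$ directly in the \LTL{} case, or by B\"uchi complementation in the NBW case), and decide emptiness of the resulting single-counter B-automaton in \PTIME{} via Lemma~\ref{lem: B automata emptiness}, yielding the \EXPTIME{} bound. The subtleties you flag (the product inheriting the counter from $\M$ and the B\"uchi set from the complemented specification) are exactly the ones the paper handles.
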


\subsection{Variants with a controller and asymmetric broadcast}

We now give two variants of RB-Systems, i.e., one that incorporate a distinguished "controller" process, and another that allows for asymmetric broadcasts~\cite{efm99}.

Given two process templates $\proctemp_C$ and $\proctemp$ the RB-System with a controller (RBC-System) $\proctemp_C \cup \sysinst{n}$ is the finite LTS $\tup{\AP^{n+1},\Comm^{n+1}, S^{n+1},I^{n+1},R^{n+1},\lambda^{n+1}}$, which is defined exactly as in Definition~\ref{def:rb-system}, with the only difference that for process 1 we use the process template $\proctemp_C$ and for processes 2 to $n+1$ we use process template $\proctemp$.
The RBC-System $\proctemp_C \cup \sysinst{\infty}$ is then defined analogously.
We now need to adjust the definitions of executions to differentiate between the projection to a controller resp. non-controller processes;
we set,
\[\exec[\proctemp_C \cup \sysinst{\infty}]_C = \{ \lambda(\proj{\pi}(1)) \mid \pi \in \runs(\proctemp_C \cup \sysinst{\infty}) \},\]
and
\[\exec[\proctemp_C \cup \sysinst{\infty}] = \{ \lambda(\proj{\pi}(2)) \mid \pi \in \runs(\proctemp_C \cup \sysinst{\infty}) \},\]
where, because of symmetry, we can always project to process 2 for a non-controller process.

In order to capture asymmetric broadcasts we need to enhance the edge-labels alphabet of a process template. This is done by introducing a set of broadcast actions $\Broadcasts$ (disjoint from $\Actions$), and setting the edge-labels alphabet to be $\Actionprts \bigcup \cup_{\msg{b} \in \Broadcasts} \{ \msg{b}_\snd, \msg{b}_\rcv\}$. Such a process template  is called an {\em RBA-template} if, in addition, for every state $s \in S$ we have that $\msg{b}_\rcv$ is enabled in $s$ for every $\msg{b} \in \Broadcasts$.
Given an RBA-template $\proctemp$, the RBA-System $\sysinst{n}$ is the finite LTS $\tup{\A^n,\Comm^n, S^n,I^n,R^n,\lambda^n}$, which is defined as in Definition~\ref{def:rb-system} except for the definition of the global transition relation, where we support asymmetric broadcasts instead of symmetric broadcasts as follows:
    \begin{itemize}
    \item {\em (asymmetric broadcast)} $\sigma = \tup{c_1,\ldots,c_n}$ is an $n$-tuple such that there is some $\msg{b} \in \Broadcasts$ and some $i$ such that $c_i =\msg{b}_\snd$ and $c_j =\msg{b}_\rcv$ for all $j \neq i$, and $\trans{f(i)}{g(i)}{c_i}$ in $\proctemp$, for every $i \in [n]$;
    \end{itemize}
The RBA-System $\sysinst{\infty}$ is then defined analogously.
The set of executions is defined as for RB-Systems.

We remark that we define RBA-Systems without a controller for technical convenience.
It would be straight-forward to define an RBA-System with a controller in the same way as we did above.
However, it is easy to verify that RBA-Systems with a controller are not more powerful than RBA-Systems that lack a controller.
That is because having a controller can be simulated through an initial asymmetric broadcast that makes the sender process taking over the role of the controller and the receiver processes continuing as non-controller processes. We now make this statement precise. Recall the notation $(\xi)_a$ and $\xi|_\AP$ from Section~\ref{sec:transition systems}.

\begin{theorem} \label{thm:RBC equiv RBA}
RBC-Systems and RBA-Systems are equally powerful, more precisely,
\begin{enumerate}
\item for each RBC-System, given by process templates $\proctemp_C$ and $\proctemp$ over atomic propositions $\AP$, we can construct in linear time an RBA-System $\proctemp'$ over atomic propositions $\AP \cup \{c,p\}$, with $c,p \not\in\AP$ such that $\exec[\proctemp_C \cup \proctemp^\infty]_C = \{ (\xi)_{c}|_\AP \mid \xi \in \exec[\proctemp'^\infty] \}$ and $\exec[\proctemp_C \cup \proctemp^\infty] = \{ (\xi)_{p}|_\AP \mid \xi \in \exec[\proctemp'^\infty]  \}$;

\item for each RBA-System, given by process template $\proctemp$ over atomic propositions $\AP$, we can construct  in linear time an RBC System, given by process templates $\proctemp'_C$ and $\proctemp'$ over atomic propositions $\AP \cup \{p\}$ such that
     $\exec[\proctemp'^\infty] = \{ (\xi)_{p}|_\AP \mid \xi \in \exec[\proctemp'_C \cup \proctemp'^\infty]  \}$
    (the executions of the controller are not important for this statement).
\end{enumerate}
\end{theorem}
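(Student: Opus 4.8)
The plan is to prove the two inter-reductions by giving explicit linear-time template constructions and, in each direction, a correspondence between runs that is preserved by the projection $\proj{\cdot}$ and the filtering operations $(\cdot)_a$ and $\cdot|_\AP$. In both directions I bridge the two broadcast disciplines by letting one distinguished process act as the broadcast coordinator, and I tag the controller and non-controller roles with the fresh atoms $c$ and $p$ so that $(\cdot)_c$ and $(\cdot)_p$ recover the intended single-process behaviours.

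For the first item (simulating an RBC-System by an RBA-System) I would equip $\proctemp'$ with a single broadcast letter $\msg{b}$, a fresh initial state $s_0$, and two disjoint copies of the state spaces of $\proctemp_C$ and $\proctemp$, labelling the first copy with $c$ and the second with $p$ on top of the original $\AP$-labels. The first asymmetric broadcast out of the all-$s_0$ configuration performs the election: its unique sender enters an initial state of the $\proctemp_C$-copy and becomes the controller, while every receiver enters an initial state of the $\proctemp$-copy; because an asymmetric broadcast has exactly one sender, exactly one controller is elected and $s_0$ is vacated forever. Each $\brd$-edge of $\proctemp_C$ becomes a $\msg{b}_\snd$-edge in the controller copy and each $\brd$-edge of $\proctemp$ becomes a $\msg{b}_\rcv$-edge in the non-controller copy; since the non-controller copy has no $\msg{b}_\snd$-edges, after the election the controller is the only process that can send and is therefore the sender of every later broadcast, faithfully mimicking a symmetric broadcast. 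The receive-everywhere condition for an RBA-template holds by the copied $\brd$-edges, the election edges at $s_0$, and harmless $\msg{b}_\rcv$ self-loops in the controller copy. Correctness reduces to a run correspondence that merely prepends the election step: the added $s_0$-prefix carries neither $c$ nor $p$ and is deleted by the filters, and $\cdot|_\AP$ strips the tags, giving the two claimed identities.

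For the second item (simulating an RBA-System by an RBC-System) the controller is a pure coordinator whose own execution is irrelevant, and I would simulate one asymmetric broadcast of $\msg{b}$ in three micro-steps. First, the intended sender announces $\msg{b}$ to the coordinator by a rendezvous that takes the coordinator from its neutral state to a state $\langle\textsf{bcast},\msg{b}\rangle$ and parks the sender; the coordinator accepts an announcement only while neutral, which forces a unique sender. Second, one symmetric broadcast fires: the parked sender realises its $\msg{b}_\snd$-move, each remaining non-controller in state $s$ moves to an auxiliary state $\langle\textsf{rcvd},s\rangle$, and the coordinator moves to $\langle\textsf{conf},\msg{b}\rangle$. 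Third, every receiver \emph{learns} the true $\msg{b}$ from the coordinator by a rendezvous that carries it from $\langle\textsf{rcvd},s\rangle$ to the $\msg{b}_\rcv$-target of $s$, so no receiver has to guess and all of them receive exactly the message that was sent. Ordinary RBA rendezvous are copied verbatim, and all auxiliary states are left unlabelled by $p$, so that $(\cdot)_p|_\AP$ collapses each three-step transaction to the corresponding single RBA move and returns precisely the RBA single-process behaviour.

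I expect the main obstacle to be that an RB-template must enable $\brd$ in every state, so symmetric broadcasts are always enabled and can be neither blocked nor postponed; hence I cannot force all receivers to finish learning before the next broadcast, nor rule out an unlicensed broadcast that fires while the coordinator is still neutral or confirming. My device for this is to send every such premature event to an error sink that carries no $p$-label and loops on $\brd$: a non-controller overtaken by a premature broadcast, or unable to learn because the coordinator has already moved on, simply stops emitting $p$-labels, so its filtered trace is a \emph{prefix} of an RBA execution and remains legal. The heart of the proof is then a two-sided containment: every RBA execution is realised by the disciplined runs that announce and learn promptly, while conversely every RBC non-controller execution, once filtered, is shown to be a genuine (possibly finite) RBA execution, so the always-on broadcast creates no spurious behaviours. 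A minor bookkeeping point for the first item is that a non-controller (resp.\ controller) process contributes the empty word under $(\cdot)_c$ (resp.\ $(\cdot)_p$), which has to be reconciled with the stated set equalities by the convention on the empty execution.
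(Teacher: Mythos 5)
Your overall strategy coincides with the paper's: in the first direction an initial asymmetric broadcast elects the controller (the sender becomes $\proctemp_C$, the receivers become $\proctemp$, and the fresh atoms $c,p$ together with the unlabelled initial state make the filters work out), and in the second direction the controller mediates each asymmetric broadcast through rendezvous plus one symmetric broadcast, with the always-enabled $\brd$ escaping to an unlabelled sink so that prematurely interrupted processes contribute only prefixes. The only substantive difference is the ordering inside the second direction's transaction: the paper has every receiver learn its $\msg{b}_\rcv$-target by rendezvous with the controller \emph{before} the symmetric broadcast commits everyone, whereas you fire the broadcast in the middle, park each receiver in a generic state $\langle\textsf{rcvd},s\rangle$, and let it learn the message afterwards. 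Both orderings support the same two-sided run correspondence, and your observation that a receiver stranded in $\langle\textsf{rcvd},s\rangle$ can be assigned an arbitrary $\msg{b}_\rcv$-target in the reconstructed RBA run (legitimate because $\msg{b}_\rcv$ is enabled in every state and the stranded process never rendezvouses again) is exactly the paper's dead-state argument.

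There is, however, one concrete hole in your second construction as stated: the coordinator never returns to its neutral state. After the commit broadcast it sits in $\langle\textsf{conf},\msg{b}\rangle$ servicing learning rendezvous, but you stipulate that announcements are accepted \emph{only} while neutral, and the only remaining way out of $\langle\textsf{conf},\msg{b}\rangle$ is its mandatory $\brd$-edge --- which cannot simultaneously serve as the commit broadcast of the next transaction, since that would require the coordinator to be in a $\langle\textsf{bcast},\cdot\rangle$ state when it fires. As written, at most one asymmetric broadcast can ever be simulated. The fix is easy --- either also accept announcements from the confirmed states (uniqueness of the sender per transaction is still forced because no announcement is accepted from a $\langle\textsf{bcast},\cdot\rangle$ state), or spend a second ``reset'' symmetric broadcast per transaction --- but it must be said, and it slightly perturbs the block decomposition underlying the run correspondence. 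The paper sidesteps this entirely because its controller's commit broadcast goes directly from the state $\msg{b}$ back to the waiting state $w$.
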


\begin{proof}
For the first item, the symmetric broadcast can be easily implemented by an asymmetric broadcast where the sender simply behaves like the receivers, and a controller can be elected using an initial asymmetric broadcast. Here are the details. The broadcast alphabet $\Broadcasts$ consists of two symbols, $\brd$, b: the first for modeling the symmetric broadcast of the RBC-System, and the second to be used for electing the controller.
Define the RBA-template $\proctemp' = \tup{\AP \cup \{c,p\},\Actionprts \cup \{\msg{b}_\rcv, \msg{b}_\snd, \brd_\rcv, \brd_\snd\}, S',\{\iota\},R',\lambda'}$ as follows. Let $S'$ consist of the states of $\proctemp_C$, the states of $\proctemp$ (assumed to be disjoint from the states of $\proctemp_C$), and a new initial state $\iota$.
The transitions relation $R'$ includes all rendezvous transitions of $\proctemp_C$ and $\proctemp$, as well as the following new transitions:
\begin{itemize}
\item $\trans{s}{t}{\brd_\rcv}$ and $\trans{s}{t}{\brd_\snd}$ for every transition $\trans{s}{t}{\brd}$ in $\proctemp_C$ and $\proctemp$;
\item $\trans{\iota}{\iota}{\brd_\rcv}$ and $\trans{\iota}{\iota}{\brd_\snd}$;
\item $\trans{\iota}{\iota_C}{\msg{b}_\snd}$ and $\trans{\iota}{\iota_P}{\msg{b}_\rcv}$ where $\iota_C$ and $\iota_{P}$ are the initial states of $\proctemp_C$ and $\proctemp$ respectively;
\item $\trans{s}{s}{\msg{b}_\rcv}$ for every $s \neq \iota$ (these transitions can never be taken but are added to satisfy the constraint that broadcast can always be received, as required by the definition of RBA-templates).
\end{itemize}
Finally, let $\lambda_C$ and $\lambda_P$ be the labeling functions of $\proctemp_C$ and $\proctemp$ respectively. Define the labeling function $\lambda'$ as follows. If $s$ is a state of $\proctemp$ then $\lambda'(s) = \lambda_P(s) \cup \{p\}$; if $s$ is a state of $\proctemp_C$ then $\lambda'(s) = \lambda_C(s) \cup \{c\}$; and $\lambda'(\iota) = \emptyset$.
It is not hard to verify that $\exec[\proctemp_C \cup \proctemp^\infty]_C = \{ (\xi)_{c}|_\AP \mid \xi \in \exec[\proctemp'^\infty] \}$ and $\exec[\proctemp_C \cup \proctemp^\infty] = \{ (\xi)_{p}|_\AP \mid \xi \in \exec[\proctemp'^\infty]  \}$.

For the second item in the statement of the theorem, the idea is depicted in Figure~\ref{fig:simulation_rba_rbc}.
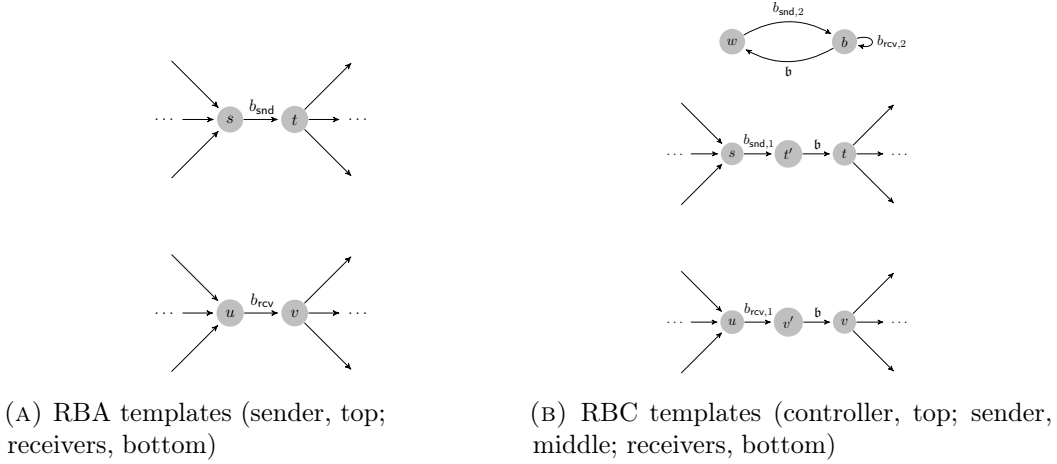
\begin{figure}
      \centering
      \begin{subfigure}{.45\textwidth}
        \centering

        \resizebox{0.45\linewidth}{!}{\begin{tikzpicture}[->,>=stealth',shorten >=2pt,auto,node distance=1.5cm,
                    semithick]
    \tikzstyle{invisible}=[]

    \tikzstyle{vertex}=[circle,fill=black!25]


    \node[vertex]       (s)                    { $s$ };
    \node[vertex]       (t) [right of=s]        { $t$ };

    \node[invisible]    (v2) [left of=s]                  {\ldots};
    \node[invisible]    (v1) [above of=v2]     {};
    \node[invisible]    (v3) [below of=v2]     {};

    \node[invisible]    (v5) [right of=t]       {\ldots};
    \node[invisible]    (v4) [above of=v5]      {};
    \node[invisible]    (v6) [below of=v5]      {};

    \path (v1) edge (s);
    \path (v2) edge (s);
    \path (v3) edge (s);
    \path (s) edge node {$\msg{b}_\snd$} (t);
    \path (t) edge (v4);
    \path (t) edge (v5);
    \path (t) edge (v6);

    \node[invisible]  (space2) [below of=s] {};
    \node[invisible]  (space3) [below of=space2] {};


    \node[vertex]       (u) [below of=space3]       { $u$ };
    \node[vertex]       (v) [right of=u]         { $v$ };

    \node[invisible]    (v11) [left of=u]      {\ldots};
    \node[invisible]    (v10) [above of=v11]     {};
    \node[invisible]    (v12) [below of=v11]     {};

    \node[invisible]    (v14) [right of=v]       {\ldots};
    \node[invisible]    (v13) [above of=v14]      {};
    \node[invisible]    (v15) [below of=v14]      {};

    \path (v10) edge (u);
    \path (v11) edge (u);
    \path (v12) edge (u);
    \path (u) edge node {$\msg{b}_\rcv$} (v);
    \path (v) edge (v13);
    \path (v) edge (v14);
    \path (v) edge (v15);

\end{tikzpicture}}
      \caption{RBA templates (sender, top;\\ receivers, bottom)}

      \end{subfigure}
      \begin{subfigure}{.45\textwidth}
        \centering

      \resizebox{0.5\linewidth}{!}{\begin{tikzpicture}[->,>=stealth',shorten >=2pt,auto,node distance=1.5cm,
                    semithick]
    \tikzstyle{invisible}=[]

    \tikzstyle{vertex}=[circle,fill=black!25]



    \node[vertex]       (w)    {$w$};
    \node[invisible]    (z1) [right of=w]   {};
    \node[vertex]       (z) [right of=z1]        {$b$};

    \path   (w) edge [bend left] node  {$\msg{b}_{\snd,2}$} (z);
    \path   (z) edge [bend left] node {$\brd$} (w);
    \path   (z) edge [loop right] node  {$\msg{b}_{\rcv,2}$}(z);

    \node[invisible]    (space) [below of=w] {};


    \node[vertex]       (s) [below of=space]       { $s$ };
    \node[vertex]       (t1) [right of=s]        { $t'$ };
    \node[vertex]       (t) [right of=t1]        { $t$ };

    \node[invisible]    (v2) [left of=s]                  {\ldots};
    \node[invisible]    (v1) [above of=v2]     {};
    \node[invisible]    (v3) [below of=v2]     {};

    \node[invisible]    (v5) [right of=t]       {\ldots};
    \node[invisible]    (v4) [above of=v5]      {};
    \node[invisible]    (v6) [below of=v5]      {};

    \path (v1) edge (s);
    \path (v2) edge (s);
    \path (v3) edge (s);
    \path (s) edge node {$\msg{b}_{\snd,1}$} (t1);
    \path (t1) edge node {$\brd$} (t);
    \path (t) edge (v4);
    \path (t) edge (v5);
    \path (t) edge (v6);

    \node[invisible]  (space2) [below of=s] {};
    \node[invisible]  (space3) [below of=space2] {};


    \node[vertex]       (u) [below of=space3]       { $u$ };
    \node[vertex]       (v1) [right of=u]         { $v'$ };
    \node[vertex]       (v) [right of=v1]         { $v$ };

    \node[invisible]    (v11) [left of=u]      {\ldots};
    \node[invisible]    (v10) [above of=v11]     {};
    \node[invisible]    (v12) [below of=v11]     {};

    \node[invisible]    (v14) [right of=v]       {\ldots};
    \node[invisible]    (v13) [above of=v14]      {};
    \node[invisible]    (v15) [below of=v14]      {};

    \path (v10) edge (u);
    \path (v11) edge (u);
    \path (v12) edge (u);
    \path (u) edge node {$\msg{b}_{\rcv,1}$} (v1);
    \path (v1) edge node {$\brd$} (v);
    \path (v) edge (v13);
    \path (v) edge (v14);
    \path (v) edge (v15);

\end{tikzpicture}}
      \caption{RBC templates (controller, top; sender, middle; receivers, bottom)}
      \end{subfigure}
            \caption{\label{fig:simulation_rba_rbc}{An illustration of how RBA processes can be simulated by RBC processes.}}
      \end{figure}
Intuitively, to simulate an asymmetric broadcast, we use a symmetric broadcast preceded by a rendezvous with the controller, which ensures that there is single sender and multiple receivers.
For an RBA template $\proctemp = \tup{\AP, \Actionprts \bigcup \cup_{\msg{b} \in \Broadcasts} \{ \msg{b}_\snd, \msg{b}_\rcv\}, S,I,R,\lambda}$, we construct two process templates
\[
\proctemp'_C = \tup{\AP \cup \{p\}, \Actionprts' \cup \{\brd\}, S'_C, I'_C, R'_C, \lambda'_C}
\]
and
\[
\proctemp' = \tup{\AP \cup \{p\}, \Actionprts' \cup \{\brd\}, S', I', R', \lambda'},
\]
as follows. The new rendezvous actions consist of the old rendezvous actions as well as $\msg{b}_{\snd},\msg{b}_{\rcv}$ for every broadcast action $b \in \Broadcasts$; thus, we have
\[
\Actionprts' = \Actionprts \bigcup \cup_{\msg{b} \in \Broadcasts} \{ \msg{b}_{\snd,1},\msg{b}_{\snd,2},\msg{b}_{\rcv,1},\msg{b}_{\rcv,2}\}
\] (recall the remarks in Section~\ref{sec:discussion} that allow us to have rendezvous actions involving any number of processes).

We now define the components of the controller $\proctemp'_C$:
\begin{itemize}
\item $S'_C = \{w,\bot_C\} \cup \Broadcasts$, where $w$ (the ``waiting'' state) and $\bot_C$ (the ``dead'' state) are new states;
\item $I'_C = \{w\}$;
\item $\lambda'_C(s) = \emptyset$ for all $s \in S'_C$ (the labeling function of $P'_C$ is not important);
\item the transitions in $R'_C$ consist of rendezvous transitions $\trans{w}{\msg{b}}{\msg{b}_{\snd,2}}$ and $\trans{\msg{b}}{\msg{b}}{\msg{b}_{\rcv,2}}$ for every $b \in \Broadcasts$, and the symmetric broadcast transitions $\trans{\msg{b}}{w}{\brd}$ and $\trans{w}{\bot_C}{\brd}$ and $\trans{\bot_C}{\bot_C}{\brd}$.
\end{itemize}

We now define the components of the process template $\proctemp'$:
\begin{itemize}

\item let $S'$ consist of the states in $S$, a new state $\bot$, and new (''intermediate'') states of the form $q'$ where $q$ varies over the states of $\proctemp$ that are the destination of some asymmetric broadcast transition in $\proctemp$;
\item $I' = I$;
\item $\lambda'(s) = \lambda(s) \cup \{p\}$ for $s \in S$, and $\lambda'(q') = \emptyset$ for $q' \in S' \setminus S$ (in particular, $\lambda'(\bot) = \emptyset$);
\item the transitions $R'$ include all the rendezvous transitions of $\proctemp$, the rendezvous transition
$\trans{s}{t'}{\msg{b}_{\snd,1}}$ for every asymmetric send-broadcast transition $\trans{s}{t}{\msg{b}_\snd}$ of $\proctemp$, and the rendezvous transition
$\trans{u}{v'}{\msg{b}_{\rcv,1}}$ for every asymmetric receive-broadcast transition $\trans{u}{v}{\msg{b}_\rcv}$ of $\proctemp$. Further, $R'$ includes,  for every $q' \in S'$, the symmetric broadcast transition $\trans{q'}{q}{\brd}$, and for every state $s \in S$, the symmetric broadcast transitions $\trans{s}{\bot}{\brd}$, and the symmetric broadcast transition $\trans{\bot}{\bot}{\brd}$.
\end{itemize}

The states $\bot_C,\bot$ are not drawn in Figure~\ref{fig:simulation_rba_rbc}.

We now argue that $\exec[\proctemp^\infty] = \{ (\xi)_p|_\AP \mid \xi \in \exec[\proctemp'_C \cup \proctemp'^\infty] \}$. In what follows,
although the $n$ processes over the template $\proctemp'$ in the RBA-system $\proctemp'_C \cup \proctemp'^n$ are numbered $2, \cdots, n+1$, we will number them $1, \cdots, n$, and refer to the process over template $\proctemp'_C$ as ``the controller'' instead of as process $1$.

We first argue the inclusion $\exec[\proctemp^\infty] \subseteq \{ (\xi)_p|_\AP \mid \xi \in \exec[\proctemp'_C \cup \proctemp'^\infty]  \}$:
Given a run $\pi$ of $\proctemp^n$ (for some $n$), we will construct a corresponding run $\pi'$ of $\proctemp'_C \cup \proctemp'^n$.
Every rendezvous transition in $\pi$ is simulated in $\pi'$ by the corresponding rendezvous transition of $\proctemp'^n$. Every asymmetric broadcast transition with action $\msg{b} \in \Broadcasts$ in $\pi$ is simulated in $\pi'$ by a sequence of transitions of $\proctemp'_C \cup \proctemp'^{\infty}$, as follows. Suppose process $j$ in the RBA-system is sending the asymmetric broadcast by taking the edge $\trans{s}{t}{\msg{b}_\snd}$ in $\proctemp$. Then, process $j$ in the RBC-system rendezvouses with the controller on the action $\msg{b}_\snd$, i.e., process $j$ takes the edge $\trans{s}{t'}{\msg{b}_{\snd,1}}$ in $\proctemp'$, and the controller takes the edge $\trans{w}{b}{\msg{b}_{\snd,2}}$ in $\proctemp'_C$. Further, consider the remaining processes in the RBA-system that are receiving the broadcast (in some arbitrary order), and suppose process $i$ it takes the edge $\trans{u}{v}{\msg{b}_\rcv}$ in $\proctemp$. Then, that process in the RBC-system rendezvouses with the controller on action $\msg{b}_\rcv$, i.e., process $i$ takes the edge $\trans{u}{v'}{\msg{b}_{\rcv,1}}$ in $\proctemp'$, and the controller takes the edge $\trans{b}{b}{\msg{b}_{\rcv,2}}$ in $\proctemp'_C$. Once all the receiving processes in the RBA-system have been accounted for in this way, the RBC-system does a symmetric broadcast, so the controller returns to the waiting state $w$ (i.e., it takes the edge $\trans{b}{w}{\brd}$ in $\proctemp'_C$), and all the non-controller processes go to the target of the broadcast transition (i.e., a process in state $v'$ takes the edge $\trans{v'}{v}{\brd}$ in $\proctemp'$). In this way, the runs $\pi,\pi'$ agree on the projection onto a single (non-controller) process when the intermediate states, whose label does not include the atom $p$, are removed.

We now argue $\exec[\proctemp^\infty] \supseteq \{ (\xi)_p|_\AP \mid \xi \in \exec[\proctemp'_C \cup \proctemp'^\infty]\}$: given a run $\pi'$ of $\proctemp'_C \cup \proctemp'^n$ (for some $n$), we show there is a corresponding run $\pi$ of $\proctemp^n$. Unlike the previous inclusion, this time our construction will be tailored to faithfully simulate the behavior of only a single process, i.e., $\pi$ will depend on the process of interest. Recall from Section~\ref{sec:executions} that, due to symmetry, we can restrict our attention to process $1$ (which, by our note above, is a process with template $\proctemp$ resp. $\proctemp'$, and not the controller).

We claim that, w.l.o.g., $\pi'$ is structured as a sequence of blocks of the form: first, zero or more rendezvous transitions that do not involve the controller; then, unless there is no further symmetric broadcast, $k$ many rendezvous transitions that do involve the controller for some $k$ with $1 \leq k \leq n$ (the first of these is on an action $\msg{b}_\snd$, and the remainder on $\msg{b}_\rcv$), at least one of which involves process $1$, followed by a symmetric broadcast.  To see why this claim holds, first note that we can swap the order of successive transitions if the first of these is a rendezvous between the controller and a process, say process $j$, and the second is a rendezvous without the controller, say amongst processes $X$ (the reason for this is that the rendezvous with the controller puts process $j$ into an intermediate state, from which there is no rendezvous edge, and thus $j$ is not in $X$, and thus the global state, as well as the projection onto process $1$, after these transitions is not changed by the swap). Second, consider the transitions between two symmetric broadcasts. Repeatedly swap as above until all rendezvous transitions that do not involve the controller (if any) are before all rendezvous transitions that do (if any). This completes the block, unless process $1$ does not occur in a rendezvous with the controller. But in this case we can safely trim $\pi'$ after the last transition in which process $1$ is active since the subsequent symmetric broadcast puts process $1$ in the sink $\bot$ that has an empty label. Third, if there are finitely many symmetric broadcasts, consider the (possibly infinitely many) transitions after the last symmetric broadcast (if there are no symmetric broadcasts, then consider all the transitions). Note that there are at most $n$ such transitions that rendezvous with the controller. Since a process that rendezvous with the controller is put into a state where it cannot rendezvous with any other processes, we can repeatedly remove the last such transition until there are none left (note that even if process $1$ is involved in such a transition, it is put into an intermediate state by such a transition, which has empty label, and thus can safely be removed).

It should be now quite clear how to simulate each block of $\pi'$: every rendezvous transition that does not involve the controller is simulated by a corresponding rendezvous transition in $\proctemp^n$; the rest of the block, if any, can be simulated by a single asymmetric broadcast transition of $\proctemp^n$. Note, however, that processes in $\proctemp'^n$ that do not rendezvous with the controller in this block will no longer be faithfully simulated since such processes will transition to the dead state $\bot$ on the symmetric broadcast. This is not a problem since these processes will never again participate in any rendezvous transitions, and thus will have no influence on the ability of the projection of $\pi$ on process $1$ to faithfully capture the projection of $\pi'$ on process $1$. Hence, we are free to define the target states in $\pi$ of these processes during future asymmetric broadcasts, if any (note that by assumption at least one target state exists for every asymmetric broadcast).
\end{proof}

The PMCP for RBA-Systems is quite well understood~\cite{efm99}, i.e., it is undecidable for $\omega$-regular specifications, and decidable for regular specifications. We thus get:

\begin{theorem} \hfill
\begin{enumerate}
\item Let $\Pspec$ be specifications of sets of infinite executions expressed as NBW or LTL formulas. Then $PMCP(\Pspec)$ of RBC-systems is undecidable.
\item Let $\Pspec$ be specifications of sets of finite executions expressed as NFW or LTLf formulas. Then $PMCP(\Pspec)$ of RBC-systems is decidable.
\end{enumerate}
\end{theorem}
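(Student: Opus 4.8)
The plan is to obtain both items as immediate corollaries of the inter-reducibility established in Theorem~\ref{thm:RBC equiv RBA}, combined with the cited facts about RBA-Systems~\cite{efm99,DBLP:conf/concur/SchmitzS13}. The key observation is that the translations of Theorem~\ref{thm:RBC equiv RBA} preserve executions up to the cosmetic operations $(\pi)_a$ and $\pi|_\AP$, which are computable and do not affect membership in an $\omega$-regular (resp.\ regular) specification after a linear-time rewriting of the specification automaton. So the PMCP for RBC-Systems and the PMCP for RBA-Systems are polynomial-time inter-reducible, separately for infinite executions (NBW/\LTL) and for finite executions (NFW/\LTLf).

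For item (1), I would argue undecidability by reducing the (known undecidable) liveness PMCP for RBA-Systems to that of RBC-Systems. Given an RBA-System $\proctemp$ and an NBW specification $W$ over $\AP$, apply the second bullet of Theorem~\ref{thm:RBC equiv RBA} to build in linear time an RBC-System $(\proctemp'_C,\proctemp')$ with $\exec[\proctemp^\infty] = \{(\pi)_p|_\AP \mid \pi \in \exec[\proctemp'_C \cup \proctemp'^\infty]\}$. I then transform $W$ into a specification $W'$ over $\AP \cup \{p\}$ so that $\tau \models W'$ iff $(\tau)_p|_\AP \models W$; concretely $W'$ can be built by a product construction that ignores letters not containing $p$ (stuttering over non-$p$ positions) and runs the original automaton on the $\AP$-projection of the $p$-positions. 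Because the new non-controller process template can force a process into an unlabeled dead state, I must also require of $W'$ that it be insensitive to the trailing behavior the projection $(\pi)_p$ discards; the statement of Theorem~\ref{thm:RBC equiv RBA} guarantees exactly that the projected-then-filtered executions coincide, so checking all executions of the RBC-System against $W'$ is equivalent to checking all executions of the RBA-System against $W$. Undecidability of the latter then transfers.

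For item (2), the upper bound (decidability, non-elementary) follows by the reverse direction: reduce the finite-execution PMCP for RBC-Systems to that for RBA-Systems using the first bullet of Theorem~\ref{thm:RBC equiv RBA}, which builds in linear time an RBA-System $\proctemp'$ capturing both the controller and non-controller executions via the markers $c$ and $p$. The finite-execution (safety) PMCP for RBA-Systems is decidable with non-elementary complexity~\cite{DBLP:conf/concur/SchmitzS13}, and the linear-time reduction together with the corresponding NFW/\LTLf specification rewriting preserves this bound. The matching non-elementary lower bound comes from the other direction, reducing the RBA-System safety PMCP (which is non-elementary-hard) into the RBC-System safety PMCP via the second bullet, exactly as in item (1) but with NFW specifications over finite executions.

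The main obstacle I anticipate is the specification bookkeeping rather than the system-level reductions, which are handed to us by Theorem~\ref{thm:RBC equiv RBA}. One must verify carefully that the operations $(\pi)_a$ and $\pi|_\AP$ can be absorbed into a linear (or at worst polynomial) transformation of the specification automaton, in both the B\"uchi and the finite-word settings, and that the ``dead state'' behavior of the constructed templates---whose purpose is to let processes safely drop out of the simulation---does not create spurious executions that violate or satisfy the rewritten specification. Since the theorem is phrased precisely so that the surviving ($p$-labeled, $\AP$-restricted) executions match on both sides, the correctness amounts to checking that a universally-quantified containment over one execution set translates to a universally-quantified containment over the other; this is routine but needs to be spelled out for each of the four quadrants (controller/non-controller $\times$ finite/infinite).
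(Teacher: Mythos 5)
Your proposal is correct and follows essentially the same route as the paper: both obtain the result by transferring the known RBA-System facts through the inter-reducibility of Theorem~\ref{thm:RBC equiv RBA}, absorbing the projection/filtering operations $(\pi)_p$ and $\pi|_\AP$ into a formula/automaton transformation and handling the dead-state executions via a guard on the specification. The paper's proof is just a terser version of the same argument (it exemplifies the rewriting for LTL by substituting $\lnot p \until (p \land X)$ for atoms and guarding with $(\always \lnot p \until p) \rightarrow \phi'$), so no substantive difference.
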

\begin{proof}
By Theorem~\ref{thm:RBC equiv RBA}, one can transfer verification tasks between RBA- and RBC-Systems:
This can be achieved by a formula/automaton transformation, and the verification of the transformed formula/automaton on the transformed system.
We exemplify how to transfer LTL/LTLf specifications:
Given an LTL/LTLf specification $\phi$, we implement the projection operation $(\cdot)_p$ by replacing every occurrence of an atomic proposition $X$ in $\phi$ by $\lnot p \until (p \land X)$, resulting in a formula $\phi'$;
in order to only consider executions that do not reach the dead state we can then consider the specification $(\always \lnot p \until p) \rightarrow \phi'$.
Similar ideas can be implemented through automata transformations.
In particular, the undecidability (resp. decidability) of $\omega$-regular (resp. regular) specifications for RBA-systems transfers to RBC-systems.
\end{proof}

\subsection{Large systems simulating small systems}
We provide a simple but useful property of RB-systems that will be used throughout the rest of this paper. Intuitively, a large RB-system can, using a single run, partition its processes into several groups, each one simulating a run of a smaller RB-system, \emph{as long as all the simulated runs have the same number of broadcasts}. In order to state and prove this result, we need the following.

\paragraph{Notation.}
Let $X \subseteq [n]$ be a set of processes.
For a configuration $f:[n] \to S$ of $\sysinst{n}$ define $\restr{f}{X}$ to be the restriction of $f$ to the domain $X$. Similarly, for a global transition $t$ of $\sysinst{n}$, say $\trans{f}{g}{\sigma}$, if $t$ is a broadcast transition (i.e., $\sigma=\brd$), or a rendezvous transition whose active processes are all in $X$ (i.e., $\sigma \neq \brd$ and $\activeprocs(t) \subseteq X$), then we define $\restr{t}{X}$ to be $\trans{\restr{f}{X}}{\restr{g}{X}}{\sigma}$; otherwise (i.e., if $\sigma \neq \brd$ and $\activeprocs(t) \not \subseteq X$), then $\restr{t}{X}$ is undefined.  Finally, given a path $\pi$ in $\sysinst{n}$, if for every $1 \leq i \leq |\pi|$ we have that $\activeprocs(\pi_i) \subseteq X$ or $\activeprocs(\pi_i) \subseteq [n] \setminus X$, then the restriction $\restr{\pi}{X} := \restr{\pi_{i_1}}{X} \restr{\pi_{i_2}}{X} \dots$ is defined by taking $i_1 < i_2 < \dots$ to be exactly the indices $1 \leq j \leq |\pi|$ for which $\restr{\pi_{j}}{X}$ is defined; otherwise (i.e., if there is a transition on $\pi$ in which some of the active processes are in $X$ and some are not in $X$)  $\restr{\pi}{X}$ is undefined.

We will implicitly rename processes as follows. Let $rename:X \to [|X|]$ be a bijection. Consider configurations $f$, transitions $t$, and paths $\pi$ of $\sysinst{n}$. By renaming the processes using $rename$ we can think of $\restr{f}{X}$ as a configuration of $\sysinst{|X|}$, and $\restr{t}{X}$ (if defined) as the transition of $\sysinst{|X|}$ obtained by restricting the configurations $f$ and $g$ in $t$ to $X$, and $\restr{\pi}{X}$ (if defined) as a path of $\sysinst{|X|}$.

For a process template $\proctemp$, paths $\pi_1, \pi_2, \cdots, \pi_h$ in $\psys$ (possibly using different numbers of processes), and pairwise disjoint subsets $X_1, X_2, \cdots, X_h$ of $\Nat$, we say that a path $\pi$ in $\psys$ \emph{simulates} $\pi_1, \cdots, \pi_h$ (with $X_1, X_2, \cdots, X_h$) if $\restr{\pi}{X_i} = \pi_i$ for every $i$.
Observe that, if $\pi_1, \cdots, \pi_h$ do not have the same number of broadcasts then there is no $\pi$ that can simulate them. The next lemma shows that this condition is not only necessary but also sufficient.

\begin{lemma}[Composition]\label{lem: rb-system composition}
Given an integer $b$, paths (resp. runs) $\pi_1, \dots, \pi_h$ in RB-systems $\sysinst{n_1}, \dots, \sysinst{n_h}$  each with exactly $b$ broadcast transitions: for every $n \geq \Sigma_{i=1}^h n_i = m$, every configuration $f$ in $\sysinst{n}$ and pairwise disjoint subsets $X_1, X_2, \cdots, X_h$ of $\Nat$ such that $\restr{f}{X_i} = \src(\pi_i)$ for every $i$, there exists a path (resp. run) $\pi$ in $\sysinst{n}$ starting in $f$ that simulates $\pi_1, \cdots, \pi_h$ with $X_1, X_2, \cdots, X_h$.
\end{lemma}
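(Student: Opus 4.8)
The plan is to construct the simulating path $\pi$ by induction on the number of broadcasts $b$, processing each of the $h$ paths in lockstep between consecutive broadcasts. The key structural observation is that each $\pi_i$ factors uniquely as a sequence of ``blocks'' separated by its $b$ broadcast transitions: write $\pi_i = \sigma_i^0 \beta_i^1 \sigma_i^1 \beta_i^2 \cdots \beta_i^b \sigma_i^b$, where each $\beta_i^j$ is the $j$-th broadcast transition of $\pi_i$ and each $\sigma_i^j$ is the (possibly empty) maximal infix of consecutive rendezvous transitions between the $(j{-}1)$-th and $j$-th broadcasts. Since every $\pi_i$ has exactly the same number $b$ of broadcasts, these factorizations are aligned, and I can build $\pi$ block-by-block.

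The construction proceeds as follows. First I would place the processes: using the hypothesis $\restr{f}{X_i} = \src(\pi_i)$ and the fact that $n \geq m = \Sigma_i n_i$, assign the processes of $X_i$ to play the role of process template copies in $\pi_i$ (via the implicit renaming bijections), leaving the $n - m$ ``spare'' processes idle. In the $j$-th phase I first replay, for each $i$ in turn, all the rendezvous transitions of the block $\sigma_i^j$: because the active processes of a rendezvous transition of $\restr{\pi_i}{X_i}$ all lie in $X_i$, and the $X_i$ are pairwise disjoint, each such rendezvous is enabled in $\sysinst{n}$ from the current configuration regardless of the states of processes outside $X_i$ --- the rendezvous condition in Definition~\ref{def:rb-system} only constrains the active processes and leaves all others fixed. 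Thus I can concatenate the rendezvous blocks $\sigma_1^j, \dots, \sigma_h^j$ in any fixed order, and each restricts correctly to the corresponding $\pi_i$-block. Once all $h$ rendezvous blocks for phase $j$ have been replayed, every process of $X_i$ sits at exactly the source of the broadcast $\beta_i^j$. I then fire a single global broadcast transition $\brd$ in $\sysinst{n}$: by the RB-template assumption that $\brd$ is enabled in every state, this is a legal global transition, and since a broadcast moves every process along a $\brd$-edge, its restriction to each $X_i$ is precisely $\beta_i^j$ (the spare processes also take some $\brd$-edge, which is irrelevant). After $b$ phases I replay the trailing rendezvous blocks $\sigma_i^b$ in the same manner, completing $\pi$.

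The verification that this $\pi$ works is then a matter of checking $\restr{\pi}{X_i} = \pi_i$: every transition of $\pi$ has its active processes contained in a single $X_i$ (for the rendezvous blocks) or is a global broadcast (active processes $= [n]$), so each restriction $\restr{\pi}{X_i}$ is well-defined, and by construction it deletes exactly the transitions belonging to other groups and to the spare processes, leaving the rendezvous and broadcast transitions of $\pi_i$ in their original order. For the ``runs'' version of the statement, note that if each $\pi_i$ is a run then $\src(\pi_i)$ is initial, so the composite source configuration $f$ maps every process of $\cup_i X_i$ to an initial state; one only needs the spare processes to also start in initial states, which can be arranged since $f$ is an arbitrary configuration consistent with the restrictions --- or, more simply, the hypothesis already furnishes $f$ and we observe that the broadcasts and rendezvous do not depend on the spare processes being initial for the path statement, while for the run statement we choose $f$ to additionally map spare processes to initial states.

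The main obstacle I anticipate is purely bookkeeping rather than conceptual: making the alignment of the block-decompositions fully rigorous and arguing carefully that interleaving the rendezvous blocks of distinct groups is sound. The crux is the \emph{disjointness} of the $X_i$ combined with the \emph{locality} of rendezvous transitions (they leave non-active processes untouched), which together guarantee that replaying group $i$'s rendezvous moves neither disables nor is disabled by group $i'$'s moves; and the \emph{totality} of broadcast enabledness, which guarantees the single global $\brd$ step is always available and simultaneously realizes the $j$-th broadcast of every group. I expect no genuine difficulty beyond formalizing these two facts and confirming that the restriction operator $\restr{\cdot}{X_i}$ behaves as claimed under the chosen interleaving order.
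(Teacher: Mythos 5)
Your proposal is correct and follows essentially the same route as the paper's proof: decompose each $\pi_i$ into the $b+1$ rendezvous-only segments separated by its broadcasts, interleave the rendezvous segments of the disjoint groups (exploiting the locality of rendezvous transitions) within each phase, and realize the $j$-th broadcast of all groups simultaneously with a single global $\brd$ transition. The only cosmetic difference is that the paper suggests a round-robin interleaving of the rendezvous blocks whereas you concatenate them group by group, which is equally valid.
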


\begin{proof}
 We begin by proving the lemma in the special case of R-systems.
 For every $j \in [h]$, we will have the $n_j$ processes in the set $X_j$ simulate $\pi_j = e_{j,1}\ e_{j,2} \ldots$.
 The extra processes (between $m+1$ and $n$) do not move.
 Note that all transitions on $\pi_1, \dots, \pi_h$ are rendezvous involving $k$ processes. Whenever a rendezvous appearing on $\pi_j$ is performed in $\sysinst{n}$ only $k$ processes in $X_j$ move, leaving the others unaffected. Thus, $\pi$ can be obtained by any interleaving of the rendezvous appearing on $\pi_1, \dots, \pi_h$ as long as the relative internal ordering of rendezvous on each of these paths is maintained (e.g., round-robin $e_{1,1}\ e_{2,1} \dots e_{h,1}\ e_{1,2}\ e_{2,2} \dots e_{h,2} \dots$).

 Now, we consider the case of general RB-systems. As before, for every $j \in [h]$, we will have the $n_j$ processes in the set $X_j$ simulate $\pi_j$. If $n > m$, the extra processes are ignored (however, they do move whenever there is a broadcast). Each path $\pi_1, \dots, \pi_h$ is cut into $b+1$ segments (numbered $0, \ldots, b$), each containing only rendezvous transitions and followed by a broadcast transition. Thus, the $i$'th segment of each path is followed by the $(i+1)$'th broadcast. The path $\pi$ is constructed in $b+1$ phases: in phase $i$, the $i$'th segment of all the paths $\pi_1, \dots, \pi_h$ are simulated as was done in the R-systems case, followed (if $i < b$) by a single broadcast transition that forces the simulation of the $i$'th broadcast on all of these paths at once. 
\end{proof}

We now present a more flexible form of simulation in which the processes that are assigned to simulate a given path are not fixed throughout the simulation (this will be used in the proof of Theorem~\ref{thm: BSW correctness}).

\begin{definition}
We say that $\pi_0$ \emph{weakly-simulates} $\pi_1, \cdots, \pi_h$ if there exists an integer $l$ and a decomposition of each of these paths into $l$ segments, the $i$'th segment of $\pi_j$ is denoted $\pi_j^i$ for $1 \leq i \leq l, 0 \leq j \leq h$, and pairwise disjoint sets $X_1^i, \cdots, X_h^i$ for $1 \leq i \leq l$, such that for every $i$ we have that $\pi_0^i$ simulates $\pi_1^i, \cdots, \pi_h^i$ (with $X_1^i, X_2^i, \cdots, X_h^i$).
\end{definition}
The difference between weak-simulation and simulation is that the set of processes simulating each path may be changed at the end of each segment. The following observation follows immediately from the definition above.

\begin{remark} \label{rem: weakly-simulates}
If $\pi_0$ weakly-simulates \emph{cycles} $\pi_1, \cdots, \pi_h$, then $\restr{\dst(\pi_0)}{X_j^l} = \dst(\pi_j) = \src(\pi_j)$ for every $j$. In words: for every simulated cycle $\pi_j$, the destination configuration of the weakly simulating path $\pi_0$ restricted to the set of processes $X_j^l$ (the states used in simulating the last segment of $\pi_j$) is equal to the destination configuration of $\pi_j$ and thus, since $\pi_j$ is a cycle, also to its source configuration.
\end{remark}

\section{Discrete Timed Networks}

In this section we give the formal definition of a {\em discrete timed network}, with minor changes compared with~\cite{ADM04}. In particular, we first describe the form of a process template and later the operational semantics defining how networks of such processes evolve.
In this work, unless stated otherwise, we only consider timed networks \emph{without} a controller, and always assume a discrete time model $\NatZero$.

\begin{definition}
\label{def:tn}
A {\em timed-network (TN) template} is a tuple $\tpl{A,C,\grd,\rst,\CP}$ where $A = \tpl{\AP,\Actionprts,S,I,R,\lambda}$ is a finite LTS, $C$ is a finite set of \emph{clock variables} (also called \emph{clocks}),
each transition $t \in R$ is associated with a {\em guard} $\grd(t)$ and a {\em reset command} $\rst(t)$, and $\CP$ is a set of \emph{clock predicates}, i.e., predicates of the form $x \bowtie c$ where $x \in \clocks$, $c \in \NatZero$ is a constant, and ${\bowtie} \in \{>,=\}$. A guard is a Boolean combination of clock predicates. A reset command is a subset of $C$.
\end{definition}

{The \emph{size} of a TN template is the size of the LTS $A$ (i.e., the number of states plus the number of transitions) plus the sizes of all the guards, reset commands, and clock predicates where the constants in the clock predicates are represented in \emph{unary}.\footnote{The unary representation is chosen in order to elicit the relation to RB-systems, i.e., this representation allows us to show that the PMCP for timed-networks and RB-systems is polynomial-time inter-reducible.}

A timed network $\sysinsttimed{n}$ consists of the parallel composition of $n \in \Nat$ template processes, each running a copy of the template.
Each copy has a \emph{local configuration} $(q,K)$, where $q \in S$ and $K:\clocks \to \NatZero$ is a clock evaluation mapping each clock to its (discrete) value.
We say that an evaluation $K:C \to \NatZero$ \emph{satisfies} a Boolean combination of clock predicates $\phi$, if $\phi$ evaluates to true when every occurrence of clock $x$ in $\phi$ is replaced by the value $K(x)$.
A rendezvous action $\msg{a}$ is \emph{enabled} if there are $k$ processes $i_1, \cdots, i_k$ such that for every $j \in [k]$ process $i_j$ is
in a local configuration $(q_j,K_j)$  for which there is an edge $\trans{q_j}{q'_j}{\msg{a}_j}$, say $t_j$, and the clock evaluation $K_j$ satisfies the guard $\grd(t_j)$.
The rendezvous action is \emph{taken} means that the $k$ processes change their local configurations to $(q'_i,K'_i)$, where $K'_i$ is obtained from $K_i$ after setting the values of the clocks in $\rst(t_i)$ to $0$. Besides these rendezvous transitions, the system can evolve by taking timed-transitions in which all clocks of all processes advance by one time unit (so every $K(x)$ increases by one).\footnote{Alternatively, as in \cite{ADM04}, one can let time advance by any amount.} Runs of $T^n$ projected onto a single process induce sequences over the alphabet $2^{\AP \cup \CP}$ of the atomic predicates and clock predicates that hold at each local configuration. Specifications (for the behavior of a single process) can be given as automata or linear-temporal properties over the alphabet $2^{\AP \cup \CP}$.

To formally define a timed network as an LTS, its executions, and its corresponding PMCP, one can proceed by instantiating the intuitive description given above, along the lines of, e.g., \cite{ADM04}. Alternatively, one can give an equivalent definition (in the sense that it yields exactly the same LTS for the timed network, and thus also the same set of executions and PMCP) by observing that timed networks are essentially RB-systems whose RB-template $P_T$ is induced by the given TN-template $T$ by viewing local configurations as states of $P_T$, and thinking of timed transitions as symmetric broadcast transitions.
Notice that following this approach, the obtained RB-template $P_T$ would be infinite, due to clocks potentially increasing without bounds. In order to make it finite, one can simply truncate clock values up to an appropriate upper bound.
In the following we give a detailed construction.

\paragraph{Defining Timed systems as RB-systems.}\footnote{While Definition~\ref{dfn: rb-template} requires an RB-template to be finite, for the purpose of this section we lift this restriction.}
Let $T$ be a TN-template $\tpl{A,C,\grd,\rst,\CP}$ where $A = \tpl{\AP,\Actionprts,S,I,R,\lambda}$. Define the infinite RB-template
\[P_T = \tup{\AP \cup \CP,\Actionprts \cup \{\brd\}, S_T,I_T,R_T,\lambda_T}\]
where
\begin{itemize}
	\item $S_T = S \times \NatZero^C$ is the set of \emph{local configurations},
 \item $I_T$ consists of all pairs $(q,K)$ where $q \in I$ and $K(x) = 0$ for all $x \in C$,
 \item $R_T$ consists of two types of transitions:
 \begin{itemize}
    \item \emph{timed transitions} of the form $\trans{(q,K)}{(q,K')}{\brd}$, for every $q \in S$ and $K : C \to \NatZero$, and such that $K'(x) = K(x) + 1$ for all $x \in C$; or

    \item \emph{rendezvous transitions} of the form  $\trans{(q,K)}{(q',K')}{\sigma}$, for every transition $t = (q,\sigma,q') \in R$, for every evaluation $K : C \to \NatZero$ satisfying the guard $\grd(t)$, and such that for every $x \in C$, if $x \in \rst(t)$ then $K'(x) = 0$ and otherwise $K'(x) = K(x)$.

\end{itemize}
\item $\lambda_T \subseteq S_T \times (\AP\ \cup\ \CP)$ consists of all pairs $((q,K),p)$ such that either $p \in \AP$ and $p \in \lambda(q)$, or $p \in \CP$ and $K$ satisfies the clock
predicate $p$.
\end{itemize}

Given a TN-template $T = \tpl{A,C,\grd,\rst,\CP}$ with $A = \tpl{\AP,\Actionprts,S,I,R,\lambda}$, and $n \in \Nat$, we define the {\em timed network $\sysinsttimed{n}$}, composed of $n$ processes, to be the RB-system $(P_T)^n$, and the \emph{timed network $\psystimed$} to be the RB-system $(P_T)^\infty$.

{
\begin{definition}[\bf PMCP for Timed-Networks]
Let $\Pspec$ be a specification formalism for sets of infinite (resp. finite) words over the alphabet $2^{\AP \cup \CP}$.
The {\em Parameterized Model Checking Problem for Timed-Networks} for $\Pspec$, denoted $PMCP(\Pspec)$, is to decide, given a
TN-template $T$, and a set $L$ of infinite (resp. finite) words specified in $\Pspec$, if all executions in the set $\execinf[T^\infty]$ (resp. $\execfin[T^\infty]$) are in $L$.
\end{definition}
}

In Sections~\ref{sec:solving finite} and~\ref{sec:solving infinite} we show how to solve PMCP for finite RB-templates for specifications of finite and infinite executions respectively.
This cannot be used directly to solve the PMCP for timed networks since given a timed template $T$ the RB-template $P_T$ is infinite. However, the next Lemma shows that given $T$, there is a finite RB-template $U$ such that $\exec[T^\infty] = \exec[U^\infty]$.  The template $U$ is obtained from $P_T$ by clipping the clock values to be no larger than $1$ plus the maximal constant appearing in the clock predicates $\CP$.

\begin{lemma}
\label{lem:timed-networks-limited}
Let $T$ be a TN-template and let $d = \max\{c : x \bowtie c \in \CP\}+1$.
One can construct in time polynomial the size of $T$ a finite RB-template $U$ such that $\exec[T^\infty] = \exec[U^\infty]$.
\end{lemma}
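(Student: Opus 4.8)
The plan is to obtain $U$ from the infinite template $P_T$ by \emph{saturating} every clock value at the threshold $d$, and then to exhibit a bisimulation between $P_T$ and $U$. The guiding observation is that, since every constant occurring in $\CP$ is at most $d-1$, any two clock values $v,v' \ge d$ satisfy exactly the same clock predicates: each predicate $x > c$ holds (because $v > d-1 \ge c$) and each predicate $x = c$ fails (because $v > c$). Hence the exact value of a clock becomes irrelevant once it reaches $d$. This is precisely why $d$ is taken to be one more than the maximal constant: the value $d-1$, which still satisfies $x = d-1$, must be kept distinct from the ``saturated'' value $d$.

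Concretely, define $\mathrm{clip}(v) = \min\{v,d\}$, extended pointwise to clock evaluations. Let $U = \tup{\AP \cup \CP, \Actionprts \cup \{\brd\}, S_U, I_U, R_U, \lambda_U}$ with $S_U = S \times \{0,\dots,d\}^C$ and $I_U = \{(q,\mathbf 0) : q \in I\}$, where $R_U$ contains a timed transition $\trans{(q,K)}{(q,K')}{\brd}$ with $K'(x) = \min\{K(x)+1,d\}$ for every $q \in S$ and every $K \in \{0,\dots,d\}^C$, together with a rendezvous transition $\trans{(q,K)}{(q',K')}{\sigma}$ for every $t=(q,\sigma,q') \in R$ and every $K \in \{0,\dots,d\}^C$ satisfying $\grd(t)$, where $K'(x)=0$ if $x \in \rst(t)$ and $K'(x)=K(x)$ otherwise; the labeling $\lambda_U$ is defined exactly as for $P_T$. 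Since a broadcast is enabled in every state and $\{0,\dots,d\}^C$ is finite, $U$ is a finite RB-template, and each of its states and transitions is read off directly from the data of $T$ (with $d$ computed from $\CP$), which is the source of the stated construction bound.

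The core of the argument is that the relation $M = \{\,((q,K),(q,\mathrm{clip}(K))) : q \in S,\ K \in (\NatZero)^C\,\}$ is a bisimulation between $P_T$ and $U$. Labels agree on $M$-related states by the guiding observation. To match timed transitions in both directions one uses the single identity $\mathrm{clip}(\mathrm{clip}(K)+\mathbf 1) = \mathrm{clip}(K+\mathbf 1)$, verified per clock by splitting into the cases $K(x) < d$ and $K(x) \ge d$; in particular a clock already at $d$ stays at $d$ under $\brd$. To match rendezvous transitions one uses that clipping preserves satisfaction of every guard (again by the guiding observation) and commutes with resets, so the same edges $t \in R$ are available on $M$-related states and lead to $M$-related targets. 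The initial-state clauses hold because $\mathrm{clip}(\mathbf 0)=\mathbf 0$. Thus $M$ witnesses that $U$ simulates $P_T$ and $M^{-1}$ witnesses that $P_T$ simulates $U$.

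Finally, I would invoke Lemma~\ref{lem:bisim}(\ref{lem:bisim:bisim}) — whose proof builds a pointwise simulation and never uses finiteness, so it applies to the infinite template $P_T$ as licensed by the footnote of this section — to conclude $\exec[(P_T)^\infty] = \exec[U^\infty]$; since $\psystimed$ is by definition the RB-system $(P_T)^\infty$, this gives $\exec[T^\infty] = \exec[U^\infty]$, as required. The only delicate point is the boundary behavior of saturation, i.e.\ checking the clip-commutation identity and that values at the threshold $d$ are stationary under $\brd$ while remaining indistinguishable from all larger values by every guard and clock predicate; once this is in place, the remaining simulation conditions are a routine unwinding of the definitions of $P_T$ and $U$.
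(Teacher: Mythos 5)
Your proposal is correct and follows essentially the same route as the paper: build $U$ by clipping clock values at $d$, observe that clipping preserves satisfaction of all guards and clock predicates, exhibit the relation $\{((q,K),(q,\mathrm{clip}(K)))\}$ as a bisimulation, and conclude via Lemma~\ref{lem:bisim}(\ref{lem:bisim:bisim}). You merely spell out some details the paper labels as routine (the clip-commutation identity for timed transitions and the applicability of the bisimulation lemma to the infinite template $P_T$), which is fine.
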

\begin{proof}
We use the following clipping operation: for $d \in \Nat$ and $K:C \to \NatZero$ let $clip_d(K):C \to \{ 0, \ldots, d \}$ map $x$ to $\min\{K(x),d\}$. For a local configuration $(q,K)$ define $clip_d(q,K)$ to be $(q,clip_d(K))$, and extend this to sets of configurations point-wise. Let $t$ be any transition $\trans{(q,K)}{(q',K')}{\sigma}$, define $clip_d(t)$ to be $\trans{clip_d(q,K)}{clip_d(q',K')}{\sigma}$, and extend this to sets of transitions point-wise.

Note that, by our choice of $d$, an evaluation $K$ satisfies a Boolean combination of clock predicates $\phi$ iff the evaluation $clip_d(K)$ satisfies $\phi$.

Let $T$ be a TN-template $\tpl{A,C,\grd,\rst,\CP}$ where $A = \tpl{\AP,\Actionprts,S,I,R,\lambda}$, and $P_T = \tup{\AP \cup \CP,\Actionprts \cup \{\brd\}, S_T,I_T,R_T,\lambda_T}$.
Then let
\[ U = \tup{\AP \cup \CP, \Actionprts, S',I',R',\lambda'}\]
where
\begin{itemize}
\item $S' = clip_d(S_T)$,
\item $I' = clip_d(I_T)$,
\item $R' = clip_d(R_T)$, and
\item $\lambda' = \{((q,clip_d(K)),p) : ((q,K),p) \in \lambda_T\}$.
\end{itemize}
We claim that
$\exec[T^\infty] = \exec[U^\infty]$. To see this note that $P_T$ and $U$ are bisimilar using the relation $B$ defined by letting
$((q,K),(q',K')) \in B$ iff $(q',K') = clip_d(q,K)$. It is not hard to see by following the definitions that $B$ is a bisimulation relation. To finish apply
Lemma~\ref{lem:bisim} item \ref{lem:bisim:bisim}. 
\end{proof}

{The construction used in Lemma~\ref{lem:timed-networks-limited} is illustrated in Figure~\ref{fig:ex_dtn_rbs}.
We note that the polynomial-time result crucially depends on constants represented in unary (note that the construction polynomially depends on $d = \max\{c : x \bowtie c \in \CP\}+1$).
We leave the investigation of complexity-theoretic consideration when numbers are represented in binary for future work.}

\begin{figure}[tb]
\begin{minipage}{0.29\linewidth}
\begin{tikzpicture}[->,>=stealth',shorten >=1pt,auto,node distance=2cm,
                    semithick]

  \node[initial,state] (P)                  {$p$};
  \node[state]         (Q) [below of=P]     {$q$};
  \node[state]         (W) [below of=Q]     {$r$};

  \path (P) edge    node[align=left] {$\msg{a_1}$\\$x:=0$} (Q)
        (P) edge [loop above] node {$\msg{a_2}$} (P)
        (Q) edge    node[below, pos=0.1, xshift=-15,align=right] {$\msg{a_1}$\\$x \le 2$} (W)
        (Q) edge[bend left=30]    node[align=right] {$\msg{a'_2}$\\$x > 2$} (P)
        (W) edge[bend right=75]    node[pos=0.2] {$\msg{a'_1}$} (P);
\end{tikzpicture}
\end{minipage}
\begin{minipage}{0.69\linewidth}

\begin{tikzpicture}[->,>=stealth',shorten >=1pt,auto,node distance=2cm,
                    semithick]

  \node[initial,state] (P0)                  {$p,0$};
  \node[state]         (Q0) [below of=P0]     {$q,0$};
  \node[state]         (W0) [below of=Q0]     {$r,0$};

  \node[state]         (P1) [right of=P0]                 {$p,1$};
  \node[state]         (Q1) [below of=P1]     {$q,1$};
  \node[state]         (W1) [below of=Q1]     {$r,1$};

  \node[state]         (P2) [right of=P1]                 {$p,2$};
  \node[state]         (Q2) [below of=P2]     {$q,2$};
  \node[state]         (W2) [below of=Q2]     {$r,2$};

  \node[state]         (P3) [right of=P2]                 {$p,3$};
  \node[state]         (Q3) [below of=P3]     {$q,3$};
  \node[state]         (W3) [below of=Q3]     {$r,3$};

  \path (P0) edge node[below,xshift=-10] {$\msg{a_1}$} (Q0)
        (P0) edge [loop above] node {$\msg{a_2}$} (P0)
        (Q0) edge node[below,pos=0.2,xshift=-10] {$\msg{a_1}$} (W0)
        (W0) edge[bend right=60] node[pos=0.2] {$\msg{a'_1}$} (P0);

  \path (P0) edge node[above,pos=0.3] {$\brd$} (P1)
        (Q0) edge node[below,pos=0.3] {$\brd$} (Q1)
        (W0) edge node[below,pos=0.3] {$\brd$} (W1);

  \path (P1) edge node[pos=0.2,below,xshift=-5,yshift=10] {$\msg{a_1}$} (Q0)
        (P1) edge [loop above] node {$\msg{a_2}$} (P1)
        (Q1) edge node[below,pos=0.2,xshift=-10] {$\msg{a_1}$} (W1)
        (W1) edge[bend right=60] node[pos=0.2] {$\msg{a'_1}$} (P1);

  \path (P1) edge node[above,pos=0.3]{$\brd$} (P2)
        (Q1) edge node[below,pos=0.3]{$\brd$} (Q2)
        (W1) edge node[below,pos=0.3]{$\brd$} (W2);

  \path (P2) edge node[pos=0.15,below,xshift=-5,yshift=10] {$\msg{a_1}$} (Q0)
        (P2) edge [loop above] node {$\msg{a_2}$} (P2)
        (Q2) edge node[below,pos=0.2,xshift=-10] {$\msg{a_1}$} (W2)
        (W2) edge[bend right=60] node[pos=0.2] {$\msg{a'_1}$}   (P2);

  \path (P2) edge node[above,pos=0.3]{$\brd$} (P3)
        (Q2) edge node[below,pos=0.3]{$\brd$} (Q3)
        (W2) edge node[below,pos=0.3]{$\brd$} (W3);

  \path (P3) edge node[pos=0.1,below,xshift=-5,yshift=10] {$\msg{a_1}$} (Q0)
        (P3) edge [loop above] node {$\msg{a_2}$} (P3)
        (Q3) edge[bend left=33] node[below, xshift=10,yshift=5] {$\msg{a'_2}$} (P3)
        (W3) edge[bend right=63] node[pos=0.2] {$\msg{a'_1}$}  (P3);

  \path (P3) edge[loop right] node{$\brd$} (P3)
        (Q3) edge[loop right] node[xshift=-2pt]{$\brd$} (Q3)
        (W3) edge[loop right] node{$\brd$} (W3);

\end{tikzpicture}

\end{minipage}
\caption{\label{fig:ex_dtn_rbs}Construction from Lemma~\ref{lem:timed-networks-limited}. A TN-template $T$ (left) with one clock $x$, and the RB-template $U$ (right) with $d = 3$. For readability, atomic predicates and clock predicates are not drawn.}
\end{figure}
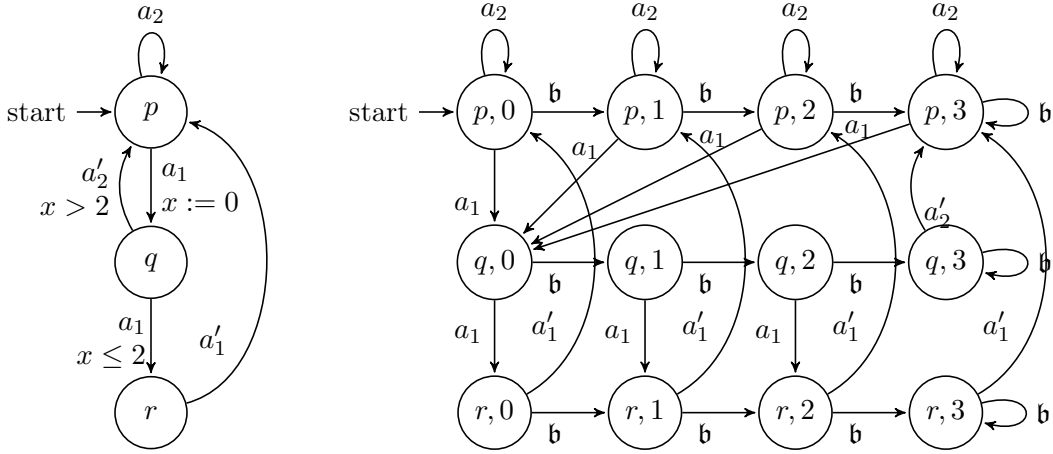

We next show that RB-systems are not more powerful than timed networks.
We show that, by allowing for operations that take a subsequence and remove atomic propositions, RB-systems and timed networks can define the same languages of (finite and infinite) executions. Recall the notation $(\xi)_a$ and $\xi|_\AP$ from Section~\ref{sec:transition systems}.

\begin{lemma}
\label{lem:rb-systems-timed-networks-reduction}
Let $P$ be a process template with atomic propositions $\AP$.
Then, one can construct in linear time a TN-template $T$ with a singleton set of clocks $C = \{c\}$, clock predicates $\CP = \{c=0,c=1\}$, and atomic proposition $\AP \cup \CP$, such that
$\exec[P^\infty] = \{ (\pi)_{c=0}|_\AP \mid \pi \in \exec[T^\infty] \}$.
\end{lemma}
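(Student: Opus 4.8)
The plan is to realize every symmetric broadcast of $P$ in $T$ as a single clock \emph{tick} followed by a local, hidden \emph{response} move in which each process applies one of its broadcast edges; filtering by $(\cdot)_{c=0}$ then keeps only the genuine $P$-states and erases every configuration with positive clock. Concretely, I would let $T = \tpl{A,\{c\},\grd,\rst,\CP}$ where $A$ has the same states, initial states and $\AP$-labelling as $P$ and $\CP = \{c=0,\,c=1\}$. Each rendezvous edge $\trans{s}{s'}{\msg{a}_j}$ of $P$ is kept in $T$ with guard $c=0$ and empty reset; each broadcast edge $\trans{s}{s'}{\brd}$ of $P$ is turned into an internal (single-process) response edge from $s$ to $s'$ with guard $c=1$ and reset $\{c\}$ (single-process actions are permitted by the modelling discussion, or can be padded to $k$-wise with self-loops). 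The network's timed transition plays the role of the broadcast trigger: it fixes the control state and advances $c$. Thus a configuration with $c=0$ is a genuine $P$-state, a configuration with $c=1$ is a ``just-ticked, response-pending'' intermediate that the filter discards, and---since the largest constant occurring is $1$---clock values are clipped at $2$, so a process that fails to respond before the next tick reaches $c=2$ and is stuck. This is clearly linear in $|P|$.

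For the inclusion $\exec[P^\infty] \subseteq \{(\pi)_{c=0}|_\AP : \pi \in \exec[T^\infty]\}$ I would simulate a given run of $\sysinst{n}$ in $\sysinsttimed{n}$ step by step: rendezvous are copied verbatim (all clocks stay $0$) and each broadcast is replaced by one tick followed by the responses of all $n$ processes, after which every clock is again $0$. Following process $1$, its only moves inside such a block are the tick (source clock $0$, kept by the filter and carrying $\lambda(s)$) and its own response (source clock $1$, discarded), and its next kept move has source the post-broadcast state; hence the $c=0$-filtered, $\AP$-projected trajectory of process $1$ is precisely $\lambda(\proj{\pi}(1))$.

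The reverse inclusion is the crux and the step I expect to be the main obstacle. Given a run of $\sysinsttimed{n}$ I must build a run of $P^\infty$ with the same process-$1$ execution, and the difficulty is that in $T$ responses occur asynchronously, interleaved with rendezvous, whereas a $P$-broadcast is atomic. The key observation is that a response-pending process (clock $1$) cannot take a rendezvous, since rendezvous are guarded by $c=0$; hence within each inter-tick block every rendezvous involves only already-responded processes, and a standard commutation/diamond argument lets me reorder the block so that all responses precede all rendezvous without changing any process's projected sequence. The responses of a block then collapse into one synchronous $P$-broadcast: responding processes use the corresponding broadcast edge, while any stuck process (clock $\geq 2$) is assigned an arbitrary broadcast edge, which exists because every state of an RB-template is the source of a broadcast edge. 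Partners of process $1$'s rendezvous are carried over unchanged---they were at clock $0$, hence had already responded, so they sit in the correct $P$-state---and the resulting $P^n$-run has the required process-$1$ execution (with finite, stuck trajectories of process $1$ yielding legal finite executions). The bookkeeping for stuck processes and the reordering-into-atomic-broadcasts are the only non-routine points; the Composition Lemma~\ref{lem: rb-system composition} is available should one prefer to assemble the partner paths separately rather than reuse the given run.
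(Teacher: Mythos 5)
Your proposal is correct and follows essentially the same route as the paper: the identical construction (rendezvous kept with guard $c=0$, broadcast edges turned into internal $c=1$-guarded response transitions that reset the clock), the same tick-then-respond simulation for the forward inclusion, and the same key observation for the converse, namely that a process whose clock exceeds $1$ can never again participate in a rendezvous and so may be assigned arbitrary broadcast edges. Your explicit commutation argument for gathering the asynchronous responses into one atomic broadcast is a slightly more careful rendering of a step the paper dismisses as "easy to verify", but it is the same argument.
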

\begin{proof}
We consider the process template $P = \tpl{\AP,\Actionprts  \cup \{\brd\},S,I,R,\lambda}$ and construct the TN-template $T = \tpl{A,C,\grd,\rst,\CP}$,
where $C = \{c\}$ and $\CP = \{c=0,c=1\}$, and the LTS $A = \tpl{\AP,\Actionprts  \cup \{\brd, \sharp\},S,I,R_A,\lambda}$ is obtained from $P$ as follows:
\begin{enumerate}
\item  $R_A$ contains every rendezvous transition $t$ of $P$, where we set $\grd(t):= c=0$ for the guard and $\rst(t) := \{\}$ for the reset.
\item $R_A$ contains an internal transition $\trans{s}{s'}{\sharp}$ for every broadcast transition $\trans{s}{s'}{\brd}$ of $P$, where we set $\grd(t):= c=1$ for the guard, and $\rst(t) := \{c\}$ for the reset (internal transitions are $1$-rendezvous transitions where a single process changes state).
\end{enumerate}

We claim that $\exec[\psys] = \{ (\xi)_{c=0}|_\AP \mid \xi \in \exec[T^\infty] \}$. To show that, we will need the following notation:
given a configuration $f$ of $\sysinst{n}$, for some $n$, and a clock value $v \in \{0,1\}$, we denote by $f^v$ the configuration of $T^n$ obtained from $f$ by defining the clock value of each of the processes to be $v$ (i.e., {by letting $f^v(j) = (f(j), (v))$, } for every $j \in [n]$). Also, it will be convenient to call a transition $\trans{f}{g}{(j,\sharp)}$ of $T^n$ an internal transition (of process $j$).

We first show that $\exec[\psys] \subseteq \{ (\xi)_{c=0}|_\AP \mid \xi \in \exec[T^\infty] \}$.
Given $\pi' \in \runs(\psys)$, let $n$ be such that $\pi' \in \runs(\sysinst{n})$.
We will construct a corresponding run $\pi \in \runs(T^n)$.
Intuitively, a rendezvous transition of $\pi'$ is simulated directly by a single corresponding transition in $\pi$ whose source and destination have all clocks at zero; whereas a broadcast transition  of $\pi'$ is simulated in $\pi$ by a sequence of transitions: one timed transition (that increases all clocks from zero to one), followed by one internal transition of each process. Observe that, by incrementing the clocks to one, the timed transition enables the guards of the internal transitions of each of the processes, and that each such internal transition, once taken, resets the clock back to zero.

Formally, we construct $\pi$ by considering the transitions of $\pi'$ in order. Let $i' \geq 1$, assume that we already constructed a prefix of length $i-1$ of $\pi$ that simulates the first $i'-1$ transitions of $\pi'$ (initially, this prefix is obviously empty). The construction will maintain the invariant $\dst(\pi'_{i'})^0 = \dst(\pi_i)$; here we use the notation $f^v$ introduced above. Consider the two cases for the transition $\pi'_{i'}$.
(i) if $\pi'_{i'} = (f',\sigma,g')$ is a rendezvous transition, then extend $\pi$ by letting $\pi_i = (f'^0, \sigma, g'^0)$.
(ii) if $\pi'_{i'} = (f',\brd,g')$ is a broadcast transition, then extend $\pi$ as follows: first add a timed transition $\pi_i = (f'^0, \brd, f'^1)$; then add a series of $n$ transitions, one for each process $j \in [n]$, in which process $j$ takes the internal transition from $f'(j)$ to $g'(j)$ (i.e., let $f_i = f'^1$, and let $\pi_{i+j} = (f_{i+j-1}, ((j,\sharp)), f_{i+j})$, where $f_{i+j}$ is identical to $f_{i+j-1}$ except that $f_{i+j}(j) = (g'(j), 0)$); Observe that at the end of this sequence we reach the configuration $g'^0$.
It is easy to verify that $\lambda(\proj{\pi'}(i)) = (\lambda(\proj{\pi}(i)))_{c=0}|_\AP$ for all processes $i$.
Hence, $\exec[P^\infty] \subseteq \{ (\pi)_{c=0}|_\AP \mid \pi \in \exec[T^\infty] \}$.

We now show the reverse inclusion $\{ (\xi)_{c=0}|_\AP \mid \xi \in \exec[T^\infty] \} \subseteq \exec[\psys]$.
Given $\pi \in \runs(T^\infty)$, let $n$ be such that $\pi \in \runs(T^n)$.
We will construct a corresponding run $\pi' \in \runs(\sysinst{n})$. Unlike the previous inclusion, this time our construction will be tailored to faithfully simulate the behavior of only a single process, i.e., $\pi'$ will depend on the process of interest. Recall from Section~\ref{sec:executions} that, due to symmetry, we can restrict our attention to process $1$, i.e., it is enough to show that $\dagger$: $\lambda(\proj{\pi'}(1)) = (\lambda(\proj{\pi}(1)))_{c=0}|_\AP$.

We first claim that, w.l.o.g., $\pi$ is structured as a sequence of blocks of the form: one or more multi-process rendezvous transitions, followed (unless $\pi$ has no more timed transitions) by a timed transition and $1 \leq k \leq n$ internal transitions of $k$ different processes, one of which is process $1$. To see why this claim holds, let $i_1 < i_2 < \cdots$ be the positions of timed transitions of $\pi$. Observe that, for every $h$, all the clocks in $\dst(\pi_{i_h})$ are at least $1$ and thus, due to the way the transitions of $T$ are guarded, for every $j \in [n]$, the first transition in which process $j$ can be active after $\pi_{i_h}$ and before $\pi_{i_{h+1}}$ must be an internal transition of $j$ (in which, obviously, no other process is active). Hence, we can push all the internal transitions to immediately follow the timed transition $\pi_{i_h}$ without changing the projection of $\pi$ on any single process. Finally, observe that if process $1$ did not make an internal transition between $\pi_{i_h}$ and $\pi_{i_{h+1}}$ then we can safely trim $\pi$ and ignore everything after position $i_h$. Indeed, in this case at $\dst(\pi_{i_{h+1}})$ process $1$ has a clock value $> 1$, and thus (since all non-timed transitions are guarded by $c=0$ or $c=1$), this value will never go below $1$ again. Now, recall that $\dagger$ restricts our attention to the projection of $\pi$ on process $1$ transitions whose sources have clock value $0$.

It should be now quite clear how to simulate each block of $\pi$: every multi-process rendezvous transition can be easily simulated by a corresponding rendezvous transition in $\sysinst{n}$, whereas the rest of the block (i.e., the timed transition followed by a sequence of internal transitions) can be simulated by a single broadcast transition of $\sysinst{n}$. Note, however, that processes which do not make an internal transition in the block will not be faithfully simulated since such processes never change their state in this block (only their clock changes during by the timed transition), whereas the simulating broadcast may unfaithfully change their state.
However, as we noted above, these processes will not include process $1$, nor can they rendezvous with any process (again, because their guards will always be false), and thus will have no influence on the ability of the projection of $\pi'$ on process $1$ to faithfully capture the projection of $\pi$ on process $1$.

Given a configuration $f$ of $T^n$, denote by $state(f) \in S^n$, the configuration in $\sysinst{n}$ obtained from it by dropping the clock values of all processes.
Formally, we construct $\pi'$ by considering the transitions of $\pi$ in order, block by block. Let $i \geq 1$, and assume that we already constructed a prefix of length $i'-1$ of $\pi'$ that simulates the first $i-1$ transitions of $\pi$. Whenever $\pi_i$ is a multi-process rendezvous transition, or is the last transition in a block, the construction will maintain the invariant that the state of process $j$ in $\dst(\pi_i)$  is equal to its state in $\dst(\pi'_{i'})$, for every process $j$ whose clock value at $\dst(\pi_i)$ is $0$. Consider the following two cases:
(i) $\pi_i = (f,\sigma,g)$ is a multi-process rendezvous transition. Then, extend $\pi'$ by letting $\pi'_{i'} = (state(f), \sigma, state(g))$.
(ii) $\pi_i$ is a timed transition (let $f' = state(\dst(\pi_i))$), and the rest of the transitions in the block are $k \leq n$ internal transitions of different processes, i.e., there is a set of processes $X = \{x_1, x_2, \cdots, x_k\} \subseteq [n]$, and for every $j \in [k]$ there is an internal transition $\trans{f'(j)}{q_j}{\sharp}$ of the template $T$, such that $\pi_{i+j} = \trans{f_{i+j-1}}{f_{i+j}}{(x_j,\sharp)}$, where $f_i = \dst(\pi_i)$, and $f_{i+j}$ is identical to $f_{i+j-1}$ except that $f_{i+j}(x_j) = (q_j, 0)$. Then, extend $\pi'$ by letting $\pi'_{i'} = (f', \brd, g')$ be a transition of $\sysinst{n}$ satisfying $g'(x_j) = q_j$ for every $j \in [k]$ (note that the exact broadcast transitions taken by processes outside $X$ are unimportant).
It is not hard to verify that $\dagger$ holds, as promised.
\end{proof}

We note that Lemma~\ref{lem:rb-systems-timed-networks-reduction} implies that for every specification $\varphi$ (given as LTL formula resp. finite automaton) one can construct a specification $\varphi'$ such that all executions $\execinf[P^\infty]$ (resp. $\execfin[P^\infty]$) satisfy $\varphi$ iff all executions $\execinf[T^\infty]$ (resp. $\execfin[T^\infty]$) satisfy the specification $\varphi'$; this is because the operation $(\pi)_a$ can be implemented as a formula resp. automaton transformation.

{
From Lemma~\ref{lem:timed-networks-limited} and Lemma~\ref{lem:rb-systems-timed-networks-reduction} we immediately obtain the main result of this section:
\begin{theorem} \label{thm: RB and Timed}
The Parameterized Model Checking Problems for RB-Systems and Timed-Networks are polynomial-time inter-reducible;
in particular, (lower as well as upper) bounds on the program/specification/combined complexity transfer.
\end{theorem}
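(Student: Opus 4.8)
The plan is to read this theorem as a direct corollary of the two reductions already established: Lemma~\ref{lem:timed-networks-limited} turns a timed network into an equivalent finite RB-system, and Lemma~\ref{lem:rb-systems-timed-networks-reduction} turns an RB-system into a timed network whose executions project onto those of the RB-system. I would organize the argument as two polynomial-time mapping reductions between the decision problems PMCP for timed networks and PMCP for RB-systems, and then observe that both reductions are \emph{separable} --- the template is transformed by a map that ignores the specification, and the specification by a map that ignores the template --- which is exactly what is needed for the program-, specification-, and combined-complexity bounds to transfer simultaneously.

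For the direction reducing PMCP for timed networks to PMCP for RB-systems, I would take an instance consisting of a TN-template $T$ and a specification $L$ over $2^{\AP \cup \CP}$, apply Lemma~\ref{lem:timed-networks-limited} to build in polynomial time a finite RB-template $U$ with $\exec[T^\infty] = \exec[U^\infty]$, and output the instance $(U,L)$. Since the two systems have the same executions over the same alphabet, every execution of $T^\infty$ lies in $L$ iff every execution of $U^\infty$ does; thus the reduction is correct and leaves the specification untouched. This immediately transfers any upper bound for RB-systems to timed networks and any lower bound for timed networks to RB-systems, for each of the three complexity measures (a fixed $T$ yields a fixed $U$, and a fixed $L$ stays fixed).

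For the reverse direction, I would take an instance consisting of a process template $P$ over $\AP$ and a specification $W$ over $2^{\AP}$. Lemma~\ref{lem:rb-systems-timed-networks-reduction} builds in linear time a one-clock TN-template $T$ with $\exec[P^\infty] = \{ (\pi)_{c=0}|_\AP \mid \pi \in \exec[T^\infty] \}$. The missing ingredient is a specification $W'$ over $2^{\AP \cup \CP}$ such that $\pi \in W'$ iff $(\pi)_{c=0}|_\AP \in W$; given this, quantifying universally over executions gives ``all executions of $P^\infty$ are in $W$'' iff ``all executions of $T^\infty$ are in $W'$'', so $(P,W)\mapsto(T,W')$ is the required reduction. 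As noted in the remark following Lemma~\ref{lem:rb-systems-timed-networks-reduction}, the projection $(\cdot)_{c=0}|_\AP$ can be realized as a polynomial-time formula transformation (for \LTL/\LTLf, relativizing the temporal operators to the positions where $c=0$ holds, in the style of the construction in the proof of Theorem~\ref{thm:RBC equiv RBA}) or as a polynomial-time automaton transformation (reading the $\AP$-part only at $c=0$ positions), in both the finite- and infinite-word cases. Again the reduction is separable: $W'$ depends only on $W$ and $T$ only on $P$.

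I expect the only genuinely delicate point to be the correctness of the specification transformation in this second direction, since the executions of $P$ are not literally executions of $T$ but their $(c=0)$-projections onto $\AP$; one must check that the relativized specification $W'$ faithfully encodes membership of the projection in $W$ and that this behaves correctly for both finite and infinite traces. Everything else is bookkeeping: composing the two separable polynomial-time reductions yields polynomial-time inter-reducibility, and the separability ensures that lower as well as upper bounds on the program, specification, and combined complexity transfer in both directions.
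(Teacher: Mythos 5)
Your proposal matches the paper's own argument: the theorem is stated there as an immediate consequence of Lemma~\ref{lem:timed-networks-limited} (timed network to RB-system, preserving executions exactly) and Lemma~\ref{lem:rb-systems-timed-networks-reduction} together with the remark that the projection $(\pi)_{c=0}|_\AP$ can be pushed into the specification by a formula/automaton transformation. Your additional observation that both reductions are separable, so that program-, specification-, and combined-complexity bounds all transfer, is exactly the (implicit) content of the paper's ``in particular'' clause.
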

}

\paragraph{Timed Networks with a Controller.}
The inter-reducibility between RB-Systems and Timed Networks extends to systems with a controller as we sketch in the following.
Given two TN-templates $T_C$ and $T$, the TN-System with a controller (TNC-System) $T_C \cup T^n$ is then defined as the RBC-System $\proctemp_{T_C} \cup \proctemp_{T}^n$, where $\proctemp_{T_C}$ and $\proctemp_T$ are the induced (infinite) RB-templates introduced above.
The TNC-System $T_C \cup T^\infty$ is then defined analogously.
Lemma~\ref{lem:timed-networks-limited} and Lemma~\ref{lem:rb-systems-timed-networks-reduction} can then straightforwardly be extended to the relationship between TNC-Systems and RBC-Systems.
This gives us the following results:
\begin{theorem}
The Parameterized Model Checking Problems for RBC- and TNC-Systems are polynomial-time inter-reducible;
in particular, (lower as well as upper) bounds on the program/specification/combined complexity transfer.
\end{theorem}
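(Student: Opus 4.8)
The plan is to mirror exactly the derivation of Theorem~\ref{thm: RB and Timed}, where the inter-reducibility of RB-Systems and Timed-Networks followed immediately from two translations: Lemma~\ref{lem:timed-networks-limited}, which clips a TN-template to a finite bisimilar RB-template, and Lemma~\ref{lem:rb-systems-timed-networks-reduction}, which encodes a symmetric broadcast by a time-advance followed by a guarded internal reset. Here I would lift each of these two lemmas to the two-template setting, applying the construction to the controller template and to the non-controller template separately, and then checking that the single global broadcast / time-advance stays synchronized across the heterogeneous composition. The complexity transfer then follows as before.

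For the direction TNC $\to$ RBC, recall that $T_C \cup T^\infty$ is by definition the RBC-System $\proctemp_{T_C} \cup \proctemp_T^\infty$ built from the two induced infinite RB-templates. I would clip $\proctemp_{T_C}$ and $\proctemp_T$ independently to finite templates $U_C$ and $U$ via the clipping of Lemma~\ref{lem:timed-networks-limited} (each with its own bound $d = \max\{c : x \bowtie c \in \CP\}+1$; since every process carries its own clocks, the per-process clipping does not disturb the global time-advance). The point-wise bisimulation $B$ of that proof extends verbatim to each template, and these lift to a single point-wise bisimulation of the composed systems in which process~$1$ uses the controller bisimulation and the remaining processes use the non-controller bisimulation; this is a routine extension of Lemma~\ref{lem:bisim}. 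Because clipping is preserved by increment-then-clip on each clock and by the choice of $d$ preserves the truth of rendezvous guards, a global timed transition on one side matches a global broadcast on the other. Applying Lemma~\ref{lem:bisim} item~\ref{lem:bisim:bisim} processwise then yields $\exec[T_C \cup T^\infty]_C = \exec[U_C \cup U^\infty]_C$ and $\exec[T_C \cup T^\infty] = \exec[U_C \cup U^\infty]$, so both the controller and the non-controller execution sets are preserved, and (constants being in unary) the construction is polynomial.

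For the direction RBC $\to$ TNC, I would apply the encoding of Lemma~\ref{lem:rb-systems-timed-networks-reduction} to each of $\proctemp_C$ and $\proctemp$, obtaining TN-templates $T_C$ and $T$ with a single clock $c$ and clock predicates $\{c=0,c=1\}$. Every symmetric broadcast of the RBC-System is replaced by one global time-advance followed by an internal, $c=1$-guarded, $c$-resetting transition in each process; since a time-advance increments all clocks of all processes at once, the simultaneity of the broadcast among the controller and the arbitrarily many identical processes is respected. The correctness argument of that lemma is entirely per-process (a process whose clock exceeds $1$ can no longer rendezvous and simply idles), so it applies identically with process~$1$ playing the controller and process~$2$ a representative non-controller, recovering $\exec[\proctemp_C \cup \proctemp^\infty]_C$ and $\exec[\proctemp_C \cup \proctemp^\infty]$ from the corresponding filtered executions $(\pi)_{c=0}|_\AP$ of the TNC-System.

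Finally, as already observed after Lemma~\ref{lem:rb-systems-timed-networks-reduction}, the subsequence-and-restriction operation $(\pi)_{c=0}|_\AP$ is realizable as a linear-time transformation on NFW/NBW and on \LTL/\LTLf formulas, so a specification for one model translates to an equivalent specification for the other at polynomial cost; combined with the polynomial template translations above, this establishes polynomial-time inter-reducibility and transfers the program, specification, and combined complexity bounds in both directions. I expect the only genuine subtlety, and the step deserving explicit verification, to be this synchronization claim: that a single global broadcast (resp.\ time advance) in the mixed controller/non-controller composition corresponds to a single global time advance (resp.\ broadcast) on the other side, so that the controller never desynchronizes from the arbitrarily many identical processes. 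Everything else is a per-template application of the already-proved lemmas.
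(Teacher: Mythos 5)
Your proposal is correct and follows essentially the same route as the paper: the paper itself only remarks that Lemma~\ref{lem:timed-networks-limited} and Lemma~\ref{lem:rb-systems-timed-networks-reduction} ``can then straightforwardly be extended'' to the controller setting, and your argument is exactly that extension, applied per-template with the point-wise bisimulation and the broadcast/time-advance synchronization spelled out. The one subtlety you flag --- that a single global broadcast corresponds to a single global time advance across the heterogeneous controller/contributor composition --- is indeed the only point needing care, and your treatment of it is sound.
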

\begin{corollary} \hfill
\begin{enumerate}
\item Let $\Pspec$ be specifications of sets of infinite executions expressed as NBW or LTL formulas.
    Then $PMCP(\Pspec)$ of TNC-Systems is undecidable.
\item Let $\Pspec$ be specifications of sets of finite executions expressed as NFW or LTLf formulas. Then $PMCP(\Pspec)$ of TNC-Systems is decidable.
\end{enumerate}
\end{corollary}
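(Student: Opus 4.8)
The plan is to derive the corollary directly by composing the two preceding results: the complexity classification of $PMCP$ for RBC-Systems (undecidable for NBW/LTL specifications, decidable but non-elementary for NFW/LTLf specifications), and the polynomial-time inter-reducibility of $PMCP$ for RBC- and TNC-Systems stated in the Theorem immediately above. Since a polynomial-time, hence computable, many-one reduction preserves undecidability and preserves both non-elementary upper and lower bounds, each item of the corollary follows by chaining these two facts together. In effect the corollary is the image, under the inter-reduction, of the already-established RBC classification.

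For item (1) I would argue as follows. The earlier Theorem on RBC-Systems gives that $PMCP(\Pspec)$ for RBC-Systems is undecidable when $\Pspec$ ranges over NBW/LTL. The inter-reducibility Theorem provides a computable reduction from an RBC instance to a TNC instance: the template is transformed via the induced RB-templates $\proctemp_{T_C}$ and $\proctemp_T$ (as in the controller extensions of Lemma~\ref{lem:timed-networks-limited} and Lemma~\ref{lem:rb-systems-timed-networks-reduction}), and the specification is transformed by the corresponding formula/automaton rewriting. If $PMCP$ for TNC-Systems were decidable, composing its decision procedure with this reduction would decide $PMCP$ for RBC-Systems, a contradiction; hence item (1).

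For item (2), decidability of $PMCP$ for TNC-Systems with NFW/LTLf specifications follows by composing the polynomial-time reduction (in the RBC-to-TNC direction) with the RBC decision procedure, and the non-elementary upper bound transfers along the same reduction since a polynomial-time reduction composed with a non-elementary procedure remains non-elementary. For the non-elementary lower bound I would instead invoke the reduction in the opposite direction, from TNC to RBC, so that the known non-elementary hardness of $PMCP$ for RBC-Systems is inherited by TNC-Systems; it is precisely the fact that \emph{both} directions of the inter-reduction are polynomial-time that makes this hardness transfer valid.

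The only point that requires genuine care, and the step I expect to be the main obstacle, is verifying that the specification transformation underlying the inter-reduction stays inside the relevant formalism. The reductions of Lemma~\ref{lem:timed-networks-limited} and Lemma~\ref{lem:rb-systems-timed-networks-reduction}, extended to the controller setting, relate the executions of the two systems only up to the subsequence operation $(\pi)_a$ and the relabeling $\pi|_\AP$, so one must confirm that these operations can be absorbed into an NBW/LTL (resp. NFW/LTLf) specification without leaving the class. This is exactly the formula/automaton rewriting already noted after Lemma~\ref{lem:rb-systems-timed-networks-reduction} and used in the proof of Theorem~\ref{thm:RBC equiv RBA}, e.g. replacing each atom $X$ by $\lnot p \until (p \land X)$ in the LTL case and guarding by a reachability condition to discard dead-state behavior; once this closure property is checked, the complexity transfer in both items is routine.
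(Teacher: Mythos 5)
Your proposal is correct and follows essentially the same route as the paper, which obtains the corollary immediately by combining the polynomial-time inter-reducibility of the PMCP for RBC- and TNC-Systems with the undecidability/non-elementary classification already established for RBC-Systems (the paper gives no separate proof of the corollary beyond this). The only blemish is that in item (2) you have swapped the labels of the two reduction directions --- obtaining a decision procedure for TNC-Systems uses the TNC-to-RBC direction, while transferring the non-elementary lower bound uses the RBC-to-TNC direction --- but since you explicitly rely on the fact that \emph{both} directions are polynomial-time, this slip does not affect the validity of the argument.
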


\section{Solving PMCP for Specifications over Finite Executions} \label{sec:solving finite} \label{sec: unwinding} 

In this section we solve the PMCP problem for specifications given as nondeterministic finite word automata (NFW), and prove that it is \PSPACE-complete. Following the automata-theoretic approach outlined in Section~\ref{subsec: PMCP}, given an RB-template $\proctemp$ we will build an NFW $\execNFW$ that accepts exactly the executions in $\execfin$.  Model checking of a regular specification given by an NFW $\A'$ is thus reduced to checking containment of the language of $\execNFW$ in that of $\A'$. The structure of $\execNFW$ is based on the reachability-unwinding of $\proctemp$, which we now introduce.

Given a template $\proctemp = \tup{AP, \Actionprts \cup \{\brd\}, S,I,R,\lambda}$, we show how to construct a new process template $\puwd = \tup{AP, \Actionprts \cup \{\brd\}, S^\uwd,I^\uwd,R^\uwd,\lambda^\uwd}$, called the {\em reachability-unwinding}
of $\proctemp$, see Figure~\ref{fig:lasso}. Intuitively, $\puwd$ is obtained by alternating the following two operations: \emph{(i)} taking a copy of $\proctemp$ and removing from it all the unreachable rendezvous edges, i.e., all transitions of $P$ that cannot be taken by any process in any run of $\psys$; and \emph{(ii)} unwinding on broadcast edges.
This is repeated until a copy is created which is equal to a previous one; we then stop and close the unwinding back into the old copy, forming a high-level lasso structure.

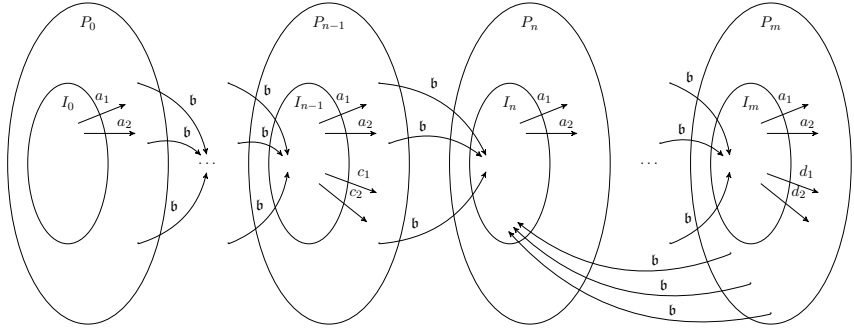
\begin{figure}[t]
	\resizebox{0.8\linewidth}{!}{             
		\begin{tikzpicture}[->,>=stealth',shorten >=2pt,auto,
                    semithick]

    \tikzstyle{invisible}=[]
    \tikzstyle{vertex}=[circle,fill=black!25]


    \draw (0,0) ellipse (2cm and 4cm);
    \node[invisible]    at (0,3.5)    {$P_0$};
    
    \draw (-0.5,0) ellipse (1cm and 2cm);
    \node[invisible]    at (-0.5,1.5)    {$I_0$};

    \node[invisible]    (P0btarget) at (3,0) {\ldots};
    \node[invisible]    (P1btarget) at (5,0) {};

    \node[invisible]    (I01) at (-0.25,1)      {};
    \node[invisible]    (I02) at (-0.1,0.75)    {};
    \node[invisible]    (I03) at (-0.1,-0.25)   {};
    \node[invisible]    (I04) at (-0.25,-0.5)   {};

    \node[invisible]    (P01) at (1,1.5)        {};
    \node[invisible]    (P02) at (1.25,0.75)    {};
    \node[invisible]    (P03) at (1.25,-0.75)   {};
    \node[invisible]    (P04) at (1,-1.5)       {};

    \path (I01.center) edge node [pos=0.75] {$a_1$} (P01.center);
    \path (I02.center) edge node [pos=0.75] {$a_2$} (P02.center);


    \node[invisible]    (P05) at (1.25,2)      {$\cdot$};
    \node[invisible]    (P06) at (1.5,0.5)      {$\cdot$};
    \node[invisible]    (P07) at (1.25,-2)      {$\cdot$};

    \path (P05.center) edge [bend left] node {$\brd$} (P0btarget);
    \path (P06.center) edge  [bend left] node  {$\brd$} (P0btarget);
    \path (P07.center) edge [bend right] node {$\brd$} (P0btarget);

    \node[invisible]    (P08) at (3.5,2)      {$\cdot$};
    \node[invisible]    (P09) at (3.75,0.5)      {$\cdot$};
    \node[invisible]    (P0A) at (3.5,-2)      {$\cdot$};

    \path (P08.center) edge [bend left] node [pos=0.33] {$\brd$} (P1btarget);
    \path (P09.center) edge  [bend left] node [pos=0.33]  {$\brd$} (P1btarget);
    \path (P0A.center) edge [bend right] node [pos=0.5] {$\brd$} (P1btarget);


    \draw (6,0) ellipse (2cm and 4cm);
    \node[invisible]    at (6,3.5)    {$P_{n-1}$};
    
    \draw (5.5,0) ellipse (1cm and 2cm);
    \node[invisible]    at (5.5,1.5)    {$I_{n-1}$};

    \node[invisible] (P2btarget)    at (10,0) {};

    \node[invisible]    (I01) at (5.75,1)      {};
    \node[invisible]    (I02) at (5.9,0.75)    {};
    \node[invisible]    (I03) at (5.9,-0.25)   {};
    \node[invisible]    (I04) at (5.75,-0.5)   {};

    \node[invisible]    (P01) at (7,1.5)        {};
    \node[invisible]    (P02) at (7.25,0.75)    {};
    \node[invisible]    (P03) at (7.25,-0.75)   {};
    \node[invisible]    (P04) at (7,-1.5)       {};

    \path (I01.center) edge node [pos=0.75] {$a_1$} (P01.center);
    \path (I02.center) edge node [pos=0.75] {$a_2$} (P02.center);
    \path (I03.center) edge node {$c_1$} (P03.center);
    \path (I04.center) edge node {$c_2$} (P04.center);


    \node[invisible]    (P15) at (7.25,2)      {$\cdot$};
    \node[invisible]    (P16) at (7.5,0.5)      {$\cdot$};
    \node[invisible]    (P17) at (7.25,-2)      {$\cdot$};

    \path (P15.center) edge [bend left] node [pos=0.33] {$\brd$} (P2btarget);
    \path (P16.center) edge  [bend left] node [pos=0.33]  {$\brd$} (P2btarget);
    \path (P17.center) edge [bend right] node [pos=0.33] {$\brd$} (P2btarget);


    \draw (11,0) ellipse (2cm and 4cm);
    \node[invisible]    at (11,3.5)    {$P_n$};
    
    \draw (10.5,0) ellipse (1cm and 2cm);
    \node[invisible]    at (10.5,1.5)    {$I_n$};

    \node[invisible]    at (14,0) {\ldots};
    \node[invisible] (Pnbtarget) at (16,0) {};

    \node[invisible]    (I01) at (10.75,1)      {};
    \node[invisible]    (I02) at (10.9,0.75)    {};
    \node[invisible]    (I03) at (10.9,-0.25)   {};
    \node[invisible]    (I04) at (10.75,-0.5)   {};

    \node[invisible]    (P01) at (12,1.5)        {};
    \node[invisible]    (P02) at (12.25,0.75)    {};
    \node[invisible]    (P03) at (12.25,-0.75)   {};
    \node[invisible]    (P04) at (12,-1.5)       {};

    \path (I01.center) edge node [pos=0.75] {$a_1$} (P01.center);
    \path (I02.center) edge node [pos=0.75] {$a_2$} (P02.center);


    \node[invisible]    (Pn5) at (14.5,2)      {$\cdot$};
    \node[invisible]    (Pn6) at (14.25,0.5)      {$\cdot$};
    \node[invisible]    (Pn7) at (14.5,-2)      {$\cdot$};

    \path (Pn5.center) edge [bend left] node [pos=0.1] {$\brd$} (Pnbtarget);
    \path (Pn6.center) edge  [bend left] node [pos=0.3]  {$\brd$} (Pnbtarget);
    \path (Pn7.center) edge [bend right] node [pos=0.3] {$\brd$} (Pnbtarget);
    

    \draw (17,0) ellipse (2cm and 4cm);
    \node[invisible]    at (17,3.5)    {$P_m$};
    
    \draw (16.5,0) ellipse (1cm and 2cm);
    \node[invisible]    at (16.5,1.5)    {$I_m$};

    \node[invisible] (Pmbtarget1) at (10.5,-1.3) {};
    \node[invisible] (Pmbtarget2) at (10.4,-1.4) {};
    \node[invisible] (Pmbtarget3) at (10.3,-1.5) {};

    \node[invisible]    (I01) at (16.75,1)      {};
    \node[invisible]    (I02) at (16.9,0.75)    {};
    \node[invisible]    (I03) at (16.9,-0.25)   {};
    \node[invisible]    (I04) at (16.75,-0.5)   {};

    \node[invisible]    (P01) at (18,1.5)        {};
    \node[invisible]    (P02) at (18.25,0.75)    {};
    \node[invisible]    (P03) at (18.25,-0.75)   {};
    \node[invisible]    (P04) at (18,-1.5)       {};

    \path (I01.center) edge node [pos=0.75] {$a_1$} (P01.center);
    \path (I02.center) edge node [pos=0.75] {$a_2$} (P02.center);
    \path (I03.center) edge node {$d_1$} (P03.center);
    \path (I04.center) edge node {$d_2$} (P04.center);


    \node[invisible]    (Pm5) at (16,-2.25)      {$\cdot$};
    \node[invisible]    (Pm6) at (16.5,-3)      {$\cdot$};
    \node[invisible]    (Pm7) at (17,-3.75)      {$\cdot$};

    \path (Pm5.center) edge [bend left] node [above,pos=0.33] {$\brd$} (Pmbtarget1);
    \path (Pm6.center) edge  [bend left] node [above,pos=0.33]  {$\brd$} (Pmbtarget2);
    \path (Pm7.center) edge [bend left] node [above,pos=0.33] {$\brd$} (Pmbtarget3);

\end{tikzpicture}
	}
	\caption{\label{fig:lasso}A high level view of the reachability-unwinding lasso.}
\end{figure}

\paragraph{Intuition.} Technically, it is more convenient to first calculate all the desired copies and then to arrange them in the lasso. Thus, for $0 \leq i \leq m$ (for an appropriate $m$), we first compute an R-template $\proctemp_i = \tup{AP,\Actionprts,S_i,I_i,R_i,\lambda_i}$ which is a copy of $\proctemp$ with initial states redesignated and all broadcast edges, plus some rendezvous edges, removed.
Second, we take $\proctemp_0, \dots, \proctemp_m$ and combine them, to create the single process template $\puwd$. We do this by connecting, for $i < m$, the states in $\proctemp_i$ with the initial states of $\proctemp_{i+1}$ by means of broadcast edges, as induced by transitions in $\proctemp$. In case $i=m$, then $\proctemp_i$ is connected to the copy $\proctemp_n$, for some $n \leq m$, as determined by the lasso structure. 

\paragraph{Constructing $P_i$ via a Saturation Algorithm.} Recursively construct the R-template 
\[ \proctemp_i = \tup{AP,\Actionprts,S_i,I_i,R_i,\lambda_i} 
\] 
(called the $i$'th {\em component} of $\puwd$) as follows. For $i=0$, we let $I_0 := I$; and for $i > 0$ we let $I_i := \{ s \in S \mid (h,\brd,s) \in R \text{ for some } h \in S_{i-1} \}$ be the set of states reachable from $S_{i-1}$ by a broadcast edge. The sets $S_i$ and $R_i$ are obtained using the following {\em saturation} algorithm:
start with $S_i := I_i$ and $R_i := \emptyset$; at each round of the algorithm, consider in turn each edge $e \in R \setminus R_i$ of the form $\trans{s}{t}{\msg{a}_h}$ such that $s \in S_i$: if for every $l \in [k] \setminus \{ h \}$ there is some other edge $\trans{s'}{t'}{\msg{a}_l}$ with $s' \in S_i$, then add $e$ to $R_i$ and add $t$ (if not already there) to $S_i$. The algorithm ends when a fixed-point is reached. Finally, let $\lambda_i$ be the restriction of $\lambda$ to $S_i \times AP$. Observe the following property of this algorithm: 
\begin{remark}
\label{rem:saturation}
If $\trans{s}{t}{\msg{a}_h}$ in $R_i$ then for all $l \in [k]$ there exist $s',t' \in S_i$ such that $\trans{s'}{t'}{\msg{a}_l}$ in $R_i$.
\end{remark}

Now, $\proctemp_i$ is completely determined by $I_i$ (and $\proctemp$), and so there are at most $2^{|S|}$ possible values for it. Hence, there must exist $n,m$ with $n \leq m < 2^{|S|}$ such that $\proctemp_n = \proctemp_{m+1}$. We stop calculating $\proctemp_i$'s when this happens since for every 
$i \geq n$ it must be that $\proctemp_i = \proctemp_{n + ((i-n)\mod r)}$, where $r=m+1-n$. We call $n$ the {\em prefix length} of $\puwd$ and call $r$ its {\em period}, i.e., $n$ is the number of components on the prefix of the lasso and $r$ is the number of components on the noose of the lasso. For $i \in \NatZero$, let $\comp(i)$ denote the associated component number of $i$, i.e., the position in the lasso after $i$ moves between components. Formally, $comp(i) = \min(i,n + ((i-n) \mod r))$.

We now construct $\puwd$.

\begin{definition}[Reachability-unwinding] \label{dfn: unwinding}
Given $\proctemp_0, \dots, \proctemp_m$, define the RB-template
\[ \puwd = \tup{AP,\Actionprts \cup \{\brd\},S^\uwd,I^\uwd,R^\uwd,\lambda^\uwd}\]
as follows:
\begin{itemize}
 \item $S^\uwd := \cup_{i=0}^m \{ (s, i) \mid s \in S_i \}$;
 \item $I^\uwd := \{ (s,0) \mid s \in I \}$;
 \item $R^\uwd$ contains the following transitions:
 \begin{itemize}
    \item the rendezvous transitions $\cup_{i=0}^m \{ \trans{(s,i)}{(t,i)}{\varsigma} \mid \trans{s}{t}{\varsigma} \in R_i \}$, and

    \item the broadcast transitions  $\cup_{i=0}^{m-1} \{ \trans{(s,i)}{(t,i+1)}{\brd} \mid  \trans{s}{t}{\brd} \in R\ \text{and}\ s \in S_i  \}$ and $\{ \trans{(s,m)}{(t,n)}{\brd} \mid  \trans{s}{t}{\brd} \in R\ \text{and}\ s \in S_m \}$.
\end{itemize}
\item $\lambda^\uwd = \cup_{i=0}^m \{ ((s,i),p) : (s,p) \in \lambda_i \}$.
\end{itemize}
\end{definition}

For $0 \leq i \leq m$, we denote by $\puwd_i$ the restriction of $\puwd$ to the states in $\{ (s, i) \mid s \in S_i \}$ and the rendezvous transitions between them. We call $\puwd_i$ the \emph{$i$'th component of $\puwd$}. Observe that $\puwd_i$ can be written as an R-template $\puwd_i = \tup{AP,\Actionprts \cup \{\brd\},S^\uwd_i,I^\uwd_i,R^\uwd_i,\lambda^\uwd_i}$ which is obtainable from the component $\proctemp_i$ by simply attaching $i$ to every state. We will sometimes find it convenient to speak of the component $\puwd_i$, for $i$ greater than $m$, in which case we identify $\puwd_i$ with $\puwd_{comp(i)}$.
We say that a configuration $f$ of $\puwdsys$ is \emph{in $\puwd_i$} iff all its processes are in states of the component $\puwd_i$, i.e., iff $f(j) \in S_i^\uwd$ for all $j$ in the domain of $f$.

\begin{example}
\hspace{1cm}
\begin{itemize}
\item If $P$ is the template in Figure~\ref{fig: strict} then $P_0=P$, and $P_1$ is the empty process (because there are no broadcast edges). Thus $\puwd$ is a copy of $P$. 
\item If $P$ is the template in Figure~\ref{fig: process not regular 2} then $P_0$ is equal to $P$ without the broadcast edges, and $P_1 = P_0$. Thus $\puwd$ is a copy of $P$.
\item If $P$ is the template on the right hand side of Figure~\ref{fig:ex_dtn_rbs}, then $\puwd$ contains four components (prefix length $3$ and period $1$). The four components are drawn in Figure~\ref{fig:puwd-timed-ex}.
 \end{itemize}

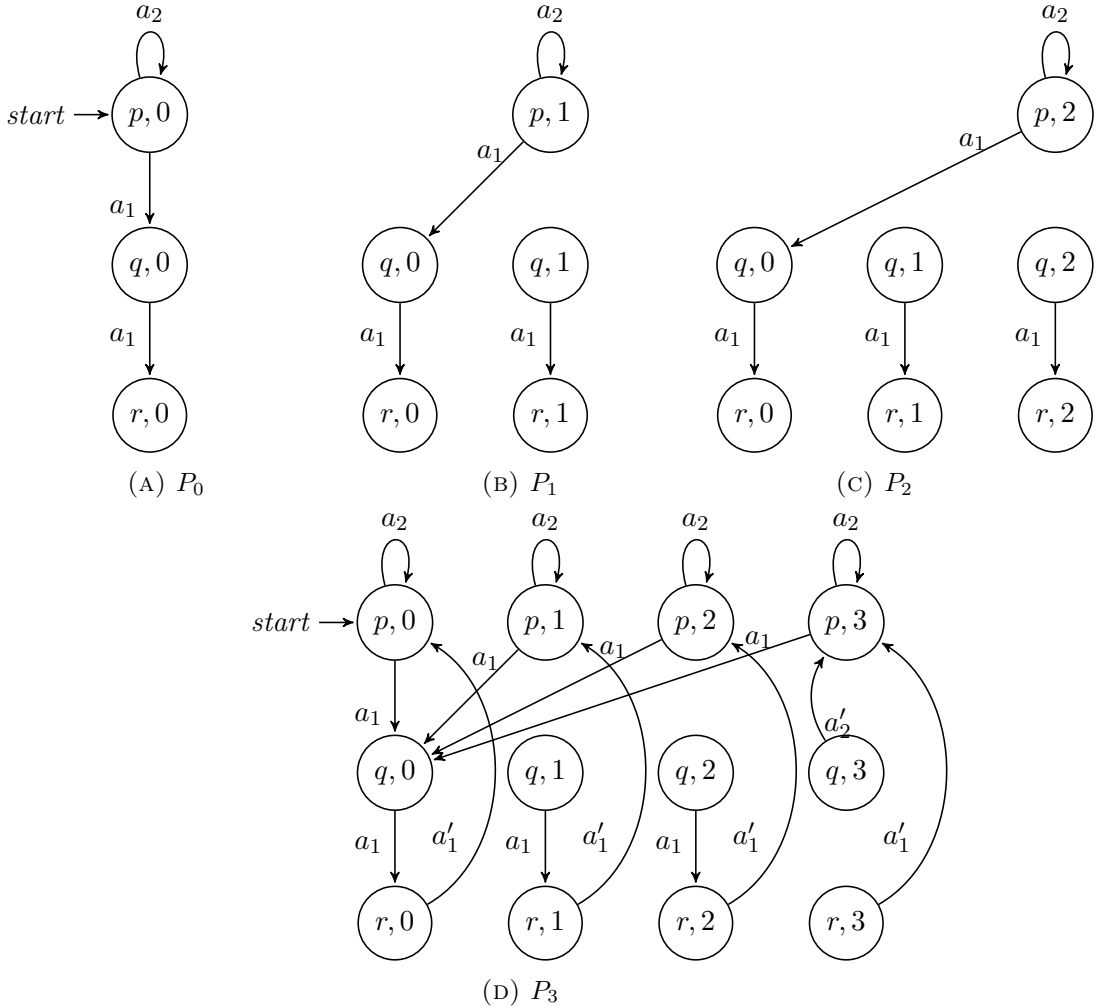
\begin{figure}
\centering
\hspace*{-20pt}
\begin{subfigure}{0.3\textwidth}

\begin{tikzpicture}[->,>=stealth',shorten >=1pt,auto,node distance=2cm,
                    semithick]

  \node[initial,state] (P0)                  {$p,0$};
  \node[state]         (Q0) [below of=P0]     {$q,0$};
  \node[state]         (W0) [below of=Q0]     {$r,0$};

  \path (P0) edge node[below,xshift=-10] {$\msg{a_1}$} (Q0)
        (P0) edge [loop above] node {$\msg{a_2}$} (P0)
        (Q0) edge node[below,pos=0.2,xshift=-10] {$\msg{a_1}$} (W0);

\end{tikzpicture}
\caption{$P_0$}
\end{subfigure}
\begin{subfigure}{0.3\textwidth}
\begin{tikzpicture}[->,>=stealth',shorten >=1pt,auto,node distance=2cm,
                    semithick]

  \node[state]         (Q0) [below of=P0]     {$q,0$};
  \node[state]         (W0) [below of=Q0]     {$r,0$};

  \node[state]         (P1) [right of=P0]                 {$p,1$};
  \node[state]         (Q1) [below of=P1]     {$q,1$};
  \node[state]         (W1) [below of=Q1]     {$r,1$};

\path
        (Q0) edge node[below,pos=0.2,xshift=-10] {$\msg{a_1}$} (W0);

\path (P1) edge node[pos=0.2,below,xshift=-5,yshift=10] {$\msg{a_1}$} (Q0)
        (P1) edge [loop above] node {$\msg{a_2}$} (P1)
        (Q1) edge node[below,pos=0.2,xshift=-10] {$\msg{a_1}$} (W1);

\end{tikzpicture}
\caption{$P_1$}
\end{subfigure}
\begin{subfigure}{0.28\textwidth}
\begin{tikzpicture}[->,>=stealth',shorten >=1pt,auto,node distance=2cm,
                    semithick]

  \node[state]         (Q0) [below of=P0]     {$q,0$};
  \node[state]         (W0) [below of=Q0]     {$r,0$};

  \node[state]         (Q1) [below of=P1]     {$q,1$};
  \node[state]         (W1) [below of=Q1]     {$r,1$};

  \node[state]         (P2) [right of=P1]                 {$p,2$};
  \node[state]         (Q2) [below of=P2]     {$q,2$};
  \node[state]         (W2) [below of=Q2]     {$r,2$};

\path
        (Q0) edge node[below,pos=0.2,xshift=-10] {$\msg{a_1}$} (W0);

\path
	(Q1) edge node[below,pos=0.2,xshift=-10] {$\msg{a_1}$} (W1);

  \path (P2) edge node[pos=0.15,below,xshift=-5,yshift=10] {$\msg{a_1}$} (Q0)
        (P2) edge [loop above] node {$\msg{a_2}$} (P2)
        (Q2) edge node[below,pos=0.2,xshift=-10] {$\msg{a_1}$} (W2);

\end{tikzpicture}
\caption{$P_2$}
\end{subfigure}
\begin{subfigure}{0.49\textwidth}
\begin{tikzpicture}[->,>=stealth',shorten >=1pt,auto,node distance=2cm,
                    semithick]

  \node[initial,state] (P0)                  {$p,0$};
  \node[state]         (Q0) [below of=P0]     {$q,0$};
  \node[state]         (W0) [below of=Q0]     {$r,0$};

  \node[state]         (P1) [right of=P0]                 {$p,1$};
  \node[state]         (Q1) [below of=P1]     {$q,1$};
  \node[state]         (W1) [below of=Q1]     {$r,1$};

  \node[state]         (P2) [right of=P1]                 {$p,2$};
  \node[state]         (Q2) [below of=P2]     {$q,2$};
  \node[state]         (W2) [below of=Q2]     {$r,2$};

  \node[state]         (P3) [right of=P2]                 {$p,3$};
  \node[state]         (Q3) [below of=P3]     {$q,3$};
  \node[state]         (W3) [below of=Q3]     {$r,3$};

  \path (P0) edge node[below,xshift=-10] {$\msg{a_1}$} (Q0)
        (P0) edge [loop above] node {$\msg{a_2}$} (P0)
        (Q0) edge node[below,pos=0.2,xshift=-10] {$\msg{a_1}$} (W0)
        (W0) edge[bend right=60] node[pos=0.2] {$\msg{a'_1}$} (P0);

  \path (P1) edge node[pos=0.2,below,xshift=-5,yshift=10] {$\msg{a_1}$} (Q0)
	(Q1) edge node[below,pos=0.2,xshift=-10] {$\msg{a_1}$} (W1)
	        (P1) edge [loop above] node {$\msg{a_2}$} (P1)
	        (W1) edge[bend right=60] node[pos=0.2] {$\msg{a'_1}$} (P1);

  \path (P2) edge node[pos=0.15,below,xshift=-5,yshift=10] {$\msg{a_1}$} (Q0)
        (Q2) edge node[below,pos=0.2,xshift=-10] {$\msg{a_1}$} (W2)
                (P2) edge [loop above] node {$\msg{a_2}$} (P2)
                (W2) edge[bend right=60] node[pos=0.2] {$\msg{a'_1}$}   (P2);

  \path (P3) edge node[pos=0.1,below,xshift=-5,yshift=10] {$\msg{a_1}$} (Q0)
        (P3) edge [loop above] node {$\msg{a_2}$} (P3)
        (Q3) edge[bend left=33] node[below, xshift=10,yshift=5] {$\msg{a'_2}$} (P3)
        (W3) edge[bend right=60] node[pos=0.2] {$\msg{a'_1}$}  (P3);

\end{tikzpicture}
\caption{$P_3$}

\end{subfigure}
\caption{Components $P_0,P_1,P_2,P_3$ of $\puwd$ for $P$ from Figure~\ref{fig:ex_dtn_rbs}.}\label{fig:puwd-timed-ex}
\end{figure}

\end{example}

\begin{definition}[Legal configuration/path] \label{dfn: legal}
 A configuration $f$ of $\puwdsys$ is \emph{legal} iff it is in $\puwd_i$ for some $i$; a path in $\puwdsys$ is \emph{legal} iff its source configuration is.
\end{definition}

\begin{remark} \label{rem: broadcast component} If $\pi$ is a finite path of $\puwdsys$, with $b$ broadcast transitions, with source $f$ and destination $f'$, then if $f$ is in $\puwd_i$ then $f'$ is in $\puwd_{i+b}$, and if $\pi$ is a run then $f'$ is in $\puwd_b$. In particular, any configuration of $\puwdsys$ that is reachable from an initial configuration is legal.
\end{remark}

Recall that we introduced $\puwd$ in order to define an automaton recognizing $\execfin$. Before doing so, we have to understand the relationship between $\puwd$ and $\proctemp$.

\paragraph{On the relation between $\puwd$ and $\proctemp$.}
Observe that by projecting out the component numbers from states in $\puwd$ (i.e., by replacing $(s,i) \in S^\uwd$ with $s \in S$), states and transitions in $\puwd$ are transformed into states and transitions of $\proctemp$. Similarly, paths and runs in $\puwdsys$ can be transformed into paths and runs in $\psys$. Note, however, that this operation does not induce a bisimulation between $\puwd$ and $\proctemp$, nor does it induce a bisimulation relation between $\psys$ and $\puwdsys$, since not all states and transitions of $\proctemp$ appear in every component of $\puwd$. These missing states and edges are also the reason that a path in $\psys$ that is not a run (i.e., that does not start at an initial configuration) may not always be lifted to a path in $\puwdsys$.
Nonetheless, our construction of $\puwd$ is such that runs of $\proctemp$ (resp. $\psys$) can be lifted to runs of $\puwd$ (resp. $\puwdsys$) by simply adding the correct component numbers (based on the number of preceding broadcasts) to the states of the transitions of the run.

\paragraph{Winding and Unwinding Notation}
More formally, projecting out of component numbers (which we call ``winding'' and denote by $\rwd$) is defined as follows: if $(s,j)$ is a state of $P^\uwd$ define $(s,j)^\rwd := s$, which is a state of $P$; if $t$ is a transition $\trans{s}{s'}{\varsigma}$ of $P^\uwd$ define $t^\rwd := (s^\rwd,\varsigma,s'^\rwd)$, which is a transition of $P$; if $\pi = t_1 t_2 \dots \in \runs(P^\uwd)$ define $\pi^\rwd := t_1^\rwd t_2^\rwd \dots \in \runs(P)$. Similarly, if $\mathfrak{f}$ is a configuration in $\puwdsys$ define $\mathfrak{f}^\rwd$, a configuration of $\psys$, by $\mathfrak{f}^\rwd(i) := \mathfrak{f}(i)^\rwd$ where $i$ is in the domain of $\mathfrak{f}$; if $e$ is a global transition $\trans{\mathfrak{f}}{\mathfrak{g}}{\sigma}$ of $\puwdsys$ then define $e^\rwd := (\mathfrak{f}^\rwd,\sigma,\mathfrak{g}^\rwd)$; and if $\rho \in \runs(\puwdsys)$ define $\rho^\rwd = \rho_1^\rwd \rho_2^\rwd \dots \in \runs(\psys)$. Finally, we apply this to sets: if $X \subseteq \runs(\puwdsys)$ then $X^\rwd = \{\rho^\rwd : \rho \in X\}$.

We define the reverse transformation of ``unwinding'' only with respect to runs of $\psys$ (a similar definition can be given for the unwinding of runs of $\proctemp$) as follows: given a configuration $f$ of $\psys$, and a component number $j$, denote by $f^j$ the function defined by $f^j(i) := (f(i), j)$ for every $i$ in the domain of $f$;
given $\pi \in \runs(\psys)$, for $i \in \Nat$ let $\brd^{<i}$ be the number of broadcast transitions on $\pi$ preceding $\pi_i$.
The \emph{unwinding} $\pi^\uwd$ of $\pi$ is defined to be the sequence $\pi_1^\uwd \pi_2^\uwd \dots$ obtained by taking for every $1 \leq i \leq |\pi|$ the transition  $\pi_i^\uwd := (f^{\comp(\brd^{<i})}, \sigma, g^{\comp(\brd^{<i})})$ if $\pi_i = (f,\sigma,g)$ is a rendezvous transition, and otherwise taking $\pi_i^\uwd := (f^{\comp(\brd^{<i})}, \brd, g^{\comp(\brd^{<i+1})})$ if $\pi_i = (f, \brd, g)$ is a broadcast transition.


The next lemma says that we may work with template $\puwd$ instead of $P$.
\begin{lemma}\label{lem: runs can be lifted to unwinding}
  For every $n \in \Nat$, we have that $\runs(\sysinst{n}) = \runs((\puwd)^n)^\rwd$.
\end{lemma}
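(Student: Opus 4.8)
The plan is to prove the two inclusions separately, using the winding map $\rwd$ and its inverse, the unwinding map $\uwd$, as the bridge between the two systems.

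The inclusion $\runs((\puwd)^n)^\rwd \subseteq \runs(\sysinst{n})$ is the easy direction. I would take any $\rho \in \runs((\puwd)^n)$ and verify transition-by-transition that $\rho^\rwd \in \runs(\sysinst{n})$. Everything needed is structural and reads off Definition~\ref{dfn: unwinding}: each rendezvous edge of $\puwd$ lies in some $R_i \subseteq R$, and each broadcast edge $\trans{(s,i)}{(t,i')}{\brd}$ of $\puwd$ comes from a broadcast edge $\trans{s}{t}{\brd}$ of $R$. Hence every global transition of $\rho$ winds to a legal global transition of $\sysinst{n}$ with the same label and the same active processes; and since $I^\uwd = \{(s,0) : s \in I\}$ winds onto $I$, the source of $\rho^\rwd$ is initial, so $\rho^\rwd$ is a run.

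The converse $\runs(\sysinst{n}) \subseteq \runs((\puwd)^n)^\rwd$ is the hard direction and carries all the content. I would fix $\pi \in \runs(\sysinst{n})$ and show that $\pi^\uwd \in \runs((\puwd)^n)$; since winding undoes unwinding, $(\pi^\uwd)^\rwd = \pi$, which gives membership. The task is to check that $\pi^\uwd$ only names edges that the saturation algorithm actually kept, which I would do by induction on prefixes of $\pi$ under the invariant: after a prefix carrying $b$ broadcasts and ending in $f$, one has $f(i) \in S_{\comp(b)}$ for every process $i$. For a rendezvous step where processes $i_1,\dots,i_k$ synchronize on $\msg{a}$ via $\trans{f(i_j)}{g(i_j)}{\msg{a}_j}$, the invariant places all $k$ sources in $S_{\comp(b)}$, which is precisely the premise of the saturation rule; since $R_{\comp(b)}$ is saturated each such edge is present, and its target enters $S_{\comp(b)}$, re-establishing the invariant. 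For a broadcast step, each $g(i)$ is broadcast-reachable from $f(i) \in S_{\comp(b)}$, hence $g(i) \in I_{\comp(b)+1}$; here the lasso identity $\proctemp_n = \proctemp_{m+1}$ (whence $I_n = I_{m+1}$) is what guarantees both that the relevant broadcast edge of $\puwd$ exists and that $I_{\comp(b)+1} = I_{\comp(b+1)} \subseteq S_{\comp(b+1)}$, so the invariant passes to $b+1$.

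I expect the invariant above, and in particular the match between the concrete run and the saturation fixpoint, to be the main obstacle. The delicate point is that the saturation rule for an edge $\trans{s}{t}{\msg{a}_h}$ demands complementary source states for all other slots of $\msg{a}$ simultaneously in $S_{\comp(b)}$; this is exactly furnished by the induction hypothesis, since those $k$ states are simultaneously occupied in $\pi$. I would also note that $S_i$ may overapproximate the states reachable using only $n$ processes, but this is harmless: we only need the edges genuinely used by $\pi$ to survive saturation. Finally, the boundary case $\comp(b)=m$, where the lasso closes back to component $n$, must be handled using $I_{m+1}=I_n$ so that the broadcast edges of $\puwd$ align with the broadcasts of $\pi$.
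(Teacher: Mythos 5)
Your proof is correct and follows essentially the same route as the paper's: the easy direction is read off the construction of $\puwd$, and the hard direction proceeds by induction on prefixes of $\pi$ with the invariant that after $b$ broadcasts the configuration lies in component $\comp(b)$ (the paper gets this from Remark~\ref{rem: broadcast component}), using the saturation fixpoint for rendezvous steps and the broadcast-edge definition together with the lasso identity for broadcast steps. No substantive difference.
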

\begin{proof}
 Let us fix any $n \in \Nat$. The direction $\{ \rho^\rwd \mid \rho \in \runs((\puwd)^n) \} \subseteq \runs(\sysinst{n})$ follows from the fact that $\puwd$ is obtained from $\proctemp$ by an unwinding process. The reverse inclusion requires more care as $\puwd$ misses edges and states of $\proctemp$.

  Let $\pi \in \runs(\sysinst{n})$. We prove that $\pi^\uwd \in \runs((\puwd)^n)$ by induction on the length $i$ of each prefix of $\pi$.
   Let $b$ be the number of broadcast edges on $\xi :=\pi_1 \dots \pi_{i-1}$.
   For the base case $|\pi| = 0$, there is nothing to prove. For the induction step, observe that by the inductive hypothesis
   the unwinding of $\xi$ is a run of $(\puwd)^n$. It remains to show that $\pi_i^\uwd$  is a transition of $(\puwd)^n$.
   Observe that by Remark~\ref{rem: broadcast component} $f = \dst(\pi_{i-1}) = \src(\pi_i)$ is in $\puwd_{\comp(b)}$. Consider first the case that $\pi_i$ is the broadcast transition $\trans{f}{g}{\brd}$. By the definition of the broadcast edges in $\puwd$ we have that $\pi_i^\uwd$ is a transition $\trans{f^{\comp(b)}}{g^{\comp(b+1)}}{\brd}$ of $(\puwd)^n$. Consider now the case that $\pi_i$ is the rendezvous edge $\trans{f}{g}{\sigma}$, and let $\sigma = ((j_1, \msg{a}_1), \dots, (j_k, \msg{a}_k))$. Since $f$ is in $\puwd_{\comp(b)}$, for every $h \in [k]$, the algorithm used to construct the states and transitions of the component $P_{\comp(b)}$ must have added the edge $\trans{f(j_h)}{g(j_h)}{\msg{a}_h}$ to $R_{\comp(b)}$. It follows that $\pi_i^\uwd = (f^{\comp(b)}, \sigma, g^{\comp(b)})$  is a transition of $(\puwd)^n$.
 \end{proof}

The state labeling of a run $\rho \in \runs(\puwdsys)$ and its winding $\rho^\rwd$ are equal. Thus we have the following:
\begin{corollary}\label{cor:exec psys and puwdsys}
For every template $P$, we have that
 $\exec[\psys] = \exec[\puwdsys]$.
\end{corollary}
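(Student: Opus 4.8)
The plan is to reduce the corollary to Lemma~\ref{lem: runs can be lifted to unwinding} together with a single observation about labels. First I would unfold both sides using the definition of executions: $\exec[\psys] = \{\lambda(\proj{\pi}(1)) \mid \pi \in \runs(\psys)\}$ and $\exec[\puwdsys] = \{\lambda^\uwd(\proj{\rho}(1)) \mid \rho \in \runs(\puwdsys)\}$, where $\lambda^\uwd$ is the labeling of $\puwd$. The goal is then to match these two sets run-by-run through the winding map $\rwd$.

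The key step is to show that for every run $\rho \in \runs(\puwdsys)$ one has $\lambda^\uwd(\proj{\rho}(1)) = \lambda(\proj{\rho^\rwd}(1))$. I would argue this in two pieces. Since winding leaves the edge-label $\sigma$ of each global transition untouched, process $1$ is active in $\rho_i$ exactly when it is active in $\rho_i^\rwd$, and whenever it is active the edge $edge_1(\rho_i)$ taken in $\puwd$ winds precisely to the edge $edge_1(\rho_i^\rwd)$ taken in $P$; hence $\proj{\rho}(1)$ and $\proj{\rho^\rwd}(1)$ are the same path, up to attaching resp.\ forgetting component numbers on states. It then remains to check that these component numbers are invisible to the labelings: by Definition~\ref{dfn: unwinding}, $\lambda^\uwd((s,i)) = \lambda_i(s) = \lambda(s)$ for every state $(s,i)$ of $\puwd$ (each $\lambda_i$ being the restriction of $\lambda$ to $S_i$), so applying $\lambda^\uwd$ along $\proj{\rho}(1)$ yields the same sequence of label-sets as applying $\lambda$ along $\proj{\rho^\rwd}(1)$.

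Finally I would invoke Lemma~\ref{lem: runs can be lifted to unwinding}. Taking the union over $n$ of the identities $\runs(\sysinst{n}) = \runs((\puwd)^n)^\rwd$ gives $\runs(\psys) = \runs(\puwdsys)^\rwd$, i.e.\ $\{\rho^\rwd \mid \rho \in \runs(\puwdsys)\} = \runs(\psys)$. Combining this surjectivity with the label-preservation step yields
\[
\exec[\puwdsys] = \{\lambda(\proj{\rho^\rwd}(1)) \mid \rho \in \runs(\puwdsys)\} = \{\lambda(\proj{\pi}(1)) \mid \pi \in \runs(\psys)\} = \exec[\psys],
\]
as required.

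I expect the only real content to be the label-preservation step; everything else is bookkeeping that rides directly on the two lemmas. The mild subtlety there is that the winding map acts on states by forgetting the component index while the projection $\proj{\cdot}(1)$ extracts process $1$'s edges, so I would be careful to note that these two operations commute — both are defined transition-wise and neither alters which process is active — which is exactly what makes the two projected paths correspond edge-by-edge and lets the equality of label-sequences go through.
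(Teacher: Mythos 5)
Your proposal is correct and follows essentially the same route as the paper, which derives the corollary from Lemma~\ref{lem: runs can be lifted to unwinding} together with the one-line observation that the state labeling of a run of $\puwdsys$ and of its winding are equal. You merely spell out in more detail the label-preservation step ($\lambda^\uwd((s,i)) = \lambda(s)$ and the commutation of winding with the projection onto process $1$) that the paper leaves implicit.
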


The following lemma says that for every component $\puwd_b$, there is a run of $\puwdsys$ that ``loads'' arbitrarily many processes into every state of it.

\begin{lemma}[Loading]
\label{lem: loading}
For all $b,n \in \Nat$ there is a finite run $\pi$ of $\puwdsys$ with $b$ broadcasts, s.t., $|\dst(\pi)^{-1}(s)| \geq n$ for every state $s$ of $\puwd_b$.
\end{lemma}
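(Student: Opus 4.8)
The plan is to deduce the loading statement from a much weaker, \emph{single-process} reachability claim together with the Composition Lemma (Lemma~\ref{lem: rb-system composition}). Concretely, I would first isolate the following claim: for every $b$ and every state $s$ of $\puwd_b$ there is a run $\rho_s$ of $\puwdsys$ with exactly $b$ broadcasts such that at least one process sits in $s$ in $\dst(\rho_s)$. Granting this, the lemma is immediate: fix $b,n$, and for each state $s$ of $\puwd_b$ take $n$ copies of $\rho_s$. This yields $h := n\cdot|S_b|$ runs, all with the same number $b$ of broadcasts, so Lemma~\ref{lem: rb-system composition} produces a single run $\pi$ (starting from an initial configuration of a sufficiently large instance that restricts on each block $X_j$ to the initial configuration of the corresponding $\rho_{s_j}$) that simulates all of them on pairwise disjoint process sets $X_1,\dots,X_h$. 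Since $\restr{\dst(\pi)}{X_j}=\dst(\rho_{s_j})$ and the blocks are disjoint, each state $s$ of $\puwd_b$ is occupied by at least $n$ processes in $\dst(\pi)$, and $\pi$ has exactly $b$ broadcasts, as required.

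It remains to prove the reachability claim, which I would do by induction on $b$, and, for fixed $b$, by an inner induction following the rounds of the saturation algorithm that builds the component $\puwd_{\comp(b)}$. There are two cases. If $s$ is an \emph{entry} state of $\puwd_b$ (i.e. $s\in I_b$), then for $b=0$ it is an initial state and is occupied in the initial configuration by a run with $0$ broadcasts. For $b\geq 1$, by definition of $I_b$ there is a broadcast edge $\trans{s'}{s}{\brd}$ with $s'\in S_{b-1}$; the outer induction hypothesis gives a run with $b-1$ broadcasts placing a process at $s'$, and appending one broadcast transition that routes this process along $\trans{s'}{s}{\brd}$ (every other process taking some broadcast edge, which exists since $\proctemp$ is an RB-template) yields a run with $b$ broadcasts placing a process at $s$; the destination lies in $\puwd_b$ by Remark~\ref{rem: broadcast component}.

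In the second case $s$ is added to the component by a rendezvous edge $e=\trans{s''}{s}{\msg{a}_h}$ during saturation. By Remark~\ref{rem:saturation} and the fact that $e$ is added only after all the partner edges realizing the action $\msg{a}$ are present, the $k$ source states $s_1,\dots,s_k$ of those edges (with $s_h=s''$) are states of $\puwd_b$ that were shown reachable at earlier stages of the two inductions. For each such source I take its witness run; all of them have exactly $b$ broadcasts, so Lemma~\ref{lem: rb-system composition} assembles them on disjoint process sets into a single run with $b$ broadcasts whose destination simultaneously carries a process at each $s_l$. Appending the single global rendezvous transition that fires $\msg{a}$ across these $k$ processes moves the process at $s''$ into $s$, giving the required witness run $\rho_s$ (still with $b$ broadcasts).

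The crux — and the only nontrivial point — is this second case: firing a $k$-wise rendezvous to reach a new state requires $k$ processes to be available \emph{simultaneously} at the right source states, and loading one source state by naive play typically disturbs the others. The Composition Lemma is exactly the tool that circumvents this, since it lets the witness runs for the different sources proceed on disjoint groups of processes that never interfere, after which a single synchronizing step fires the rendezvous. The remaining bookkeeping — that the saturation ordering is well-founded so the inner induction is sound, and that the broadcast step in the entry-state case can route a designated process while legally moving all others — is routine given Remark~\ref{rem:saturation} and Remark~\ref{rem: broadcast component}.
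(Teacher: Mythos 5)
Your proposal is correct and follows essentially the same route as the paper's proof: reduce to the single-process reachability claim via the Composition Lemma, then establish that claim by an outer induction on the number of broadcasts $b$ and an inner induction on the rounds of the saturation algorithm, using Composition again to assemble the $k$ rendezvous partners and appending a single broadcast or rendezvous transition as appropriate. The only cosmetic difference is that you spell out the derivation of the full statement from the single-process claim ($n$ copies of each witness run) slightly more explicitly than the paper does.
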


\begin{proof}
By Lemma~\ref{lem: rb-system composition} (Composition) applied to $\puwd$ it is sufficient to prove the following: for every $b \in \Nat$, and every state $q$ in $\puwd_b$, there exists a finite run $\pi$ of $\puwdsys$, with $b$ broadcast transitions, such that $|\dst(\pi)^{-1}(q)| \geq 1$. Recall that, by definition, $\puwd_b = \puwd_{comp(b)}$.

The proof is by induction on $b$.
For the base case $b = 0$, proceed by induction on the round $j \geq 1$ of the saturation algorithm at which $q$ is added to $S_{\comp(b)}$ (i.e., $S_0$). The state $q$ is added at round $j$, due to some edge $(s_h,\msg{a}_h,q)$ of $R$, only if for every $l \in [k]\setminus\{h\}$ there are edges $(s_l, \msg{a}_l,q_l)$ of $R$ and, either (i) $j = 1$ and $s_l \in I_{\comp(b)}$ or, (ii) $j > 1$ and $s_l$ is already in $S_{\comp(b)}$ (i.e., it was added to $S_{\comp(b)}$ at a round before $j$).
By the inductive hypothesis on round $j$, for every $l \in [k] \setminus \{h\}$ there exists $\rho_l \in \runs(\puwdsys)$, with $b$ broadcasts, which ends with at least one process in the state $s_l$. By Lemma~\ref{lem: rb-system composition} (Composition) there exists $\rho \in \runs(\puwdsys)$ in which there are $k$ different processes $i_1, \ldots, i_k$ such that, for every $l \in [k]$, the process $i_l$ ends in the state $s_l$. Extend $\rho$ by a global rendezvous transition in which, for $l \neq h$, process $i_l$ takes the edge $(s_l,\msg{a}_l,q_l)$, and process $i_h$ takes the edge $(s_h,\msg{a}_h,q)$. This extended run has $b$ broadcast transitions, and at least one process in state $q$, as required.

For the inductive step ($b > 0$), suppose it holds for all values $\leq b$, and let us prove it for $b + 1$ (i.e. take $q \in S_{\comp(b+1)}$). First consider the case $q \in I_{\comp(b+1)}$: there is an edge $(s,\brd,q)$ in $\puwd$ and by the inductive hypothesis (on $b$) there is a run of $\puwdsys$ with $b$ broadcasts in which some process $i$ ends in state $s$. Extend this run by a global broadcast transition in which process $i$ takes the edge $(s,\brd,q)$. This extended run has $b+1$ broadcast transitions, and at least one process in state $q$. Second, suppose $q \in S_{\comp(b+1)} \setminus I_{\comp(b+1)}$. Then proceed as in the base case.
\end{proof}

The first part of the following proposition states that the set of finite executions of the RB-system $\psys$ is equal to the set of state labels of the finite runs
of $\puwd$. This is very convenient since $\puwd$ is finite, whereas $\psys$ is infinite. The second part of the proposition gives a weaker result for the infinite case.

\begin{proposition}\label{prop: unwinding captures executions}
For every template $P$, the following holds:
\begin{enumerate}
\item  $\execfin[\psys] = \{\lambda^{\uwd}(\pi) \mid \pi \in \runs(\puwd), |\pi| \in \Nat \}$.
\item $\execinf[\psys] \subseteq \{\lambda^{\uwd}(\pi) \mid \pi \in \runs(\puwd), |\pi| = \infty \}$.
\end{enumerate}

\end{proposition}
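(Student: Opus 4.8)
I would first move everything from $\psys$ to the reachability-unwinding: by Corollary~\ref{cor:exec psys and puwdsys} we have $\exec[\psys]=\exec[\puwdsys]=\{\lambda^\uwd(\proj{\pi}(1))\mid \pi\in\runs(\puwdsys)\}$, so it suffices to compare single-process projections of runs of the \emph{system} $\puwdsys$ with runs of the \emph{template} $\puwd$. The two inclusions that come for free are the ``left-to-right'' ones. For any $\pi\in\runs(\puwdsys)$, the projection $\proj{\pi}(1)$ is a path of $\puwd$ whose source is the initial state in which process $1$ starts (process $1$ begins in $I^\uwd$), hence $\proj{\pi}(1)$ is a run of $\puwd$; moreover $\lambda^\uwd(\proj{\pi}(1))$ is finite iff $\proj{\pi}(1)$ is finite, and infinite iff it is infinite. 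This immediately gives $\execfin[\psys]\subseteq\{\lambda^\uwd(\pi)\mid \pi\in\runs(\puwd),\,|\pi|\in\Nat\}$ and the whole of item~2: an \emph{infinite} execution $w$ can only arise as $\lambda^\uwd(\proj{\rho}(1))$ with $\proj{\rho}(1)$ an infinite run of $\puwd$.

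\textbf{The hard direction (item~1, $\supseteq$).} Given a finite run $\pi$ of $\puwd$, I must realize it as a single process, i.e.\ build a run $\rho$ of $\puwdsys$ with $\proj{\rho}(1)=\pi$ (whence $\lambda^\uwd(\pi)=\lambda^\uwd(\proj{\rho}(1))\in\execfin[\puwdsys]=\execfin[\psys]$). Using Remark~\ref{rem: broadcast component}, decompose $\pi=\beta_0\,c_1\,\beta_1\cdots c_B\,\beta_B$, where each $\beta_j$ is a maximal run of rendezvous edges inside the component $\puwd_{\comp(j)}$ and each $c_j$ is a broadcast edge moving from $\puwd_{\comp(j-1)}$ to $\puwd_{\comp(j)}$. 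I would construct $\rho$ by induction on the number of broadcasts $B$, maintaining a run of $\puwdsys$ whose process $1$ has so far followed $\beta_0\,c_1\cdots c_j\,\beta_j$.

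\textbf{Recruiting rendezvous partners.} The content of each step is providing the $k-1$ partners that process~$1$ needs for every rendezvous of $\beta_{j}$. Three ingredients combine here: \emph{(i)} Remark~\ref{rem:saturation} guarantees that whenever process~$1$'s edge uses $\msg{a}_h$ inside $\puwd_{\comp(j)}$, each companion label $\msg{a}_l$ ($l\neq h$) has some edge with source in $S_{\comp(j)}$; \emph{(ii)} Lemma~\ref{lem: loading} (Loading) supplies, for any target count, a run of $\puwdsys$ with exactly $j$ broadcasts that ends with that many processes in \emph{every} state of $\puwd_{\comp(j)}$ (note the non-initial states are filled by rendezvous in the last segment, after the $j$-th broadcast); \emph{(iii)} Lemma~\ref{lem: rb-system composition} (Composition) merges two runs having the \emph{same} number of broadcasts. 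Concretely, in the inductive step I take the run built so far, append the broadcast $c_{j+1}$ (legal since every state of $\puwd$ has a broadcast edge), so that process~$1$'s group has $j+1$ broadcasts, and compose it with a fresh loading run for $\puwd_{\comp(j+1)}$ (also $j+1$ broadcasts) on disjoint processes. The key move is that the group partition is needed \emph{only} to invoke Composition: the resulting composite is a genuine run of $\puwdsys$, which I am then free to extend by ordinary global rendezvous transitions in which process~$1$ synchronizes with partners drawn from the loading group (mixed-group rendezvous are perfectly legal in $\puwdsys$). Thus process~$1$ performs $\beta_{j+1}$, each of its rendezvous consuming one loaded partner per companion label. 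The base case is the same pattern with $j=0$: compose a loading run for $\puwd_0$ with an idle singleton process~$1$ sitting at $\src(\beta_0)\in I^\uwd$, then extend to realize $\beta_0$.

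\textbf{Main obstacle and why item~2 is only an inclusion.} The crux is partner-recruitment \emph{across} broadcasts: a broadcast returns all processes to the component's initial states $I_{\comp(j+1)}$, so partners for the non-initial source states must be regenerated from scratch at each segment; it is precisely the pairing of Composition (injecting a from-scratch loading run whose broadcast count matches) with free extension (letting process~$1$ rendezvous with those partners) that resolves this. I would stress that this realization is inherently \emph{finitary}: any single run uses a fixed number of processes, so arbitrarily long rendezvous bursts — as exhibited by the unbounded $p$-blocks of Example~\ref{ex:RB-template} — cannot be realized. This is exactly why item~2 gives only an inclusion, and foreshadows the need for B-automata rather than NBW in the infinite case.
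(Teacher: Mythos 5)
Your proof is correct and follows essentially the same route as the paper's: the easy inclusions come from lifting runs of $\psys$ to runs of $\puwdsys$ and projecting onto process~1, and the hard inclusion is realized by exactly the same combination of Remark~\ref{rem:saturation}, Lemma~\ref{lem: loading} (Loading) and Lemma~\ref{lem: rb-system composition} (Composition) to recruit rendezvous partners. The only, immaterial, difference is granularity: the paper inducts on the length of $\pi$ and injects a fresh loading run (with matching broadcast count) before each single rendezvous edge, whereas you induct on the number of broadcasts and load one sufficiently large batch of partners per broadcast-free segment.
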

\begin{proof}

    We first prove the inclusion $\exec \subseteq \{\lambda^{\uwd}(\pi) \mid \pi \in \runs(\puwd)\}$.
  Every execution of $\psys$ is, by definition, of the form $\lambda(\proj{\xi}(1))$ for some $\xi \in \runs(\sysinst{n})$ and some $n$. By Lemma~\ref{lem: runs can be lifted to unwinding}, $\xi = \rho^\rwd$ for some $\rho \in  \runs((\puwd)^n)$. Observe that $\xi$ and $\rho$ are equi-labeled. Thus, by Lemma~\ref{lem:bisim} part~\ref{lem:bisim:equi} we have that $\lambda(\proj{\xi}(1)) = \lambda^\uwd(\proj{\rho}(1))$.

We now prove the inclusion
\[
\{\lambda^\uwd(\pi) \mid \pi \in \runs(\puwd), |\pi| \in \Nat\} \subseteq \execfin[\psys].
\]
 Observe that since $\lambda^\uwd(\pi) = \lambda(\pi^\rwd)$ it is enough to prove the following by induction on the length $i$ of $\pi$: there is a run $\rho \in \runs(\psys)$ such that $\proj{\rho}(1) = \pi^\rwd$.

For the base case $i = 0$ there is nothing to prove. For the inductive step $i > 0$: first apply the inductive hypothesis to get $\rho \in \runs(\psys)$ such that $\proj{\rho}(1) = (\pi_1 \pi_2 \cdots \pi_{i-1})^\rwd$. There are two cases depending on $\pi_i$.

If $\pi_i$ is a broadcast edge then extend $\rho$ by a global broadcast transition $t$ in which process $1$ takes $\pi_i^\rwd$, i.e., $edge_1(t) = \pi_i^\rwd$, to obtain the run $\rho \cdot t \in \runs(\psys)$ whose projection on process $1$ equals $\pi^\rwd$.

If $\pi_i = (s,\msg{a}_h,t)$ is a rendezvous edge then proceed as follows. Let $b$ be the number of broadcast transitions in $\pi_1 \cdots \pi_{i-1}$. So $\pi_i$, being a rendezvous edge, is in $\puwd_b$. Thus, by Remark \ref{rem:saturation}, 
after the saturation algorithm, for all $l \in [k]$ there exist an edge $(s_l,\msg{a}_l,t_l)$ in $\puwd_b$. By Lemma~\ref{lem: loading} (Loading) there exists $\rho' \in \runs(\puwdsys)$ with $b$ broadcast transitions that loads at least one process into every state $s$ of $\puwd_b$. By Lemma~\ref{lem: runs can be lifted to unwinding}, $\rho'^\rwd \in \runs(\psys)$. By Lemma~\ref{lem: rb-system composition} (Composition) compose $\rho$ and $\rho'^\rwd$ to get $\rho'' \in \runs(\psys)$ such that $\restr{\rho''}{\{1\}} = \proj{\rho}(1) =  (\pi_1 \pi_2 \cdots \pi_{i-1})^\rwd$ and at the end of $\rho''$ there is at least one process (different from process $1$) in every state of $\puwd_b$.  Now extend $\rho''$ by the rendezvous transition $t$ for which $edge_1(t) = \pi_i^\rwd$ and for each $l \in [k] \setminus \{h\}$ some process takes the transition $(s_l,\msg{a}_l,t_l)$ to obtain the run $\rho \cdot t \in \runs(\psys)$ whose projection on process $1$ equals $\pi^\rwd$.
\end{proof}

\begin{remark}
Unfortunately, the containment in Proposition~\ref{prop: unwinding captures executions} Part 2 is sometimes strict. For example, consider the $R$-template $\proctemp$ in Figure~\ref{fig: strict}. Observe that $\proctemp$ equals $\puwd$, and that $p^\omega$ is the state label of the run of $\puwd$ that self-loops in the initial state forever, but $p^\omega$ is not an execution of $\psys$. In the next section we will use B-automata to capture $\execinf$.
\end{remark}

We introduced $\puwd$ in order to define an automaton recognizing $\execfin$. This automaton is formed from the LTS $\puwd$ by adding the input alphabet $2^\AP$ and having each transition read as input the label of its source state.

\begin{definition}[NFW $\execNFW$] \label{dfn: automaton for finite executions}
Given an RB-template $\proctemp = \tup{\AP,\Actionprts \cup \{\brd\}, S,I,R,\lambda}$, consider the reachability-unwinding $\puwd = \tup{AP,\Actionprts \cup \{\brd\},S^\uwd,I^\uwd,R^\uwd,\lambda^\uwd}$.
Define $\execNFW$ to be the NFW $\tup{\Sigma,S',I',R',F}$ with
\begin{itemize}
 \item input alphabet $\Sigma = 2^{AP}$,
 \item state set $S' = S^\uwd$,
 \item initial-states set $I' = I^\uwd$,
 \item transition relation $R'$ consisting of transitions $(s,\lambda^\uwd(s),t)$ for which there is a $\sigma$ such that $(s,\sigma,t) \in R^\uwd$,
 \item final-states set $F = S^\uwd$.
\end{itemize}
\end{definition}

The following is immediate from Proposition~\ref{prop: unwinding captures executions} Part 1:

\begin{corollary} \label{cor: execNFW recognises execfin}
The automaton $\execNFW$ recognizes the language $\execfin$.
\end{corollary}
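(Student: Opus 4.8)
The plan is to verify the language identity
$L(\execNFW) = \{\lambda^\uwd(\pi)\mid \pi\in\runs(\puwd),\ |\pi|\in\Nat\}$
and then to invoke the first part of Proposition~\ref{prop: unwinding captures executions}, which identifies this set with $\execfin$. The underlying observation is that $\execNFW$ is just the LTS $\puwd$ read as an automaton over $2^\AP$: by Definition~\ref{dfn: automaton for finite executions} every edge $(s,\sigma,t)\in R^\uwd$ induces a transition $(s,\lambda^\uwd(s),t)\in R'$ that consumes the label of its \emph{source}, and every state is accepting ($F=S^\uwd$). Consequently paths of $\puwd$ and runs of $\execNFW$ are in obvious correspondence, and along such a run the word consumed is exactly the sequence of source-state labels.

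For the inclusion ``labelings of runs $\subseteq$ language'', take a finite run $\pi=e_1\cdots e_m$ of $\puwd$. Its state-labeling is $\lambda^\uwd(\pi)=\lambda^\uwd(\src(e_1))\cdots\lambda^\uwd(\src(e_m))\,\lambda^\uwd(\dst(e_m))$, a word of length $m+1$. I would build an accepting run of $\execNFW$ on it by using, for $i\le m$, the transition induced by $e_i$ (it consumes $\lambda^\uwd(\src(e_i))$), and then appending the further transitions demanded by the NFW run convention $\rho_1\cdots\rho_{k+1}$. Here the key point is that $\puwd$ is an RB-template (Definition~\ref{dfn: rb-template}), so every state is the source of a broadcast edge; hence $\execNFW$ has an outgoing transition from every state and the run can always be extended, and in particular one more step out of $\dst(e_m)$ consumes the final letter $\lambda^\uwd(\dst(e_m))$. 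Since $F=S^\uwd$, acceptance is automatic, so $\lambda^\uwd(\pi)\in L(\execNFW)$.

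For the reverse inclusion, an accepting run $\rho_1\cdots\rho_{k+1}$ of $\execNFW$ on a word $\alpha=\alpha_1\cdots\alpha_k$ satisfies $\alpha_i=\lambda^\uwd(\src(\rho_i))$ for $i\in[k]$ by the definition of $R'$. Replacing each $\rho_i$ by a witnessing edge $e_i\in R^\uwd$ with the same source and destination yields a run $\pi:=e_1\cdots e_{k-1}$ of $\puwd$, which starts in $I^\uwd$ because $\src(\rho_1)\in I^\uwd$. Since $\dst(e_{k-1})=\src(\rho_k)$, unwinding the definition of $\lambda^\uwd$ on paths gives $\lambda^\uwd(\pi)=\lambda^\uwd(\src(\rho_1))\cdots\lambda^\uwd(\src(\rho_k))=\alpha$, so $\alpha$ lies in the right-hand set. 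Chaining the two inclusions with Proposition~\ref{prop: unwinding captures executions}, Part~1, then yields $L(\execNFW)=\execfin$.

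The only genuine subtlety, and hence the step I expect to demand the most care, is the index bookkeeping: aligning the path state-labeling convention for $\lambda^\uwd$ (which records the label of the final \emph{destination} state in addition to all the sources) with the NFW run/acceptance convention (which reads the source labels of $\rho_1,\dots,\rho_{k+1}$). This off-by-one is precisely what the extra transitions in the NFW run convention absorb, and the reason such transitions always exist is the RB-template assumption that every state emits a broadcast edge, with the trivial final-state set $F=S^\uwd$ making acceptance automatic. Everything else is routine, so the corollary is indeed immediate once this correspondence is set up.
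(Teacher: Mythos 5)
Your proof is correct and takes the same route the paper intends: the paper declares the corollary immediate from Proposition~\ref{prop: unwinding captures executions} Part~1, and your argument simply makes explicit the routine correspondence between finite runs of $\puwd$ and accepting runs of $\execNFW$, including the off-by-one between source-label reading and path state-labeling, which is absorbed exactly as you say by the guaranteed outgoing broadcast edge from every state and the trivial final-state set $F=S^\uwd$. Nothing further is needed.
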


Applying a standard automata-theoretic technique, we get the following upper bound:

\begin{theorem}\label{thm: finite RB PMCP is in pspace}
Let $\Pspec$ be specifications of finite executions expressed as NFWs or \LTLf formulas.
Then $PMCP(\Pspec)$ for RB-systems is in \PSPACE.
\end{theorem}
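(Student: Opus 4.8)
The plan is to reduce $PMCP(\Pspec)$ to a language-containment question for the automaton $\execNFW$ of Definition~\ref{dfn: automaton for finite executions}, and to solve that question on the fly so as to stay within polynomial space. By Corollary~\ref{cor: execNFW recognises execfin}, $\execNFW$ recognizes $\execfin$, so for a specification denoting a set $W$ of finite words the PMCP instance is positive iff $L(\execNFW) \subseteq W$. The only difficulty is that $\execNFW$ is in general exponentially large: its state set is $S^\uwd$, which has one copy of (a subset of) $S$ for every component, and there may be up to $2^{|S|}$ distinct components. We therefore never build $\execNFW$ explicitly; instead we represent and explore it implicitly.

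First I would fix a succinct, polynomial-size encoding of the states of $\execNFW$. Since each component $\proctemp_i$ is completely determined by its set of initial states $I_i \subseteq S$, I would label a state by a pair $(s, J)$ where $s \in S$ and $J = I_i \subseteq S$ identifies the component; this uses $O(|S|)$ bits. The initial states are the pairs $(s, I)$ for $s \in I$, every state is final (as $F = S^\uwd$), and the successors of $(s, J)$ are computed in polynomial time by a single run of the saturation algorithm on initial set $J$: this yields $(S_J, R_J)$, from which the rendezvous successors $(t, J)$ with $(s, \msg{a}_h, t) \in R_J$ and the broadcast successors $(t, J')$ with $(s, \brd, t) \in R$ and $J' = \{ s' : \exists h \in S_J,\ (h, \brd, s') \in R \}$ are read off; the letter read on any transition out of $(s,J)$ is the fixed set $\lambda(s)$. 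This encoding automatically realizes the lasso structure of Figure~\ref{fig:lasso}, since two indices yield the same pair exactly when they index the same component, so no separate computation of the prefix length $n$ or period $r$ is needed.

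Next I would perform the standard product-and-reachability construction, but on the fly. For an NFW specification $\A'$ (denoting $W = L(\A')$), non-containment $L(\execNFW) \not\subseteq L(\A')$ holds iff there is a word admitting a run of $\execNFW$ (which always ends in a final state) along which the subset-determinization of $\A'$ reaches a subset containing no final state of $\A'$. A determinized state is a subset of the states of $\A'$, encodable in $O(|\A'|)$ bits, so a state of the synchronized product $\execNFW \times \det(\A')$ has a polynomial-size encoding and its transitions are computable in polynomial time. For an \LTLf specification $\varphi$ I would instead apply Theorem~\ref{thm:vardi-wolper} to the \emph{negation} $\neg\varphi$ to obtain an NFW $\A''$ recognizing exactly the finite words violating $\varphi$; each state of $\A''$ admits a polynomial-size description and polynomial-time-checkable transitions even though $\A''$ may itself be exponentially large. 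Non-containment then holds iff the synchronized product $\execNFW \times \A''$ has a reachable state that is accepting in $\A''$. In both cases the negative instances of the PMCP are precisely the instances of reachability of a ``bad'' product state in a graph whose vertices are encoded in polynomial space and whose edge relation is decidable in polynomial time.

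Finally, reachability of a designated target in such an implicitly-presented graph is decidable in $\NPSPACE$ by guessing the path one vertex at a time while maintaining only the current vertex and a step counter bounded by the (singly-exponential) number of vertices. By Savitch's theorem $\NPSPACE = \PSPACE$, and since $\PSPACE$ is closed under complement this places the PMCP itself --- the complement of the non-containment reachability question --- in $\PSPACE$. The main obstacle, and the only place where care is required, is exactly this succinct implicit handling of the exponentially-large $\execNFW$ (together with the determinized, respectively negated, specification): once a polynomial-size state encoding equipped with a polynomial-time transition oracle is in hand, the bound follows from the routine $\NPSPACE = \PSPACE$ reachability argument.
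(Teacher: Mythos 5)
Your proposal is correct and follows essentially the same route as the paper: reduce to containment of $L(\execNFW)$ in the specification language via Corollary~\ref{cor: execNFW recognises execfin}, then decide non-containment in \NPSPACE{} by guessing a word and run on the fly while representing the exponentially large $\execNFW$ implicitly one component at a time (your $(s,J)$ encoding with $J=I_i$ is just a cleaner packaging of the paper's ``store only $\puwd_i$'' step), tracking the NFW specification by subset construction and handling \LTLf{} via the negation automaton of Theorem~\ref{thm:vardi-wolper}, and closing with $\NPSPACE=\PSPACE=\text{co-}\PSPACE$.
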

\begin{proof}
Let $\proctemp = \tup{\AP,\Actionprts \cup \{\brd\}, S,I,R,\lambda}$ be a process template, and let $\execNFW$ be the NFW from Definition~\ref{dfn: automaton for finite executions}. The fact that words accepted by $\execNFW$ are exactly the executions in $\execfin$ is by Corollary~\ref{cor: execNFW recognises execfin}. Analyzing the construction of the unwinding template $\puwd$ (before Definition~\ref{dfn: unwinding}), we get that $\execNFW$ is of size at most exponential in the size of $\proctemp$.

We describe a \PSPACE algorithm for checking the containment of the language accepted by $\execNFW$ in the language of some specification NFW $\A'$. This is done by solving the non-containment problem in nondeterministic polynomial space, and using the fact that \NPSPACE = \PSPACE = co-\PSPACE.\footnote{Our nondeterministic algorithm will be allowed to diverge. Such an algorithm can be simulated by one that never diverges and still runs in polynomial space by incrementing a counter at every step, and rejecting the computation-branch if the counter ever exceeds the original number of configurations of the possibly-diverging algorithm.}
The algorithm constructs on the fly: (1) a finite word $\rho \in (2^\AP)^*$, and an accepting run of $\execNFW$ on $\rho$; and (2) checks that $\rho$ is not accepted by $\A'$.
Item (2) can be done, as usual, simply by storing the subset of states of $\A'$ that are reachable by reading the prefix of $\rho$ constructed thus far, and validating that, at the end, this set does not contain an accepting state.
For item (1), the algorithm does not store all of (the exponentially large $\execNFW$). Instead, at each point in time, it only stores a single component $\puwd_i$ of $\puwd$ (which can be calculated in \PTIME for every $i$), from which it can deduce the part of $\execNFW$ corresponding to it. The algorithm starts by constructing $\puwd_0$, and sets $\rho$ to be the empty word, and the run of $\execNFW$ on $\rho$ to be the initial state of $\execNFW$. At each step, it can either declare the guess as finished (if the run constructed thus far ends in an accepting state of $\execNFW$) or extend $\rho$ and the run. Extending $\rho$ is a trivial guess. Extending the guessed run is done by either guessing a transition of $\execNFW$ inside  the component induced by the currently stored $\puwd_i$; or by guessing a transition that moves to the next component in the lasso, at which point the algorithm also discards $\puwd_i$ and replaces it with $\puwd_{i+1}$.

In case the specification is given as an \LTLf formula $\varphi$, we let $\A'$ be the NFW from Theorem~\ref{thm:vardi-wolper} corresponding to $\lnot \varphi$ and replace (2) above by a check that $\rho$ is accepted by $\A'$, which can be done, as usual, simply by storing the subset of states of $\A'$ that are reachable by reading the prefix of $\rho$ constructed thus far, and validating that, at the end, this set does contain an accepting state.
\end{proof}

The next theorem gives a corresponding lower bound.
Interestingly, its proof shows that the problem is already \PSPACE hard for \emph{safety} specifications (i.e., that a bad state is never visited).

\begin{theorem}\label{thm: lower-bound-finite-PMCP}
Let $\Pspec$ be specifications of finite executions expressed as NFW or \LTLf formulas. Then $PMCP(\Pspec)$ for RB-systems is \PSPACE-hard. Moreover, this is true even for a fixed specification, and thus the program-complexity is \PSPACE-hard.
\end{theorem}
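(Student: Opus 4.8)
The plan is to reduce from the \PSPACE-complete \emph{termination problem for Boolean programs}: given a deterministic Boolean program with variables $x_1,\dots,x_m$ and lines $\ell_1,\dots,\ell_L$, decide whether the run from the all-false initial configuration ever reaches the designated halt line. Since the configuration space has size $L\cdot 2^m$, a single poly-size process cannot store a configuration; instead I would encode a configuration as the \emph{set of process-states that are reachable after a given number of broadcasts}, i.e.\ as a component $S_i$ of the reachability-unwinding. Concretely I would use a \emph{dual-rail} encoding: for each variable $x_j$ introduce two states $t_j,f_j$ (``$x_j$ true'' / ``$x_j$ false''), and a one-hot family of ``line'' states; the configuration $C=(\ell,b_1\dots b_m)$ is encoded by the state-set $\mathrm{enc}(C)$ containing the line state for $\ell$ together with $t_j$ if $b_j=1$ and $f_j$ if $b_j=0$. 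I set $I:=\mathrm{enc}(C_0)$ for the initial configuration $C_0$.

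The key observation is that the rendezvous \emph{saturation} used to build a component (the algorithm before Definition~\ref{dfn: unwinding}) computes exactly a monotone Boolean circuit over the currently-active states: a state $c$ with an incoming $k$-wise rendezvous whose partners sit in states $a,b,\dots$ becomes populated precisely when $a\wedge b\wedge\cdots$ are all populated (an AND-gate), and several rendezvous into $c$ act as an OR. Because I feed this circuit the literals $t_j,f_j$ (i.e.\ a dual-rail input), a \emph{monotone} circuit suffices to compute the full Boolean next-configuration function of the program. I would therefore wire, between two broadcasts, a poly-size gadget of intermediate ``gate'' states and ``next-literal'' states $t_j',f_j'$ that compute the literals of $C_{i+1}$ from those of $C_i$; the broadcast edges then map each primed literal into the corresponding unprimed literal of the next component (and line-states analogously), so that one broadcast realizes exactly one program step. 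Using binary rendezvous with self-loop fillers (as permitted by the modelling remarks) keeps the gadget within the $k$-wise format. The halt line is given a fresh atom $\mathsf{bad}$, and I fix the specification, independent of the program, to be $\ltlG\,\neg\mathsf{bad}$ (equivalently, a fixed NFW for the $\mathsf{bad}$-free language); this is what yields \emph{program}-complexity hardness.

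The heart of the argument is the exactness invariant $S_i=\mathrm{enc}(C_i)$ for all $i$, proved by induction on $i$. The \emph{completeness} direction ($\mathrm{enc}(C_i)\subseteq S_i$, i.e.\ the intended states really are reachable) is handled by the Loading Lemma~\ref{lem: loading}, which supplies arbitrarily many processes in every populated predecessor state so that every needed gate-rendezvous can actually fire; this is legitimate since PMCP quantifies over all $n$. The \emph{soundness} direction ($S_i\subseteq\mathrm{enc}(C_i)$, no spurious state ever becomes reachable) is where monotonicity is dangerous and is the main obstacle: I must rule out that saturation activates \emph{both} rails $t_j,f_j$ of some variable and thereby drives the circuit to a wrong output. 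This follows by maintaining the invariant that the input to each step is \emph{rail-consistent} (exactly one of $t_j,f_j$ per variable, one line active); for such inputs a dual-rail monotone circuit computes the unique correct output, so exactly one primed rail fires and consistency is preserved across the broadcast. I would phrase all of this through the reachability-unwinding: by Proposition~\ref{prop: unwinding captures executions} and Corollary~\ref{cor: execNFW recognises execfin}, a process-state $s$ appears in some finite execution iff $(s,i)\in S^\uwd$ for some $i$, i.e.\ iff $s\in S_{\comp(i)}$, so the invariant gives that $\mathsf{bad}$ occurs in some execution iff the program reaches the halt line. Consequently $PMCP(\ltlG\,\neg\mathsf{bad})$ holds iff the program does \emph{not} terminate; since \PSPACE\ is closed under complement, this proves \PSPACE-hardness already for this fixed safety specification, establishing the program complexity bound.
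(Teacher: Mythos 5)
Your reduction targets the same \PSPACE-complete source problem as the paper (reachability for deterministic Boolean programs) and uses the same fixed safety specification $\always\lnot\mathsf{bad}$, but the encoding is genuinely different. The paper stores the program configuration in the local states of the processes of a \emph{single run}: one process role tracks the program counter, one role per variable tracks its value, a rendezvous between the counter process and a variable process executes one instruction, and a primed/unprimed round structure (with a sink for any process still unprimed at broadcast time) enforces an agreement invariant. You instead store the configuration in the \emph{reachable-state set} $S_i$ of the $i$'th component of the reachability-unwinding, and let the saturation procedure itself evaluate a dual-rail monotone circuit for the next-configuration function, delegating completeness to the Loading Lemma~\ref{lem: loading} and soundness to rail-consistency of the circuit inputs. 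Both routes work; yours has the pleasant feature that no explicit simulating run needs to be exhibited, while the paper's reasons directly about runs of $\psys$ and does not lean on the internals of saturation. If you pursue your route, make explicit that each gate uses a fresh rendezvous action: saturation matches partners only by action name, so shared names would let distinct gates borrow each other's inputs and destroy the AND semantics.

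The one step you assert but do not secure is ``consistency is preserved across the broadcast''. Definition~\ref{dfn: rb-template} forces \emph{every} state of an RB-template to have an outgoing broadcast edge, so $I_{i+1}$ consists of the broadcast-successors of \emph{all} of $S_i$ --- not only your primed output literals $t_j',f_j'$, but also the stale unprimed input literals $t_j,f_j$, the line states, and every intermediate gate state. If, say, $t_j$ had a broadcast edge back to $t_j$ while the program toggles $x_j$ (so that $f_j'$ fires and is mapped to $f_j$), then both rails of $x_j$ would be populated in component $i+1$, rail-consistency would fail, and from then on the monotone circuit only overapproximates, eventually populating $\mathsf{bad}$ spuriously. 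You must route the broadcast edges of every non-primed state and every gate state into an absorbing $\sink$ with no outgoing rendezvous --- exactly the device the paper uses for its unprimed states. With that addition the inductive invariant $I_{i+1}=\mathrm{enc}(C_{i+1})\cup\{\sink\}$ goes through and the rest of your argument is sound.
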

\begin{proof}
  The proof proceeds by a reduction from the reachability problem for Boolean programs, known to be \PSPACE-complete~\cite{books/daglib/0090683}.

  A Boolean program consists of $m$ Boolean variables $X_1, \ldots, X_m$ (for some $m$) and $n$ instructions (for some $n$), referred to by their \emph{program location} $l \in [n]$, of two types: \emph{(i)} \emph{conditionals} of the form $l: \texttt{if } X_i  \texttt{ then } l_{if}  \texttt{ else } l_{else}$; \emph{(ii)} \emph{toggles} of the form $l: X_i := \neg X_i$. The semantics of the first type of instruction is to move from location $l$ to location $l_{if}$ if $X_i$ is true and to location $l_{else}$ otherwise; instructions of this type are thus conditional jumps that do not change the values of any of the Boolean variables.
  The semantics of the second type of instruction is to negate the value of the Boolean variable $X_i$;
  the execution continues from location $l+1$ (unless that was the last instruction).
  All Boolean variables are initialized to $\texttt{false}$ and execution begins with instruction $1$.
  We remark that the Boolean programs considered here are deterministic.
  The reachability problem for Boolean programs is to decide whether the execution of a Boolean program ever reaches its last program location $n$. Note that we can assume, without loss of generality, that the last instruction of a Boolean program is a conditional instruction.

  Given a Boolean program $B$, we build a process template $P$, and a specification NFW, such that $\psys$ satisfies the specification iff the execution of $B$ does not reach its last instruction.

  Formally, $\proctemp = \tup{AP,\Actionprts \cup \{\brd\}, S,\{\iota\},R,\lambda}$, where:
  \begin{itemize}
    \item $AP = \{\textrm{done}\}$, i.e., there is a single atom;
    \item $\Actionprts = \cup_{\msg{a} \in \Actions} \{ \msg{a}_1, \msg{a}_2\}$  where
    \[
    \Actions = \cup_{i \in [m]} \{ \msg{(\text{protect},i)}, \msg{(\text{if},i)}, \msg{(\text{else},i)}, \msg{(\text{toggle},i)} \}
    \]

    \item The set of states $S := \{\iota, {\tt sink}\} \cup S_{instr} \cup S_{var}$, where $\iota$ is an initial state, $\tt{sink}$ is a sink state, and:
    \begin{enumerate}
        \item $S_{instr} := \cup_{l \in [n]} \{l,l'\}$,
        \item $S_{var} := \cup_{i \in [m]} \{X_i, \neg X_i, X'_i, \neg X'_i\}$,
    \end{enumerate}
    \item $R$ will be defined later;
    \item $\lambda = \{(n,\textrm{done})\}$, i.e., the atom is true in state $n$, and false in all other states.
  \end{itemize}

  The specification says that the last program location $n$ is never visited. This can be expressed, for instance, by the \LTLf formula $\always \lnot \textrm{done}$ (read "it is always the case that atom \textrm{done} does not occur"), and by an NFW consisting of a single state. In what follows we call the specification $\varphi$.

  Before describing the transition relation $R$, we briefly describe the way the states are used in the simulation of the Boolean program $B$  by runs of $\psys$.
  At the beginning of every run, every process nondeterministically moves (on a broadcast) from the initial state either to the state $1$, or to one of the states $\neg X_1, \cdots, \neg X_m$. A process that moves to the state $1$ will keep track of (i.e., encode) the program location of the Boolean program, and from this point on will only be in states from the set $S_{instr}$; whereas a process that moves to a state of the form $\neg X_i$ will keep track of (i.e., encode) the value of variable $X_i$, and from this point on will only be in states from the set $\{X_i, \neg X_i, X'_i, \neg X'_i\}$. Observe that multiple processes may decide to encode the program location or the same variable. However,
  the transition relation $R$ will be defined in such a way as to enforce the invariant that, right after every broadcast, the following holds:

\begin{description}
\item [($\dagger$)] all processes that encode the same object (i.e., the program location or the value of some variable) agree on its value.
\end{description}

Moreover, between every two broadcasts, at most one instruction of the Boolean program can be simulated, namely, the instruction referenced by the processes that track the program location. The primed versions of the states will be used in order to enforce this round structure, as well as the invariant $\dagger$, as follows: in each round, every rendezvous transition moves a process from an unprimed state to a primed state, from which it can only move on a broadcast; whereas a broadcast takes a process in a primed state back to an unprimed state, and processes in an unprimed state to $\tt{sink}$.

  Let $var(l)$ denote the index $i$ of the variable used (i.e., tested or toggled) in the instruction in program location $l$.
  We now define the transition function $R$ of the template $\proctemp$. It consists of the following transitions:
\begin{itemize}
\item $\trans{\iota}{1}{\brd}$; $\trans{\iota}{\neg X_i}{\brd}$ for $i \in [m]$,
\item $\trans{\tt{sink}}{\tt{sink}}{\brd}$; $\trans{l}{\tt{sink}}{\brd}$,  $\trans{X_i}{\tt{sink}}{\brd}$, $\trans{\neg X_i}{\tt{sink}}{\brd}$, for $l \in [n], i \in [m]$,
\item $\trans{l'}{l}{\brd}$, $\trans{X'_i}{X_i}{\brd}$, and $\trans{\neg X'_i}{\neg X_i}{\brd}$.

\item $\trans{l}{l}{\msg{(\text{protect},i)}_1}$ for $l \in [n]$ and $i \in [n] \setminus var(l)$.
\item $\trans{X_i}{X'_i}{\msg{(\text{protect},i)}_2}$ and
$\trans{\neg X_i}{\neg X'_i}{\msg{(\text{protect},i)}_2}$ for $i \in [m]$.
\item $\trans{l}{l'_{if}}{\msg{(\text{if},var(l))}_1}$ and $\trans{l}{l'_{else}}{\msg{(\text{else},var(l))}_1}$ for all conditional instructions $l$.
\item $\trans{X_i}{X'_i}{\msg{(\text{if},i)}_2}$ and $\trans{\neg X_i}{\neg X'_i}{\msg{(\text{else},i)}_2}$ for $i \in [m]$.
\item $\trans{l}{(l+1)'}{\msg{(\text{toggle},var(l))}_1}$ for all toggle instructions $l$.
\item $\trans{X_i}{\neg X'_i}{\msg{(\text{toggle},i)}_2}$ and $\trans{\neg X_i}{X'_i}{\msg{(\text{toggle},i)}_2}$ for $i \in [m]$.
\end{itemize}

We now prove that the reduction is correct.

Suppose that the infinite run $\rho$ of the Boolean program visits its last instruction. We build a run $\pi$ of $\sysinst{m+1}$ witnessing the fact that
$\psys$ does not satisfy $\varphi$. The run $\pi$ simulates $\rho$ as follows. Start with a broadcast, which takes process $m+1$ (called the \emph{controller process}) to state $1$, and for every $i \in [m]$ takes process $i$ (called the \emph{$i$'th variable process}) to $\neg X_i$.
Repeatedly extend the run $\pi$ by the following sequence of global transitions (below, $l$ denotes the current state of the controller process):
\begin{enumerate}
 \item For every $i \neq var(l)$, the controller rendezvous with the $i$'th memory process on the action $\msg{(\text{protect},i)}$.
 \item The controller rendezvous with the $var(l)$'th memory process as follows: if $l$ is a toggle instruction then the rendezvous action is $\msg{(\text{toggle},i)}$; otherwise, it is $\msg{(\text{if},i)}$ if the $i$'th memory process is in state $X_i$, and it is $\msg{(\text{else},i)}$ if it is in state $\neg X_i$.
 \item A broadcast.
\end{enumerate}

It is easy to see that $\pi$ simulates $\rho$. In particular, the state of the controller after $z \geq 1$ broadcasts is equal to the program location of the Boolean program after $z$ steps.

For the other direction, we argue as follows. We say that a configuration $f$ of $\psys$ is \emph{consistent} if it satisfies the invariant $\dagger$ stated earlier. For such an $f$, let $pl(f) \in [n]$ be the program location encoded by $f$, or $\bot$ if there are no processes in $f$ tracking the program location; and for every $i \in [m]$, let $val_i(f) \in \{\true, \false\}$ be the value of $X_i$ encoded by $f$, or $\bot$ if there are no processes in $f$ tracking the value of $X_i$. Given $z \in \mathbb{N}$, and a run $\pi$ of $\psys$ with at least {$z$} broadcasts, write $\pi(z)$ for the configuration in $\pi$ immediately following the $z$'th broadcast. 
Observe that it is enough to show that $\pi$ simulates the run $\rho$ of the Boolean program in the following sense:
\begin{enumerate}
 \item $\pi(z)$ is consistent, 
 \item if $pl(f_z) \neq \bot$ then the program location in $\rho_z$ is equal to $pl(f_z)$, 
 \item for every $i \in [m]$, if $val_i(f_z) \neq \bot$ then the value of variable $X_i$ in $\rho_z$ is equal to $val_i(f_z)$. 
\end{enumerate}

We prove the items above by induction on $z$.
For $z = 1$, i.e., after the first broadcast (which must be the first transition on any run), processes assume different roles. Any process that moves to state $1$ is called a \emph{controller}, and
any process that moves to state $\neg X_i$ (for some $i \in [m]$) is called an \emph{$i$'th variable processes}. Clearly the induction hypothesis holds. For the inductive step, note that by the inductive hypothesis $\pi(z-1)$ is consistent, let $l := pl(\pi(z-1))$, and observe that the only rendezvous transitions on $\pi$ between the $z-1$ and $z$ broadcasts are of a controller process that rendezvous with a variable process on an action of the form described in items $1$ and $2$ in the proof of the first direction (in particular, if $l = \bot$ then there are no rendezvous between the $z-1$ and $z$ broadcasts). Thus, it must be that, just before the $z$'th broadcast, processes in a primed state that are encoding the same object are in the same state. Combining this with the fact that any process in an unprimed state will move to $\tt{sink}$ on the $z$'th broadcast one can see that the inductive hypothesis holds also after the $z$'th broadcast.   
\end{proof}

\begin{remark} \label{rem: spec complexity PSPACE-hard}
We now show that specification complexity of the PMCP for NFW and \LTLf specifications is also \PSPACE-hard. We do this by reducing from the standard model-checking problem.

Recall that the standard model-checking problem is, given an LTS $L$ without edge labels (aka `finite state program' or `Kripke structure') and a specification $\varphi$, to decide if all finite executions of $L$ satisfy $\varphi$. The specification complexity of the model-checking problem for \LTLf formulas or for NFW specifications is \PSPACE-hard. To see that, given an alphabet $\Sigma$, take a single state Kripke structure $K$ that generates all words in $\Sigma^*$, and note that model-checking $K$ and a given NFW specification is equivalent to deciding the universality problem for NFWs which is (even over a fixed alphabet) \PSPACE-hard~\cite{DBLP:books/fm/GareyJ79}. Similarly, model-checking $K$ and a given \LTLf formula is equivalent to deciding \LTLf satisfiability (using the negation of the original formula) which is again PSPACE-hard even for a fixed alphabet~P~\cite{DegVa13}.

To reduce the model-checking problem (for a fixed LTS $L$ without edge labels) to the PMCP problem with a fixed RB-template, simply build an RB-template $\proctemp$ from $L$ by adding the edge label $\brd$ to every transition, i.e., every transition in $L$ becomes a broadcast transition. Clearly, then, $\execfin$ is exactly the sequences of the form $\lambda(\pi)$ where $\pi$ is a finite run of $L$. Thus, $L \models \varphi$ iff all executions in $\execfin$ satisfy $\varphi$.
\end{remark}

Theorem~\ref{thm:PSPACE-complete} from Section~\ref{subsec: PMCP} now follows: the upper bound is in Theorem~\ref{thm: finite RB PMCP is in pspace}, the lower-bound on the program complexity (and thus the combined complexity) is in Theorem~\ref{thm: lower-bound-finite-PMCP}, and the lower-bound on the specification complexity is in Remark~\ref{rem: spec complexity PSPACE-hard}.

\section{Solving PMCP for Specifications over Infinite Executions}\label{sec:solving infinite}

The main step in our automata-theoretic approach to solve the PMCP for infinite executions is the construction, given an RB-template $\proctemp$, of a B-automaton $\execBSW$ (with a trivial B\"uchi set) that accepts the language $\execinf$. In this section we describe the construction of this automaton.

In order to construct the B-automaton $\execBSW$, it is helpful to recall the source of difficulty in dealing with infinite executions as opposed to finite executions. Recall from Section~\ref{sec:solving finite} that the finite executions were dealt with by simply turning the reachability-unwinding $\puwd$ into a nondeterministic automaton $\execNFW$ (by having each transition read as input the label of its source state), and that this worked because of the equality between the state-labels of finite runs of $\puwd$ and the finite executions of $\psys$, as stated in the first part of Proposition~\ref{prop: unwinding captures executions}. Also, recall that the second part of the same proposition, which deals with the infinite case, only states a containment (instead of equality), which may be strict --- as illustrated by the template $\proctemp$ in Example~\ref{ex:R-template}. Indeed, looking at this template again, one can see that in order to allow process $1$ to trace $p^z$ for $z \in \Nat$, we can use a system with $z+1$ processes that rendezvous with process $1$ one after the other. However, no finite amount of processes can allow process $1$ to trace $p^\omega$, since once a process rendezvous with process $1$ it cannot do so ever again. Thus, while the self loop on the initial state can be taken infinitely often in a path in the template $\proctemp$ (and hence also in a run of the automaton $\execNFW$), it can not be taken infinitely often in a run of $\psys$.

The key to modifying $\execNFW$ to obtain $\execBSW$ is to treat edges of $\puwd$ differently based on the conditions under which they can (or cannot) be taken infinitely often in runs of $\puwdsys$. In particular, one has to distinguish between edges that can appear infinitely often on a run with finitely or infinitely many broadcasts, and among the latter between ones that can or cannot appear unboundedly many times between two consecutive broadcasts. Note that the fact that an edge is only taken a bounded number of times between consecutive broadcasts can be naturally tracked by the acceptance condition of a single counter.

The rest of this section is organized as follows. We formally present the classification of edges along the lines outlined above, and prove a couple of easy lemmas about this classification. We then give the definition of the automaton $\execBSW$ and prove the correctness of the construction.

\begin{definition}[Edge Types]
\label{def:edge-types}
An edge $e$ of $\puwd$ is
 \begin{itemize}
 \item \emph{\locr} iff it appears infinitely many times on some run of $\puwdsys$ with finitely many broadcasts.
 \item \green iff it appears infinitely many times on some run of $\puwdsys$ with infinitely many broadcasts.
 \item \good iff it appears unboundedly many times between broadcasts on some run $\pi = \pi_0 \pi_1 \ldots$ of $\puwdsys$ with infinitely many broadcasts, i.e., if for every $n \in \Nat$ there are $i<j \in\Nat$ such that $\pi_i \dots \pi_j$ contains $n$ transitions using this edge and no broadcast edges.
 \item \sgood iff it is \green but not \good.

\end{itemize}
\end{definition}

Note that:
\begin{itemize}
\item\good edges are also \green,
\item \sgood edges are exactly those \green edges which satisfy that for every run of $\puwdsys$ there is a bound on the number of times they appear between any two consecutive broadcasts,
\item \green edges only belong to components of $\puwd$ that are on the loop of the lasso,
\item {broadcast edges can only be \sgood.}
\end{itemize}

\begin{example}
Neither edge in $\puwd$ for the template $\proctemp$ in Figure~\ref{fig: strict} is \locr or \green.

On the other hand, every edge in $\puwd$ for the template $\proctemp$ in Figure~\ref{fig: process not regular 2} is \sgood (and none are \locr).
\end{example}

It turns out that determining the type of an edge is decidable;  this is a non-trivial problem, and we dedicate Section~\ref{sec: deciding types of edges} to solving it.

We now characterize the edge types in terms of witnessing cycles in $\puwdsys$. Recall the definition of legal configuration and path (Definition~\ref{dfn: legal}).

\begin{lemma}\label{lem: edge types and cycles}
  An edge $e$ of $\puwd$ is:
\begin{enumerate}[i.]
  \item \locr iff it appears on a legal cycle $C_e$ of $\puwdsys$ that has no broadcast.
  \item \green iff it appears on a legal cycle $C_e$ of $\puwdsys$  that has broadcasts.
  \item \good\ iff it appears on a legal cycle $D_e$ of $\puwdsys$ that has no broadcast, that is contained in a legal cycle $C_e$ with broadcasts;
  \item \sgood\ iff it appears on a legal cycle $C_e$ of $\puwdsys$ that has broadcasts, but not on any cycle without broadcasts that is contained in a cycle with broadcasts.
\end{enumerate}
\end{lemma}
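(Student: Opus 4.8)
The plan is to prove the four characterizations in parallel, each biconditional split into a forward implication (the edge type yields a witnessing cycle) and a backward implication (the cycle yields the edge type), and then to obtain (iv) for free. Indeed, since \sgood is, by definition, \green and not \good, item (iv) is an immediate logical consequence of items (ii) and (iii): item (ii) identifies the \green edges with those lying on a legal cycle with broadcasts, item (iii) identifies the \good edges with those lying on a no-broadcast legal cycle contained in a legal cycle with broadcasts, and the conjunction of the former with the negation of the latter is precisely the right-hand side of (iv). So the real work is in (i), (ii) and (iii).

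For all three backward implications I would use the same constructive engine built from Lemma~\ref{lem: loading} (Loading) and Lemma~\ref{lem: rb-system composition} (Composition). Suppose we are given a legal cycle based at a configuration $g$ of $(\puwd)^N$ living in component $\puwd_i$ (with or without broadcasts, as required). Since $\comp(b)=i$ for some $b$, the Loading lemma produces a run reaching a configuration $F$ of $(\puwd)^M$, $M\ge N$, that dominates $g$ in every state of $\puwd_i$; fix a subset $X$ of processes with $\restr{F}{X}=g$ after renaming. Then I would repeatedly replay the given cycle on $X$ via Composition: the remaining $M-|X|$ processes are each assigned a broadcast-only path with the same number of broadcasts as the cycle (possible because every state of an RB-template has an enabled broadcast edge), they sit out all rendezvous and merely tick along on the broadcasts, and after each full traversal the $X$-processes are back at $g$ so the next round can start. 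For (i) the cycle has no broadcasts and we simply iterate it forever, yielding a run with finitely many broadcasts in which $e$ recurs. For (ii) we iterate the with-broadcast cycle, giving infinitely many broadcasts with $e$ recurring. For (iii), writing $C_e$ as a cycle passing through the base configuration $h$ of $D_e$, in the $k$-th round I would splice $k$ copies of the no-broadcast cycle $D_e$ into $C_e$ at $h$ (the spliced object still has the same number of broadcasts as $C_e$, so Composition still applies); this makes the number of $e$-uses between two consecutive broadcasts grow without bound, witnessing that $e$ is \good.

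The forward implications are proved by pigeonhole extraction from a witnessing run $\rho$ of $\puwdsys$, which lives in a fixed $(\puwd)^N$ and hence has only finitely many configurations. For (i), after the last of the finitely many broadcasts $\rho$ stays inside a single component (Remark~\ref{rem: broadcast component}) and uses $e$ infinitely often there; a repeated configuration with a use of $e$ in between gives the desired no-broadcast legal cycle. For (ii), some configuration $g$ recurs infinitely often and, since broadcasts also occur infinitely often, one can pick two occurrences of $g$ with both a use of $e$ and a broadcast in between, yielding a with-broadcast legal cycle through $e$.

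The subtle point, and the step I expect to be the main obstacle, is the forward direction of (iii): from a single run I must extract a no-broadcast cycle $D_e$ through $e$ \emph{and} a with-broadcast cycle $C_e$ that shares a configuration with it. Here $e$ being \good gives a run $\rho$ with infinitely many broadcasts in which the number of $e$-uses between consecutive broadcasts is unbounded, so there are infinitely many heavy between-broadcast blocks, at positions tending to infinity and therefore (after the prefix of the lasso) all inside loop components. Choosing the $e$-count threshold larger than the number of configurations of $(\puwd)^N$ forces each heavy block to contain a repeated configuration with a use of $e$ strictly in between, i.e.\ a no-broadcast cycle through $e$ based at some witness configuration of that block. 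A second pigeonhole over the finitely many possible witness configurations yields a single configuration $h^\ast$ that is the base of such a no-broadcast cycle $D_e$ for infinitely many of these blocks; in particular $h^\ast$ recurs infinitely often along $\rho$. Finally, combining the recurrence of $h^\ast$ with the infinitely many broadcasts of $\rho$ produces a with-broadcast cycle $C_e$ based at $h^\ast$, so $D_e$ sits inside $C_e$ as required. All cycles obtained this way are legal by Remark~\ref{rem: broadcast component}, since every configuration occurring on a run is legal.
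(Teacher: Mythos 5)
Your proposal is correct and follows essentially the same route as the paper: the ``if'' directions via Loading plus pumping the witnessing cycle (with the inner cycle $D_e$ pumped an increasing number of times per outer repetition for item (iii), and idle processes moving only on broadcasts), and the ``only if'' directions via the finiteness of configurations of a fixed $(\puwd)^N$ together with Remark~\ref{rem: broadcast component}. The only difference is one of detail: the paper dispatches all three ``only if'' directions with a one-line pigeonhole observation, whereas you spell out the extraction explicitly, including the double-pigeonhole needed to anchor $D_e$ and $C_e$ at a common recurring configuration in item (iii).
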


\begin{proof}
Observe that it is enough to prove the first three items.

For the `if' directions, let $n \in \Nat$ be the number of processes in $C_e$ (i.e., $C_e$ is a cycle in  $(\puwd)^n$),
and recall that since $C_e$ is legal, its source configuration $f$ is in $(\puwd_i)^n$ for some $i$. Hence, by Lemma~\ref{lem: loading} (Loading), a configuration $g$ such that $\restr{g}{[n]} = f$
(i.e., the first $n$ processes of $g$ form the configuration $f$) can be reached from an initial configuration of $\puwdsys$; then, for items (i) and (ii), we can simply pump $C_e$ forever (with the extra processes in $g$ moving only on broadcasts). For item (iii), $C_e$ is pumped in the following way: for every $i \in \Nat$, at the $i$'th repetition of the outer cycle $C_e$ we pump the inner cycle $D_e$ for $i$ times.

The `only if' directions follow from the observation that every run in $\puwdsys$ involves only finitely many processes, and thus only finitely many distinct configurations, all of which are legal (by Remark~\ref{rem: broadcast component}).
\end{proof}

Lemma~\ref{lem: edge types and cycles} implies that every \good\ edge is \locr, whereas a \sgood\ edge may or may not be \locr.\footnote{This overlap is the reason that we decided to use the term ``\locr'' instead of naming these edges by another color.} The following lemma states that we can assume that the cycles in Lemma~\ref{lem: edge types and cycles} that witness broadcasts all have the same number of broadcasts.

\begin{lemma}\label{lem: cycle implies one with K broadcasts}
There is a number $K$ such that for every 
{\green}, \good, or \sgood edge $e$, the cycle $C_e$ mentioned in items ii), iii) and iv) in Lemma~\ref{lem: edge types and cycles} can be taken to contain exactly $K$ broadcasts.
\end{lemma}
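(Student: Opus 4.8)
The plan is to exploit the finiteness of the template $\puwd$ together with the elementary observation that a cycle with broadcasts can be traversed several times. Since $\puwd$ has only finitely many edges, there are only finitely many \green, \good, or \sgood edges. For each such edge $e$ I would fix, once and for all, a witnessing cycle as guaranteed by Lemma~\ref{lem: edge types and cycles}: a legal cycle $C_e$ of $\puwdsys$ with broadcasts (and, in the \good case, additionally an inner no-broadcast sub-cycle $D_e$ contained in $C_e$ on which $e$ appears). Let $b_e > 0$ denote the number of broadcasts on the chosen $C_e$, and set $K$ to be a common multiple of the finitely many numbers $b_e$ (e.g.\ their least common multiple).

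The key mechanism is cycle repetition. Because $C_e$ is a cycle, $\dst(C_e) = \src(C_e) =: f$, so concatenating $C_e$ with itself $t := K/b_e$ times yields a single closed path $C_e^t$ from $f$ to $f$ that has exactly $t \cdot b_e = K$ broadcasts. I would then verify that $C_e^t$ is again a \emph{legal} cycle of $\puwdsys$: it uses the same fixed number of processes as $C_e$, every configuration appearing on it already appears on $C_e$ and is therefore legal, and its source configuration is unchanged, so legality is immediate from Definition~\ref{dfn: legal}.

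It remains to check that $C_e^t$ still witnesses the type of $e$ in the precise sense of Lemma~\ref{lem: edge types and cycles}. For a \green edge this is clear, since $e$ still appears on $C_e^t$. For a \good edge, the inner no-broadcast sub-cycle $D_e$ (on which $e$ appears) occurs within the first copy of $C_e$ inside $C_e^t$, and hence is still a no-broadcast sub-cycle contained in the broadcast cycle $C_e^t$. For a \sgood edge, the negative clause --- that $e$ lies on no no-broadcast cycle contained in a cycle with broadcasts --- is a property of the edge itself and is untouched by replacing $C_e$ with $C_e^t$, while $e$ still appears on $C_e^t$. Thus in every case the cycle $C_e^t$ has exactly $K$ broadcasts and witnesses the type of $e$, which proves the claim.

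I expect the only subtle point to be the bookkeeping that repetition preserves the nested structure required in the \good case --- namely that the inner cycle $D_e$ survives intact inside the first copy of $C_e^t$ --- and that the closed walk obtained by repetition genuinely remains a legal cycle rather than merely a closed path. Both follow directly once the definitions are unwound, so the argument is essentially a uniformization of the individual broadcast-counts by passing to a common multiple.
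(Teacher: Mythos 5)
Your proposal is correct and follows essentially the same route as the paper: fix a witnessing cycle $C_e$ with $b_e$ broadcasts for each relevant edge, let $K$ be the least common multiple of the finitely many $b_e$, and repeat each $C_e$ exactly $K/b_e$ times to obtain a witnessing cycle with $K$ broadcasts. The additional checks you flag (legality of the repeated cycle and survival of the inner cycle $D_e$ in the \good\ case) are indeed immediate and are left implicit in the paper's proof.
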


\begin{proof}
Apply Lemma~\ref{lem: edge types and cycles} to all the relevant edges in $\puwd$ and obtain cycles, say $C_{e_1}, C_{e_2}, \cdots, \allowbreak C_{e_l}$. Suppose $C_{e_i}$ has $k_i$ broadcasts. Let $K$ be the least-common-multiple of the $k_i$s. By repeating cycle $C_{e_i}$ for $K/k_i$ times, we obtain a witnessing cycle with exactly $K$ broadcasts.
\end{proof}

We now informally describe the structure of the automaton $\execBSW$. It is made of three copies of $\execNFW$ (called ${\execBSWinit}$, ${\execBSWbrd}$, ${\execBSWnobrd}$)
as follows: $\execBSWinit$ is an exact copy of $\execNFW$; the copy $\execBSWbrd$ has only the \green
edges left; and $\execBSWnobrd$ has only the \locr edges left (and in particular does not have broadcast edges). Furthermore, for every edge $(s, \sigma, s')$ in ${\execBSWinit}$ we add two new edges, both with the same source as the original edge, but one going to the copy of $s'$ in $\execBSWbrd$, and one to the copy of $s'$ in $\execBSWnobrd$. The initial states of $\execBSW$ are the initial states of $\execBSWinit$. The (single) counter increments at every transition in $\execBSWinit$, increment at every \sgood rendezvous edge in $\execBSWbrd$, and resets at every broadcast edge in $\execBSWbrd$.

Here is the formal definition.

\begin{definition}[B-Automaton $\execBSW$]
Let us introduces the process template:
\[
\proctemp = \tup{\AP,\Actionprts \cup \{\brd\}, S,I,R,\lambda}
\]
and let $\puwd = \tup{AP,\Actionprts \cup \{\brd\},S^\uwd,I^\uwd,R^\uwd,\lambda^\uwd}$ be its unwinding.
Define the B-automaton $\execBSW = \tup{\Sigma,S',I',R',\buchiset, cc}$ as follows:

\begin{itemize}
\item $\Sigma = 2^{AP}$,
\item $S' = \{\init,\grn,\loc\} \times S^\uwd$, 
\item $\buchiset = S'$, i.e., the B\"uchi condition is always satisfied,
\item $I' = \{\init\} \times I^\uwd$,
\item The transition relation $R'$ is $\delta_{\init} \cup \delta_\grn \cup \delta_\loc$ and the counter function $cc$ are defined as follows. For every transition $e = (s,\sigma,t) \in R^\uwd$
\begin{enumerate}
 \item $\delta_{\init}$ contains the transitions $\tau = ((init,s),\lambda^{\uwd}(s),(i,t))$ for every $i \in \{\init,\grn,\loc\}$;
 and $cc(\tau) = \inc$.

  \item $\delta_\grn$ contains the transition $\tau = ((\grn,s),\lambda^{\uwd}(s), (\grn,t))$ only if $e$ is $\green$; $cc(\tau) = \inc$ if $e$ is a \sgood rendezvous edge, $cc(\tau) = \reset$ if $e$ is a broadcast edge, and otherwise $cc(\tau) = \skp$;

 \item $\delta_\loc$ contains the transition $\tau = ((\loc,s),\lambda^{\uwd}(s),(\loc,t))$ only if $e$ is \locr; $cc(\tau) = \skp$.
\end{enumerate}
\end{itemize}
\end{definition}

Since transitions of $\execBSW$ are induced by transitions of $\puwd$, we call transitions of $\execBSW$ broadcast, $\good$, etc., based on the classification of the corresponding transition of $\puwd$. Figure~\ref{fig:b_automaton} depicts a high level view of the B-automaton constructed from a process template $P$, assuming that $(s,\sigma_1,t)$ is a $\sgood$ rendezvous edge, $(s,\sigma_2,u)$ is a $\green$ broadcast edge, $(s,\sigma_3,v)$ is a $\good$ rendezvous edge, and $(s, \sigma_4, w)$ is a locally-reusable edge (for some $\sigma_1,\sigma_2,\sigma_3$, and $\sigma_4$).

\begin{figure}
	\resizebox{0.65\linewidth}{!}{\begin{tikzpicture}[->,>=stealth',shorten >=2pt,auto,node distance=2cm,
                    semithick]

    \tikzstyle{invisible}=[]
    \tikzstyle{vertex}=[circle,fill=black!25]


    \draw (0,0) ellipse (2cm and 4cm);

    \node[vertex] (S1) at (-1,0.25)        {$s$};
    \node[vertex] (T1) at (0.5,2.5)     {$t$};
    \node[vertex] (U1) at (1,1)         {$u$};
    \node[vertex] (V1) at (1,-0.5)      {$v$};
    \node[vertex] (W1) at (0.5,-2)      {$w$};
    \node[invisible] (dots1) at (-0.5, -3) {\ldots};

    \node[invisible] (Linit) at (0, 4.5) {$\execBSWinit$};

    \path (S1) edge [bend left=10]  node [pos=0.75,xshift=6pt,yshift=-1pt] {inc} (T1);
    \path (S1) edge                 node [pos=0.75,yshift=-5pt] {inc} (U1);
    \path (S1) edge     node[yshift=-5pt] {inc} (V1);
    \path (S1) edge [bend right=10]    node [pos=0.75,yshift=-6pt] {inc} (W1);


    \draw (4.5,0) ellipse (2cm and 4cm);

    \node[vertex] (S2) at (3.5,1)       {$s$};
    \node[vertex] (T2) at (5,2.5)       {$t$};
    \node[vertex] (U2) at (5.5,1)       {$u$};
    \node[vertex] (V2) at (5.5,-0.5)    {$v$};
    \node[vertex] (W2) at (5,-2)        {$w$};
    \node[invisible] (dots2) at (5, -3) {\ldots};

    \node[invisible] (Lgrn) at (4.5, 4.5) {$\execBSWbrd$};

    \path (S2) edge [bend left]     node {inc}      (T2);
    \path (S2) edge                 node {reset}    (U2);
    \path (S2) edge [bend right]    node {skip}     (V2);

    \path (S1) edge [bend left=80]     node [pos=0.6] {inc}      (T2);
    \path (S1) edge [bend left=45]     node [pos=0.55] {inc}      (U2);
    \path (S1) edge [bend right=45]     node [pos=0.47] {inc}      (V2);
    \path (S1) edge [bend right=80]     node [pos=0.6] {inc}      (W2);


    \draw (9,0) ellipse (2cm and 4cm);

    \node[vertex] (S3) at (8,1) {$s$};
    \node[vertex] (W3) at (9.5,-2) {$w$};

    \node[invisible] (dots3) at (8.5, -3) {\ldots};

    \node[invisible] (Lgrn) at (9, 4.5) {$\execBSWnobrd$};

    \path (S3) edge [bend right]    node {skip}     (W3);

    \path (S1) edge [bend right=100]    node [pos=0.75] {inc}     (W3);

\end{tikzpicture}}
  \vspace*{-20pt}
	\caption{\label{fig:b_automaton}A high level view of the B-automaton construction.}
\end{figure}
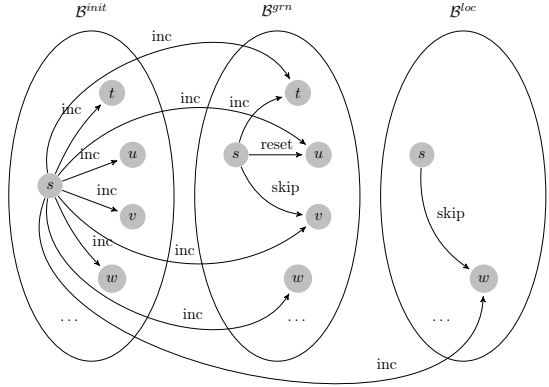

The rest of this section is concerned with proving that the construction is correct:
\begin{theorem}\label{thm: BSW correctness}
For every RB-template $P$ the language of B-automaton $\execBSW$ is $\execinf$.
\end{theorem}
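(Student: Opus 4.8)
The plan is to prove the two inclusions $\execinf \subseteq L(\execBSW)$ and $L(\execBSW) \subseteq \execinf$ separately. Throughout I work in $\puwdsys$ instead of $\psys$, which is legitimate since $\exec[\psys] = \exec[\puwdsys]$ (Corollary~\ref{cor:exec psys and puwdsys}), and I use repeatedly that every transition of $\execBSW$ reads $\lambda^\uwd$ of its source, so that the word read along a run of $\execBSW$ equals $\lambda^\uwd$ of the underlying path in $\puwd$.

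\textbf{Completeness ($\execinf \subseteq L(\execBSW)$).} Given $w \in \execinf$, pick an infinite run $\rho \in \runs(\puwdsys)$ with $\lambda^\uwd(\proj{\rho}(1)) = w$, and let $\sigma = \proj{\rho}(1) = e_1 e_2 \cdots$ be the path of process $1$ in $\puwd$. Let $E_\infty$ be the finite set of edges occurring infinitely often in $\sigma$, and let $N_0$ be a step after which $\sigma$ uses only edges of $E_\infty$. Since process $1$ is active exactly when it takes an edge, every edge of $E_\infty$ occurs infinitely often on $\rho$ itself. I split on the number of broadcasts of $\rho$. If $\rho$ has finitely many broadcasts, then by Definition~\ref{def:edge-types} each edge of $E_\infty$ is \locr, and I build the run of $\execBSW$ that follows $\execBSWinit$ for the first $N_0$ steps and then switches (via a $\delta_{\init}$ transition) into $\execBSWnobrd$, tracing the tail of $\sigma$; the counter increments only on the finite $\init$-prefix and skips forever afterwards, so it is bounded, the trivial B\"uchi set is met, and the run is accepting. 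If $\rho$ has infinitely many broadcasts, then each edge of $E_\infty$ is \green, and I route the run into $\execBSWbrd$ after step $N_0$; now the counter resets at every broadcast and counts only \sgood rendezvous edges in between, and by the very definition of \sgood there is, on the fixed run $\rho$, a uniform bound on the number of occurrences of each \sgood edge between consecutive broadcasts, so the counter is bounded and the run is accepting.

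\textbf{Soundness ($L(\execBSW) \subseteq \execinf$).} Given an accepting run $\zeta$ of $\execBSW$, let $\pi_0$ be its underlying path in $\puwd$, so $w := \lambda^\uwd(\pi_0)$ is the accepted word; it suffices to produce $\rho \in \runs(\puwdsys)$ with $\proj{\rho}(1) = \pi_0$. Because every transition of $\execBSWinit$ increments the counter, boundedness forces $\zeta$ to leave $\init$ after finitely many steps and then remain in $\execBSWbrd$ or $\execBSWnobrd$; write $\pi_0 = \alpha\beta$ with $\alpha$ finite (traced in $\init$) and $\beta$ the infinite tail. The prefix $\alpha$ is realized as process $1$'s projection of a finite run of $\puwdsys$ exactly as in Proposition~\ref{prop: unwinding captures executions}(1) (Loading then Composition), which I moreover arrange, via Lemma~\ref{lem: loading}, to end with arbitrarily many processes in every state of the current component. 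If $\zeta$ ends in $\execBSWnobrd$, then $\beta$ consists of \locr edges, lies in a single component, and has no broadcasts; by Lemma~\ref{lem: edge types and cycles}(i) each of its edges sits on a legal broadcast-free cycle, and using Remark~\ref{rem:saturation} to locate rendezvous partners, process $1$'s partners are supplied by finitely many helper processes repeatedly traversing such cycles. If $\zeta$ ends in $\execBSWbrd$, then $\beta$ consists of \green edges with infinitely many broadcasts and, since the counter is bounded, with a uniform bound on the number of \sgood rendezvous between consecutive broadcasts. I group the broadcasts of $\beta$ into blocks of $K$ using Lemma~\ref{lem: cycle implies one with K broadcasts}, so that the \sgood edges of a block lie on cycles $C_e$ with exactly $K$ broadcasts (Lemma~\ref{lem: edge types and cycles}(ii)) and the \good edges lie on broadcast-free inner cycles $D_e$ nested inside outer cycles with $K$ broadcasts (Lemma~\ref{lem: edge types and cycles}(iii)). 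Within each block I furnish process $1$'s partners by running these cycles, pumping each inner cycle $D_e$ to match the number of times $\pi_0$ traverses the corresponding \good edge in that block; crucially, pumping a broadcast-free cycle reuses the same processes, so the number of processes needed in a block does not grow with the amount of pumping.

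\textbf{Main obstacle.} The delicate point is the $\execBSWbrd$ case of soundness. Although each individual block needs only boundedly many helper processes, the number of \good-edge traversals, and hence which helper processes play which role, varies from block to block, whereas a single run of $\puwdsys$ lives in $(\puwd)^n$ for one fixed $n$. This is precisely what weak-simulation is for: it permits the assignment of helper processes to cycles to be re-chosen at block boundaries, and Remark~\ref{rem: weakly-simulates} guarantees that after each block the helpers return to the source configurations of their cycles, so that a single fixed pool of processes can be reloaded (Lemma~\ref{lem: loading}) and reassigned for the next block. Making this reassignment-with-fixed-$n$ precise, and checking that the bounded count of \sgood rendezvous keeps each block's demand uniformly bounded, is the technical heart of the argument; the completeness direction and the $\execBSWnobrd$ case are comparatively routine given Lemmas~\ref{lem: edge types and cycles}, \ref{lem: cycle implies one with K broadcasts}, \ref{lem: rb-system composition}, and~\ref{lem: loading}.
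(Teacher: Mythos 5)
Your proposal is correct and follows essentially the same route as the paper's proof: the same case split on finitely versus infinitely many broadcasts for completeness, and for soundness the same prefix/tail decomposition, the same use of witnessing cycles with exactly $K$ broadcasts, the bound on \sgood occurrences extracted from counter-boundedness, and the same block-by-block weak-simulation with helper groups that return to their cycles' source configurations (the paper formalizes this via its flush/load/swap/recharge/mark operations and the invariant that groups are reset at block boundaries). The one small imprecision is that the per-block re-positioning of helpers is guaranteed by the groups completing their cycles within the block (Remark~\ref{rem: weakly-simulates}), not by re-invoking Lemma~\ref{lem: loading}, which is used only once to stitch the tail onto the finite prefix.
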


\paragraph{Right-to-left direction.} Take $\alpha \in \execinf$. By Corollary~\ref{cor:exec psys and puwdsys} we have that
$\exec[\psys] = \exec[\puwdsys]$. Thus,
there is some $\pi \in \puwd$ such that $\alpha = \lambda^\uwd(\pi)$.
We now show that there is an accepting run of $\execBSW$ on $\alpha$. First note that by Definition~\ref{dfn: automaton for finite executions} and Proposition~\ref{prop: unwinding captures executions}, for every prefix of $\alpha$ there is a finite run on that prefix that remains in $\execBSWinit$.  There are two cases: either $\pi$ contains infinitely many broadcast transitions or not. If it does not, then from some point on all edges on $\pi$ are \locr. Thus, at that point, the automaton can move from $\execBSWinit$ to $\execBSWnobrd$. The resulting run is accepting since the counter is never incremented in $\execBSWnobrd$. On the other hand, if $\pi$ has infinitely many broadcast transitions, then from some point on, all its edges are \green. Thus, at that point, the automaton can move from $\execBSWinit$ to $\execBSWbrd$. Observe that the counter is reset on every broadcast edge and it is only incremented on \sgood edges, which by Def. \ref{def:edge-types}, appear only boundedly many times between broadcasts.

\paragraph{Outline of left-to-right direction.} Let $\Omega$ be an accepting run in $\execBSW$ on input $\alpha$. By Corollary~\ref{cor:exec psys and puwdsys} it is enough to construct a run $\pi$ in $\puwdsys$ 
whose projection on process $1$ has labeling $\alpha$. Let $\beta$ be the run in $\puwd$ induced by $\Omega$ (recall that every transition of the automaton is induced by a transition of $\puwd$). The construction of $\pi$ is guided by having process $1$ trace $\beta$. We decompose $\beta = \beta' \cdot \beta''$ where $\beta'$ corresponds to the finite prefix of the run $\Omega$ that stays in $\execBSWinit$.

In order to trace $\beta'$, we use the techniques in Section~\ref{sec:solving finite} for finite traces. This leaves us with the task of tracing $\beta''$ which contains either only locally reusable edges or only \green edges. First, observe that tracing a broadcast edge is easy since we can simply append a global broadcast transition to the run $\pi$ being constructed. On the other hand, for each rendezvous edge $e$ we will assign multiple groups of processes to help process $1$ traverse $e$ (the number of groups is discussed later). Each group associated with $e$ has the property that it can trace a \emph{cycle} $C_e$ in which edge $e$ is taken at some point, say by process $p_e$. So, if $e$ is the next edge that process $1$ should take, we progress some group along the cycle $C_e$ to the point where $e$ should be taken, then process $1$ swaps places with process $p_e$ (this is virtual, and merely re-assigns process ids);
and then the group takes the next transition along the cycle $C_e$, and so process $1$ takes $e$. Note that in order for a group to be available to assist process $1$ again in the future, it has to be `reset', i.e., put back to the same position just before the edge $e$ was taken. This is done differently, depending on the type of $e$.
If $e$ is \locr, then so are all subsequent edges $f,g,\dots$ that process $1$ should take; so, since $C_e,C_f,C_g, \dots$ do not contain any broadcast, the group can simply loop around $C_e$ immediately after process $1$ leaves $C_e$; when process $1$ does leave $C_e$, it swaps with $p_f$ in $C_f$, and so on.
If $e$ is \good, then it is on an inner cycle $D_e$, without broadcasts, of $C_e$ (Lemma~\ref{lem: edge types and cycles}), so the group can loop around $D_e$ after process $1$ swaps out --- we call this \emph{recharging} --- thus enabling it to help process $1$ again even though it has not yet completed the outer cycle $C_e$.
Finally, if $e$ is \sgood, this group will only be ready again after the whole cycle $C_e$ is looped once more, which requires waiting for $K$ broadcasts. Thus, until that happens, if process $1$ needs to trace $e$ it will need the help of another group associated with $e$. The key observation is that the number of these groups is bounded. The reason for this is that $\Omega$ is an accepting run, and thus one can deduce that there is a bound on the number of times a \sgood edge is taken until the $K$ broadcasts needed to complete the cycle $C_e$ are taken.

\paragraph{Detailed proof of left-to-right direction.}
Let $\Omega$ be an accepting run in $\execBSW$ on input $\alpha$. Since every transition in $\execBSW$ corresponds to a transition in $\puwd$, let $\beta$ be the corresponding run in $\puwd$. Since $\Omega$ is an accepting run, it either gets trapped in $\execBSWbrd$ or it gets trapped in $\execBSWnobrd$. Decompose $\Omega = \Omega' \cdot \Omega''$ accordingly, i.e., the prefix $\Omega'$ corresponds to the run until it first 
enters $\execBSWnobrd$ or $\execBSWbrd$. Decompose $\alpha = \alpha' \cdot \alpha''$ and $\beta = \beta' \cdot \beta''$ accordingly.

We are required to construct a run of $\psys$ whose projection on process $1$ is labeled $\alpha$. By Corollary~\ref{cor:exec psys and puwdsys} it is enough to construct a run $\pi$ of $\puwdsys$ whose projection onto process $1$ is labeled $\alpha$.
We first construct a finite run $\rho'$ of $\puwdsys$  whose projection on process $1$ is $\beta'$. Since $\Omega'$ stays inside $\execBSWinit$ it is actually also an accepting run of $\execNFW$ on $\alpha'$. Thus, by Corollary~\ref{cor: execNFW recognises execfin}, $\alpha' \in \execfin$. By Corollary~\ref{cor:exec psys and puwdsys} it is also in $\execfin[\puwdsys]$, i.e., there is a run $\rho'$ of $(\puwd)^t$ for some number $t$ of processes whose projection onto process $1$ is $\alpha'$. Let $\puwd_l$ be the component that the run $\rho'$ ends in.

To complete the proof, we will construct an infinite path $\rho''$ of $(\puwd)^n$ for some number $n$ of processes, satisfying the following:
1) its projection on process $1$ is $\beta''$ (note that this implies that $\rho''$ starts in a configuration where process $1$ is in the same state as when it ended $\rho'$), and 2) it starts in a configuration in $\puwd_l$. To see why this is enough to complete the proof, proceed as follows in order to compose $\rho'$ and $\rho''$.
Apply Lemma~\ref{lem: loading} (Loading) to get a finite run $\rho$ in $\puwdsys$ that has the same number of broadcasts as $\rho'$, and ends in a configuration that, when restricted to the first $n$ processes, is the starting configuration of $\rho''$. Note that $\rho'$ may use $m > n$ processes in order to achieve that. By Lemma~\ref{lem: rb-system composition}, we can simultaneously simulate both $\rho$ and $\rho'$ in a run $\pi'$ of  $(\puwd)^{t+m}$. Assume w.l.o.g. that the first $t$ processes are simulating $\rho'$, and that the next $n$ processes are simulating the first $n$ processes of $\rho$. Thus, at the final configuration of $\pi'$, these $n$ processes are exactly in the states needed to start simulating $\rho''$, and processes $1$ and $t+1$ are in the same state. Thus, we can extend the simulation by letting process $1$ exchange roles with process $t+1$ and having processes $1, t+2, \dots t+n$ simulate $\rho''$ (with all other processes doing nothing except responding to broadcasts). The resulting run $\pi$  has the property that its projection onto process $1$ is labeled $\alpha$, as promised.

\paragraph{Constructing $\rho''$}
First, assume w.l.o.g. that $\proctemp$ (and thus also $\puwd$) has no self loops~\footnote{A template can be transformed, in linear time, to one without self loops (and the same set of executions) as follows: for every state $s$ that has a self loop, add a new state $\hat{s}$ with the same labeling, replace every self loop $(s, \sigma, s)$ with $(s, \sigma, \hat{s})$, and for every outgoing transition $(s, \sigma', t)$, including self-loops, add the transition $(\hat{s}, \sigma', t)$.} --- this is not essential, but simplifies some technicalities in the construction.
Second, we differentiate between two cases, depending in which component $\Omega''$ is trapped. We treat the case that $\Omega''$ is trapped in $\execBSWbrd$ (the case that $\Omega''$ is trapped in $\execBSWnobrd$ is simpler, and does not use any new ideas). Note that we will ignore the technicality of keeping track of process numbers, as we find it distracts, rather then helps one understand the proof.

 Let $E_\good$ (resp. $E_\sgood$) be the set of \good\ (resp. \sgood) edges that appear on $\beta''$, and note that these are the only edges that appear on it (by the fact that $\execBSWbrd$ contains only \green edges).
For every edge $e \in E_\sgood$ (resp. $e \in E_\good$), let $C_e$ (resp. $C_e, D_e$) be the witnessing cycle(s) with exactly $K$ broadcasts (for some fixed $K$) from Lemma~\ref{lem: cycle implies one with K broadcasts}, and assume w.l.o.g. that (a) every such cycle $C_e$ starts in a configuration in the component $\puwd_l$ in which $\rho'$ ends, and (b) that if $e'$ is the first edge on $\beta''$, then $e'$ appears in the first transition taken in the cycle $C_{e'}$.
To see how to achieve (a) note that if $e'$ and each subsequent edge is \locr, then because each $C_e$ does not contain any broadcast, each $C_e$ is contained in $\puwd_l$. On the other hand, if the edges $e$ are \green, then $e$ must be on the loop of the lasso, and so $\puwd_l$ is on the loop of the lasso (since $e'$ is), and so since $C_e$ contains at least one broadcast, it must go through $\puwd_l$.
Since $\Omega$ is an accepting run, the counter is bounded on it. Thus, since in $\execBSWbrd$ we increment the counter when reading a \sgood\ edge, and reset it when reading a broadcast edge, we can pick $\mathfrak{m} \in \Nat$ such that every \sgood\ edge appears at most $\mathfrak{m}$ times on any section of $\beta$ that contains $K$ broadcasts.

Let the \emph{designated occurrence} of $e$ on $C_e$ be defined as follows: if $e$ is \sgood\ then it is the first transition in $C_e$ in which $e$ is taken, and if $e$ is \good\ then it is the first transition of $C_e$, that is also on the nested cycle $D_e$, in which $e$ is taken.
For every $h \in [K]$, let $C_e(h)$ denote the portion of $C_e$ just after the $h-1$ broadcast up to (and including) the $h$ broadcast. For $h$ such that $C_e(h)$ contains the designated occurrence of $e$ we divide $C_e(h)$ further into three pieces: $C_e(h,1)$ is the part up to the designated occurrence, $C_e(h,2)$ is the designated occurrence, and $C_e(h,3)$ is the remainder.

Take exactly enough processes to assign them to one copy $G_e^1$ of $C_e$ for every $e \in E_\good$, and $\mathfrak{m}$ copies $G_e^1, \dots G_e^{\mathfrak{m}}$ of $C_e$ for every $e \in E_\sgood$.

Given a group of processes $G_e^i$, for some $i$ and $e$, we define the following operations:
\begin{itemize}
  \item \textbf{flush}: $G_e^i$ simulates (using Lemma~\ref{lem: rb-system composition})
  the portion of $C_e(h)$ which it has not yet simulated, up to but not including the broadcast;
  \item \textbf{load}: in case $C_e(h)$ contains the designated occurrence of $e$, then $G_e^i$ simulates (using Lemma~\ref{lem: rb-system composition}) the path $C_e(h,1)$;
  \item \textbf{swap}: we say that we \emph{swap} process $1$ into $G_e^i$ to mean that process $1$ and process $j$ in $G_e^i$ (where $j$ is a process that takes the edge $e$ in the designated occurrence of $e$ on $C_e$) exchange their group associations. I.e., process $1$ joins $G_e^i$, and process $j$ takes its place in the former group of process $1$;
  \item \textbf{recharge}: if $e$ is \good, we say that we \emph{recharge} $C_e$ to mean that the group $G_e^i$ simulates (using Lemma~\ref{lem: rb-system composition}) tracing $D_e$ until reaching (but not executing) the designated occurrence of $e$.
  \item \textbf{mark}: if $e$ is \sgood, we may mark a group $G_e^i$ as {\em used} or {\em fresh} by setting or resetting a virtual flag.
\end{itemize}

Obviously (except for flush and mark), not every operation above can be taken at any time. In particular, a swap is allowed only at a time process $1$ and $j$ are in the same state, e.g., when process $1$ is at the source of $e$ and $G_e^i$ has just been loaded (and thus process $j$ is also at the source of $e$).

We are now ready to construct $\rho''$. The initial configuration of $\rho''$ is obtained by having all processes at the beginning of the cycles they were assigned to, and process $1$ assigned to the group $G^1_{e'}$ where $e'$ is the first edge on $\beta''$ (this can be done because $e'$ appears in the first transition taken in the cycle $C_{e'}$). Note that this satisfies the requirement that $\rho''$ starts in a configuration in $\puwd_l$. The rest is done in blocks, where in the $i$'th block we extend $\rho''$ with a path $\xi_i$ containing $K$ broadcasts and whose projection on process $1$ is the portion of $\beta''$ after the $(i-1)K$ broadcast up to and including the $iK$ broadcast --- which we call $\beta''_i$. The construction will ensure that $\xi'$, defined as the concatenation of the $\xi_i$s, weakly-simulates all the cycles of all the groups, and thus by Remark~\ref{rem: weakly-simulates}, will maintain the following invariant $(\dagger)$: at the configuration $f$ at the start of each block, the processes in every group of the form $G_e^i$ are in states corresponding to the initial configuration $g$ of $C_e$ (i.e., $\restr{f}{G_e^i} = g$). The invariant obviously holds for the first block by our choice of the initial configuration of $\rho''$.

For $i \in \Nat$, assume that blocks $< i$ have been constructed. We now describe how to build block $i$. First, \textbf{mark} all the groups as \emph{fresh}, then proceed in $K$ rounds by repeating the following algorithm for every $ 1 \leq h \leq K$.
Let $e_1 \dots e_x$ be the prefix of $\beta''_i$ not yet traced by process $1$, up to and including the next broadcast (obviously, the length $x$ of this prefix depends on $h$).
For every $j \in [x]$, if $e_j$ is \good pick the group $G_{e_j}^1$; and if it is \sgood pick the first \emph{fresh} group from among the yet unpicked groups in $G_{e_j}^1 \dots G_{e_j}^{\mathfrak{m}}$, and \textbf{mark} this group as \emph{used} (we can always pick a fresh group since --- by our choice of $\mathfrak{m}$, and since $\beta''_i$ has exactly $K$ broadcasts --- there are at most $\mathfrak{m}$ occurrences of $e_j$ in $\beta''_i$). Denote the group thus picked by $\mathbb{G}_{e_j}$, and let $\mathbb{G}_{e_0}$ denote whatever group process $1$ is in at the beginning of the round.

\begin{figure}	\resizebox{0.75\linewidth}{!}{\begin{tikzpicture}[->,>=stealth',shorten >=2pt,auto,node distance=2cm,
                    semithick]

    \tikzstyle{invisible}=[]
    \tikzstyle{vertex}=[circle,fill=black!25]

    \node[invisible]        at (0,0)    {$\beta$};
    \node[invisible]        at (1.25,0)    {\dots};
    \node[invisible]        at (5.5,0)    {$e_j$};
    \node[invisible]        at (7.5,0)    {$e_{j+1}$};

    \node[invisible]        at (9.5,0)    {\dots};

    \draw [-,dashed]        (-1,-1) -- (12,-1);


    \node[invisible]        at (0,-2)    {$\rho''$};
    \node[invisible]        at (1.25,-1.5)   {\vdots};
    \draw                   (0.75,-4.5) rectangle ++(1,2); 
    \node[invisible]        at (1.25,-3.5)   {$\mathbb{G}_{e_j}$};
    \draw                   (1.75,-3.5) -- node {load} (4,-3.5);
    
    \draw                   (0.75,-7) rectangle ++(1,2); 
    \node[invisible]        at (1.25,-6)   {$\mathbb{G}_{e_{j+1}}$};
    \draw                   (1.75,-6) -- node {load} (6,-6);

    \node[invisible]        at (1.25,-7.75)   {\vdots};

    \node[invisible]        at (4.5,-1.5)   {\vdots};
    \draw                   (4,-4.5) rectangle ++(1,2);
    \node[invisible]        at (4.5,-3) {$\cdot$};                                      
    \draw                   (4.5,-3) -- (6.5,-3);                                       
    \draw[<->]              (4.5,-2) -- node [pos=0.25] {swap process 1} (4.5,-3);      

    \draw                   (5,-3.5) -- node {$e_j$} (6,-3.5);

    \draw                   (6,-4.5) rectangle ++(1,2);
    \node[invisible]        at (6.5,-3) {$\cdot$};                                      
    \draw[<->]              (6.5,-3) -- node [pos=0.725] {swap process 1} (6.5, -5.5);    
    \draw                   (7,-3.5) -- node {flush} (11,-3.5);

    \draw                   (6,-7) rectangle ++(1,2);
    \node[invisible]        at (6.5,-5.5) {$\cdot$};                                    
    \draw                   (7,-6) -- node {$e_{j+1}$} (8,-6);

    \draw                   (8,-7) rectangle ++(1,2);

    \draw                   (6.5,-5.5) -- (8.5,-5.5);                                   
    \node[invisible]        at (8.5,-5.5) {$\cdot$};
    \draw[<->]              (8.5,-5.5) -- node [pos=0.85] {swap process 1} (8.5, -7.5);     

    \draw                   (9,-6) -- node {flush} (11,-6);
    \node[invisible]        at (8.5,-7.75)   {\vdots};

    \draw                   (11,-4.5) rectangle ++(1,2);

    \draw                   (11,-7) rectangle ++(1,2);

\end{tikzpicture}}
	\caption{\label{fig:rho-dprime}An illustration of how to simulate some \sgood edges $e_j$ and $e_{j+1}$ of the run $\beta''$ in the run $\rho''$ using groups $\mathbb{G}_{e_j}$ and $\mathbb{G}_{e_j+1}$ .}
\end{figure}
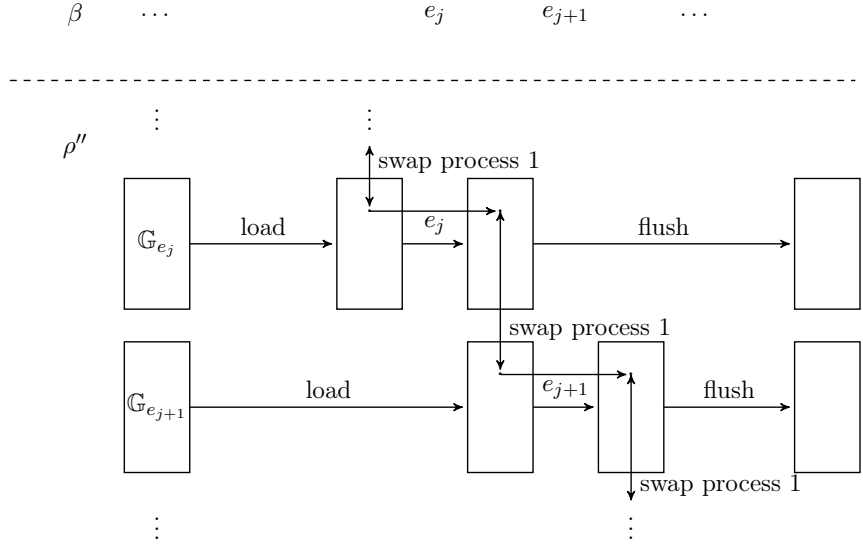

\begin{enumerate}
 \item For $1 \leq j \leq x$ repeat:
     \begin{enumerate}
     \item \textbf{Load} the group $\mathbb{G}_{e_j}$ and \textbf{Swap} process $1$ into it;
     \item if $j < x$ then have group $\mathbb{G}_{e_j}$ simulate the transition $C_{e_j}(h,2)$, with process $1$ taking the edge $e_j$.
     \item If $ j > 1$, and $e_{j-1}$ is a \good edge, then \textbf{recharge} $\mathbb{G}_{e_{j-1}}$.
     \end{enumerate}
 \item \textbf{Flush} all groups except $\mathbb{G}_{e_x}$ (that process $1$ is currently in). Note that since $e_x$ is a broadcast, the loading of $\mathbb{G}_{e_x}$ already put it in a flushed condition.
 \item Perform a broadcast (with process $1$ taking the broadcast edge $e_x$).
\end{enumerate}

We illustrate the execution of the above algorithm for one round in Figure~\ref{fig:rho-dprime}.
Here, we depict two \sgood edges $e_j$ and $e_{j+1}$ of the run $\beta''$ along with the groups $\mathbb{G}_{e_j}$ and $\mathbb{G}_{e_j+1}$ that enable the simulation of these edges in the run $\rho''$. 
The illustration shows that each of the groups is first loaded (in order to prepare for the transition $e_j$ resp. $e_{j+1}$ to be taken), then process 1 is swapped into the group (allowing process 1 to execute the respective transition), after that transition $e_j$ resp. $e_{j+1}$ is taken, and finally each group is flushed to prepare for the next symmetric broadcast (in order to be ready to participate in the next round of the simulation).
Since the invariant $(\dagger)$ holds at the start of each block, it is easy to see that the algorithm can actually be executed. 
Indeed, the invariant ensures that the algorithm can \textbf{load} when needed (and thus, \textbf{swap} and \textbf{recharge} when needed). 
It is not hard to see that, as promised, the resulting path $\xi_i$ weakly-simulates all the cycles of all the groups. To see that the projection of $\xi_i$ on process $1$ is $\beta''_i$, observe that the moves performed by process $1$ in lines $1(b)$ and $3$ of the algorithm trace exactly $\beta''_i$. Furthermore, process $1$ is moved only in these lines since, by our assumption that $\puwd$ does not contain any self loop, every edge in $\beta''$ is different than the edge just before and just after it and thus, process $1$ is never in a group when it is being loaded (except at the very beginning of the first block, in which case the \textbf{load} in line $1(a)$ of the algorithm does nothing since this group is already in a loaded position in the initial configuration of $\rho''$).

This completes the proof of Theorem~\ref{thm: BSW correctness}.

In Section~\ref{sec: deciding types of edges} we will show how to decide the type of the edges in $\puwd$ (Theorem~\ref{thm: edge type decidability}) in polynomial time in the size of $\puwd$. Thus, we can build the B-automaton $\execBSW$ in exponential time in the size of $P$.  Combining this with Theorem~\ref{thm: BSW correctness} we get Theorem~\ref{thm: main dec} that says that the PMCP for NBW/\LTL specifications of RB-systems can be solved in \EXPTIME. 

\begin{proof}[Proof of Theorem~\ref{thm: main dec}]

We reduce the PMCP problem to the emptiness problem for B-automata.

Given a process template $\proctemp$, and the corresponding B-automaton $\execBSW$ (whose B\"uchi set is trivial), suppose the specification is given as an \LTL formula $\varphi$ (the case of NBW is given afterwards). Let $L(\execBSW)$ denote the language of $\execBSW$, and let $L(\lnot \varphi)$ denote the set of models of $\lnot \varphi$. Then, every execution in $\execinf$ satisfies $\varphi$ if and only if $L(\execBSW) \cap L(\lnot \varphi) = \emptyset$.
Using Theorem~~\ref{thm:vardi-wolper}, let $\A_{\lnot \varphi}$ be an NBW accepting all models of $\lnot \varphi$, and denote its states by $Q$ and its B\"uchi set by $\buchiset$. Build the synchronous product of $\A_{\lnot \varphi}$ and $\execBSW$ to get an NBW with one counter, call it $M$, whose language is equal to $L(\execBSW) \cap L(\A_{\lnot \varphi})$. By Lemma~\ref{lem:single-counter-b-automaton-emptiness}, one can test whether $L(M)$ is empty in \PTIME. Thus, this PMCP algorithm is exponential in the size of the template $P$ (since computing $\execBSW$ can be done in time exponential in the size of $P$) and exponential in the size of $\varphi$ (since computing $\A_{\lnot \varphi}$ can be done in time exponential in the size of $\varphi$).

For the case that the specification is an NBW $\A$, we proceed in a similar way by noting that we can build an NBW $\A'$ for the complement of $L(\A)$ in time exponential in the size of $\A$ (see for example~\cite{DBLP:conf/focs/Safra88}).
\end{proof}


\section{Deciding Edge Types}
\label{sec: deciding types of edges}

\newcommand\Dim{k}
\newcommand\states{S}
\newcommand\transRel{R}
\newcommand\transTuple{\mathbf{\mathfrak{t}}}
\newcommand\run{r}
\newcommand\config{c}
\newcommand\B{B}
\newcommand\configAlt{d}
\newcommand\transition{t}
\newcommand\mult{\alpha}
\newcommand\multAlt{\gamma}
\newcommand\broadMultAlt{\delta}
\newcommand\multMsg{\mu}
\newcommand\multMsgAlt{\nu}
\newcommand\firingSetOp{\mathcal{F}}
\newcommand\revFiringSetOp{\mathcal{R}}
\newcommand\forwardOp{\textsf{forw}}
\newcommand\backwardOp{\textsf{back}}
\newcommand\positiveOp[1]{(#1)^{\neq 0}}
\newcommand\positiveSet{H}
\newcommand{\rdz}{\textsf{rdz}}
\newcommand{\incoming}{\textsf{in}}
\newcommand{\outgoing}{\textsf{out}}
\newcommand{\BroadTrans}{Q}
\newcommand{\trace}{\xi}
\newcommand{\traceAlt}{\varrho}

This section is dedicated to proving the following result:

\begin{theorem}
\label{thm: edge type decidability}
  Given a {reachability-unwinding $\puwd$ of a process template $\proctemp$}, the type (\good, \sgood, and \locr) of each edge $e$ in $\puwd$ can be decided in \PTIME (in the size of $\puwd$).
\end{theorem}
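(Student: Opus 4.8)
The plan is to reduce each edge-type question to the existence of the witnessing cycles furnished by Lemma~\ref{lem: edge types and cycles}, and then to decide existence of those cycles through a counter abstraction of the rendezvous dynamics, which I would call a \emph{vector rendezvous system} (VRS) together with its continuous relaxation (CVRS). The guiding observation is that, since processes are anonymous, a configuration of $\puwdsys$ is determined up to renaming by the vector in $\NatZero^{S^\uwd}$ counting how many processes sit in each state; a rendezvous action is exactly a Petri-net-style transition that, for each of its $k$ roles, decrements one source-state counter and increments one target-state counter, while a broadcast transfers the whole vector from one component of $\puwd$ to the next. Crucially, the Loading Lemma (Lemma~\ref{lem: loading}) lets us always assume an unbounded supply of processes in every state of a component, and this is what will allow us to replace hard integer-reachability questions by linear-programming feasibility over the rationals, hence \PTIME.

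First I would treat cycles \emph{without} broadcasts, which live inside a single component $\puwd_{\comp(e)}$ and thus settle both \locr (Lemma~\ref{lem: edge types and cycles}(i)) and the inner-cycle half of \good. A no-broadcast cycle through $e$ corresponds to a \emph{circulation} (a T-invariant) in the VRS: a nonnegative firing vector $\phi$ over the rendezvous edges of the component satisfying (a) flow conservation at every state (tokens in equal tokens out), (b) the synchronization constraints that for each action $\msg a \in \Actions$ the roles $\msg{a}_1, \dots, \msg{a}_k$ all fire the same total number of times, and (c) $\phi(e) \ge 1$. These are all linear, so feasibility is decidable by rational LP in \PTIME. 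What makes this an \emph{exact} characterization, rather than merely necessary, is a flooding argument enabled by Loading: after loading $M$ processes into every state for sufficiently large $M$, any integer multiple of $\phi$ can be realized in \emph{any} firing order without a counter going negative, and since the net effect is zero we return to the loaded (hence legal) configuration, producing a genuine cycle through $e$. Thus $e$ is \locr iff this LP is feasible.

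Next I would handle cycles \emph{with} broadcasts, which underlie \green (and hence \sgood and the outer half of \good). The essential simplification here is that I need not reproduce the number $K$ of Lemma~\ref{lem: cycle implies one with K broadcasts}: it suffices to search for a \emph{single-period} recurrence around the loop of the lasso. Writing $n, \dots, n+r-1$ for the loop components, I would set up one LP whose variables are a counter vector entering and a counter vector leaving each loop component, a continuous firing vector $\phi_j$ for each component's rendezvous phase (constrained as in the no-broadcast case), and the linear, choice-carrying broadcast-transfer equations linking consecutive components; the loop is \emph{closed} by equating the post-broadcast vector of component $n+r-1$ with the entering vector of component $n$, and $\phi$ is forced to fire $e$ in $e$'s component. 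Since $\puwd$ has at most polynomially many (in $\card{\puwd}$) components, this LP has polynomial size. A rational single-period solution lifts to a genuine broadcast cycle by the same scale-and-flood argument (the broadcasts being realized by distributing integer token counts), while conversely any multi-period recurrence averages, by a convexity argument over the rational cone of CVRS solutions, down to a single-period continuous recurrence that still fires $e$ --- so feasibility of this LP characterizes \green exactly (this is equivalently organizable as an iterative fixpoint over recurring supports). Finally, $e$ is \good iff it is \green and additionally lies on a no-broadcast cycle, i.e. iff both the broadcast-recurrence LP and the circulation LP succeed: by Loading and the Composition Lemma (Lemma~\ref{lem: rb-system composition}) the net-zero inner cycle can be run by auxiliary processes inside the recurring outer cycle without disturbing it. Hence $e$ is \sgood iff it is \green but not \good.

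The main obstacle I anticipate is not the LP modeling itself but the faithful correspondence between the continuous relaxation and the discrete infinite system in the presence of broadcasts: establishing that every feasible rational single-period solution can be scaled to integers and realized as an actual run of $\puwdsys$ (with integral broadcast distributions and enough loaded processes to keep all intermediate counters nonnegative), and, in the reverse direction, that a genuine broadcast cycle --- possibly spanning many periods --- always projects to a feasible single-period solution. This is precisely where the geometric reasoning about the rational solution cone of CVRS, in combination with the Loading and Composition lemmas, must do the real work; the no-broadcast case is comparatively routine once the flooding argument is in place.
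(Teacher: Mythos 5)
Your treatment of the no-broadcast case is essentially the paper's: the circulation LP with per-state flow conservation, the per-action synchronization constraints, and $\mult_e>0$ is exactly Lemma~\ref{lem:locr-decidability}, and your flooding justification is sound there because for a net-zero circulation one is free to choose the start configuration so that every state in the support of the firing vector is populated. The genuine gap is in the broadcast case. Your claim that feasibility of a single LP characterizes \green is false: the linear constraints (flow conservation per loop component, synchronization, broadcast-transfer equations, loop closure, $\mult_e>0$) are necessary but not sufficient. The flooding argument does not carry over, because the configuration $\config_i$ entering each loop component is pinned down by the broadcast coefficients from the preceding component --- you cannot pad it with auxiliary tokens in arbitrary states, since those tokens would themselves have to traverse broadcast edges and return after $r$ broadcasts, which an arbitrary state need not permit. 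Consequently a feasible firing vector may assign positive weight to a transition whose source state is never actually populated along any realizable trace from $\config_i$ to $\config_i'$. This is why the paper's Theorem~\ref{thm:reachability-no-broadcasts} supplements the linear equations with non-linear support conditions ($\src(\transRel')\subseteq\backwardOp(\config',\transRel')$, $\dst(\transRel')\subseteq\forwardOp(\config,\transRel')$, and $\forwardOp(\config,\transRel')=\backwardOp(\config',\transRel')$, in the style of Fraca--Haddad for continuous Petri nets), and why deciding \green requires Algorithm~\ref{alg:pseudo-cycle}: an iteration that alternates solving an LP (maximizing the set of non-zero coefficients) with computing forward/backward-accessible supports and pruning transitions that fall outside them, until a fixpoint. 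Your parenthetical remark that the LP is ``equivalently organizable as an iterative fixpoint'' asserts an equivalence that does not hold; the fixpoint is not a reformulation of the single LP but the mechanism that enforces the missing accessibility conditions. You yourself flag the continuous-to-discrete correspondence in the presence of broadcasts as the unresolved ``main obstacle'' --- that is precisely the part of the proof that is absent.

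A second, smaller error: your criterion for \good\ --- \green\ plus membership in \emph{some} no-broadcast circulation --- is too weak. By Lemma~\ref{lem: edge types and pseudo cycles}(iii) (and the ``moreover'' clause of Lemma~\ref{lem:locr-decidability}) the inner circulation must itself use only \green\ edges. Otherwise the auxiliary processes that are supposed to execute the inner cycle cannot be re-positioned at its support states after each broadcast of the recurring outer cycle, so the inner cycle cannot be pumped unboundedly often between broadcasts on an infinite run; the Loading Lemma places processes anywhere once, but does not make them recur across periods.
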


We will develop the proof of Theorem~\ref{thm: edge type decidability} in several steps:

\textbf{1.} The starting point for the proof of Theorem~\ref{thm: edge type decidability} is the characterization of edge types in Lemma~\ref{lem: edge types and cycles} through the existence (or lack thereof) of suitable cycles in $\puwdsys$.
In Subsection~\ref{subsec:pseudo-cycles}, we weaken this characterization using the notion of \emph{pseudo-cycles}.
Pseudo cycles are paths that start and end in configurations that are identical up to the renaming of processes, i.e., there are exactly the same number of processes in every state, though the identities of the processes in each of the states may differ.
Pseudo-cycles can always be pumped to a cycle by iterating the pseudo-cycle until the initial configuration is reached again.
Hence, pseudo-cycles can be seen as more compact representations of cycles.
Importantly, we are able to obtain a bound on the number of broadcasts in pseudo-cycles, which we need for deciding edge types.

\textbf{2.} In order to be able to conveniently reason about pseudo-cycles we will work with counter abstractions of R-Systems:
In Subsection~\ref{subsec:cvrs}, we define vector rendezvous systems (VRS) and their continuous relaxation, called continuous vector rendezvous systems (CVRS).
The notions of VRSs and CVRS are inspired by the notion of Vector Addition Systems (VAS)~\cite{journals/tcs/HopcroftP79}, where configurations only store the number of processes for every process state but not the identity of the processes.
VRSs are counter abstractions of R-Systems; in particular, pseudo-cycles of R-Systems correspond to cycles in VRSs and vice versa.
CVRSs are a continuous relaxation of VRSs in which steps can be taken by a rational fraction.
CVRSs have the advantage that we can characterize reachability and the existence of cycles in them by solving linear programming problems over the rationals.
We will be able to work with CVRs instead of VRSs because we will show that we can scale a CVRSs cycle to a VRS cycle (as we are interested in the parameterized verification problem we can always scale the number of processes).
We then reduce the existence of witnessing pseudo-cycles for the type of an edge to corresponding reachability statements for CVRSs, as outlined below.

\textbf{3.} In Subsection~\ref{subsec:characterization-reachability-cvas}, we develop a characterization of reachability for CVRSs.
This characterization will give rise to an equation system and a fixed point algorithm, which is the basis for our edge type computation procedure.
The results in this subsection already allow us to compute the \locr edges and the \good edges (under the assumption that the \green edges are already known; the computation of the \green edges is, however, only done in the next subsection).

\textbf{4.} In Subsection~\ref{subsec:deciding-pseudo-cycle}, we show how to decide whether an edge of $\puwd$ is \green.
This problem represents the main difficulty in deciding the type of an edge.
We give an algorithm that computes all \green edge of $\puwd$ and prove its correctness.
We further remark that, though not actually needed, one can derive a pseudo-cycle that contains all \green edges from our procedure.

\subsection{Pseudo-cycles}
\label{subsec:pseudo-cycles}

\begin{definition} \label{dfn: pseudo-cycles}
Let $\proctemp$ be an RB-template with state set $S$, and let $n \in \Nat$. Two configurations $f,f'$ of $\sysinst{n}$ are called \emph{twins} if every state is covered by the same number of processes in $f$ and $f'$, i.e. $|f^{-1}(q)| = |f^{\prime -1}(q)|$ for every $q \in S$.
\end{definition}
Let $g \circ h$ denote the composition $g(h(\cdot))$. Observe that $f, f'$ are twins if and only if there is a (not necessarily unique) permutation $\theta: [n] \to [n]$ such that $f' = f \circ \theta$. Intuitively, $\theta$ maps each process in $f'$ to a matching process in $f$ (i.e., one in the same state). Thus, given a transition $\trans{f}{g}{\sigma}$, say $t$, in $\sysinst{n}$, we denote by $t[\theta]$ the transition $\trans{f \circ \theta}{g \circ \theta}{\sigma'}$ resulting from replacing process $i$ with process $\theta(i)$ in $t$; and having $\sigma' = \brd$ if $\sigma = \brd$, and $\sigma' = ((\theta^{-1}(i_1), \msg{a}_1), \dots, (\theta^{-1}(i_k), \msg{a}_k))$ if $\sigma = ((i_1, \msg{a}_1), \dots, (i_k, \msg{a}_k))$.  Note that the rendezvous or broadcast action of the transition taken in $t$ and $t[\theta]$ are the same --- it is only the identities of the processes involved that are different. Extend $\theta$ to paths point-wise, i.e., if $\pi = t_1 t_2 \ldots$ is a path then define $\pi[\theta] = t_1[\theta] t_2[\theta] \ldots$.

\begin{definition} \label{dfn:pseudocycle}
A finite path $\pi$ of an RB-system $\sysinst{n}$ is a {\em pseudo-cycle} if $\src(\pi)$ and $\dst(\pi)$ are twins.
\end{definition}

Obviously, every cycle is also a pseudo-cycle, but not vice-versa.
For example, for $\proctemp$ in Figure~\ref{fig: process_example}, the following path in $\sysinst{4}$ is a pseudo-cycle that is not a cycle: $(p,q,q,r) \xrightarrow{((3,\msg{c}_1), (4,\msg{c}_2))} (p,q,r,p) \xrightarrow{((2,\msg{c}_1), (3,\msg{c}_2))} (p,r,p,p) \xrightarrow{((3,\msg{a}_1), (4,\msg{a}_2))} (p,r,q,q)$.

\begin{figure}[!htb]
             {
\begin{tikzpicture}[->,>=latex,node distance=0.4cm,bend angle=25,auto]

\tikzset{every state/.style={circle,minimum size=.4cm,inner sep=0cm}}
\tikzset{every edge/.append style={font=\small}}

\node[initial, state] (t1) {$p$};
\node[state] (t2) [right= of t1] {$q$};
\node[initial, state] (t3) [below= of t1] {$r$};

\path (t1) edge [bend left] node {{$\msg{a}_1$}} (t2);
\path (t1) edge [bend right, below] node {$\msg{a}_2$} (t2);
\path (t3) edge [bend left, left] node {$\msg{c}_2$} (t1);
\path (t2) edge [bend left] node {$\msg{c}_1$} (t3);

\end{tikzpicture}

%
%
%
%
%
%
             {
\begin{tikzpicture}[->,>=latex,node distance=0.6cm,bend angle=25,auto]

\tikzset{every state/.style={circle,minimum size=.4cm,inner sep=0cm}}
\tikzset{every edge/.append style={font=\small}}

\node[initial,state] (t1) {$p$};
\node[state] (t2) [right= of t1] {$q$};

\path (t1) edge [loop above] node {{$\msg{a}_1$}} (t1);
\path (t1) edge [loop below] node {{$\brd$}} (t1);
\path (t1) edge [bend left] node {{$\msg{a}_2$}} (t2);
\path (t2) edge [bend left, below] node {$\brd$} (t1);

\end{tikzpicture}

%
%
%
%
%
%
\end{figure}

\begin{remark}\label{rem: pseudo cycle start}
Similar to cycles, for which one can chose any point on the cycle as its start (and end) point, one can chose any point along a pseudo-cycle as the start point. Indeed, if $C$ is a pseudo-cycle that starts in a configuration $f$ and ends in a twin $f'$, then given any configuration $g$ along $C$ we can obtain a new pseudo-cycle, that uses exactly the same edges (but with possibly different processes taking these edges) as follows: start in $g$ and traverse the suffix of $C$ until $f'$, reassign process id's according to the permutation transforming $f$ to $f'$ and traverse the prefix of $C$ from $f$ to $g$ using these reassigned processes to reach a twin $g'$ of $g$.
\end{remark}

The following immediate lemma states that every pseudo-cycle $\pi$ can be pumped to a cycle.

\begin{lemma}\label{lem: psuedo cycle pumping}
Given a pseudo-cycle (resp. legal pseudo-cycle) $\pi$ in $\sysinst{n}$, and a permutation $\theta$ such that $\dst(\pi) = \src(\pi) \circ \theta$,
then there is a $z \ge 1$ such that
\[\pi[\theta^0] \pi[\theta^1] \pi[\theta^2] \ldots \pi[\theta^{z-1}],\]
is a cycle (resp. legal cycle) in $\sysinst{n}$, where $\theta^j$ denotes the composition of $\theta$ with itself $j$ times.
\end{lemma}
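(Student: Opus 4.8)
The plan is to take $z$ to be the order of $\theta$ in the symmetric group on $[n]$, i.e., the least $z \ge 1$ with $\theta^z = \mathrm{id}$; such a $z$ exists and is at least $1$ since this group is finite. With this choice I would then check three things: that each block $\pi[\theta^j]$ is a path, that consecutive blocks concatenate, and that the whole concatenation closes up into a (legal) cycle.

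The first ingredient I would invoke is that renaming processes sends transitions to transitions: the map $t \mapsto t[\theta]$ is well defined on $R^n$ (this is built into the definition of $t[\theta]$ and reflects the symmetry of $\sysinst{n}$ under renaming), so applying $\theta^j$ pointwise to $\pi$ again yields a path. The heart of the argument is the endpoint bookkeeping. Writing $f := \src(\pi)$, the hypothesis reads $\dst(\pi) = f \circ \theta$. Since applying $\theta^j$ pointwise to a path replaces its source $\src(\pi)$ and destination $\dst(\pi)$ by $\src(\pi) \circ \theta^j$ and $\dst(\pi) \circ \theta^j$, I would compute $\src(\pi[\theta^j]) = f \circ \theta^j$ and $\dst(\pi[\theta^j]) = (f \circ \theta) \circ \theta^j = f \circ \theta^{j+1}$. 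This gives $\dst(\pi[\theta^j]) = f \circ \theta^{j+1} = \src(\pi[\theta^{j+1}])$, so consecutive blocks concatenate into a single path whose source is $f \circ \theta^0 = f$ and whose destination is $f \circ \theta^z = f \circ \mathrm{id} = f$; hence it is a cycle.

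For the legal case I would note that, by Definition~\ref{dfn: legal}, legality of a path depends only on its source configuration, and that renaming by a permutation leaves the multiset of occupied states unchanged, so every $f \circ \theta^j$ sits in the same component $\puwd_i$ as $f$; as the cycle's source is $f = \src(\pi)$, it is legal whenever $\pi$ is. I do not expect a genuine obstacle here — the statement is essentially ``$\theta$ has finite order'' combined with the invariance of transitions under renaming — and the only step demanding care is the endpoint computation showing $\dst(\pi[\theta^j]) = \src(\pi[\theta^{j+1}])$, which is precisely why I would carry that calculation out explicitly rather than leave it implicit.
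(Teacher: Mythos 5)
Your proposal is correct and follows essentially the same route as the paper: choose $z$ to be the order of $\theta$ in the symmetric group and observe that the concatenation's destination is $\src(\pi)\circ\theta^{z}=\src(\pi)$. You merely make explicit the intermediate endpoint bookkeeping ($\dst(\pi[\theta^{j}])=\src(\pi[\theta^{j+1}])$) and the legality claim, both of which the paper leaves implicit.
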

\begin{proof}
  We can choose $z$ such that the permutation $\theta^z$ is the identity (recall that the set of permutations of a finite set forms a finite group, and that the order of every element in a finite group is finite, i.e., there is some $z \in \Nat$ such that $\theta^z$ is the identity permutation).
  We observe that $\dst(\pi[\theta^0] \pi[\theta^1] \pi[\theta^2] \ldots \pi[\theta^{z-1}]) = \src(\pi) \circ \theta^z = \src(\pi)$, and thus it is a cycle. 
\end{proof}

Recall that $n,r$ denote the prefix-length and period of $\puwd$, respectively.
The following lemma states that we can assume that if an edge of $\puwd$ appears on a pseudo-cycle with broadcasts then it also appears on one with exactly $r$ broadcasts.
Knowing this bound will be crucial for decidability of edge types (in contrast, Lemma~\ref{lem: edge types and cycles} only says that a bound exists), as well as for obtaining good complexity for deciding PMCP.

\begin{lemma}[Spiral]\label{lem: pseudo cycle implies one with r broadcasts}
An edge $e$ appears on a legal pseudo-cycle $D$ in $\puwdsys$, which contains broadcasts, iff it appears on a legal pseudo cycle $C$ of $\puwdsys$, containing exactly $r$ broadcasts and starting in a configuration in $\puwd_n$.
\end{lemma}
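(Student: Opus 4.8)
The plan is to prove the two directions separately; the right-to-left direction is immediate, since the pseudo-cycle $C$ itself contains $r \geq 1$ broadcasts (recall $r = m+1-n \geq 1$) and hence already witnesses the left-hand side. All the work is in the left-to-right direction. First I would record the structural constraint that the number of broadcasts on a legal pseudo-cycle is always a multiple of $r$. Indeed, if $D$ is a legal pseudo-cycle with $b \geq 1$ broadcasts, then $\src(D)$ and $\dst(D)$ are twins, so they occupy exactly the same states and therefore lie in the same component, say $\puwd_c$ with $0 \leq c \leq m$. By Remark~\ref{rem: broadcast component}, $\dst(D)$ lies in $\puwd_{\comp(c+b)}$, so $\comp(c+b) = c$. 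Tracing the broadcast edges of $\puwd$ shows the component index advances as $c, c+1, \dots, m, n, n+1, \dots$ (incrementing until $m$, then wrapping to $n$); hence it can return to $c$ only if $c$ is on the loop, i.e. $c \geq n$, and then only after a multiple of $r$ broadcasts. So $c \geq n$ and $b = kr$ for some $k \geq 1$.

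Next I would normalise the start point. Since $c \geq n$, the pseudo-cycle $D$ cycles through all loop components $\puwd_n, \dots, \puwd_m$ and so passes through a configuration in $\puwd_n$; by Remark~\ref{rem: pseudo cycle start} I may rotate $D$ into a pseudo-cycle that starts (and ends) in a configuration of $\puwd_n$, uses exactly the same edges (so $e$ still appears) and still has $kr$ broadcasts. Write $g_0 = \src(D), g_1, \dots, g_k = \dst(D)$ for the configurations of $D$ immediately after the $0, r, 2r, \dots, kr$ broadcasts; by the same component bookkeeping each $g_j$ lies in $\puwd_n$ (as $\comp(n + jr) = n$), and $g_0, g_k$ are twins because $D$ is a pseudo-cycle. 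This cuts $D$ into $k$ segments $D_1, \dots, D_k$, where $D_j$ runs from $g_{j-1}$ to $g_j$ and contains exactly $r$ broadcasts; the edge $e$ appears on some segment $D_{j^\ast}$.

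The crux is a parallel composition that telescopes the segments into a single $r$-broadcast pseudo-cycle. Using Lemma~\ref{lem: rb-system composition} with $b = r$, I would compose $D_1, \dots, D_k$ along pairwise disjoint groups $X_1, \dots, X_k$ into one path $\pi$ of $\puwdsys$ with exactly $r$ broadcasts, whose source $f$ satisfies $\restr{f}{X_j} = g_{j-1}$ and whose destination $f'$ satisfies $\restr{f'}{X_j} = g_j$. Then for every state $q$ one has $|f^{-1}(q)| = \sum_{j=0}^{k-1} |g_j^{-1}(q)|$ and $|f'^{-1}(q)| = \sum_{j=1}^{k} |g_j^{-1}(q)|$, so these differ only by $|g_0^{-1}(q)| - |g_k^{-1}(q)|$, which vanishes since $g_0, g_k$ are twins. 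Hence $f$ and $f'$ are twins and $\pi$ is a pseudo-cycle; it has exactly $r$ broadcasts, it is legal and starts in $\puwd_n$ (as every $g_j$ does), and $e$ appears on it because $\restr{\pi}{X_{j^\ast}} = D_{j^\ast}$ uses $e$. Taking $C := \pi$ completes this direction.

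I expect the main obstacle to be precisely this telescoping step: the individual segments $D_j$ are in general \emph{not} pseudo-cycles, and the key realisation is that running all $k$ of them side by side makes the leftover occupancies at the two ends cancel, since only $g_0$ and $g_k$ survive the sum and these are twins. A secondary technical point needing care is the component bookkeeping that forces the broadcast count to be a multiple of $r$ and places all segment boundaries in $\puwd_n$; this is exactly what makes the segments composable (they must share the same number of broadcasts, namely $r$) and what yields the prescribed start component.
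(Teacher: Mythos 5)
Your proof is correct and follows essentially the same route as the paper's: rotate the pseudo-cycle to start in $\puwd_n$, cut it at the configurations after every $r$ broadcasts, and run the $k$ segments in parallel via the Composition Lemma, with the twin property of the endpoints making the occupancy counts telescope. Your write-up merely makes explicit two points the paper leaves as observations — that the broadcast count must be a multiple of $r$, and the counting argument showing the composed path's endpoints are twins.
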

\begin{proof}
Assume $D$ is a legal pseudo-cycle in $(\puwd)^m$, for some $m \in \Nat$.
By Remark~\ref{rem: pseudo cycle start} we can assume w.l.o.g. that $D$ starts and ends in a configuration in $\puwd_n$.
Observe that (by the lasso structure of $\puwd$) $D$ must have $lr$ broadcast transitions for some $l \in \Nat$, and that after every $r$ broadcasts all processes are in $\puwd_n$.
Let $f_0$ be the initial configuration on $D$.
For every $i \in [l]$, let $f_i$ be the configuration in $D$ just after $ir$ broadcasts, and let $\rho_i$ be the portion of $D$ from $f_{i-1}$ to $f_i$.
Observe that $\rho_i$ contains exactly $r$ broadcasts.
By Lemma~\ref{lem: rb-system composition}, we can compose the paths $\rho_1, \dots \rho_l$ into a single path $C$ in the system $(\puwd)^{ml}$ with $ml$ processes.
It is not hard to see that $C$ is a legal pseudo-cycle.
Indeed, after $r$ broadcasts the processes that were simulating $\rho_i$ are in states that match the configuration $f_{(i+1) \bmod l}$, i.e., the configuration that the processes that simulate $\rho_{(i+1) \bmod l}$ started in.
Obviously, $e$ appears on $C$, which completes the proof.
\end{proof}

We now use the notion of a pseudo-cycle to give a characterization of the edges types that is easier to detect than the one given in Section~\ref{sec:solving infinite}:

\begin{lemma}\label{lem: edge types and pseudo cycles}
An edge $e$ of $\puwd$ is:
\begin{enumerate}[i.]
  \item \locr iff it appears on a legal pseudo-cycle $C_e$ of $\puwdsys$ without broadcasts. \label{lem: edge types and pseudo cycles: locr}
  \item \green iff it appears on a legal pseudo-cycle $C_e$ of $\puwdsys$ with $r$ broadcasts. \label{lem: edge types and pseudo cycles: green}
  \item \good\ iff it appears on a legal pseudo-cycle $C_e$ of $\puwdsys$ without broadcasts that only uses \green edges; \label{lem: edge types and pseudo cycles: good}
  \item \sgood\ iff it is \green and does not appear on a legal pseudo-cycle $C_e$ of $\puwdsys$ without broadcasts that only uses \green edges. \label{lem: edge types and pseudo cycles: sgood}
\end{enumerate}
\end{lemma}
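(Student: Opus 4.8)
The plan is to transfer the cycle-based characterization of Lemma~\ref{lem: edge types and cycles} to the pseudo-cycle setting, using pumping (Lemma~\ref{lem: psuedo cycle pumping}) to turn pseudo-cycles into genuine cycles and the Spiral lemma (Lemma~\ref{lem: pseudo cycle implies one with r broadcasts}) to control the number of broadcasts. Since every cycle is a pseudo-cycle, and since concatenating copies of a path neither creates nor destroys broadcasts, each item will reduce to its counterpart in Lemma~\ref{lem: edge types and cycles}. I first observe that item~(iv) is free once items~(ii) and~(iii) are proved: by Definition~\ref{def:edge-types} an edge is \sgood exactly when it is \green but not \good, so the condition in item~(iv) is just the conjunction of the \green characterization of item~(ii) with the negation of the \good characterization of item~(iii).

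For item~(i) the forward direction is immediate: by Lemma~\ref{lem: edge types and cycles} a \locr edge lies on a legal broadcast-free \emph{cycle}, which is in particular a legal broadcast-free pseudo-cycle. Conversely, if $e$ lies on a legal pseudo-cycle without broadcasts, then Lemma~\ref{lem: psuedo cycle pumping} pumps it into a legal cycle; as the pseudo-cycle is broadcast-free so is the pumped cycle, and Lemma~\ref{lem: edge types and cycles} yields that $e$ is \locr. For item~(ii), the forward direction combines Lemma~\ref{lem: edge types and cycles} (a \green edge lies on a legal cycle with broadcasts, hence on a legal pseudo-cycle with broadcasts) with the Spiral lemma, which upgrades this to a legal pseudo-cycle carrying \emph{exactly} $r$ broadcasts through $e$. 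Conversely, since $r \geq 1$ a legal pseudo-cycle with $r$ broadcasts has at least one broadcast, so pumping it gives a legal cycle with broadcasts through $e$, and Lemma~\ref{lem: edge types and cycles} makes $e$ \green.

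Item~(iii) is the crux. The forward direction is easy: Lemma~\ref{lem: edge types and cycles} gives a broadcast-free legal cycle $D$ through $e$ sitting inside a legal cycle $C$ \emph{with} broadcasts; every edge of $D$ occurs on $C$, hence is \green by item~(ii), and $D$ is itself a legal broadcast-free pseudo-cycle using only \green edges. For the converse I would pump the given pseudo-cycle into a legal broadcast-free cycle $D$ through $e$ whose edges are all \green; since broadcast-free paths stay within one component, $D$ lives in a single component $\puwd_j$, which lies on the loop of the lasso because its edges are \green. It then remains to realise $D$ as the inner, broadcast-free loop of an outer legal cycle that carries broadcasts, so that Lemma~\ref{lem: edge types and cycles} applies (equivalently, to build directly the run demanded by Definition~\ref{def:edge-types} in which $e$ recurs unboundedly often between two consecutive broadcasts).

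The main obstacle is exactly this embedding, and it resists a naive appeal to Composition: Lemma~\ref{lem: rb-system composition} synchronizes paths with the \emph{same} number of broadcasts, whereas $D$ has none. The way around it is to exploit item~(ii), just proved, so that each edge of $D$ lies on a legal pseudo-cycle with precisely $r$ broadcasts; composing these equal-broadcast pseudo-cycles produces a single legal pseudo-cycle $W$ with $r$ broadcasts containing all edges of $D$, and by the Loading lemma (Lemma~\ref{lem: loading}) $W$ can be arranged to carry, at the start of its broadcast-free segment containing $e$, a full copy of the source configuration $g$ of $D$. Inside that one broadcast-free segment I can splice in arbitrarily many traversals of $D$ on a dedicated group of processes---this \emph{is} legitimate Composition, since there are no broadcasts to synchronize within a segment---while $W$ transports that group once around the loop and returns it to a twin configuration, closing the construction into a legal cycle with $r$ broadcasts in which $D$ occurs as a broadcast-free sub-cycle through $e$; pumping and Lemma~\ref{lem: edge types and cycles} then give that $e$ is \good. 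Making this splicing-and-closing coherent---ensuring the dedicated $D$-group is in configuration $g$ at the right instant and returns to a twin after the $r$ broadcasts---is the one genuinely delicate step, everything else being a routine translation through pumping and Spiral. Item~(iv) then follows from items~(ii) and~(iii) as noted above.
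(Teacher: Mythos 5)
Items~(i), (ii), (iv) and the `only if' direction of item~(iii) of your argument are correct and coincide with the paper's: pumping (Lemma~\ref{lem: psuedo cycle pumping}), the fact that every cycle is a pseudo-cycle, and the Spiral lemma (Lemma~\ref{lem: pseudo cycle implies one with r broadcasts}) do all the work, and item~(iv) is indeed free once (ii) and (iii) are in place. The problem is the `if' direction of item~(iii), which you correctly identify as the crux but do not actually close. Your plan is to build a pseudo-cycle $W$ with $r$ broadcasts containing all edges of $D$, use Lemma~\ref{lem: loading} to place a copy of the source configuration $g$ of $D$ at the start of the broadcast-free segment of $W$ containing $e$, and splice traversals of $D$ into that segment on a dedicated group. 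Two things break here. First, Loading produces a \emph{run} from initial configurations; it does not let you adjoin extra processes to a pseudo-cycle, because the adjoined processes must themselves return to a twin of their starting states after the $r$ broadcasts of $W$, and nothing forces them to. Second, even granting the insertion, the dedicated $D$-group is carried through the $r$ broadcasts of $W$ and there is no reason it comes back to (a twin of) $g$; you flag this as ``the one genuinely delicate step'' but offer no mechanism for it. As stated, the construction does not yield a pseudo-cycle, so Lemma~\ref{lem: edge types and cycles} cannot be applied.

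The paper avoids this entirely with a figure-eight construction at a \emph{common basepoint}, which never sends the $D$-processes through a broadcast inside the witnessing inner cycle. Concretely: it suffices to exhibit one configuration $f$ and two pseudo-cycles from $f$, one ($C$) broadcast-free and containing $e$, the other ($D$) containing broadcasts; pumping each (Lemma~\ref{lem: psuedo cycle pumping}) gives genuine cycles $C',D'$ at $f$, and the concatenation $C'\cdot D'$ is a legal cycle with broadcasts containing the broadcast-free cycle $C'$ through $e$, which is exactly what Lemma~\ref{lem: edge types and cycles}(iii) requires. To get such an $f$: for each state $s$ in the support of $g$ pick an edge of $C_e$ with source $s$; it is \green, so Lemma~\ref{lem: edge types and cycles} gives a witnessing cycle with broadcasts starting with a process in $s$; normalize all these cycles to the same number of broadcasts via Lemma~\ref{lem: cycle implies one with K broadcasts} and compose $g(s)$ copies of each (Lemma~\ref{lem: rb-system composition}) into a single cycle $D$ whose start $f$ dominates $g$ pointwise. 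Since $C_e$ has no broadcasts, it can be simulated from $f$ with the surplus processes idle, giving $C$. This sidesteps both of the difficulties above, so I would recommend replacing your splicing step with this argument.
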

\begin{proof}

Item~\ref{lem: edge types and pseudo cycles: sgood} follows from item~\ref{lem: edge types and pseudo cycles: good} since \green edges are partitioned to \good and \sgood edges.

Items~\ref{lem: edge types and pseudo cycles: locr} and \ref{lem: edge types and pseudo cycles: green} follow immediately from Lemmas~\ref{lem: edge types and cycles} and~\ref{lem: psuedo cycle pumping} and the fact that every cycle is also a pseudo-cycle (for the `only if' direction of item~\ref{lem: edge types and pseudo cycles: green} use Lemma~\ref{lem: pseudo cycle implies one with r broadcasts} to obtain a pseudo-cycle with exactly $r$ broadcasts). The same fact, combined with Lemma~\ref{lem: edge types and cycles}, gives the 'only if' direction of item~\ref{lem: edge types and pseudo cycles: good}.

For the `if' direction of item~\ref{lem: edge types and pseudo cycles: good}, we claim that it is enough to show that: ($\dagger$) there is a configuration $f$ of $\puwdsys$ and two pseudo-cycles $C, D$ starting in $f$, such that $D$ contains broadcasts, and $C$ contains the edge $e$ and does not contain any broadcast edge. Indeed, by Lemma~~\ref{lem: psuedo cycle pumping}, one can pump $C,D$ to obtain cycles $C', D'$ starting and ending in $f$, and the `figure-eight' cycle obtained by concatenating $C'$ to $D'$ is (according to Lemma~\ref{lem: edge types and cycles}) a cycle witnessing that the edge $e$ is \good.

We now show that if $e$ appears on a pseudo-cycle $C_e$ satisfying item~\ref{lem: edge types and pseudo cycles: good} then $(\dagger)$ holds.
Let $g$ be the initial configuration of $C_e$ and let $H := \{ s \mid g(s) > 0 \}$ be the set of states of $\puwd$ for which there is some process in that state in $g$. Assume w.l.o.g. that there is no process that does not move on $C_e$ (such processes can simply be removed), and for every $s \in H$, let $e^s$ be some edge in $\puwd$ appearing in $C_e$ whose source is $s$. Observe that by item~\ref{lem: edge types and pseudo cycles: good} the edge $e^s$ is green, and apply Lemma~\ref{lem: edge types and cycles} to obtain a witnessing cycle $C_{e^s}$ starting with some process in $s$.
By Lemma~\ref{lem: cycle implies one with K broadcasts}, we can assume that the cycles $C_{e^s}$ thus obtained for all states in $H$ have the same number of broadcasts. Hence, by Lemma~\ref{lem: rb-system composition}, we can run together $g(s)$ copies of $C_{e^s}$, for all $s$ in $H$, in one cycle $D$. Let $f$ be the starting configuration of $D$. Note that $f(s) \geq g(s)$ for every state $s$ in $\puwd$, and obtain a pseudo-cycle $C$ starting in $f$ which simulates $C_e$ (since $C_e$ does not have any broadcast, processes that $f$ has in excess of $g$ can simply not move). The proof is complete by noting that $f, C$ and $D$ satisfy $\dagger$.
\end{proof}

\subsection{Vector Rendezvous Systems}
\label{subsec:cvrs}

We now formally introduce VRSs and CVRSs.
We recall that $k$ denotes the number of processes participating in a rendezvous, $\Actions$ denotes a finite set of {\em rendezvous actions}, and
$\Actionprts = \cup_{\msg{a} \in \Actions} \{ \msg{a}_1, \dots, \msg{a}_k\}$ denotes the \emph{rendezvous alphabet}.

Given a finite set $\states$, we can think of $\Rat^\states$ as the set of
rational vectors of dimension $|\states|$, and we use the elements of $\states$ as indices into these vectors.
We also use the standard operations of vector addition and scalar multiplication. Finally, we compare vectors point-wise, i.e., given $\config, \config' \in \Rat^\states$ we say that $\config \leq \config'$ iff $\config(s) \leq \config'(s)$ for all $s \in \states$.

\head{Continuous Vector Rendezvous System (CVRS)}
A \emph{continuous vector rendezvous system}  (CVRS) is a tuple $\vas = \tup{\Actionprts,\states,\transRel}$, where $\states$ is a finite set of states and $\transRel \subseteq \states \times \Actionprts \times \states$ is a finite set of \emph{transitions}.
The \emph{configurations} of $\vas$ are the vectors $\RatGEZ^\states$.
For a transition $\transition = (p,\sigma,q)$, we denote by $\rdz(\transition) = \sigma$ its rendezvous symbol,
by $\src(\transition) = p$ the \emph{source} state of $\transition$, and
by $\dst(\transition) = q$ the \emph{destination} state of $\transition$.
Also, we denote by $\outgoing(\transition) \in \{0,1\}^\states$ the vector that has a $1$ entry at index $p$ and zero entries otherwise, and by $\incoming(\transition) \in \{0,1\}^\states$ the vector that has a $1$ entry at index $q$ and zero entries otherwise.
We now define what it means for a CVRS to take a step.

\begin{definition} [Step of a CVRS]
\label{dfn:step}
Given a $k$-tuple $\transTuple = (\transition_1, \ldots,\transition_k)$ of transitions in $R$, the CVRS $\vas$ can \emph{step} with \emph{multiplicity} $\mult \in \RatGZ$ from a configuration $\config$ to a configuration $\config'$ using transitions $\transTuple = (\transition_1,\ldots,\transition_k)$,
 denoted $\config \xrightarrow{\transition_1,\ldots,\transition_k: \mult} \config'$, or $\config \xrightarrow{\transTuple: \mult} \config'$,
if:
\begin{enumerate}[(i)]
  \item there is an action $\msg{a} \in \Actions$ such that $\rdz(\transition_i) = \msg{a}_i$ for all $i \in [k]$,
  \item $\config \ge \mult \sum_{i=1}^{k} \outgoing(\transition_i)$, and
  \item $\config' = \config + \mult \sum_{i=1}^{k} \left( \incoming(\transition_i) - \outgoing(\transition_i) \right)$.
\end{enumerate}
A step is said to \emph{synchronize} on $\msg{a}$.
We say that a transition $\transition \in \transRel$ \emph{participates} in a step $\config \xrightarrow{\transition_1,\ldots,\transition_k: \mult} \config'$ if $\transition = \transition_i$ for some $i \in [k]$.
\end{definition}
A \emph{trace} of $\vas$ is a sequence of steps $\config_1 \xrightarrow{\transTuple_1: \mult_1} \config_2 \xrightarrow{\transTuple_2: \mult_2} \cdots \config_n$.
We say that a configuration $\config'$ is \emph{reachable} from configuration $\config$, denoted $\config \rightarrow^\star \config'$, if there is a trace $\config_1 \xrightarrow{\transTuple_1: \mult_1} \config_2 \xrightarrow{\transTuple_2: \mult_2} \cdots \config_n$, with $\config = \config_1$ and $\config' = \config_n$.

\head{Vector Rendezvous System (VRS)}
A \emph{vector rendezvous system (VRS)}, is a restriction of a CVRS, where the set of configurations is $\NatZero^\states$ and all steps are restricted to have multiplicity $\alpha = 1$.
Given any VRS $\vas = \tup{\Actionprts, \states, \transRel}$, the \emph{relaxation} of $\vas$ is the same tuple interpreted as a CVRS. Since a natural number is also rational, it is obvious that any VRS $\vas = \tup{\Actionprts,\states,\transRel}$ is also a CVRS. 

Note that every trace of a VRS is also a trace of its relaxation, and that the relaxation has more traces than the VRS since every rational multiple of a step in the VRS is a step in its relaxation.

\begin{remark} \label{rem: VRS for template}
Every R-template $\proctemp = \tup{\AP,\Actionprts,S,I,R,\lambda}$ defines a VRS $\vas = \tup{\Actionprts,\states,R}$
with the same set of rendezvous alphabet, states, and transitions.
These two systems are closely related.
Intuitively, $\vas$ is an abstraction of $\psys$ in the sense that it does not keep track of the state of every individual process, but only keeps track of the number of processes in every state.
More formally, every configuration $f \in \psys$ induces a configuration $\config$ of $\vas$, called its \emph{counter representation}, defined as
$\config(s) := |f^{-1}|(s)$ for $s \in S$, i.e., $\config(s)$ is the number of processes of $f$ that are in state $s$.
Furthermore, $\config \xrightarrow{t_1, \cdots, t_k} \config'$ is a step of $\vas$ if and only if there is a global transition $(f,\sigma,f')$ in $\psys$ such that $\config$ and $\config'$ are the counter representations of $f$ and $f'$ respectively, and $t_1, \cdots, t_k$ are exactly the rendezvous edges taken by the $k$ active processes in the transition $(f,\sigma,f')$.
\end{remark}

We now define operations for manipulating traces.
Given a trace $\trace := \config_1 \xrightarrow{\transTuple_1: \mult_1} \config_2 \xrightarrow{\transTuple_2: \mult_2} \cdots \config_n$ of a CVRS $\vas$, we define the following two operations:
\begin{enumerate}
  \item Multiplication by a scalar $0 < \multAlt$: Let $\multAlt \otimes \trace$ be the trace $\multAlt \config_1 \xrightarrow{\transTuple_1: \multAlt \mult_1} \multAlt \config_2 \xrightarrow{\transTuple_2: \multAlt \mult_2} \cdots \multAlt \config_n$.
  \item Addition of a constant configuration $\config$: let $\config \oplus \trace$ be the trace $\config + \config_1 \xrightarrow{\transTuple_1: \mult_1} \config + \config_2 \xrightarrow{\transTuple_2: \mult_2} \cdots \config + \config_n$.
\end{enumerate}
It is not hard to see, by consulting the definition of a step, that $\multAlt \otimes \trace$ and $\config \oplus \trace$, are indeed traces of $\vas$.
Note, however, that multiplying by a scalar $\multAlt < 0$ would not yield a trace, and that adding a vector $\config$ that is not a configuration (i.e., which has negative coordinates) may sometimes also not yield a trace --- either because intermediate points may not be configurations (due to having some negative coordinates), or since condition (ii) in the definition of a step (Definition~\ref{dfn:step}) is violated.
Finally, traces in CVRSs have a \emph{convexity property} that states that by taking a fraction $0 < \multAlt < 1$ of each step of a trace $\config \rightarrow^\star \config'$ one obtains a trace from $\config$ to the convex combination $(1-\multAlt)\config + \multAlt\config'$:
\begin{proposition}[convexity]\label{prop:convexity}
Let $\trace := \config_1 \xrightarrow{\transTuple_1: \mult_1} \config_2 \xrightarrow{\transTuple_2: \mult_2} \cdots \config_n$
be a trace of a CVRS $\vas$, and let $0 < \multAlt < 1$ be rational.
Define configurations $\config_i' := \multAlt\config_i + (1-\multAlt)\config_1$ for $1 < i \leq n$.
Then $\trace' := \config_1 \xrightarrow{\transTuple_1:\multAlt\mult_1} \config_2' \xrightarrow{\transTuple_2:\multAlt\mult_2} \cdots \config_n'$ is a trace of $\vas$.
\end{proposition}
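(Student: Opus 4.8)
The plan is to verify directly, one step at a time, that each step of $\trace'$ satisfies the three conditions of Definition~\ref{dfn:step}. Fix an index $1 \le j < n$ and consider the corresponding step $\config_j \xrightarrow{\transTuple_j:\mult_j} \config_{j+1}$ of the original trace, where $\transTuple_j = (\transition_1,\dots,\transition_k)$. I would abbreviate the outgoing vector $O_j := \sum_{i=1}^{k} \outgoing(\transition_i)$ and the net-effect vector $\Delta_j := \sum_{i=1}^{k}\bigl(\incoming(\transition_i) - \outgoing(\transition_i)\bigr)$. Validity of the original step then gives us: (i) the $\transition_i$ synchronize on a common action $\msg{a}$; (ii) $\config_j \ge \mult_j O_j$; and (iii) $\config_{j+1} = \config_j + \mult_j \Delta_j$. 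The goal is to show that the scaled step $\config_j' \xrightarrow{\transTuple_j : \multAlt\mult_j} \config_{j+1}'$ of $\trace'$ meets the same three conditions.

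Conditions (i) and (iii) are the easy part. Condition (i) is inherited for free, since $\trace'$ reuses the very same transition tuples $\transTuple_j$, so the synchronization requirement on the $\transition_i$ is unchanged. For condition (iii) I would simply substitute the definitions and use linearity: $\config_{j+1}' = \multAlt\config_{j+1} + (1-\multAlt)\config_1 = \multAlt(\config_j + \mult_j\Delta_j) + (1-\multAlt)\config_1 = \config_j' + \multAlt\mult_j\Delta_j$, which is exactly condition (iii) for the scaled multiplicity $\multAlt\mult_j$. This same computation also covers the boundary case $j=1$, where $\config_1' = \config_1$ by convention.

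The one place that genuinely uses the structure of a configuration — and hence the only real content of the argument, which I expect to be the crux — is condition (ii), the enabledness requirement. Here I would write $\config_j' = \multAlt\config_j + (1-\multAlt)\config_1$ and bound it below using the original enabledness $\config_j \ge \mult_j O_j$ together with the facts that $\config_1 \in \RatGEZ^\states$ is nonnegative and $1-\multAlt > 0$: this yields $\config_j' \ge \multAlt\mult_j O_j + (1-\multAlt)\config_1 \ge \multAlt\mult_j O_j$, so the scaled step is enabled. Finally, I would note that $\multAlt\mult_j \in \RatGZ$ since $\multAlt,\mult_j>0$, and that each $\config_j'$ is itself a legitimate configuration because it is a nonnegative combination of the nonnegative vectors $\config_j$ and $\config_1$. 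Since $j$ was arbitrary, concatenating these verified steps shows $\trace'$ is a trace of $\vas$, completing the proof.
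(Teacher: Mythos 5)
Your proof is correct and is essentially the paper's argument unfolded: the paper proves this in one line by observing that $\trace'$ equals $((1-\multAlt)\config_1)\oplus(\multAlt\otimes\trace)$, where $\otimes$ and $\oplus$ are the scalar-multiplication and constant-addition operations on traces defined just above the proposition, and the direct verification of conditions (i)--(iii) that you carry out (with the nonnegativity of $(1-\multAlt)\config_1$ doing the work for enabledness) is exactly what justifies those operations preserving trace-hood.
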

\begin{proof}

Simply observe that $\trace'$ is the trace $((1- \multAlt) \config_1) \oplus (\multAlt \otimes \trace)$.
\end{proof}

In the following two lemmas we combine Remark~\ref{rem: VRS for template} with Lemma~\ref{lem: edge types and pseudo cycles} in order to rephrase the characterization of edge types in terms of the existence of certain CVRSs traces.
We begin by characterizing the \locr edges and the \good edges (relative to \green edges):

\begin{lemma}
\label{lem:locr-cvrs-characterization}
An edge $e$ of $\puwd$ is \locr iff $e$ participates in a step of a cyclic trace of some component $\puwd_i$ considered as CVRS, i.e., iff $e$ participates in a step of a trace $\trace := \config \rightarrow^\star  \config$ of CVRS $\puwd_i$.
Moreover, $e$ is \good iff $\trace$ uses only \green edges.
\end{lemma}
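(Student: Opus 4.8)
The plan is to combine the pseudo-cycle characterization of Lemma~\ref{lem: edge types and pseudo cycles} with the counter-abstraction correspondence of Remark~\ref{rem: VRS for template}. The key observation is that each component $\puwd_i$ contains no broadcast edges, hence is an R-template that induces a VRS whose relaxation is exactly the CVRS referred to in the statement. Moreover, by Remark~\ref{rem: broadcast component} a legal path with no broadcasts cannot move between components, so any legal pseudo-cycle without broadcasts on which $e$ appears lives entirely inside the single component $\puwd_i$ containing $e$; this pins down which $i$ to use.

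\emph{Forward direction.} Suppose $e$ is \locr. By Lemma~\ref{lem: edge types and pseudo cycles}(\ref{lem: edge types and pseudo cycles: locr}) there is a legal pseudo-cycle $C_e$ of $\puwdsys$, without broadcasts, on which $e$ appears, and by the above it lies in $\puwd_i$. Applying the counter abstraction of Remark~\ref{rem: VRS for template} to $C_e$ yields a trace $\config_1 \rightarrow^\star \config_m$ of the VRS induced by $\puwd_i$, where each $\config_j$ is the counter representation of the corresponding configuration of $C_e$. Since $\src(C_e)$ and $\dst(C_e)$ are twins (Definition~\ref{dfn: pseudo-cycles}), they have identical counter representations, so $\config_1 = \config_m$ and the trace is cyclic. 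As every VRS trace is a trace of its relaxation, this is a cyclic trace $\config \rightarrow^\star \config$ of the CVRS $\puwd_i$, and $e$ participates in one of its steps precisely because $e$ appears on $C_e$.

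\emph{Backward direction.} Suppose $e$ participates in a step of a cyclic CVRS trace $\trace := \config \rightarrow^\star \config$ of $\puwd_i$. The goal is to extract from $\trace$ an integral, unit-multiplicity (i.e.\ VRS) cyclic trace in which $e$ still participates, whence Remark~\ref{rem: VRS for template} reconstructs a legal pseudo-cycle of $\puwdsys$ without broadcasts through $e$, so $e$ is \locr by Lemma~\ref{lem: edge types and pseudo cycles}(\ref{lem: edge types and pseudo cycles: locr}). First I would clear denominators: picking a positive integer $\multAlt$ that is a common multiple of all denominators occurring in the finitely many configurations and multiplicities of $\trace$, the scaled trace $\multAlt \otimes \trace$ is again a trace, is still cyclic, uses exactly the same transitions, and now has all configurations in $\NatZero^\states$ and all multiplicities in $\Nat$. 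The remaining step is to replace each step of integer multiplicity $m$ using tuple $\transTuple = (\transition_1,\dots,\transition_k)$ by $m$ consecutive unit-multiplicity steps using $\transTuple$: this is legitimate since condition (ii) of Definition~\ref{dfn:step} gives $\config \ge m \sum_{i} \outgoing(\transition_i)$ at the start of the step, and after $j < m$ unit steps the configuration is $\config + j\sum_{i}(\incoming(\transition_i)-\outgoing(\transition_i))$, which still dominates $\sum_{i}\outgoing(\transition_i)$ because $\config \ge m\sum_{i}\outgoing(\transition_i) \ge (j+1)\sum_{i}\outgoing(\transition_i)$ and the incoming contributions are nonnegative. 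This produces a VRS cyclic trace, which Remark~\ref{rem: VRS for template} turns into a pseudo-cycle of $\sysinst{n}$ with $n = \sum_{s}\config(s)$ processes that lies in $\puwd_i$, uses no broadcast, and traverses $e$.

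\emph{Moreover-part, and the main obstacle.} For the \good claim I would observe that both reconstructions preserve the set of edges traversed: the participating transitions of the (C)VRS trace are exactly the edges appearing on the corresponding pseudo-cycle. Hence ``$\trace$ uses only \green edges'' matches ``$C_e$ uses only \green edges'', and the equivalence follows directly from Lemma~\ref{lem: edge types and pseudo cycles}(\ref{lem: edge types and pseudo cycles: good}). The main obstacle is the backward direction's passage from a rational CVRS trace to an integral VRS cyclic trace: the scaling itself is routine once one invokes the $\otimes$ operation, but care is needed to check that the multiplicity-$m$-to-unit-steps decomposition never violates the enabledness condition~(ii), and that both cyclicity and the occurrence of $e$ survive the two transformations.
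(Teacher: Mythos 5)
Your proof is correct and follows essentially the same route as the paper's: the forward direction passes from the witnessing broadcast-free legal pseudo-cycle of Lemma~\ref{lem: edge types and pseudo cycles} to a cyclic VRS trace via the counter abstraction of Remark~\ref{rem: VRS for template}, and the backward direction clears denominators with a common multiple, splits integer-multiplicity steps into unit steps, and reads the resulting VRS trace back as a pseudo-cycle. Your explicit verification that the unit-step decomposition preserves the enabledness condition~(ii) of Definition~\ref{dfn:step} is a detail the paper leaves implicit, but it is the same argument.
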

\begin{proof}
We start with the `only if' direction.
By Lemma~\ref{lem: edge types and pseudo cycles} there is a legal pseudo-cycle $C$, which contains $e$, and which does not contain broadcasts.
Because $C$ does not contain broadcasts, we have that all configurations of $C$ are in some component $\puwd_i$ of $\puwd$.
Let $f$ be the initial configuration of $C$, and let $\config$ be its counter representation. By Remark~\ref{rem: VRS for template}, $C$ induces
a trace $\trace := \config \rightarrow^\star  \config$ in the VRS corresponding to $\puwd_i$, and $e$ participates in a step of $\trace$.
As every VRS is a CVRS the claim follows.

For the other direction, let $e$ participate in a step of some trace $\trace := \config \rightarrow^\star \config$ of the CVRS corresponding to $\puwd_i$.
Let $x$ be the least common multiple of all the denominators that appear in the multiplicities of any step, or any coordinate of any configuration of $\trace$.
Consider the trace $x \otimes \trace :=
x \config \rightarrow^\star x \config$.
Observe that all configurations on this trace are in $\NatZero^{\states_i}$, and that all steps on it are taken with an integer multiplicity.
By replacing every step that uses a multiplicity $y \in \Nat$ with $y$ consecutive steps each with multiplicity $1$, we obtain a trace $\traceAlt := x \config_i \rightarrow^\star x \config'_i$ in $\puwd_i$ considered as VRS.
By Remark~\ref{rem: VRS for template}, there is a corresponding pseudo-cycle $C$ of $\puwd_i$, and $e$ appears on $C$.
The claim then follows from Lemma~\ref{lem: edge types and pseudo cycles}.
\end{proof}

We now turn to characterizing the \green edges.

For the statement of the lemma, recall (see Section~\ref{sec:solving finite}) that the template $\puwd$ is built from component templates arranged in a lasso structure. Let $\loopindices := \{n, n+1, \ldots, n+r-1=m\}$ be the set of indices of the components on the noose of $\puwd$, and for every $i \in \loopindices$ define $\suc{i}$ (resp. $\pre{i}$) to be the component number immediately following $i$ (resp. preceding $i$) along the noose. Note that, in particular, $\suc{n+r-1} = n$ and $\pre{n} = n+r-1$.

\begin{lemma}
\label{lem:green-cvrs-characterization}
An edge $e$ of $\puwd$ is \green iff, for every $i \in \loopindices$: (i) there is a subset $\T_i$ of the transitions of $\puwd_i$, and a subset $\B_i$ of the broadcast transitions from $\puwd_i$ to $\puwd_{\suc{i}}$; (ii) there are coefficients $\mult_\transition \in \RatGZ$ for every $\transition \in \B_i$; and (iii) there is a trace $\trace_i := \config_i \rightarrow^\star \config_i'$ of the CVRS of $\puwd_i$ using exactly the transitions $\T_i$; such that: $e \in \cup_{i \in \loopindices} (\B_i \cup \T_i$), and for all $q \in S^\uwd_i$ the following holds:
\begin{enumerate}[(1)]
  \item $\config_i(q) = \sum_{\transition \in \B_{\pre{i}}, \dst(\transition) = q } \mult_\transition$.
  \item $\config'_i(q) = \sum_{\transition \in \B_i, \src(\transition) = q } \mult_\transition$.
\end{enumerate}
\end{lemma}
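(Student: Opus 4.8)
The plan is to derive the statement directly from the pseudo-cycle characterization of Lemma~\ref{lem: edge types and pseudo cycles}, item~\ref{lem: edge types and pseudo cycles: green}: an edge $e$ is \green iff it lies on a legal pseudo-cycle of $\puwdsys$ with exactly $r$ broadcasts, which (by Lemma~\ref{lem: pseudo cycle implies one with r broadcasts}, the Spiral lemma, together with Remark~\ref{rem: pseudo cycle start}) may be assumed to start in a configuration in $\puwd_n$. Such a pseudo-cycle visits the noose components $\puwd_n, \puwd_{\suc{n}}, \dots, \puwd_m$ in order, separated by its $r$ broadcasts, and closes up after the $r$-th broadcast back into $\puwd_n$. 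The linear conditions in the statement are exactly the ``counter-level'' (Remark~\ref{rem: VRS for template}) description of such a loop: the trace $\trace_i := \config_i \rightarrow^\star \config_i'$ records the rendezvous segment performed inside $\puwd_i$, and the coefficients $\mult_\transition$ for $\transition \in \B_i$ record how many processes cross each broadcast edge; conditions (1) and (2) are precisely the flow-conservation equations saying that the broadcast out of $\puwd_{\pre{i}}$ transforms $\config_{\pre{i}}'$ into $\config_i$ (in particular, for $i = n$ this uses $\B_m$ and forces the loop to close). So the whole proof is an equivalence between ``a legal pseudo-cycle through $e$ with $r$ broadcasts exists'' and ``this system of CVRS traces and rational coefficients exists.''

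For the forward direction I would take such a pseudo-cycle $C$ starting in $\puwd_n$, cut it at its $r$ broadcasts into segments lying in the successive components, and for each $i \in \loopindices$ let $\T_i$ be the rendezvous edges used in the $i$-th segment and $\config_i, \config_i'$ the counter representations of its endpoints; Remark~\ref{rem: VRS for template} turns the $i$-th segment into a VRS trace $\trace_i : \config_i \rightarrow^\star \config_i'$, which is in particular a CVRS trace. Let $\B_i$ be the broadcast edges from $\puwd_i$ that are actually taken, and set $\mult_\transition \in \Nat \subseteq \RatGZ$ to be the number of processes crossing $\transition$; then conditions (1) and (2) hold by counting, and the closure condition (1) at $i=n$ follows because $\src(C)$ and $\dst(C)$ are twins, hence have the same counter representation $\config_n$. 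Since $e$ appears on $C$ it lies in some $\T_i$ or $\B_i$, giving $e \in \bigcup_{i} (\B_i \cup \T_i)$.

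For the reverse direction I would assume the data $\T_i, \B_i, \mult_\transition, \trace_i$ is given and manufacture the pseudo-cycle by clearing denominators, exactly as in the proof of Lemma~\ref{lem:locr-cvrs-characterization}. Let $x$ be the least common multiple of all denominators occurring in the $\mult_\transition$, in the step multiplicities of the traces $\trace_i$, and in the coordinates of $\config_i, \config_i'$; scaling each trace to $x \otimes \trace_i$ (a CVRS trace by the scalar-multiplication operation) and splitting every integer-multiplicity step into unit steps yields, via Remark~\ref{rem: VRS for template}, a process path in $\puwd_i$ from a configuration with counter representation $x\config_i$ to one with representation $x\config_i'$. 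I would then stitch the $r$ segments into a single path of $\puwdsys$ by inserting, between segment $i$ and segment $\suc{i}$, one global broadcast transition that sends $x\mult_\transition$ processes along each broadcast edge $\transition \in \B_i$; condition (2) guarantees every process in the end configuration $x\config_i'$ is assigned a broadcast edge, and condition (1) for $\suc{i}$ guarantees the result has counter representation $x\config_{\suc{i}}$, matching the start of the next segment (process identities may be relabelled freely, since the VRS correspondence constrains only counter representations). Traversing the loop once starts and ends at counter representation $x\config_n$, so the start and end configurations are twins; this is a legal pseudo-cycle with exactly $r$ broadcasts starting in $\puwd_n$ and passing through $e$, so Lemma~\ref{lem: edge types and pseudo cycles}\ref{lem: edge types and pseudo cycles: green} gives that $e$ is \green.

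The main obstacle I anticipate is the reverse direction's bookkeeping: making the broadcast ``glue'' between components precise, i.e.\ checking that the flow-conservation equations (1)--(2) are exactly what is needed for the scaled multiplicities $x\mult_\transition$ to define a legitimate global broadcast transition (every process leaves along some edge of $\B_i$) whose effect is again the counter representation demanded at the start of the next component, and that iterating this around the noose closes the loop rather than merely returning a twin of some intermediate configuration. The forward direction, by contrast, is essentially a decomposition-and-count argument once the Spiral lemma has normalized the pseudo-cycle to $r$ broadcasts.
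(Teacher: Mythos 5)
Your proposal is correct and follows essentially the same route as the paper's proof: both directions reduce to the pseudo-cycle characterization of Lemma~\ref{lem: edge types and pseudo cycles}, the forward direction cuts the pseudo-cycle at its $r$ broadcasts and reads off $\T_i$, $\B_i$, $\mult_\transition$ via the counter representation of Remark~\ref{rem: VRS for template}, and the reverse direction clears denominators with an lcm exactly as in Lemma~\ref{lem:locr-cvrs-characterization} and stitches the scaled VRS segments together with global broadcasts whose per-edge multiplicities are $x\mult_\transition$. Your explicit remark that closure at the starting component follows from the twin condition is a point the paper leaves implicit, but the argument is the same.
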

\begin{proof}
By Lemma~\ref{lem: edge types and pseudo cycles}, it is enough to show that $e$ appears on a pseudo-cycle $C$, with $r$ broadcasts, of $\puwdsys$ iff the conditions of the lemma hold.
First, assume that such a pseudo-cycle $C$ exists. For every $i \in I$, let $f_i$ (resp. $f'_i$) be the configurations of $C$ in $\puwd_i$ just after (resp. just before) a broadcast, let $C_i$ be the sub-path of $C$ from $f_i$ to $f'_i$, let $\T_i$ be the process transitions appearing on $C_i$.
By Remark~\ref{rem: VRS for template}, $C_i$ induces a trace $\trace_i := \config_i \rightarrow^\star \config_i'$ of the CVRS of $\puwd_i$, using exactly the transitions in $\T_i$, where $\config_i,\config'_i$ are the counter representations of $f_i,f'_i$.
Consider now the global broadcast transition of $C$ from $f'_i$ to $f_{\suc{i}}$, let $\B_i$ be the set of local broadcast transitions that participate in it, and for every $\transition \in \B_i$ let $\mult_\transition$ to be the number of processes in $f'_i$ that take $\transition$. It is easy to see that the conditions of the lemma are satisfied.

For the other direction, assume that sets $\T_i$ and $\B_i$, coefficients $\mult_\transition$, and traces $\trace_i$ satisfying the conditions of the lemma exist.
Using the same argument as in the proof of Lemma~\ref{lem:locr-cvrs-characterization} --- by taking $x$ to be the least common multiple of all the denominators that appear in the multiplicities of any step or any coordinate of any configuration of these traces, as well as any of the coefficients $\mult_\transition$ --- we can obtain from each trace $\trace_i$ a trace $\traceAlt_i := x \config_i \rightarrow^\star x \config'_i$ in the VRS corresponding to $\puwd_i$.
We can now build the required pseudo-cycle $C$ in $r$ rounds. We start by (arbitrarily) picking some $i_i \in \loopindices$, and a configuration $f_{i_1}$ whose counter representations is $x \config_{i_1}$.
At each round $j$, we extend $C$ by concatenating a path (obtained from $\traceAlt_{i_j}$ by Remark~\ref{rem: VRS for template}) from $f_{i_j}$ to a configuration $f'_{i_j}$ whose counter representation is $x \config'_{i_j}$; we then append a global broadcast transition in which each edge $t \in \B_{i_j}$ is taken by exactly $x \mult_\transition$ processes, resulting in a configuration $f_{i_{j+1}}$ whose counter representation is $x \config_{\suc{i_j}}$ (this is possible by equations (1) and (2) in the conditions of the lemma).

It follows that, at the end of the last round, $C$ is a path in $\puwdsys$ from $f_{i_1}$ to $f_{i_{r+1}}$, whose counter representations are $x \config_{i_1}$ and $x \config_{\suc[r]{i_1}} = x \config_{i_1}$, respectively. Thus, $C$ is a pseudo-cycle. 
\end{proof}

In Lemmas~\ref{lem:locr-cvrs-characterization} and~\ref{lem:green-cvrs-characterization} we have
related edge-types to the existence of certain CVRS traces.
Developing a criterion for the existence of CVRS traces is the subject of the next subsection.
Having developed this characterization, we can replace the existence of CVRS traces with this criterion in the two lemmas mentioned.
This will then allow us to present our algorithms for deciding edge-types using linear programming.

\subsection{Characterization of Reachability in CVRSs}
\label{subsec:characterization-reachability-cvas}

Our aim in this section is to develop a characterization of the existence of a trace between two configurations $\config, \config'$ in a CVRS which can be used as a basis for an efficient algorithm.
Looking at Definition~\ref{dfn:step} of a step, one can see that the task is very simple if $c'$ is reachable from $c$ in one step. However, in the general case, a trace from $\config$ to $\config'$ may have a very large number of steps, so we have to find a way to avoid reasoning about each step individually.
The following proposition is our starting point for summarizing a large number of steps by equations, and it follows immediately from the definitions of a step and a trace:

\begin{proposition}
\label{prop:equations}
Let $\vas = \tup{\Actionprts,\states,\transRel}$ be a CVRS, and let $\xi:= \config \rightarrow^\star \config'$ be a trace in it. For every $\transition \in \transRel$, let $\mult_\transition \in \RatGEZ$ be the sum of the multiplicities of all the steps in $\xi$ in which $\transition$ participates. Then:
\begin{enumerate}[(1)]
  \item $\config' = \config + \sum_{ \transition \in \transRel } \mult_\transition \left( \incoming(\transition) - \outgoing(\transition) \right)$; \label{it:1}
  \item For every rendezvous action $\msg{a} \in \Actions$, and every $i,j \in [k]$, we have that
        \[
            \sum_{\transition \in \transRel, \rdz(\transition) = \msg{a}_i } \mult_\transition = \sum_{\transition \in \transRel, \rdz(\transition) = \msg{a}_j } \mult_\transition
        \] \label{it:2}
\end{enumerate}
\end{proposition}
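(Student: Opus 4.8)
The plan is to prove both items at once by summing the single-step identity of Definition~\ref{dfn:step} along the whole trace and then regrouping the result by transition. Write $\xi$ explicitly as $\config_1 \xrightarrow{\transTuple_1:\mult_1} \config_2 \xrightarrow{\transTuple_2:\mult_2} \cdots \config_n$ with $\config_1 = \config$ and $\config_n = \config'$, and for each step $l$ let $\transTuple_l = (\transition_1^l,\dots,\transition_k^l)$ and let $\msg{a}^l \in \Actions$ be the action it synchronizes on.

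The one observation that makes everything go through cleanly is that a single transition participates \emph{at most once} in any given step: if $\transition$ sits at position $i$ of $\transTuple_l$ then $\rdz(\transition) = \msg{a}^l_i$, and since the symbols $\msg{a}_1,\dots,\msg{a}_k$ are pairwise distinct, the position $i$ is determined by $\rdz(\transition)$ alone. Hence $\mult_\transition$, defined as the total multiplicity of the steps in which $\transition$ participates, is exactly $\sum_{l\,:\,\transition \text{ participates in step } l} \mult_l$, with no overcounting inside any single step.

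For item~(1) I would telescope. By clause~(iii) of Definition~\ref{dfn:step}, each step satisfies $\config_{l+1} - \config_l = \mult_l \sum_{i=1}^k (\incoming(\transition_i^l) - \outgoing(\transition_i^l))$, so summing over $l = 1,\dots,n-1$ gives
\[
\config' - \config = \sum_{l=1}^{n-1} \mult_l \sum_{i=1}^k \bigl(\incoming(\transition_i^l) - \outgoing(\transition_i^l)\bigr).
\]
Interchanging the order of summation and collecting terms by the transition $\transition \in \transRel$ occurring in each $(l,i)$, the coefficient of $(\incoming(\transition) - \outgoing(\transition))$ is precisely $\sum_{l\,:\,\transition \text{ participates in step } l} \mult_l = \mult_\transition$ by the observation above, which is item~(1).

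For item~(2) I would fix $\msg{a} \in \Actions$ and $i \in [k]$ and evaluate $\sum_{\transition : \rdz(\transition) = \msg{a}_i} \mult_\transition$ by expanding each $\mult_\transition$ and swapping the two sums. Within a fixed step $l$ there is exactly one participating transition with rendezvous symbol $\msg{a}_i$ when $\msg{a}^l = \msg{a}$, and none otherwise, so the sum collapses to $\sum_{l\,:\,\msg{a}^l = \msg{a}} \mult_l$, the total multiplicity of the steps synchronizing on $\msg{a}$. As this value is manifestly independent of the coordinate $i$, it agrees for $i$ and $j$, giving item~(2). I do not expect any real obstacle: the only point needing care is the position-uniqueness remark, which both legitimises counting participation once per step and exhibits the per-action sum as a quantity that does not depend on $i$.
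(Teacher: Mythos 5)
Your proof is correct, and it takes the only natural route: unfold Definition~\ref{dfn:step}, telescope clause~(iii) along the trace for item~(1), and count participants per step for item~(2). The paper offers no written proof at all (it asserts the proposition ``follows immediately from the definitions''), so your write-up simply supplies those details; your observation that a transition can occupy at most one position of a step-tuple, because its rendezvous symbol $\msg{a}_i$ pins down the position, is exactly the point that makes the regrouping by $\mult_\transition$ exact rather than an overcount.
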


We will see that the inverse of Proposition~\ref{prop:equations} also holds, provided one adds some suitable conditions, as follows.
The equations of Proposition~\ref{prop:equations} ensure that the `accounting' in a trace is done correctly, i.e., that the transitions can be allocated to steps --- item~\ref{item:2}, corresponding to requirement (i) in Definition~\ref{dfn:step} --- and that added to the vector $\config$ they yield the vector $\config'$  --- item~\ref{item:1}, corresponding to requirement (iii) in that definition.
In order to extend Proposition~\ref{prop:equations} to a full characterization of reachability, we need to address the last element in the definition of a step (requirement (ii)) which states, for every state $s$, that $\config(s)$ is at least the multiplicity of the step times the number of transitions that participate in the step whose source is $s$. Observe that if $\config(s) > 0$ this condition can always be satisfied if $\mult$ is small enough; however, if $\config(s) = 0$, and there is some $i \in [k]$ such that $\src(\transition_i) = s$, then the condition is necessarily violated. The characterization we now develop will thus make a distinction between those states for which $\config(s) > 0$ and those for which $\config(s) = 0$.

We will use the following terminology:
For a vector $\config \in \RatGEZ^\states$, let the \emph{support} of $\config$ be the set $\positiveOp{\config} := \{ s \in \states \mid \config(s) > 0 \}$.
We will also make use of the following notation: for a set of transitions $\transRel$, we lift $\src$ and $\dst$ as follows: $\src(\transRel) = \{ \src(\transition) \mid \transition \in \transRel\}$ and $\dst(\transRel) = \{ \dst(\transition) \mid \transition \in \transRel\}$.

The following lemma states that we can summarize a large number of steps between two configurations in case the support of the two configurations is the same and no step increases the support:

\begin{lemma}
\label{lem:reachability-same-support}
Let $\vas = \tup{\Actionprts,\states,\transRel}$ be a CVRS.
Let $\config,\config' \in \RatGEZ^\states$ be configurations and let $\mult_\transition \in \RatGEZ$ be coefficients for every $\transition \in \transRel$ such that:
\begin{enumerate}[(1)]
  \item $\config' = \config + \sum_{ \transition \in \transRel } \mult_\transition \left( \incoming(\transition) - \outgoing(\transition) \right)$; \label{ii:1}
  \item For every rendezvous action $\msg{a} \in \Actions$, and every $i,j \in [k]$, we have that
        \[
\sum_{\transition \in \transRel, \rdz(\transition) = \msg{a}_i } \mult_\transition = \sum_{\transition \in \transRel, \rdz(\transition) = \msg{a}_j } \mult_\transition 
        \] \label{ii:2}
  \item $\positiveOp{\config} = \positiveOp{\config'}$; \label{ii:3}
  \item For $\transRel' := \{ \transition \in \transRel \mid \mult_\transition > 0 \}$ we have $\src(\transRel') \subseteq \positiveOp{\config'}$ and $\dst(\transRel') \subseteq \positiveOp{\config}$. \label{ii:4}
\end{enumerate}
Then, we have $\config \rightarrow^\star \config'$ and all transitions in $\transRel'$ participate in some step of this trace.
\end{lemma}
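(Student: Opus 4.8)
The plan is to exhibit an explicit trace realizing the firing vector $(\mult_\transition)_{\transition \in \transRel}$. The argument proceeds in two stages: first I would decompose the aggregate multiplicities into finitely many individual steps (a tuple of transitions together with a multiplicity each), and then I would schedule scaled-down copies of these steps so that the running configuration never strays far from the line segment joining $\config$ to $\config'$, which by conditions~(3) and~(4) stays safely inside the positive orthant on the relevant coordinates.

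For the decomposition, fix a rendezvous action $\msg{a} \in \Actions$. Condition~(2) guarantees that the total multiplicity $M_\msg{a} := \sum_{\rdz(\transition) = \msg{a}_i} \mult_\transition$ is the same for every role $i \in [k]$. I would then form steps greedily: while some transition of action $\msg{a}$ still has positive remaining multiplicity, pick for each role $i$ a transition $\transition_i$ with $\rdz(\transition_i) = \msg{a}_i$ and positive remaining multiplicity, emit the step $(\transTuple, m)$ with $\transTuple = (\transition_1, \ldots, \transition_k)$ and $m$ the minimum of these remaining multiplicities, and subtract $m$ from each. Since each emission decreases every role-total by the same amount $m$, the role-totals stay equal throughout (so that a transition is available in every role as long as any work remains), and each emission zeroes at least one transition, so the process halts after at most $|\transRel|$ steps. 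Repeating over all actions yields a finite list of abstract steps $\sigma_1, \ldots, \sigma_N$, each using only transitions of $\transRel'$, whose summed effect is $\sum_{\transition} \mult_\transition(\incoming(\transition) - \outgoing(\transition)) = \config' - \config$ by condition~(1).

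It remains to realize these abstract steps as a genuine trace, and this scheduling is the crux of the proof. Let $\positiveSet := \positiveOp{\config} = \positiveOp{\config'}$ (equal by~(3)). Condition~(4) says every transition of $\transRel'$ has both endpoints in $\positiveSet$, so no abstract step ever consumes from or produces into a coordinate outside $\positiveSet$; those coordinates therefore stay fixed. On $\positiveSet$ the point $(1-\theta)\config + \theta\config'$ is strictly positive for all $\theta \in [0,1]$ (a convex combination of two vectors positive on $\positiveSet$), so $\delta := \min_{s \in \positiveSet}\min(\config(s), \config'(s)) > 0$. I would then run the schedule in $L$ rounds, firing in round $\ell$ each $\sigma_j$ with scaled multiplicity $m_j/L$ in a fixed order; after $\ell$ complete rounds the configuration is exactly $\config + (\ell/L)(\config' - \config)$, a point of the segment, hence $\geq \delta$ on $\positiveSet$. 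Within a round the configuration deviates from its value at the round's start by at most $C/L$ in each coordinate, where $C := k\sum_j m_j$, since each sub-step changes a coordinate by at most $(m_j/L)k$. Choosing $L > 2C/\delta$ keeps every intermediate configuration above $\delta/2 > 0$ on all of $\positiveSet$; as the source of every fired transition lies in $\positiveSet$ and the token requirement $(m_j/L)\sum_i \outgoing(\transition_i) < C/L < \delta/2$, requirement~(ii) of Definition~\ref{dfn:step} is met at every sub-step and the result of each sub-step is non-negative. Thus the schedule is a valid trace from $\config$ to $\config'$, and since every $\transition \in \transRel'$ appears in some $\sigma_j$ it participates in some step. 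The main obstacle is precisely this last scheduling argument: a naive ordering of the full-multiplicity steps can momentarily drain a state to a negative count, and the remedy --- staying uniformly close to the segment $[\config,\config']$ by splitting into sufficiently many rounds, which is morally the convexity property of Proposition~\ref{prop:convexity} applied incrementally --- is exactly what conditions~(3) and~(4) are there to enable.
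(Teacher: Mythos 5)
Your proof is correct and follows essentially the same route as the paper's: a greedy decomposition of the aggregate multiplicities into finitely many steps using condition~(2), followed by splitting the schedule into $L$ interleaved rounds of $1/L$-scaled steps so that the running configuration stays uniformly close to the segment $[\config,\config']$, which is bounded away from zero on $\positiveOp{\config}$ by conditions~(3) and~(4). The only cosmetic differences are that you bound the distance to the coordinate hyperplanes by $\min_{s}\min(\config(s),\config'(s))$ rather than the Euclidean distance the paper uses, and you take the minimum remaining multiplicity over the chosen tuple rather than globally; both choices work equally well.
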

\begin{proof}
Equations~\ref{ii:1} and~\ref{ii:2} give us a good starting point for showing the existence of a trace from $\config$ to $\config'$.
Indeed, \ref{ii:2} allows us to group the transitions in $\transRel'$ into steps such that each transition $\transition \in \transRel'$ is taken with a combined multiplicity of $\mult_\transition$; and~\ref{ii:1} guarantees that concatenating all these steps will take us from $\config$ to $\config'$.
It remains to find a way to make sure each of these steps satisfies requirement $(ii)$ in the definition of a step (Definition~\ref{dfn:step}).
We begin, however, by ignoring this problem.
That is, we will consider \emph{quasi-steps} and \emph{quasi-traces}.
Formally, quasi-steps are like steps except that they are between arbitrary vectors in $\Rat^\states$ (as opposed to steps which were defined only for vectors in $\RatGEZ^\states$), and that they do not have to satisfy requirement $(ii)$ of Definition~\ref{dfn:step}.
We will denote quasi-steps using $\xLongrightarrow{}$.
A quasi-trace is simply a sequence of quasi-steps where the destination of each quasi-step in the sequence is the source of the next one.
We complete the proof by first constructing a quasi-trace $\traceAlt$ from $\config$ to $\config'$, and then showing how to turn $\traceAlt$ into a trace.

We construct $\traceAlt$ using the following algorithm. If $\config = \config'$ then we are done. Otherwise, at round $0$, set $\traceAlt$ to the empty quasi-trace, and $v_1 := \config$ be the source of the first quasi-step to be constructed. At round $\geq 1$, do the following:
\begin{enumerate}
  \item pick a transition $\transition$ whose $\mult_\transition$ is minimal among all edges in $\transRel'$; let $\msg{a}_i$ be $\rdz(\transition)$;
  \item let $\transition_i := t$, and for every $j \in [k] \setminus \{i\}$ pick some $\transition_j \in \transRel'$ with $\rdz(\transition_j) = \msg{a}_j$;
  \item extend $\traceAlt$ with the quasi-step $v_i \xLongrightarrow{\transition_1,\ldots,\transition_k: \mult_\transition} v_{i+1}$; 
  \item for every $j \in [k]$, subtract $\mult_\transition$ from $\mult_{\transition_j}$, and if the resulting $\mult_{\transition_j}$ is zero then remove $\transition_j$ from $\transRel'$. If $\transRel' = \emptyset$ stop and output $\traceAlt$.
\end{enumerate}

Since at least one transition (namely the transition $t$ picked in step $1$) is removed at the end of each round, the algorithm stops after at most $|\transRel'|$ rounds. It is also easy to see that item~\ref{ii:2} of the lemma is an invariant of the algorithm (i.e., it holds using the updated values at the end of each round). This, together with our choice of $t$, ensures that we are always able to find the required transitions in the second step of every round.
Finally, observe that the resulting final quasi-trace $\traceAlt$ uses every transition $\transition$ in (the initial) $\transRel'$ with a combined multiplicity which is exactly $\mult_\transition$ and thus, by item~\ref{ii:1} of the lemma, we have that $\traceAlt$ ends in $\config'$ as promised.
It remains to show how to derive from $\traceAlt$ a trace from $\config$ to $\config'$.

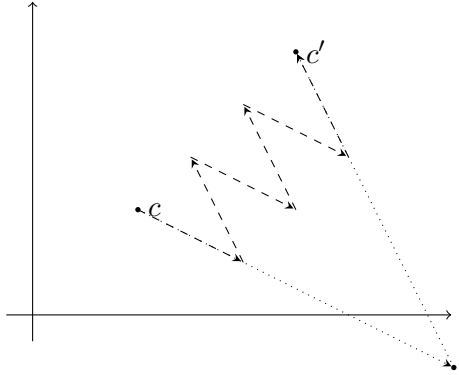
\begin{figure}[!h]
\begin{tikzpicture}[->,shorten >=1pt,auto,node distance=3cm,
    y=.7cm,
    x=.7cm,
    font=\sffamily,
    vector/.style={-stealth,dotted},
	vector guide/.style={dashed,red,thick}
    ]
	\draw[->] (-0.5,0) -- coordinate (x axis mid) (8,0);
    	\draw[->] (0,-0.5) -- coordinate (y axis mid) (0,6);

    \coordinate (P1) at (2,2); 
    \coordinate (P2) at (8,-1); 
    \coordinate (P3) at (5,5); 

    \coordinate (P11) at (4,1); 
    \coordinate (P12) at (3,3); 
    \coordinate (P13) at (5,2); 
    \coordinate (P14) at (4,4); 
    \coordinate (P15) at (6,3); 

    \fill (P1) circle (1pt);
    \fill (P2) circle (1pt);
    \fill (P3) circle (1pt);

    \draw[vector] (P1) node [right] {$\config$} -- (P2) node [left] {};
    \draw[vector] (P2) node [left] {} -- (P3) node [right] {$\config'$};

    \draw[vector, dashed] (P1) -- (P11);
    \draw[vector, dashed] (P11) -- (P12);
    \draw[vector, dashed] (P12) -- (P13);
    \draw[vector, dashed] (P13) -- (P14);
    \draw[vector, dashed] (P14) -- (P15);
    \draw[vector, dashed] (P15) -- (P3);

\end{tikzpicture}
\caption{\label{fig:2d} A graphical representation in two dimensions of the replacement of quasi-steps (in dotted arrows) by steps (in dashed arrows). In this example $h = 3$ and $m = 2$.}

\end{figure}

Let $m$ be the number of quasi-steps in $\traceAlt$. For $j \in [m]$, denote the transitions taken in the $j$-th quasi-step of $\traceAlt$ by $\transition^j_1,\ldots,\transition^j_k$, denote the multiplicity used by $\mult^j$, and the destination reached by $v_j$ (also set $v_0 := \config$). Consider $\config = v_0, v_1, \ldots, v_m = \config'$ as points in the $|\states|$-dimensional Euclidean space.
Let $\positiveSet := \positiveOp{\config} = \positiveOp{\config'}$ (by item~\ref{ii:3} of the lemma).
We note that every point on the line segment $L$ between $\config$ and $\config'$ also has support $\positiveSet$ (being a convex combination of $\config$ and $\config'$). It follows that $L$ does not touch any of the hyperplanes corresponding to the states in $\positiveSet$ (i.e., the hyperplanes defined by the equations $v(s) = 0$, for $s \in \positiveSet$). Let $x > 0$ be the minimum of the Euclidean distances between $L$ and any of these hyperplanes\footnote{Recall that in a Euclidean space the distance between a hyperplane and a line-segment, that does not touch it, is positive.},
let $y$ be the maximum over $j \in {m}$ of the multiplicities $\mult^j$, and let $h \in \Nat$ be large enough to satisfy $\frac{m y k}{h} \leq x$.

Construct a new quasi-trace $\traceAlt'$ from $\config$ to $\config'$ by starting at $\config$, and repeating $h$ times the following: for $j = 1, \dots, m$, extend $\traceAlt'$ by taking a quasi-step using the transitions $\transition^j_1,\ldots,\transition^j_k$ with multiplicity $\mult^j / h$. This construction is illustrated in Figure~\ref{fig:2d}.
We claim that $\traceAlt'$ is actually a trace, i.e., that for every $s \in \states$, and every point $v_{lm + j}$ on $\traceAlt'$ (where $0 \leq l < h$, and $0 \leq j < m$) we have that condition (ii) in Definition~\ref{dfn:step} is satisfied, namely, that $v_{lm + j}(s) \geq o_{lm + j}(s)$, where $o_{lm + j}$ is the vector $\frac{\mult^j}{h} \sum_{i=1}^{k} \outgoing(\transition^j_i)$.
Observe that this not only guarantees that each quasi-step is a step, but also that all points on $\traceAlt'$ are configurations (i.e., have no negative coordinates). Consider first the case of $s \in \states \setminus \positiveSet$. By item~\ref{ii:4} of the lemma, $\src(\transRel') \subseteq \positiveSet$ and $\dst(\transRel') \subseteq \positiveSet$, and thus $v_{lm + j}(s) = \config(s) = 0 = o_{lm + j}(s)$. For $s \in \positiveSet$, note that $o_{lm + j}(s) \leq \frac{yk}{h}$, and thus it is enough to show that $v_{lm + j}(s) - \frac{yk}{h} \geq 0$. Observe that, for every $0 \leq l < h$, the point reached after taking $lm$ quasi-steps of $\traceAlt'$ is $\config + \frac{l}{h} (\config' - \config)$, which is a point on $L$. Hence, by our choice of $x$, we have that $v_{lm}(s) \geq x$. Also note that each quasi-step on $\traceAlt'$ can decrease $v_{lm}(s)$ by at most $\frac{yk}{h}$, thus, for every $0 \leq j < m$, we have that $v_{lm + j}(s) - \frac{yk}{h} \geq  v_{lm}(s) - \frac{m y k}{h} \geq x -x = 0$. 
\end{proof}

The above lemma already allows us to characterize the \locr edges, as well as the \good edges (relative to \green edges):

\begin{lemma}
\label{lem:locr-decidability}
Given a component $\puwd_i$ of $\puwd$, an edge $e$ of $\puwd_i$ is \locr iff there are coefficients $\mult_\transition \in \RatGEZ$ for every $\transition \in R^\uwd_i$, such that:
\begin{enumerate}[(1)]
  \item $0 = \sum_{ \transition \in R^\uwd_i } \mult_\transition \left( \incoming(\transition) - \outgoing(\transition) \right)$;
  \item For every rendezvous action $\msg{a} \in \Actions$, and every $j,h \in [k]$, we have that \\
         $\sum_{\transition \in R^\uwd_i, \rdz(\transition) = \msg{a}_j } \mult_\transition = \sum_{\transition \in R^\uwd_i, \rdz(\transition) = \msg{a}_h } \mult_\transition$;
  \item $\mult_{e} > 0$.
  \end{enumerate}
\end{lemma}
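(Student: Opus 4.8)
The plan is to derive this linear-algebraic characterization directly from the CVRS reachability tools already in place: Lemma~\ref{lem:locr-cvrs-characterization}, which reduces being \locr to the existence of a \emph{cyclic} CVRS trace through $e$ in some component $\puwd_i$; Proposition~\ref{prop:equations}, which is the ``only if'' half of a reachability characterization; and Lemma~\ref{lem:reachability-same-support}, which supplies the matching ``if'' half. The point is that conditions (1) and (2) are exactly the flow-conservation and rendezvous-balancing equations of these results specialized to the case of a \emph{cyclic} trace (source equal to destination), so the whole statement amounts to transporting that reachability criterion through Lemma~\ref{lem:locr-cvrs-characterization}.

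For the ``only if'' direction I would suppose $e$ is \locr. By Lemma~\ref{lem:locr-cvrs-characterization} there is a cyclic trace $\trace := \config \rightarrow^\star \config$ of the CVRS $\puwd_i$ in which $e$ participates. Applying Proposition~\ref{prop:equations} to $\trace$, and taking for each $\transition \in R^\uwd_i$ the coefficient $\mult_\transition$ to be the sum of the multiplicities of the steps of $\trace$ in which $\transition$ participates, item (1) of that proposition with destination equal to source gives $0 = \sum_{\transition} \mult_\transition(\incoming(\transition) - \outgoing(\transition))$, which is condition (1); item (2) is verbatim condition (2). Finally, since $e$ participates in at least one step of $\trace$ and every step carries a multiplicity in $\RatGZ$ (Definition~\ref{dfn:step}), the accumulated coefficient $\mult_e$ is strictly positive, yielding condition (3).

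For the ``if'' direction I would assume coefficients $\mult_\transition$ satisfying (1)--(3) are given and set $\transRel' := \{\transition \in R^\uwd_i \mid \mult_\transition > 0\}$, so that $e \in \transRel'$ by (3). I would then choose the configuration $\config$ assigning value $1$ to every state of $\puwd_i$ and put $\config' := \config$. Conditions (1) and (2) of this lemma are precisely hypotheses (1) and (2) of Lemma~\ref{lem:reachability-same-support}; hypothesis (3) there, $\positiveOp{\config} = \positiveOp{\config'}$, is immediate since $\config' = \config$; and hypothesis (4), $\src(\transRel') \subseteq \positiveOp{\config'}$ and $\dst(\transRel') \subseteq \positiveOp{\config}$, holds because $\config$ has full support. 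Lemma~\ref{lem:reachability-same-support} then produces a trace $\config \rightarrow^\star \config$ in which every transition of $\transRel'$ — in particular $e$ — participates. This is a cyclic trace of $\puwd_i$ through $e$, so by Lemma~\ref{lem:locr-cvrs-characterization} the edge $e$ is \locr, and the ``moreover'' clause concerning \good edges follows by the same argument restricted to \green transitions.

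I do not expect a genuine obstacle here, as the real content lives in Lemmas~\ref{lem:locr-cvrs-characterization} and~\ref{lem:reachability-same-support} and this lemma is essentially the ``plug-in'' step. The only point that needs a little care is the choice of $\config$ in the ``if'' direction: it must have large enough support to satisfy hypothesis (4) of Lemma~\ref{lem:reachability-same-support}, which is why taking it strictly positive on all of $\puwd_i$ is convenient, though any configuration whose support contains $\src(\transRel') \cup \dst(\transRel')$ would serve equally well.
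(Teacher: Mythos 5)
Your proposal is correct and follows essentially the same route as the paper: Lemma~\ref{lem:locr-cvrs-characterization} plus Proposition~\ref{prop:equations} for the ``only if'' direction, and Lemma~\ref{lem:reachability-same-support} applied to a cyclic trace ($\config' = \config$) for the ``if'' direction. The only cosmetic difference is that the paper takes $\config$ supported exactly on $\src(\transRel') \cup \dst(\transRel')$ rather than on all of $S^\uwd_i$, which, as you yourself note, is immaterial.
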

Moreover, $e$ is \good iff we further require that the set $\{ \transition \in \transRel_i \mid \mult_\transition > 0 \}$ only contains \green edges.
\begin{proof}
For the `only if' direction, we observe that by Lemma~\ref{lem:locr-cvrs-characterization} there is a trace $\trace := \config \rightarrow^\star \config$ of the CVRS $\puwd_i$ such that $e$ participates in a step of $\trace$.
Moreover, $\trace$ uses only \green edges in case  $e$  is \good.
By Proposition~\ref{prop:equations}, we can derive coefficients $\mult_\transition \in \RatGEZ$ that satisfy the conditions of the lemma.

For the `if' direction, let $\transRel' = \{ \transition \in \transRel_i \mid \mult_\transition > 0 \}$ and
$\positiveSet = \src(\transRel') \cup \dst(\transRel')$.
Set $\config \in \RatGEZ^{\states_i}$ to the configuration defined by $\config(q) = 1$, if $q \in \positiveSet$, and  $\config(q) = 0$, otherwise.
Then, apply Lemma~\ref{lem:reachability-same-support} and obtain a trace $\trace := \config \rightarrow^\star \config$ of the CVRS $\puwd_i$, such that $e$ participates in a step of $\trace$.
By Lemma~\ref{lem:locr-cvrs-characterization}, the type (\locr, or \good) of $e$ follows.
\end{proof}

Lemma~\ref{lem:locr-decidability} gives rise to a \PTIME algorithm for computing the \locr and \good edges (in case we already know the \green edges):

\begin{corollary}
\label{cor:deciding-locr-good}
  Let $e$ be an edge of $\puwd$.
  We can decide in \PTIME (in the size of $\puwd$), whether $e$ is \locr, and whether $e$ is \good (assuming we already know which edges are \green).
\end{corollary}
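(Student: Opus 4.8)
The plan is to turn the algebraic characterization of Lemma~\ref{lem:locr-decidability} directly into a linear feasibility question that can be discharged by linear programming over the rationals. Fix an edge $e$ lying in a component $\puwd_i$. Lemma~\ref{lem:locr-decidability} tells us that $e$ is \locr exactly when there exist coefficients $\mult_\transition \in \RatGEZ$, one for each $\transition \in R^\uwd_i$, that satisfy the balance equation (1), the synchronization equations (2), and the strict inequality $\mult_{e} > 0$. Each of (1) and (2) is a finite conjunction of \emph{homogeneous} linear equations in the variables $\{\mult_\transition\}_\transition$: equation (1) contributes one scalar equation per state of $\puwd_i$, and equation (2) contributes $O(|\Actions|\,k^2)$ equations, one for each action and each pair of indices in $[k]$. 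Together with the nonnegativity constraints $\mult_\transition \geq 0$ this is an ordinary system of linear (in)equalities.

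First I would deal with the strict inequality, which is the only non-standard ingredient. The key observation is that constraints (1), (2), and nonnegativity are all invariant under multiplication of the whole coefficient vector by a positive scalar. Hence a solution with $\mult_e > 0$ exists if and only if a solution with $\mult_e = 1$ exists: given any feasible point with $\mult_e > 0$ one rescales by $1/\mult_e$, and conversely a point with $\mult_e = 1$ certainly has $\mult_e > 0$. I would therefore replace $\mult_e > 0$ by the equation $\mult_e = 1$, obtaining a system consisting solely of linear equalities and the inequalities $\mult_\transition \geq 0$. Deciding whether such a system has a rational solution is precisely linear-programming feasibility, and it is a classical fact that this is decidable in time polynomial in the bit-size of the system; since all coefficients occurring in (1) and (2) lie in $\{-1,0,1\}$ and the system has size polynomial in $|\puwd|$, the check runs in \PTIME.

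For the \good case I would reuse the same machinery with the single modification dictated by the ``moreover'' clause of Lemma~\ref{lem:locr-decidability}, namely that the witnessing trace must use only \green edges. Assuming the set of \green edges is already known, I would simply restrict the variable set to the \green edges of $\puwd_i$ (equivalently, fix $\mult_\transition = 0$ for every non-\green transition) and solve the resulting feasibility problem, again after the $\mult_e = 1$ normalization. A feasible point then yields coefficients supported on \green edges with $\mult_e > 0$, which is exactly the condition for $e$ to be \good; note this also forces $e$ itself to be \green, consistent with \good edges being \green. Running either feasibility test for each of the polynomially many edges $e$ of $\puwd$, and noting that each test is a polynomially-sized linear program, gives the claimed \PTIME bound.

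I expect the only genuinely delicate point to be the justification that the strict-inequality feasibility question reduces cleanly to a standard linear program; this is where the homogeneity of the equation system is essential, and it is precisely what lets us avoid any reasoning about the (a priori unbounded) length or multiplicities of the underlying CVRS trace. Everything else is bookkeeping: the soundness and completeness of the reduction are already supplied by Lemma~\ref{lem:locr-decidability}, so no further structural argument about pseudo-cycles or traces is needed here.
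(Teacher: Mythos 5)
Your proposal is correct and follows essentially the same route as the paper, which also reduces the conditions of Lemma~\ref{lem:locr-decidability} directly to rational linear-programming feasibility; your explicit handling of the strict inequality $\mult_e>0$ via the homogeneity-based normalization $\mult_e=1$ is a detail the paper leaves implicit, and it is a valid one.
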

\begin{proof}
For every edge $e$ of $\puwd$, we need to find a solution that satisfies the equations of Lemma~\ref{lem:locr-decidability}.
Such a solution can be found by linear programming over the rationals, which is in \PTIME.
\end{proof}

The rest of this section will be dedicated to addressing the problem of deciding whether an edge is \green or not. Observe that --- unlike \locr and \good edges for which a witnessing pseudo-cycle is characterized by Lemma~\ref{lem:locr-cvrs-characterization} using a cyclic trace in a single CVRS --- the characterization of the witnessing pseudo-cycle for a \green edge in terms of CVRS traces (as given by Lemma~\ref{lem:green-cvrs-characterization}) involves non-cyclic traces in multiple different CVRSs. Hence, our first step is to develop a general characterization of reachability in CVRS which extends the characterization given in Lemma~\ref{lem:reachability-same-support} to the case of traces whose source and destination configurations may have a different support. We begin with the following property of the support:

\begin{lemma}
\label{lem:max-pos}
Let $\vas = \tup{\Actionprts,\states,\transRel}$ be a CVRS and let $\config, \config_1, \config_2 \in \RatGEZ^\states$ be configurations of it. Then:
\begin{enumerate}
  \item If $\config \rightarrow^\star \config_1$ and $\config \rightarrow^\star \config_2$ then
there is a trace $\config \rightarrow^\star \config_3$ with $\positiveOp{\config_3} = \positiveOp{\config_1} \cup \positiveOp{\config_2}$;
  \item If $\config_1 \rightarrow^\star \config$ and $\config_2 \rightarrow^\star \config$
then there is a trace $\config_3 \rightarrow^\star \config$ with $\positiveOp{\config_3} = \positiveOp{\config_1} \cup \positiveOp{\config_2}$.
\end{enumerate}
\end{lemma}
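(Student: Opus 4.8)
The plan is to prove both items by exhibiting a single \emph{explicit} target configuration whose support is forced to be the union, namely the midpoint $\config_3 := \tfrac{1}{2}\config_1 + \tfrac{1}{2}\config_2$, and then to realise it as an endpoint of a trace by gluing together suitably scaled and shifted copies of the two given traces. The key elementary observation is that for non-negative vectors the support of a positive convex combination is the union of the supports: for every $s$ we have $\big(\tfrac{1}{2}\config_1 + \tfrac{1}{2}\config_2\big)(s) > 0$ iff $\config_1(s)>0$ or $\config_2(s)>0$, so $\positiveOp{\config_3} = \positiveOp{\config_1}\cup\positiveOp{\config_2}$. Hence it suffices to build a trace to (item~1) or from (item~2) this $\config_3$, and all the work reduces to the gluing.

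The gluing uses only the two trace operations introduced just before Proposition~\ref{prop:convexity}: multiplication by a positive scalar ($\multAlt\otimes\trace$) and addition of a non-negative constant configuration ($\config\oplus\trace$), both of which are again traces. For item~1, let $\trace_1 : \config\rightarrow^\star\config_1$ and $\trace_2 : \config\rightarrow^\star\config_2$ be the hypothesised traces. First I would form $\tfrac{1}{2}\otimes\trace_2$, a trace $\tfrac{1}{2}\config\rightarrow^\star\tfrac{1}{2}\config_2$, and shift it by the constant $\tfrac{1}{2}\config$ to get $\big(\tfrac{1}{2}\config\big)\oplus\big(\tfrac{1}{2}\otimes\trace_2\big)$, which is a trace from $\config$ to $\tfrac{1}{2}\config+\tfrac{1}{2}\config_2$. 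Next I would form $\tfrac{1}{2}\otimes\trace_1$, a trace $\tfrac{1}{2}\config\rightarrow^\star\tfrac{1}{2}\config_1$, and shift it by $\tfrac{1}{2}\config_2$ to obtain a trace from $\tfrac{1}{2}\config+\tfrac{1}{2}\config_2$ to $\tfrac{1}{2}\config_1+\tfrac{1}{2}\config_2=\config_3$. Concatenating the two gives the desired $\config\rightarrow^\star\config_3$. Item~2 is completely symmetric: with $\trace_1:\config_1\rightarrow^\star\config$ and $\trace_2:\config_2\rightarrow^\star\config$, shifting $\tfrac{1}{2}\otimes\trace_1$ by $\tfrac{1}{2}\config_2$ yields a trace $\config_3\rightarrow^\star\tfrac{1}{2}\config+\tfrac{1}{2}\config_2$, and shifting $\tfrac{1}{2}\otimes\trace_2$ by $\tfrac{1}{2}\config$ yields a trace $\tfrac{1}{2}\config+\tfrac{1}{2}\config_2\rightarrow^\star\config$; concatenation gives $\config_3\rightarrow^\star\config$.

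The only point requiring care — and the closest thing to an obstacle — is verifying that every shift adds a genuine (non-negative) configuration, so that $\config\oplus(\cdot)$ really produces a trace. Here each added constant is $\tfrac{1}{2}\config$ or $\tfrac{1}{2}\config_2$ (resp.\ $\tfrac{1}{2}\config_2$ or $\tfrac{1}{2}\config$), all non-negative by hypothesis, so the operations apply verbatim; condition (ii) of Definition~\ref{dfn:step} is preserved because adding a non-negative vector to the source of each step only increases its coordinates and thus cannot break the domination requirement. I expect no genuine difficulty beyond identifying the midpoint target and choosing the scalar $\tfrac{1}{2}$ so that the shift constants remain non-negative; in particular there is no need to reason about individual steps or to invoke Lemma~\ref{lem:reachability-same-support}, since the operations $\otimes$ and $\oplus$ transport whole traces at once.
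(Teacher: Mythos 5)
Your proposal is correct and follows essentially the same route as the paper's own proof: take $\config_3 = \tfrac{1}{2}\config_1 + \tfrac{1}{2}\config_2$ and concatenate two scaled-and-shifted copies of the given traces via the $\otimes$ and $\oplus$ operations (the paper merely orders the two halves with $\trace_1$ first, which is immaterial by symmetry). The extra care you take in checking that the shift constants are non-negative configurations is already covered by the paper's earlier remark that $\config\oplus\trace$ is a trace whenever $\config$ is a configuration.
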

\begin{proof}
Let $\config_3 = \frac{1}{2}\config_1 + \frac{1}{2}\config_2$, and observe that $\positiveOp{\config_3} = \positiveOp{\config_1} \cup \positiveOp{\config_2}$.
It remains to show the desired traces between $\config$ and $\config_3$.
For the first item, let $\trace_1 := \config \rightarrow^\star \config_1$, and $\trace_2 := \config \rightarrow^\star \config_2$. A trace from $\config$ to $\config_3$ is obtained by concatenating $\frac{1}{2} \config \oplus (\frac{1}{2} \otimes \trace_1)$ and $\frac{1}{2} \config_1 \oplus (\frac{1}{2} \otimes \trace_2)$.
For the second item, let $\trace_1 := \config_1 \rightarrow^\star \config$, and $\trace_2 := \config_2 \rightarrow^\star \config$. A trace from $\config_3$ to $\config$ is obtained by concatenating $\frac{1}{2} \config_1 \oplus (\frac{1}{2} \otimes \trace_2)$ and $\frac{1}{2} \config \oplus (\frac{1}{2} \otimes \trace_1)$.
\end{proof}

Let $\vas = \tup{\Actionprts,\states,\transRel}$ be a CVRS, let $\transRel' \subseteq \transRel$, and let $\config \in \RatGEZ^\states$ be a configuration.
We say that a set $\positiveSet \subseteq \states$ is \emph{forward (resp. backward) $\transRel'$-accessible from $\config$} if there is a trace $\config \rightarrow^\star \config'$ (resp. $\config' \rightarrow^\star \config$), with all of its steps  using only transitions from $\transRel'$, such that $\positiveOp{\config'} = \positiveSet$.
We define $\forwardOp(\config,\transRel')$ (resp. $\backwardOp(\config,\transRel')$) to be the union of all sets $\positiveSet \subseteq \states$ that are forward (resp. backward) $\transRel'$-accessible from $\config$.

\begin{remark}\label{rem:forward-max}
Observe that Lemma~\ref{lem:max-pos} implies that $\forwardOp(\config,\transRel')$ is forward $\transRel'$-accessible from $\config$, and that $\backwardOp(\config,\transRel')$ is backward $\transRel'$-accessible from $\config$. It follows that $\forwardOp(\config,\transRel')$ (resp. $\backwardOp(\config,\transRel')$) is the maximal subset of $\states$ that is forward (resp. backward) $\transRel'$-accessible from $\config$.
\end{remark}

\begin{proposition}\label{prop:computing-forward}
Given a CVRS $\vas = \tup{\Actionprts,\states,\transRel}$, a subset $\transRel' \subseteq \transRel$, and a configuration $\config$, the sets $\forwardOp(\config,\transRel')$ and $\backwardOp(\config,\transRel')$ can be computed in \PTIME.
\end{proposition}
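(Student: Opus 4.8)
The plan is to compute $\forwardOp(\config,\transRel')$ by a monotone \emph{support-saturation} over subsets of $\states$, and to reduce the backward case to the forward case by reversing transitions. For the forward direction I would maintain a set $\positiveSet \subseteq \states$, initialized to the support $\positiveOp{\config}$, and repeatedly apply the following rule: call an action $\msg{a} \in \Actions$ \emph{enabled} in $\positiveSet$ if for every index $j \in [k]$ there is a transition $\transition \in \transRel'$ with $\rdz(\transition) = \msg{a}_j$ and $\src(\transition) \in \positiveSet$; whenever $\msg{a}$ is enabled, add to $\positiveSet$ every state $\dst(\transition)$ with $\transition \in \transRel'$, $\rdz(\transition)=\msg{a}_j$ for some $j$, and $\src(\transition)\in\positiveSet$. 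The set only grows, so the fixpoint $\positiveSet^\ast$ is reached in at most $|\states|$ rounds; each round tests each action against each index in time polynomial in $|\transRel|$ and $k$, so the procedure is in \PTIME. Note that phrasing the rule at the level of \emph{actions} (rather than quantifying over all $k$-tuples of transitions) is what keeps it polynomial.

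Next I would prove $\positiveSet^\ast = \forwardOp(\config,\transRel')$. For $\forwardOp(\config,\transRel') \subseteq \positiveSet^\ast$ I would show by induction on the number of steps that $\positiveOp{\config''} \subseteq \positiveSet^\ast$ for every trace $\config \rightarrow^\star \config''$ using only transitions of $\transRel'$. The base case holds since $\positiveSet^\ast$ is initialized to contain $\positiveOp{\config}$. For the inductive step the last step fires a synchronizing tuple $\transition_1,\dots,\transition_k$ with multiplicity $\mult > 0$; by requirement (ii) of Definition~\ref{dfn:step} each source $\src(\transition_i)$ lies in the support of the penultimate configuration, which by the induction hypothesis is contained in $\positiveSet^\ast$. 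Thus $\msg{a}$ is enabled in $\positiveSet^\ast$, and since $\positiveSet^\ast$ is a fixpoint the saturation rule has already added each $\dst(\transition_i)$. Because a step (by requirement (iii)) can make a coordinate positive only at a destination, $\positiveOp{\config''}\subseteq\positiveSet^\ast$.

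For the reverse inclusion I would exhibit a single trace from $\config$ whose final configuration has support exactly $\positiveSet^\ast$; this suffices, since $\forwardOp$ is the union of all forward-accessible supports. Here the continuous nature of the CVRS is essential, and I expect this to be the main obstacle: I would process the states of $\positiveSet^\ast$ in the order in which the saturation introduced them, and for each newly added batch fire the corresponding enabled step with a \emph{sufficiently small} positive multiplicity. Because only finitely many (at most $|\states|$) steps are fired and each removes only a controllable amount of mass from its sources, choosing the multiplicities small enough keeps every previously positive coordinate strictly positive, so no state ever leaves the support while all fired destinations join it; the final configuration then has support $\positiveSet^\ast$. (Alternatively one may invoke Remark~\ref{rem:forward-max}, which already guarantees $\forwardOp(\config,\transRel')$ is forward-accessible, and argue only that the fixpoint coincides with this maximal set.)

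Finally, for $\backwardOp(\config,\transRel')$ I would pass to the reversed CVRS $\vas^{\mathrm{rev}}$ obtained by replacing each transition $(p,\sigma,q)$ with $(q,\sigma,p)$, and $\transRel'$ by its reversal. A direct check using Definition~\ref{dfn:step} shows that $\config' \rightarrow^\star \config$ in $\vas$ using $\transRel'$ iff $\config \rightarrow^\star \config'$ in $\vas^{\mathrm{rev}}$ using the reversed transitions: the net-effect equation (iii) is symmetric under reversal, while the enabledness condition (ii) for the reversed step, namely $\config'\ge\mult\sum_i\incoming(\transition_i)$, is exactly the forward condition $\config\ge\mult\sum_i\outgoing(\transition_i)$ rearranged via (iii). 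Hence $\backwardOp(\config,\transRel')$ equals $\forwardOp(\config,\transRel'^{\mathrm{rev}})$ computed in $\vas^{\mathrm{rev}}$, which is again in \PTIME by the forward algorithm.
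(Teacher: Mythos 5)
Your proposal is correct and follows essentially the same route as the paper: a monotone support-saturation fixpoint starting from $\positiveOp{\config}$, with soundness of the fixpoint shown by firing the witnessing steps at sufficiently small multiplicities (the paper picks $0 < \mult < \frac{1}{k}\min_{q \in \positiveOp{\config_i}}\config_i(q)$) and completeness shown by an argument over an arbitrary trace. The only differences are presentational — your action-level enabledness test versus the paper's choice of a concrete $k$-tuple per round, a direct induction where the paper argues by contradiction, and an explicit reversal reduction where the paper just says the backward case is symmetric.
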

\begin{proof}
We present a fixed point algorithm for computing $\forwardOp(\config,\transRel')$; computing $\backwardOp(\config,\transRel')$ is done in a symmetric fashion.
Construct an increasing chain $\positiveSet_0 \subsetneq \positiveSet_1 \subsetneq \cdots$ of sets $\positiveSet_i \subseteq \states$, until a larger set cannot be found, as follows.
Let $\positiveSet_0 := \positiveOp{\config}$.
For each $i$, we check if there is an action $\msg{a} \in \Actions$, and there are transitions $\transition_1,\ldots,\transition_k \in \transRel'$
with $\rdz(\transition_j) = \msg{a}_j$ for all $j \in [k]$, such that $\src(\{\transition_1,\ldots,\transition_k\}) \subseteq \positiveSet_i$ and $\dst(\{\transition_1,\ldots,\transition_k\}) \not\subseteq \positiveSet_i$.
In case there are such transitions, we set $\positiveSet_{i+1} := \positiveSet_i \cup \dst(\{\transition_1,\ldots,\transition_k\})$; otherwise, we are done and have $\positiveSet_i = \forwardOp(\config,\transRel')$.

The correctness of the algorithm is demonstrated as follows. First, to see that for every round $i$, the set $\positiveSet_i$ is contained in $\forwardOp(\config,\transRel')$, proceed by induction on $i$. Note that, by the induction hypothesis, there is a trace $\trace := \config \rightarrow^\star \config_i$ with $\positiveOp{\config_i} = \positiveSet_i$, and that this trace can be extended to a trace $\trace' := \config \rightarrow^\star \config_i
\xrightarrow{\transition_1,\ldots,\transition_k:\mult} \config_{i+1}$, with $\positiveOp{\config_{i+1}} = \dst(\{\transition_1,\ldots,\transition_k\}) \cup \positiveSet_i = \positiveSet_{i+1}$, by choosing $0 < \mult < \frac{1}{k} \cdot \min_{q \in \positiveOp{\config_i}} \{\config_i(q)$\}.
To see that the algorithm outputs $\forwardOp(\config,\transRel')$, and not a proper subset of it, let $\trace := \config \rightarrow^\star \config'$ be a trace such that $\positiveOp{\config'} = \forwardOp(\config,\transRel')$.
Assume by way of contradiction that the support of some configuration along $\trace$ is not contained in the output of the algorithm, and let $\config_{i+1}$ be the first such configuration on $\trace$. Consider the step $\config_i \xrightarrow{t_1, \cdots, t_k:\mult_i} \config_{i+1}$ and note that $\src(\{\transition_1,\ldots,\transition_k\}) \subseteq \positiveOp{\config_i}$.  Let $P_0, P_1, P_2, \cdots, P_n$ be the sequence of sets computed by the algorithm.  By minimality of $i$, there is a $j$ such that $\positiveOp{\config_i} \subseteq P_j \subseteq P_n$. Thus, $\src(\{\transition_1,\ldots,\transition_k\}) \subseteq P_n$ and $\dst(\{\transition_1,\ldots,\transition_k\}) \not\subseteq  P_n$. But this contradicts the termination condition of the algorithm. 
\end{proof}

The following simple property of the operators $\forwardOp$ and $\backwardOp$ will be useful:

\begin{proposition}
  \label{prop:support}
  Let $\vas = \tup{\Actionprts,\states,\transRel}$ be a CVRS and let $\trace := \config \rightarrow^\star \config'$ be a trace of $\vas$.
  Let $\transRel'$ be the set of transitions that participate in steps of $\trace$, and let $\mathit{Configs}$ be the set of configurations that appear in $\trace$.
  Then, $\forwardOp(\config,\transRel') = \backwardOp(\config',\transRel') = \bigcup_{\config^\circ \in \mathit{Configs}} \positiveOp{\config^\circ} = \positiveOp{\config} \cup \dst(\transRel') = \positiveOp{\config'} \cup \src(\transRel')$.
\end{proposition}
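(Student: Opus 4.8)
The plan is to show that all five sets in the statement coincide with the set $\positiveSet^\ast := \bigcup_{\config^\circ \in \mathit{Configs}} \positiveOp{\config^\circ}$ of states covered at \emph{some} point along $\trace$, and to do so by isolating a single elementary fact about one step and then bootstrapping it. Throughout I would write $\config = \config_0, \config_1, \ldots, \config_N = \config'$ for the configurations along $\trace$, and use that, by definition of $\transRel'$, every step of $\trace$ (and every prefix/suffix of it) uses only transitions from $\transRel'$.

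First I would record the following observation about an individual step $\config_s \xrightarrow{\transition_1,\ldots,\transition_k:\mult} \config_{s+1}$ with $\mult>0$: by requirement (ii) of Definition~\ref{dfn:step} every participating transition has its source in $\positiveOp{\config_s}$; moreover, if a state $q$ gains positive mass, i.e.\ $q \in \positiveOp{\config_{s+1}} \setminus \positiveOp{\config_s}$, then since $\config_{s+1}(q)=\mult\,(\#\{i:\dst(\transition_i)=q\}-\#\{i:\src(\transition_i)=q\})>0$ some participating transition has destination $q$; symmetrically, if $q$ loses all its mass, $q \in \positiveOp{\config_s}\setminus\positiveOp{\config_{s+1}}$, then some participating transition has source $q$. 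Iterating this along any trace using only transitions from a set $\transRel'$ yields two monotonicity bounds: for any trace $\config_1 \rightarrow^\star \config_2$ that uses only $\transRel'$ one has $\positiveOp{\config_2} \subseteq \positiveOp{\config_1} \cup \dst(\transRel')$ and $\positiveOp{\config_1} \subseteq \positiveOp{\config_2} \cup \src(\transRel')$.

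Next I would use these bounds to pin down $\positiveSet^\ast$. Forward induction along $\trace$ gives $\positiveOp{\config_s} \subseteq \positiveOp{\config} \cup \dst(\transRel')$ for every $\config_s$, hence $\positiveSet^\ast \subseteq \positiveOp{\config}\cup\dst(\transRel')$; the reverse inclusion is immediate because $\config \in \mathit{Configs}$ and because each $\transition\in\transRel'$ participates in some step where either its destination is still positive afterwards, or it cancels against a co-firing transition whose source (equal to $\dst(\transition)$) is positive beforehand. This establishes $\positiveSet^\ast = \positiveOp{\config}\cup\dst(\transRel')$; the symmetric backward argument, using $\src(\transition)\in\positiveOp{\config_s}$ for participating transitions, gives $\positiveSet^\ast = \positiveOp{\config'}\cup\src(\transRel')$.

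Finally I would treat $\forwardOp(\config,\transRel')$ and $\backwardOp(\config',\transRel')$. The inclusions $\positiveSet^\ast \subseteq \forwardOp(\config,\transRel')$ and $\positiveSet^\ast \subseteq \backwardOp(\config',\transRel')$ hold since every prefix (resp.\ suffix) of $\trace$ is a $\transRel'$-trace witnessing that each $\positiveOp{\config_s}$ is forward- (resp.\ backward-) $\transRel'$-accessible. For the converse, any $\transRel'$-trace $\config \rightarrow^\star \config''$ satisfies $\positiveOp{\config''}\subseteq\positiveOp{\config}\cup\dst(\transRel')=\positiveSet^\ast$ by the forward monotonicity bound, so $\forwardOp(\config,\transRel')\subseteq\positiveSet^\ast$, and the backward bound gives $\backwardOp(\config',\transRel')\subseteq\positiveSet^\ast$ likewise. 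I expect the only genuinely delicate point to be the handling of coordinate \emph{cancellation} when proving $\dst(\transRel')\subseteq\positiveSet^\ast$: the step-level observation above is precisely what rules out a destination state that silently cancels to zero without ever being positive, and getting that case analysis right is the crux. Everything else is bookkeeping with the two monotonicity bounds.
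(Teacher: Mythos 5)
Your proof is correct and follows essentially the same route as the paper: both arguments identify every set with $\bigcup_{\config^\circ \in \mathit{Configs}} \positiveOp{\config^\circ}$, using prefixes/suffixes of $\trace$ to witness forward/backward accessibility and the support bounds $\forwardOp(\config,\transRel') \subseteq \positiveOp{\config} \cup \dst(\transRel')$ and $\backwardOp(\config',\transRel') \subseteq \positiveOp{\config'} \cup \src(\transRel')$ (the paper's Remark~\ref{rem:half-prop-support}) for the reverse containments. The only difference is that you spell out the step-level case analysis for $\dst(\transRel') \subseteq \bigcup_{\config^\circ} \positiveOp{\config^\circ}$ (the cancellation case, resolved via requirement (ii) of Definition~\ref{dfn:step}), which the paper leaves implicit; that detail is correct and needed.
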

\begin{proof}
  For every configuration $\config^\circ$ that appears on $\trace$, the prefix $\config \rightarrow^\star \config^\circ$ of $\trace$, and the suffix $\config^\circ \rightarrow^\star \config'$ of $\trace$, obviously only use transitions from $\transRel'$.
  Hence, $\bigcup_{\config^\circ \in \mathit{Configs}} \positiveOp{\config^\circ} \subseteq \forwardOp(\config,\transRel')$ and $\bigcup_{\config^\circ \in \mathit{Configs}} \positiveOp{\config^\circ} \subseteq \backwardOp(\config',\transRel')$; as well as $\bigcup_{\config^\circ \in \mathit{Configs}} \positiveOp{\config^\circ} \subseteq \positiveOp{\config} \cup \dst(\transRel')$ and $\bigcup_{\config^\circ \in \mathit{Configs}} \positiveOp{\config^\circ} \subseteq \positiveOp{\config'} \cup \src(\transRel')$.
  For the other direction, observe that $\forwardOp(\config,\transRel') \subseteq \positiveOp{\config} \cup \dst(\transRel') \subseteq \bigcup_{\config^\circ \in \mathit{Configs}} \positiveOp{\config^\circ}$ and, similarly, $\backwardOp(\config',\transRel') \subseteq  \positiveOp{\config'} \cup \src(\transRel') \subseteq \bigcup_{\config^\circ \in \mathit{Configs}} \positiveOp{\config^\circ}$. 
\end{proof}

\begin{remark}\label{rem:half-prop-support}
It is worth noting that for every configuration $\config$ and set of transitions $\transRel'$, we have that $\forwardOp(\config,\transRel') \subseteq \positiveOp{\config} \cup \dst(\transRel')$ (resp. $\backwardOp(\config,\transRel') \subseteq \positiveOp{\config} \cup \src(\transRel')$); however, only in case there is a path from $\config$ (resp. to $\config$), that uses exactly the transitions in $\transRel'$, do the reverse inclusions also hold.
\end{remark}

We are now ready to state a full characterization of reachability in CVRSs:

\begin{theorem}
\label{thm:reachability-no-broadcasts}
Let $\vas = \tup{\Actionprts,\states,\transRel}$ be a CVRS.
A configuration $\config' \in \RatGEZ^\states$ is reachable from a configuration $\config \in \RatGEZ^\states$ iff there are coefficients $\mult_\transition \in \RatGEZ$ for every $\transition \in \transRel$ such that:
\begin{enumerate}[(1)]
  \item $\config' = \config + \sum_{ \transition \in \transRel } \mult_\transition \left( \incoming(\transition) - \outgoing(\transition) \right)$; \label{item:1}
  \item For every rendezvous action $\msg{a} \in \Actions$, and every $i,j \in [k]$, we have that $\sum_{\transition \in \transRel, \rdz(\transition) = \msg{a}_i } \mult_\transition\!= \sum_{\transition \in \transRel, \rdz(\transition) = \msg{a}_j } \mult_\transition$; \label{item:2}
  \item For $\transRel' := \{ \transition \in \transRel \mid \mult_\transition > 0 \}$ we have that
  $\src(\transRel') \subseteq \backwardOp(\config',\transRel')$,  $\dst(\transRel') \subseteq \forwardOp(\config,\transRel')$, and $\forwardOp(\config,\transRel') = \backwardOp(\config',\transRel')$. \label{item:3}
\end{enumerate}
\end{theorem}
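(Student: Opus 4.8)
The plan is to prove the two directions separately, reusing the machinery already developed: Proposition~\ref{prop:equations} and Proposition~\ref{prop:support} for the easy direction, and Lemma~\ref{lem:reachability-same-support} together with the convexity and support-accessibility results for the hard direction.

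For the ``only if'' direction I would start from a trace $\trace := \config \rightarrow^\star \config'$ and set each $\mult_\transition$ to be the total multiplicity with which $\transition$ participates in $\trace$, exactly as in Proposition~\ref{prop:equations}. That proposition immediately yields conditions~(\ref{item:1}) and~(\ref{item:2}). For condition~(\ref{item:3}), note that the set $\transRel' = \{\transition : \mult_\transition > 0\}$ is precisely the set of transitions participating in $\trace$, so Proposition~\ref{prop:support} applies and gives $\forwardOp(\config,\transRel') = \backwardOp(\config',\transRel') = \positiveOp{\config}\cup\dst(\transRel') = \positiveOp{\config'}\cup\src(\transRel')$; the three inclusions/equalities of condition~(\ref{item:3}) can then be read off directly.

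For the ``if'' direction, assume coefficients $\mult_\transition$ satisfying~(\ref{item:1})--(\ref{item:3}), write $\transRel' = \{\transition : \mult_\transition>0\}$ and set $\positiveSet := \forwardOp(\config,\transRel') = \backwardOp(\config',\transRel')$. (The degenerate case $\transRel'=\emptyset$ forces $\config=\config'$ by~(\ref{item:1}), handled by the empty trace.) By Remark~\ref{rem:forward-max} there is a trace $\trace_1 : \config \rightarrow^\star \config_1$ using only transitions of $\transRel'$ with $\positiveOp{\config_1} = \positiveSet$, and symmetrically a trace $\trace_2 : \config_2 \rightarrow^\star \config'$ using only transitions of $\transRel'$ with $\positiveOp{\config_2} = \positiveSet$. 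I then want to glue $\trace_1$, a ``bulk'' middle trace $\config_1 \rightarrow^\star \config_2$, and $\trace_2$. The middle trace will be produced by Lemma~\ref{lem:reachability-same-support}, whose same-support hypothesis holds since $\positiveOp{\config_1}=\positiveOp{\config_2}=\positiveSet$.

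The crux --- and the step I expect to be the main obstacle --- is to supply coefficients for the middle trace that stay non-negative. Letting $a_\transition$ (resp.\ $b_\transition$) be the multiplicities used in $\trace_1$ (resp.\ $\trace_2$), substituting condition~(\ref{item:1}) shows $\config_2 - \config_1 = \sum_{\transition}(\mult_\transition - a_\transition - b_\transition)(\incoming(\transition)-\outgoing(\transition))$, so the natural choice is $\mult'_\transition := \mult_\transition - a_\transition - b_\transition$. The danger is that $\mult'_\transition$ could be negative. To avoid this I would first shrink $\trace_1$ and $\trace_2$ using the convexity Proposition~\ref{prop:convexity}: replacing $\trace_1$ by $(1-\multAlt)\config \oplus (\multAlt \otimes \trace_1)$ scales all of its multiplicities by $\multAlt$ while leaving its endpoint's support equal to $\positiveOp{\config}\cup\positiveSet = \positiveSet$ (using $\positiveOp{\config}\subseteq\positiveSet$, which holds because $\positiveSet_0 = \positiveOp{\config}$ in the fixed-point computation of $\forwardOp$). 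Choosing the scaling factor small enough --- possible since $\transRel$ is finite and $\mult_\transition>0$ on $\transRel'$ --- guarantees $a_\transition + b_\transition < \mult_\transition$ for every $\transition\in\transRel'$, hence $\mult'_\transition>0$ exactly on $\transRel'$ and $\mult'_\transition=0$ off it. Then $\transRel'' := \{\transition:\mult'_\transition>0\} = \transRel'$, so $\src(\transRel'')=\src(\transRel')\subseteq\positiveSet=\positiveOp{\config_2}$ and $\dst(\transRel'')=\dst(\transRel')\subseteq\positiveSet=\positiveOp{\config_1}$ by condition~(\ref{item:3}), meeting condition~(4) of Lemma~\ref{lem:reachability-same-support}; linearity of the balance equations gives its conditions~(1)--(2), and equal support gives~(3). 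The lemma then yields $\config_1 \rightarrow^\star \config_2$, and concatenating $\trace_1$, this middle trace, and $\trace_2$ produces the desired trace $\config \rightarrow^\star \config'$.
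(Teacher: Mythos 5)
Your proposal is correct and follows essentially the same route as the paper's own proof: the forward direction via Proposition~\ref{prop:equations} and Proposition~\ref{prop:support}, and the reverse direction by extracting boundary traces $\trace_1,\trace_2$ with support $\positiveSet$ from Remark~\ref{rem:forward-max}, shrinking them via convexity so the residual coefficients $\mult_\transition - a_\transition - b_\transition$ stay positive on $\transRel'$, and closing the gap with Lemma~\ref{lem:reachability-same-support}. The only (harmless) additions are your explicit treatment of the degenerate case $\transRel'=\emptyset$ and the spelled-out justification that $\positiveOp{\config}\subseteq\positiveSet$.
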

\begin{proof}

For the forward direction, take a trace $\trace := \config \rightarrow^\star \config'$.
For every $\transition \in \transRel$, let $\mult_\transition$ be the sum of the multiplicities of the steps of $\trace$ in which $\transition$ participates.
By Proposition~\ref{prop:equations}, we have that condition~\ref{item:1} and~\ref{item:2} of the lemma are satisfied.
Condition~\ref{item:3} holds by applying Proposition~\ref{prop:support} to $\trace$.

For the reverse direction, assume that there are coefficients $\mult_\transition$ such that conditions~\ref{item:1}-\ref{item:3} are satisfied.
Let $\positiveSet := \forwardOp(\config,\transRel') = \backwardOp(\config',\transRel')$.
By Remark~\ref{rem:forward-max}, we can obtain traces $\trace_1 := \config \rightarrow^\star \config_1$ and $\trace_2 := \config_2 \rightarrow^\star \config'$, using only transitions from $\transRel'$, such that $\positiveOp{\config_1} = \positiveSet = \positiveOp{\config_2}$. It remains to show that $\config_2$ is reachable from $\config_1$. For every transition $\transition \in \transRel$, let $\mult^1_\transition$ (resp. $\mult^2_\transition$) be the sum of the multiplicities of the steps of $\trace_1$ (resp. $\trace_2$) in which $\transition$ participates, and let $\multAlt_\transition = \mult_\transition -  \mult^1_\transition - \mult^2_\transition$. By the convexity property (Proposition~\ref{prop:convexity}) we can assume w.l.o.g. that: ($\S$) $\mult^1_\transition$ and $\mult^2_\transition$ are small enough such that $\multAlt_\transition > 0$ for all $\transition \in \transRel'$ (simply apply the convexity property to $\trace_1$ and $\trace_2$ with a small enough $\gamma$ to obtain, if needed, replacement $\trace_1,\trace_2,\config_1, \config_2$).
By Proposition~\ref{prop:equations}, we get that $\config_1 = \config + \sum_{ \transition \in \transRel' } \mult^1_\transition \left( \incoming(\transition) - \outgoing(\transition) \right)$, and $\config' = \config_2 + \sum_{ \transition \in \transRel' } \mult^2_\transition \left( \incoming(\transition) - \outgoing(\transition) \right)$. Combining the last two equalities with condition~\ref{item:1} of the Lemma, and rearranging terms, we get:
\begin{equation}
 \config_2 = \config_1 + \sum_{ \transition \in \transRel' } \multAlt_\transition \left( \incoming(\transition) - \outgoing(\transition) \right) \tag{$\dagger$}
\end{equation}

Again, by Proposition~\ref{prop:equations}, we have for every $l \in \{1,2\}$:
\begin{equation*}
\forall \msg{a} \in \Actions,  i,j \in [k]: \sum_{\transition \in \transRel', \rdz(\transition) = \msg{a}_i }\mult^l_\transition = \sum_{\transition \in \transRel', \rdz(\transition) = \msg{a}_j } \mult^l_\transition,
\end{equation*}

Together with condition~\ref{item:2} we thus get that:

\begin{equation}
\forall \msg{a} \in \Actions,  i,j \in [k]: \sum_{\transition \in \transRel', \rdz(\transition) = \msg{a}_i } \multAlt_\transition = \sum_{\transition \in \transRel', \rdz(\transition) = \msg{a}_j } \multAlt_\transition \tag{$\ddag$}
\end{equation}

By $(\dagger)$ and $(\ddag)$, and our choice of $\config_1,\config_2$,
the requirements of  Lemma~\ref{lem:reachability-same-support} are satisfied for $\config_1,\config_2$ and coefficients $\multAlt_\transition$.
Hence, $\config_2$ is reachable from $\config_1$.
\end{proof}

\begin{remark}
The characterization in Theorem~\ref{thm:reachability-no-broadcasts} is an adaptation of the characterization of reachability in continuous vector addition systems from~\cite{journals/fuin/FracaH15};
this characterization leads to a PTIME decision procedure for reachability.
We have redeveloped this characterization result here for CVRSs for the following reason:
One can convert VRSs resp. CVRSs into VASs resp. CVASs, resulting, however, in a potentially exponential number of transitions.
That is because for every $k$-wise rendezvous action $\msg{a}$, we have to create a transition 
of the VAS resp. CVAS, for every possible combination of transitions labelled by $\msg{a}_1, \ldots, \msg{a}_k$.
The resulting number of transitions is exponential in $k$.
By redeveloping the results for CVRSs, however, we directly obtain (basically the same) characterization result for CVRSs, which leads to a PTIME decision procedure for reachability (which is used as part of Algorithm~\ref{alg:pseudo-cycle}).
\end{remark}

\subsection{An Algorithm for Finding the \green Edges}
\label{subsec:deciding-pseudo-cycle}

In this section we present a \PTIME algorithm for deciding the existence of pseudo-cycles with $r$ broadcasts in $\puwdsys$.
The algorithm will be developed based on the following characterization which is an immediate consequence of Lemma~\ref{lem:green-cvrs-characterization}, using Theorem~\ref{thm:reachability-no-broadcasts} to characterize the traces $\xi_i$ of this lemma. Recall that $\loopindices$ is the set of indices of the components on the noose of $\puwd$.

\begin{corollary}
\label{cor:reachability-with-broadcast-characterization}
An edge $e$ of $\puwd$ is \green iff, for every $i \in \loopindices$: (i) there is a subset $\T_i$ of the transitions of $\puwd_i$, and a subset $\B_i$ of the broadcast transitions from $\puwd_i$ to $\puwd_{\suc{i}}$, with $e \in \cup_{i \in \loopindices} (\B_i \cup \T_i$); (ii) there are coefficients $\mult_\transition \in \RatGZ$ for every $\transition \in \T_i \cup \B_i$; such that:
\begin{enumerate}[(1)]
  \item $\config'_i = \config_i + \sum_{ \transition \in \T_i } \mult_\transition \left( \incoming(\transition) - \outgoing(\transition) \right)$, where $\config_i, \config'_i$ are defined, for every $q \in S^\uwd_i$, by: \\
         $\config_i(q) := \sum_{\transition \in \B_{\pre{i}}, \dst(\transition) = q } \mult_\transition$, and $\config'_i(q) := \sum_{\transition \in \B_i, \src(\transition) = q } \mult_\transition$;\label{cond:1}
   \item for every rendezvous action $\msg{a} \in \Actions$, and every $j,h \in [k]$, we have that \\ 
        $\sum_{\transition \in \T_i, \rdz(\transition) = \msg{a}_j } \mult_\transition = \sum_{\transition \in \T_i, \rdz(\transition) = \msg{a}_h } \mult_\transition$; \label{cond:2}
  \item $\dst(\T_i) \subseteq \forwardOp(\config_i,\T_i)$, $\src(\T_i) \subseteq \backwardOp(\config_i',\T_i)$, and
      $\forwardOp(\config_i,\T_i) = \backwardOp(\config_i',\T_i)$. \label{cond:3}
\end{enumerate}
\end{corollary}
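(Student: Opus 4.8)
The plan is to obtain the corollary by substituting the equation-based reachability characterization of Theorem~\ref{thm:reachability-no-broadcasts} into the trace-existence clause of Lemma~\ref{lem:green-cvrs-characterization}. That lemma already reduces the greenness of $e$ to the existence, for each $i \in \loopindices$, of a transition set $\T_i$, a broadcast set $\B_i$, strictly positive coefficients $\mult_\transition \in \RatGZ$ on $\B_i$, and a CVRS trace $\trace_i := \config_i \rightarrow^\star \config'_i$ of $\puwd_i$ that uses \emph{exactly} the transitions $\T_i$, subject to the boundary equations (lemma items (1),(2)) tying $\config_i$ and $\config'_i$ to the broadcast coefficients of the preceding resp. current component, and to $e \in \cup_{i \in \loopindices}(\B_i \cup \T_i)$. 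The only remaining work is to re-express the statement ``there is a trace $\config_i \rightarrow^\star \config'_i$ using exactly $\T_i$'' as linear constraints on multiplicities, which is exactly what Theorem~\ref{thm:reachability-no-broadcasts} provides for the CVRS $\puwd_i$.

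The crux of the translation is the identification of the phrase ``uses exactly the transitions $\T_i$'' with the condition that the set $\transRel' := \{ \transition : \mult_\transition > 0 \}$ appearing in Theorem~\ref{thm:reachability-no-broadcasts} equals $\T_i$; this is precisely why the corollary demands $\mult_\transition \in \RatGZ$ for every $\transition \in \T_i$ rather than merely $\mult_\transition \in \RatGEZ$. In the forward direction, greenness yields (via the lemma) a trace $\trace_i$ using exactly $\T_i$; assigning to each $\transition$ the total multiplicity with which it participates in $\trace_i$, the forward direction of Theorem~\ref{thm:reachability-no-broadcasts} gives coefficients satisfying the theorem's items (1)--(3), and these are strictly positive precisely on $\T_i$ since each such transition participates in some step. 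The theorem's items (1) and (2) become corollary items (\ref{cond:1}) and (\ref{cond:2}) once $\config_i,\config'_i$ are spelled out by the lemma's boundary equations, and item (3), specialized to $\transRel' = \T_i$, is corollary item (\ref{cond:3}).

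Conversely, given for each $i$ the data $\T_i,\B_i$ together with strictly positive coefficients satisfying corollary items (\ref{cond:1})--(\ref{cond:3}), I would feed them into the reverse direction of Theorem~\ref{thm:reachability-no-broadcasts} with $\transRel' = \T_i$. That direction produces a trace $\config_i \rightarrow^\star \config'_i$ of $\puwd_i$, and its underlying construction (which routes through Lemma~\ref{lem:reachability-same-support}) guarantees that every transition of $\transRel' = \T_i$ participates in some step; hence the trace uses exactly $\T_i$, as Lemma~\ref{lem:green-cvrs-characterization} requires. The boundary equations defining $\config_i,\config'_i$ carry over verbatim, and supplying these traces to the reverse direction of Lemma~\ref{lem:green-cvrs-characterization} certifies that $e$ is green.

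The only delicate point -- and the step I would write out most carefully -- is the bookkeeping around strict positivity: matching ``trace uses exactly $\T_i$'' with $\transRel' = \T_i$ in both directions, and confirming that the reverse construction of Theorem~\ref{thm:reachability-no-broadcasts} genuinely activates every transition of $\T_i$ (so that none is silently unused). Everything else is a line-by-line identification of the two constraint systems, which is exactly why the statement is labelled a corollary rather than a theorem.
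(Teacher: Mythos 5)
Your proposal is correct and matches the paper's approach exactly: the paper states this corollary as an immediate consequence of Lemma~\ref{lem:green-cvrs-characterization} combined with Theorem~\ref{thm:reachability-no-broadcasts}, with no further written proof. Your careful handling of the ``uses exactly $\T_i$'' condition versus the strict positivity of the coefficients (and the observation that the reverse construction via Lemma~\ref{lem:reachability-same-support} activates every transition of $\transRel'$) correctly fills in the one detail the paper leaves implicit.
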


Corollary~\ref{cor:reachability-with-broadcast-characterization} gives rise to Algorithm~\ref{alg:pseudo-cycle} for computing the set of \green edges of $\puwd$.

We remark that the algorithm presented here for computing \green edges differs from our original algorithm in~\cite{conf/icalp/AminofRZS15}. Here,
we use an algorithm that is inspired by, and extends, Algorithm $2$ in~\cite{journals/fuin/FracaH15} for deciding the reachability of a target configuration from an initial configuration in continuous vector addition systems. The extension is in two ways: first from vector addition systems to CVRSs, and second by adding machinery for handling broadcasts.

\begin{algorithm}[htb]
\SetKw{Initialize}{Initialize}
\SetKw{Repeat}{Repeat}
\SetKw{Until}{Until}
\SetKw{Input}{Input:}
\SetKw{Output}{Output:}

\Input\ $\puwd$.\\
\ \\
\Initialize\ --- for every $i \in \loopindices$ do:\\
$\T_i$ := all rendezvous transitions of $\puwd_i$\;
$\B_i$ := all broadcast transitions from $\puwd_i$ to $\puwd_{\suc{i}}$\;
\ \\
\Repeat:\
For every $i \in \loopindices$: take variables $\config_i(q),\config'_i(q)$ for every $q \in S^\uwd_i$, and $\mult_\transition$ for every $\transition \in \T_i \cup \B_i$.\\
Find a solution to the following constraint system, such that the number of non-zero variables $\mult_\transition$ is maximal:
\begin{itemize}
  \item[$\bullet$] $\mult_\transition \geq 0$ for every $\transition \in \T_i \cup \B_i$;
  \item[$\bullet$] $\config_i(q) = \sum_{\transition \in \B_{\pre{i}}, \dst(\transition) = q } \mult_\transition$, for every $q \in S^\uwd_i$;
  \item[$\bullet$] $\config'_i(q) = \sum_{\transition \in \B_i, \src(\transition) = q } \mult_\transition$, for every $q \in S^\uwd_i$;
  \item[$\bullet$] $\config_i' = \config_i + \sum_{ \transition \in \T_i } \mult_\transition \left( \incoming(\transition) - \outgoing(\transition) \right)$;
  \item[$\bullet$] $\sum_{\transition \in \T_i, \rdz(\transition) = \msg{a}_j } \mult_\transition = \sum_{\transition \in \T_i, \rdz(\transition) = \msg{a}_h } \mult_\transition$ for every $\msg{a} \in \Actions$ and $j,h \in [k]$.
\end{itemize}
Let $\positiveSet_i := \forwardOp(\config_i,\T_i) \cap \backwardOp(\config'_i,\T_i)$\;
Let $\T_i$ := $\{ \transition \in \T_i \mid \mult_\transition > 0 \} \cap \{  \transition \in \T_i \mid \src(\transition) \in \positiveSet_i \wedge \dst(\transition) \in \positiveSet_i \}$\;
Let $\B_i$ := $\{ \transition \in \B_i \mid \mult_\transition > 0 \}$\;
\Until\ neither $\T_i$ nor $\B_i$ change, for any $i \in \loopindices$.\\
\ \\
\Output\ $\bigcup_{i \in \loopindices} \B_i \cup \T_i$

\caption{Algorithm for computing all edges of $\puwd$ that can appear in a pseudo-cycle with broadcasts of $\puwdsys$.}
\label{alg:pseudo-cycle}
\end{algorithm}

\begin{theorem}
\label{thm:computatation-of-green-transitions}
Deciding if an edge of $\puwd$ is \green can be done in \PTIME.
\end{theorem}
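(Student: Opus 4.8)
The plan is to derive the theorem from the correctness and polynomial running time of Algorithm~\ref{alg:pseudo-cycle}: once we know that its output is exactly the set of \green edges of $\puwd$, and that it runs in \PTIME, an edge is \green iff it appears in the output, which is then decidable in \PTIME. The specification the algorithm must meet is Corollary~\ref{cor:reachability-with-broadcast-characterization}: conditions~(1) and~(2) there are exactly the linear constraint system solved in each round, and condition~(3) is the support-reachability requirement of Theorem~\ref{thm:reachability-no-broadcasts} that the $\positiveSet_i$-refinement is meant to enforce.

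For the complexity, the key point is that, after substituting the definitions of $\config_i$ and $\config'_i$ in terms of the broadcast coefficients, the constraint system is \emph{homogeneous}, so its solution set is a polyhedral cone. Hence the collection of attainable supports $\{\transition : \mult_\transition > 0\}$ is closed under union (average two solutions, which stays in the cone), a unique support-maximal solution exists, and it can be computed in \PTIME by testing, for each variable separately, feasibility of the system augmented with $\mult_\transition \ge 1$ (legitimate by homogeneity) and then averaging the polynomially many witnesses. Each $\forwardOp(\config_i,\T_i)$ and $\backwardOp(\config'_i,\T_i)$ is computed in \PTIME by Proposition~\ref{prop:computing-forward}. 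Since every non-final round strictly shrinks some $\T_i$ or $\B_i$, there are at most $|R^\uwd|$ rounds, giving an overall polynomial bound.

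For soundness I would show that at the fixpoint the final $\T_i,\B_i$ and coefficients satisfy all of Corollary~\ref{cor:reachability-with-broadcast-characterization} for each output edge. Conditions~(1)--(2) hold by the last solved system. For~(3), stability of the refinement gives $\mult_\transition>0$ for all surviving transitions and $\src(\T_i),\dst(\T_i)\subseteq\positiveSet_i$, so the two inclusions are immediate; it remains to prove the equality $\forwardOp(\config_i,\T_i)=\backwardOp(\config'_i,\T_i)$. I would do this by a \emph{pass-through} argument: if $q\in\forwardOp(\config_i,\T_i)\setminus\positiveSet_i$, then by Remark~\ref{rem:half-prop-support} and $\dst(\T_i)\subseteq\positiveSet_i$ we must have $q\in\positiveOp{\config_i}$, while no $\T_i$-transition touches $q$ (its endpoints would otherwise lie in $\positiveSet_i$); reading condition~(1) at coordinate $q$ then forces $\config'_i(q)=\config_i(q)>0$, so $q\in\positiveOp{\config'_i}\subseteq\backwardOp(\config'_i,\T_i)$, contradicting $q\notin\positiveSet_i$. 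The symmetric argument handles $\backwardOp(\config'_i,\T_i)\setminus\positiveSet_i$, yielding the equality. Theorem~\ref{thm:reachability-no-broadcasts} then supplies the traces $\trace_i$ demanded by the corollary, so every output edge is \green.

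For completeness, fix a \green edge $e$ with a witnessing family $\hat{\T}_i,\hat{\B}_i$, coefficients and traces from Corollary~\ref{cor:reachability-with-broadcast-characterization}, and prove by induction on the round that the invariant $\hat{\T}_i\subseteq\T_i$ and $\hat{\B}_i\subseteq\B_i$ is maintained. The witness (padded with zeros) is a feasible solution of the current system, so by support-maximality every witness transition keeps a positive coefficient and survives the $\mult_\transition>0$ filter; moreover $\positiveOp{\hat\config_i}\subseteq\positiveOp{\config_i}$ and $\hat{\T}_i\subseteq\T_i$, so by monotonicity of $\forwardOp,\backwardOp$ together with Proposition~\ref{prop:support} applied to the witness trace, the endpoints of witness transitions lie in $\forwardOp(\config_i,\T_i)\cap\backwardOp(\config'_i,\T_i)=\positiveSet_i$, so they also survive the $\positiveSet_i$ filter. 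Hence $e$ remains in the output. I expect this completeness invariant --- the claim that support-maximality plus monotonicity of the accessibility operators prevents any \green witness from being pruned --- to be the main obstacle, as it is precisely where the convexity machinery behind Theorem~\ref{thm:reachability-no-broadcasts} (Proposition~\ref{prop:convexity} and Lemma~\ref{lem:max-pos}) must be leveraged; the soundness direction, by contrast, is essentially a direct verification of the fixpoint conditions.
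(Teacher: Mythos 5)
Your proposal is correct and takes essentially the same route as the paper's proof: it reduces the theorem to the correctness and polynomial running time of Algorithm~\ref{alg:pseudo-cycle}, exploits homogeneity of the linear constraint system to obtain support-maximal solutions in \PTIME, verifies the fixpoint against Corollary~\ref{cor:reachability-with-broadcast-characterization} for soundness, and maintains the invariant that a witness family for a \green edge survives every pruning round for completeness. Your ``pass-through'' contradiction argument for $\forwardOp(\config_i,\T_i)=\backwardOp(\config'_i,\T_i)$ is just a rephrasing of the paper's chain of inclusions and rests on the same observation that a supported state untouched by any surviving transition retains its value in $\config'_i$.
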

\begin{proof}
We begin by making a couple of important observations. In each iteration of the main loop of the algorithm we need to find a solution to a constraint system forming a linear programming problem over the rationals, with no objective function, whose canonical form is: $Ax = 0, x \ge 0$ (for some matrix $A$ and vector $x$ of variables), such that the solution is maximal with respect to the number of non-zero variables $\mult_\transition$. Given that linear programs of this canonical form have the property that the sum of any two solutions is itself a solution, we have that: ($\dagger$) one can find a maximal solution $x$ by looking, for every $\transition \in \T_i \cup \B_i$, for a solution to the system $Ax = 0, x \ge 0, \mult_\transition > 0$, and adding together all the solutions that were found; ($\ddagger$) for every variable $\mult_\transition$ of the system $Ax = 0, x \ge 0$, if there is a solution in which $\mult_\transition > 0$, then $\mult_\transition > 0$ in every maximal solution.

We now show that Algorithm~\ref{alg:pseudo-cycle} outputs exactly the set of \green edges of $\puwd$.

To see that every edge output by the algorithm is \green, we will apply Corollary~\ref{cor:reachability-with-broadcast-characterization} (direction `if') to the sets $\T_i$ and $\B_i$ and the coefficients $\mult_\transition$ from the solution obtained in the last iteration of the algorithm. Observe that at this last iteration all the sets have reached a fixed point. Hence, in particular, for every $\transition \in \T_i \cup \B_i$ we have that $\mult_\transition > 0$, and (\S) $\dst(\T_i) \subseteq \positiveSet_i$, $\src(\T_i) \subseteq \positiveSet_i$. Since the constraint system in the algorithm exactly matches requirements (1) and (2) in Corollary~\ref{cor:reachability-with-broadcast-characterization}, the only thing we have to show before we can apply this corollary is that also condition (3) holds. Observe that, by $\S$, it is enough to show that at the last iteration $\forwardOp(\config_i,\T_i) = \positiveSet_i$ and $\backwardOp(\config_i',\T_i) = \positiveSet_i$. We will show the first equality, the second is shown in a symmetric way. Consider the following chain of inequalities: $\positiveSet_i \subseteq \forwardOp(\config_i,\T_i) \subseteq \positiveOp{\config_i} \cup \dst(\T_i) \subseteq \positiveOp{\config_i} \cup \positiveSet_i \subseteq \positiveSet_i$. The first containment is by the definition of $\positiveSet_i$, the second by Remark~\ref{rem:half-prop-support}, the third by $\S$, and the last by the following argument: for $q \in \positiveOp{\config_i}$, if $q \in \src(\T_i) \cup \dst(\T_i)$ then $q \in \positiveSet_i$ by $\S$; otherwise, $q \in \positiveOp{\config'_i}$ (by the constraint $\config_i' = \config_i + \sum_{ \transition \in \T_i } \mult_\transition \left( \incoming(\transition) - \outgoing(\transition) \right)$), and since by definition $\positiveOp{\config_i} \subseteq \forwardOp(\config_i,\T_i)$ and $\positiveOp{\config'_i} \subseteq \backwardOp(\config'_i,\T_i)$, we have that $q \in \forwardOp(\config_i,\T_i) \cap \backwardOp(\config'_i,\T_i) = \positiveSet_i$.

To see that every \green edge $e$ of $\puwd$ is output by the algorithm, apply Corollary~\ref{cor:reachability-with-broadcast-characterization} (direction `only if') to $e$ to obtain sets $\tilde{\T}_i$ and $\tilde{\B}_i$, and coefficients $\tilde{\mult}_\transition > 0$, satisfying the conditions of the corollary.
We claim that if $\tilde{\T}_i \cup \tilde{\B}_i \subseteq \T_i \cup \B_i$ holds, for every $i \in \loopindices$, at the beginning of an iteration (which is the case at initialization), then it also does at its end. Note that this would conclude the proof since $e \in \cup_{i \in \loopindices} (\tilde{\B}_i \cup \tilde{\T}_i$) by condition (i) of the corollary. To prove the claim, note that if $\tilde{\T}_i \cup \tilde{\B}_i \subseteq \T_i \cup \B_i$, for every $i \in \loopindices$, then the coefficients $\tilde{\mult}_\transition$ induce a (not necessarily maximal) solution to the constraint system. Hence, by observation ($\ddagger$) at the beginning of the proof, every maximal solution $\mult_\transition, \config, \config'$ of this system will have $\mult_\transition > 0$ for every $\transition \in \tilde{\T}_i \cup \tilde{\B}_i$. The claim follows by showing that $\src(\tilde{\T}_i) \cup \dst(\tilde{\T}_i) \subseteq \positiveSet_i$. To see this, observe that $\mult_\transition > 0$ for every $\transition \in \tilde{\B}_i$ implies (by the definition of $\config_i$ and $\config_i'$ in condition (1) of the corollary, and the corresponding constraint in the algorithm) that $\positiveOp{\tilde{\config}_i} \subseteq \positiveOp{\config_i}$ and $\positiveOp{\tilde{\config}_i'} \subseteq \positiveOp{\config_i'}$. Hence, $\forwardOp(\tilde{\config}_i, \tilde{\T}_i) \subseteq \forwardOp(\config_i, \T_i)$ and $\backwardOp(\tilde{\config}_i', \tilde{\T}_i) \subseteq \backwardOp(\config_i', \T_i)$. Since, by condition (3) of the corollary, $\src(\tilde{\T}_i) \cup \dst(\tilde{\T}_i) \subseteq \forwardOp(\tilde{\config_i},\tilde{\T}_i) = \backwardOp(\tilde{\config}_i',\tilde{\T}_i)$ we are done.

It remains to show that the algorithm runs in polynomial time.
Since in every round at least one transition is removed from either $\T_i$ or $\B_i$, for some $i$, the main loop repeats linearly many times in the size of $\puwd$.
By observation ($\dagger$) at the beginning of the proof, finding a maximal solution to the constraint system of each iteration can be done by solving a linear number of linear programming problems (of polynomial size) over the rationals, which is in \PTIME. Since, by Proposition~\ref{prop:computing-forward}, calculating $\forwardOp(\config_i,\T_i)$ and $\backwardOp(\config'_i,\T_i)$ is also in \PTIME, we conclude that the whole algorithm runs in polynomial time. 
\end{proof}
Theorem~\ref{thm:computatation-of-green-transitions}, together with Corollary~\ref{cor:deciding-locr-good}, yield the promised proof to the main theorem (Theorem~\ref{thm: edge type decidability}) of this section.

\section{Conclusion}

We have established the decidability and complexity of the PMCP for safety and liveness properties of RB-systems, which are polynomially inter-reducible with discrete-timed networks.
The lower and upper complexity bounds for safety properties are tight.
We leave open the problem of whether our \EXPTIME upper-bound for liveness properties is tight.
We note that the \PSPACE lower bound for safety properties also implies a \PSPACE lower bound for liveness properties.
The \EXPTIME upper bound is established by (repeatedly) solving an exponentially sized linear program.
As linear programming is known to be \PTIME-complete, it seems unlikely that our techniques can be improved to show a \PSPACE upper bound.

A further direction for future research concerns whether our results for the discrete-time model can be lifted to the continuous-time model without a distinguished controller (note that PMCP for continuous time networks with a distinguished controller is undecidable~\cite{aj03}). For example, time-bounded invariance and time-bounded response properties (expressed as MTL formulae) hold on the discrete-time model iff they hold with the continuous-time model~\cite{hmp92}, whereas \cite{ow03} establish several results on to the use of digitization techniques for timed automata.

\section{Acknowledgments}
This work is partially supported by the Austrian Science Fund (FWF): P 32021, the Austrian National Research Network S11403-N23 (RiSE) of the Austrian Science Fund (FWF), the PRIN project RIPER (No.20203FFYLK), the project MOST (CUP D93C22000400001) under the MUR National Recovery and Resilience Plan funded by the European Union - NextGenerationEU, the FWF project AUTOSARD: ``Automated Sublinear Amortised Resource
\parbox{.8\textwidth}{ Analysis of Data Structures'' No.~P36623 and the
project VASSAL: ``Verification and Analysis for Safety
and Security of Applications in Life'' funded by the European Union under
Horizon Europe WIDERA Coordination and Support Action/Grant Agreement
No. 101160022.
}%
\quad\raisebox{-1.25\baselineskip}{{\euflag{.16\textwidth}{!}}}

\bibliographystyle{alphaurl}
\bibliography{SBC}

\newcommand{\etalchar}[1]{$^{#1}$}
\begin{thebibliography}{CETFX09}

\bibitem[AAC{\etalchar{+}}18]{AbdullaACMT18}
Parosh~Aziz Abdulla, Mohamed~Faouzi Atig, Radu Ciobanu, Richard Mayr, and
  Patrick Totzke.
\newblock Universal safety for timed {P}etri nets is {PSPACE}-complete.
\newblock In {\em Proceedings of 29th International Conference on Concurrency
  Theory, {CONCUR}}, volume 118 of {\em LIPIcs}, pages 6:1--6:15. Schloss
  Dagstuhl - Leibniz-Zentrum f{\"{u}}r Informatik, 2018.
\newblock \href {https://doi.org/10.4230/LIPIcs.CONCUR.2018.6}
  {\path{doi:10.4230/LIPIcs.CONCUR.2018.6}}.

\bibitem[ADM04]{ADM04}
Parosh~Aziz Abdulla, Johann Deneux, and Pritha Mahata.
\newblock Multi-clock timed networks.
\newblock In {\em Proceedings of the 19th Annual IEEE Symposium on Logic in
  Computer Science, LICS}, pages 345--354, July 2004.
\newblock \href {https://doi.org/10.1109/LICS.2004.1319629}
  {\path{doi:10.1109/LICS.2004.1319629}}.

\bibitem[ADR{\etalchar{+}}16]{abdulla2016parameterized}
Parosh~Aziz Abdulla, Giorgio Delzanno, Othmane Rezine, Arnaud Sangnier, and
  Riccardo Traverso.
\newblock Parameterized verification of time-sensitive models of ad hoc network
  protocols.
\newblock {\em Theoretical Computer Science}, 612:1--22, 2016.
\newblock \href {https://doi.org/10.1016/j.tcs.2015.07.048}
  {\path{doi:10.1016/j.tcs.2015.07.048}}.

\bibitem[AEJK23]{andre2023parameterized}
{\'E}tienne Andr{\'e}, Paul Eichler, Swen Jacobs, and Shyam~Lal Karra.
\newblock Parameterized verification of disjunctive timed networks.
\newblock In {\em International Conference on Verification, Model Checking, and
  Abstract Interpretation}, pages 124--146. Springer, 2023.
\newblock \href {https://doi.org/10.1007/978-3-031-50524-9_6}
  {\path{doi:10.1007/978-3-031-50524-9_6}}.

\bibitem[AJ03]{aj03}
Parosh~Aziz Abdulla and Bengt Jonsson.
\newblock Model checking of systems with many identical timed processes.
\newblock {\em Theoretical Computer Science}, 290(1):241--264, 2003.
\newblock \href {https://doi.org/10.1016/S0304-3975(01)00330-9}
  {\path{doi:10.1016/S0304-3975(01)00330-9}}.

\bibitem[AJKR14]{AJKR14}
B.~Aminof, S.~Jacobs, A.~Khalimov, and S.~Rubin.
\newblock Parameterized model checking of token-passing systems.
\newblock In {\em Verification, Model Checking, and Abstract Interpretation},
  pages 262--281. Springer, 2014.
\newblock \href {https://doi.org/10.1007/978-3-642-54013-4_15}
  {\path{doi:10.1007/978-3-642-54013-4_15}}.

\bibitem[AKR{\etalchar{+}}18]{aminof2018parameterized}
Benjamin Aminof, Tomer Kotek, Sasha Rubin, Francesco Spegni, and Helmut Veith.
\newblock Parameterized model checking of rendezvous systems.
\newblock {\em Distributed Computing}, 31(3):187--222, 2018.
\newblock \href {https://doi.org/10.1007/s00446-017-0302-6}
  {\path{doi:10.1007/s00446-017-0302-6}}.

\bibitem[Alu99]{alur99}
Rajeev Alur.
\newblock Timed automata.
\newblock In {\em Computer Aided Verification}, pages 8--22. Springer, 1999.
\newblock \href {https://doi.org/10.1007/3-540-48683-6_3}
  {\path{doi:10.1007/3-540-48683-6_3}}.

\bibitem[AMRZ22]{DBLP:journals/ai/AminofMRZ22}
Benjamin Aminof, Aniello Murano, Sasha Rubin, and Florian Zuleger.
\newblock Verification of agent navigation in partially-known environments.
\newblock {\em Artificial Intelligence}, 308:103724, 2022.
\newblock \href {https://doi.org/10.1016/j.artint.2022.103724}
  {\path{doi:10.1016/j.artint.2022.103724}}.

\bibitem[AR16]{DBLP:conf/cade/AminofR16}
Benjamin Aminof and Sasha Rubin.
\newblock Model checking parameterised multi-token systems via the composition
  method.
\newblock In {\em Proceedings of Automated Reasoning - 8th International Joint
  Conference, {IJCAR} 2016}, volume 9706 of {\em Lecture Notes in Computer
  Science}, pages 499--515. Springer, 2016.
\newblock \href {https://doi.org/10.1007/978-3-319-40229-1_34}
  {\path{doi:10.1007/978-3-319-40229-1_34}}.

\bibitem[ARZ15]{DBLP:conf/lpar/AminofRZ15}
Benjamin Aminof, Sasha Rubin, and Florian Zuleger.
\newblock On the expressive power of communication primitives in parameterised
  systems.
\newblock In {\em Proceedings of Logic for Programming, Artificial
  Intelligence, and Reasoning - 20th International Conference, {LPAR-20}},
  volume 9450 of {\em Lecture Notes in Computer Science}, pages 313--328.
  Springer, 2015.
\newblock \href {https://doi.org/10.1007/978-3-662-48899-7_22}
  {\path{doi:10.1007/978-3-662-48899-7_22}}.

\bibitem[ARZS15]{conf/icalp/AminofRZS15}
Benjamin Aminof, Sasha Rubin, Florian Zuleger, and Francesco Spegni.
\newblock Liveness of parameterized timed networks.
\newblock In {\em Proceedings of Automata, Languages, and Programming - 42nd
  International Colloquium, {ICALP} 2015}, volume 9135 of {\em Lecture Notes in
  Computer Science}, pages 375--387. Springer, 2015.
\newblock \href {https://doi.org/10.1007/978-3-662-47666-6_30}
  {\path{doi:10.1007/978-3-662-47666-6_30}}.

\bibitem[AST18]{DBLP:reference/mc/AbdullaST18}
Parosh~Aziz Abdulla, A.~Prasad Sistla, and Muralidhar Talupur.
\newblock Model checking parameterized systems.
\newblock In {\em Handbook of Model Checking}, pages 685--725. Springer, 2018.
\newblock \href {https://doi.org/10.1007/978-3-319-10575-8_21}
  {\path{doi:10.1007/978-3-319-10575-8_21}}.

\bibitem[BF13]{bertrand2013parameterized}
Nathalie Bertrand and Paulin Fournier.
\newblock Parameterized verification of many identical probabilistic timed
  processes.
\newblock In {\em IARCS Annual Conference on Foundations of Software Technology
  and Theoretical Computer Science (FSTTCS)}. Schloss Dagstuhl-Leibniz-Zentrum
  fuer Informatik, 2013.
\newblock \href {https://doi.org/10.4230/LIPIcs.FSTTCS.2013.501}
  {\path{doi:10.4230/LIPIcs.FSTTCS.2013.501}}.

\bibitem[BJK{\etalchar{+}}15]{PMCPbook2015}
Roderick Bloem, Swen Jacobs, Ayrat Khalimov, Igor Konnov, Sasha Rubin, Helmut
  Veith, and Josef Widder.
\newblock {\em Decidability of Parameterized Verification}.
\newblock Synthesis Lectures on Distributed Computing Theory. Morgan {\&}
  Claypool Publishers, 2015.
\newblock \href {https://doi.org/10.1007/978-3-031-02011-7}
  {\path{doi:10.1007/978-3-031-02011-7}}.

\bibitem[BK00]{BacchusK00}
F.\ Bacchus and F.\ Kabanza.
\newblock Using temporal logics to express search control knowledge for
  planning.
\newblock {\em Artificial Intelligence}, 116(1-2), 2000.
\newblock \href {https://doi.org/10.1016/S0004-3702(99)00071-5}
  {\path{doi:10.1016/S0004-3702(99)00071-5}}.

\bibitem[BM06]{BaierM06}
Jorge~A. Baier and Sheila~A. McIlraith.
\newblock Planning with first-order temporally extended goals using heuristic
  search.
\newblock In {\em Proceedings of the Twenty-First National Conference on
  Artificial Intelligence}. AAAI Press, 2006.

\bibitem[Boj10]{b10}
Mikolaj Bojanczyk.
\newblock Beyond $\omega$-regular languages.
\newblock In {\em 27th International Symposium on Theoretical Aspects of
  Computer Science (STACS 2010)}, Leibniz International Proceedings in
  Informatics (LIPIcs), pages 11--16, 2010.
\newblock \href {https://doi.org/10.4230/LIPIcs.STACS.2010.2440}
  {\path{doi:10.4230/LIPIcs.STACS.2010.2440}}.

\bibitem[CETFX09]{chevallier09}
Remy Chevallier, Emmanuelle Encrenaz-Tiphene, Laurent Fribourg, and Weiwen Xu.
\newblock Timed verification of the generic architecture of a memory circuit
  using parametric timed automata.
\newblock {\em Formal Methods in System Design}, 34(1):59--81, 2009.
\newblock \href {https://doi.org/10.1007/s10703-008-0061-x}
  {\path{doi:10.1007/s10703-008-0061-x}}.

\bibitem[CTV08]{clarke2008proving}
Edmund Clarke, Murali Talupur, and Helmut Veith.
\newblock Proving {P}tolemy right: The environment abstraction framework for
  model checking concurrent systems.
\newblock In {\em International Conference on Tools and Algorithms for the
  Construction and Analysis of Systems}, pages 33--47. Springer, 2008.
\newblock \href {https://doi.org/10.1007/978-3-540-78800-3_4}
  {\path{doi:10.1007/978-3-540-78800-3_4}}.

\bibitem[DGV13]{DegVa13}
Giuseppe De~Giacomo and Moshe~Y. Vardi.
\newblock Linear temporal logic and linear dynamic logic on finite traces.
\newblock In {\em Proceedings of the 23rd International Joint Conference on
  Artificial Intelligence (IJCAI)}, 2013.

\bibitem[DSZ10]{delzanno2010parameterized}
Giorgio Delzanno, Arnaud Sangnier, and Gianluigi Zavattaro.
\newblock Parameterized verification of ad hoc networks.
\newblock In {\em International Conference on Concurrency Theory}, pages
  313--327. Springer, 2010.
\newblock \href {https://doi.org/10.1007/978-3-642-15375-4_22}
  {\path{doi:10.1007/978-3-642-15375-4_22}}.

\bibitem[EFM99]{efm99}
Javier Esparza, Alain Finkel, and Richard Mayr.
\newblock On the verification of broadcast protocols.
\newblock In {\em Proceedings of 14th Symposium on Logic in Computer Science},
  pages 352--359. IEEE, 1999.
\newblock \href {https://doi.org/10.1109/LICS.1999.782630}
  {\path{doi:10.1109/LICS.1999.782630}}.

\bibitem[EL87]{journals/scp/EmersonL87}
E.~Allen Emerson and Chin{-}Laung Lei.
\newblock Modalities for model checking: Branching time logic strikes back.
\newblock {\em Science of Computer Programming}, 8(3):275--306, 1987.
\newblock \href {https://doi.org/10.1016/0167-6423(87)90036-0}
  {\path{doi:10.1016/0167-6423(87)90036-0}}.

\bibitem[FH15]{journals/fuin/FracaH15}
Est{\'{\i}}baliz Fraca and Serge Haddad.
\newblock Complexity analysis of continuous {P}etri nets.
\newblock {\em Fundamenta Informaticae}, 137(1):1--28, 2015.
\newblock \href {https://doi.org/10.3233/FI-2015-1168}
  {\path{doi:10.3233/FI-2015-1168}}.

\bibitem[GJ79]{DBLP:books/fm/GareyJ79}
M.~R. Garey and David~S. Johnson.
\newblock {\em Computers and Intractability: {A} Guide to the Theory of
  NP-Completeness}.
\newblock W. H. Freeman, 1979.

\bibitem[GS92]{gs92}
Steven~M German and A~Prasad Sistla.
\newblock Reasoning about systems with many processes.
\newblock {\em Journal of ACM}, 39(3):675--735, 1992.
\newblock \href {https://doi.org/10.1145/146637.146681}
  {\path{doi:10.1145/146637.146681}}.

\bibitem[HMP92]{hmp92}
Thomas~A Henzinger, Zohar Manna, and Amir Pnueli.
\newblock What good are digital clocks?
\newblock In {\em Automata, Languages and Programming}, Lecture Notes in
  Computer Science, pages 545--558. Springer, 1992.
\newblock \href {https://doi.org/10.1007/3-540-55719-9_103}
  {\path{doi:10.1007/3-540-55719-9_103}}.

\bibitem[HP79]{journals/tcs/HopcroftP79}
John~E. Hopcroft and Jean{-}Jacques Pansiot.
\newblock On the reachability problem for 5-dimensional vector addition
  systems.
\newblock {\em Theoretical Computer Science}, 8:135--159, 1979.
\newblock \href {https://doi.org/10.1016/0304-3975(79)90041-0}
  {\path{doi:10.1016/0304-3975(79)90041-0}}.

\bibitem[Ise17]{isenberg2017incremental}
Tobias Isenberg.
\newblock Incremental inductive verification of parameterized timed systems.
\newblock {\em ACM Transactions on Embedded Computing Systems (TECS)},
  16(2):1--24, 2017.
\newblock \href {https://doi.org/10.1145/2984640} {\path{doi:10.1145/2984640}}.

\bibitem[Jon97]{books/daglib/0090683}
Neil~D. Jones.
\newblock {\em Computability and complexity - from a programming perspective}.
\newblock Foundations of computing series. {MIT} Press, 1997.
\newblock \href {https://doi.org/10.7551/mitpress/2003.001.0001}
  {\path{doi:10.7551/mitpress/2003.001.0001}}.

\bibitem[JS18]{jacobs2018analyzing}
Swen Jacobs and Mouhammad Sakr.
\newblock Analyzing guarded protocols: Better cutoffs, more systems, more
  expressivity.
\newblock In {\em International Conference on Verification, Model Checking, and
  Abstract Interpretation}, Lecture Notes in Computer Science, pages 247--268.
  Springer, 2018.
\newblock \href {https://doi.org/10.1007/978-3-319-73721-8_12}
  {\path{doi:10.1007/978-3-319-73721-8_12}}.

\bibitem[KKW14]{DBLP:journals/toplas/0001KW14}
Alexander Kaiser, Daniel Kroening, and Thomas Wahl.
\newblock A widening approach to multithreaded program verification.
\newblock {\em {ACM} Transactions on Programming Languages and Systems},
  36(4):14:1--14:29, 2014.
\newblock \href {https://doi.org/10.1145/2629608} {\path{doi:10.1145/2629608}}.

\bibitem[KL16]{KouvarosL16}
Panagiotis Kouvaros and Alessio Lomuscio.
\newblock Parameterised verification for multi-agent systems.
\newblock {\em Artificial Intelligence}, 234:152--189, 2016.
\newblock \href {https://doi.org/10.1016/j.artint.2016.01.008}
  {\path{doi:10.1016/j.artint.2016.01.008}}.

\bibitem[LP22]{LomuscioP22}
Alessio Lomuscio and Edoardo Pirovano.
\newblock A counter abstraction technique for verifying properties of
  probabilistic swarm systems.
\newblock {\em Artificial Intelligence}, 305:103666, 2022.
\newblock \href {https://doi.org/10.1016/j.artint.2022.103666}
  {\path{doi:10.1016/j.artint.2022.103666}}.

\bibitem[McM01]{DBLP:conf/charme/McMillan01}
Kenneth~L. McMillan.
\newblock Parameterized verification of the {FLASH} cache coherence protocol by
  compositional model checking.
\newblock In {\em Proceedings of Correct Hardware Design and Verification
  Methods, 11th {IFIP} {WG} 10.5 Advanced Research Working Conference,
  {CHARME}}, volume 2144 of {\em Lecture Notes in Computer Science}, pages
  179--195. Springer, 2001.
\newblock \href {https://doi.org/10.1007/3-540-44798-9_17}
  {\path{doi:10.1007/3-540-44798-9_17}}.

\bibitem[OW03]{ow03}
Jo{\"e}l Ouaknine and James Worrell.
\newblock Revisiting digitization, robustness, and decidability for timed
  automata.
\newblock In {\em Proceedings of 18th Annual IEEE Symposium of Logic in
  Computer Science}, pages 198--207. IEEE, 2003.
\newblock \href {https://doi.org/10.1109/LICS.2003.1210059}
  {\path{doi:10.1109/LICS.2003.1210059}}.

\bibitem[Saf88]{DBLP:conf/focs/Safra88}
Shmuel Safra.
\newblock On the complexity of omega-automata.
\newblock In {\em 29th Annual Symposium on Foundations of Computer Science},
  pages 319--327. IEEE, 1988.
\newblock \href {https://doi.org/10.1109/SFCS.1988.21948}
  {\path{doi:10.1109/SFCS.1988.21948}}.

\bibitem[SS14]{ss14vstte}
Luca Spalazzi and Francesco Spegni.
\newblock Parameterized model-checking of timed systems with conjunctive
  guards.
\newblock In {\em Verified Software: Theories, Tools and Experiments}, Lecture
  Notes in Computer Science, pages 235--251. Springer, 2014.
\newblock \href {https://doi.org/10.1007/978-3-319-12154-3_15}
  {\path{doi:10.1007/978-3-319-12154-3_15}}.

\bibitem[SS20]{spalazzi2020parameterized}
Luca Spalazzi and Francesco Spegni.
\newblock Parameterized model checking of networks of timed automata with
  boolean guards.
\newblock {\em Theoretical Computer Science}, 813:248--269, 2020.
\newblock \href {https://doi.org/10.1016/j.tcs.2019.12.026}
  {\path{doi:10.1016/j.tcs.2019.12.026}}.

\bibitem[Suz88]{Su88}
Ichiro Suzuki.
\newblock Proving properties of a ring of finite-state machines.
\newblock {\em Information Processing Letters}, 28(4):213--214, 1988.
\newblock \href {https://doi.org/10.1016/0020-0190(88)90211-6}
  {\path{doi:10.1016/0020-0190(88)90211-6}}.

\bibitem[Var95]{Vardi:Banff95}
Moshe~Y. Vardi.
\newblock An automata-theoretic approach to linear temporal logic.
\newblock In {\em Logics for Concurrency - Structure versus Automata}, volume
  1043 of {\em Lecture Notes in Computer Science}. Springer, 1995.
\newblock \href {https://doi.org/10.1007/3-540-60915-6_6}
  {\path{doi:10.1007/3-540-60915-6_6}}.

\bibitem[ZP04]{DBLP:journals/cl/ZuckP04}
Lenore~D. Zuck and Amir Pnueli.
\newblock Model checking and abstraction to the aid of parameterized systems (a
  survey).
\newblock {\em Computer Languages, Systems and Structures}, 30(3-4):139--169,
  2004.
\newblock \href {https://doi.org/10.1016/j.cl.2004.02.006}
  {\path{doi:10.1016/j.cl.2004.02.006}}.

\end{thebibliography}
\end{document}